\documentclass[10pt,a4paper]{article}

\usepackage{hyperref}  

\usepackage{cite}

\usepackage[T1]{fontenc}    
\usepackage{lmodern}
\usepackage{times}

\usepackage{pgf}
\usepackage{tikz}

\usepackage{doi}
\usepackage{cite}

\hfuzz 2pt  

\topmargin -.5cm
\textheight 23cm
\textwidth 155mm
\oddsidemargin 0.4cm
\evensidemargin 0.4cm

\usepackage[fleqn]{amsmath}
\usepackage{amsfonts,amsthm,amssymb}

\numberwithin{equation}{section}

\newtheorem{theorem}{Theorem}[section]
\newtheorem{proposition}{Proposition}[section]
\newtheorem{lemma}{Lemma}[section]

\newtheorem{corollary}{Corollary}[section]

\newtheorem{identity}{Identity}
\setcounter{identity}{-1}
%



{\theoremstyle{remark}
\newtheorem{rem}{\sl Remark}

\makeatletter
\newcommand{\vast}{\bBigg@{3}}
\newcommand{\Vast}{\bBigg@{4}}
\makeatother


\newcommand{\bra}[1]{\langle\,#1\,|}
\newcommand{\ket}[1]{|\,#1\,\rangle}

\newcommand{\moy}[1]{\langle\,#1\,\rangle}

\def\tr{\operatorname{tr}}
\newcommand{\End}{\operatorname{End}}

\DeclareMathSymbol{\Alpha}{\mathalpha}{operators}{"41}
\DeclareMathSymbol{\Beta}{\mathalpha}{operators}{"42}
\DeclareMathSymbol{\Epsilon}{\mathalpha}{operators}{"45}
\DeclareMathSymbol{\Zeta}{\mathalpha}{operators}{"5A}
\DeclareMathSymbol{\Eta}{\mathalpha}{operators}{"48}
\DeclareMathSymbol{\Iota}{\mathalpha}{operators}{"49}
\DeclareMathSymbol{\Kappa}{\mathalpha}{operators}{"4B}
\DeclareMathSymbol{\Mu}{\mathalpha}{operators}{"4D}
\DeclareMathSymbol{\Nu}{\mathalpha}{operators}{"4E}
\DeclareMathSymbol{\Omicron}{\mathalpha}{operators}{"4F}
\DeclareMathSymbol{\Rho}{\mathalpha}{operators}{"50}
\DeclareMathSymbol{\Tau}{\mathalpha}{operators}{"54}
\DeclareMathSymbol{\Chi}{\mathalpha}{operators}{"58}
\DeclareMathSymbol{\omicron}{\mathord}{letters}{"6F}


\begin{document}

LPENSL-TH-02/22

\bigskip

\bigskip

\begin{center}
\textbf{\Large Correlation functions for open XXZ spin 1/2 quantum chains with unparallel
boundary magnetic fields} 


\vspace{45pt}
\end{center}

\begin{center}
{\large \textbf{G. Niccoli}\footnote{{Univ Lyon, Ens de Lyon, Univ
Claude Bernard, CNRS, Laboratoire de Physique, F-69342 Lyon, France;
giuliano.niccoli@ens-lyon.fr}} }, {\large \textbf{V. Terras}\footnote{{Université Paris-Saclay, CNRS,  LPTMS, 91405, Orsay, France; veronique.terras@universite-paris-saclay.fr}} }
\end{center}

\begin{center}
\vspace{45pt}

\today
\vspace{45pt}
\end{center}

\begin{abstract}
In this paper we continue our derivation of the correlation functions of open quantum spin 1/2 chains with unparallel magnetic fields on the edges; this time for the more involved case of the XXZ spin 1/2 chains. We develop our study in the framework of the quantum Separation of Variables (SoV), which gives us both the complete spectrum characterization and simple scalar product formulae for separate states, including transfer matrix eigenstates. Here, we leave the boundary magnetic field in the first site of the chain completely arbitrary, and we fix the boundary field in the last site $N$ of the chain to be a specific value along the $z$-direction. This is a natural first choice for the unparallel boundary magnetic fields. We prove that under these special boundary conditions, on the one side, we have a simple enough complete spectrum description in terms of homogeneous Baxter like $TQ$-equation. On the other side, we  prove a simple enough description of the action of a basis of local operators on transfer matrix eigenstates as linear combinations of separate states. Thanks to these results, we achieve our main goal to derive correlation functions for a set of local operators both for the finite and half-infinite chains, with multiple integral formulae in this last case.
\end{abstract}

\newpage

\tableofcontents

\newpage

\section{Introduction}

In \cite{Nic21}, the correlation functions of an open spin 1/2  XXX  quantum chain with unparallel boundary magnetic fields were derived. Here, we continue this program by considering the more complicated case of an open XXZ chain with unparallel boundary magnetic fields. More precisely, we explain how to compute correlation functions in the case of an open spin 1/2 XXZ quantum chain coupled with a completely arbitrary magnetic field on the first site of the chain and with a specific fixed longitudinal magnetic field on the last site of the chain. 

We develop our analysis in the framework of the quantum Separation of Variables (SoV) \cite{Skl85,Skl85a,Skl90,Skl92,Skl95,Skl96,BabBS96,Smi98a,Smi01,DerKM01,DerKM03,DerKM03b,BytT06,vonGIPS06,FraSW08,AmiFOW10,NicT10,Nic10a,Nic11,FraGSW11,GroMN12,GroN12,Nic12,Nic13,Nic13a,Nic13b,GroMN14,FalN14,FalKN14,KitMN14,NicT15,LevNT16,NicT16,KitMNT16,JiaKKS16,KitMNT17,MaiNP17,KitMNT18}. This approach, first pioneered by Sklyanin \cite{Skl85,Skl85a,Skl90,Skl92,Skl95,Skl96} in the framework of the quantum inverse scattering method (QISM) \cite{FadS78,FadST79,FadT79,Skl79,Skl79a,FadT81,Skl82,Fad82,Fad96,BogIK93L}, has been more recently reformulated in \cite{MaiN18} on the pure basis of the integrable structure of the model, hence extending its range of application even to higher rank cases\footnote{See also \cite{Skl96,Smi01,MarS16,GroLS17} for previous developments.} in \cite{MaiN18,RyaV19,MaiN19,MaiN19d,MaiN19b,MaiN19c,MaiNV20,RyaV20}.
On the one hand, and contrary to Bethe Ansatz \cite{Bet31} and its variants, the SoV approach does not rely on an Ansatz, which means that the completeness of the SoV description of the eigenvalues and eigenstates of the considered models follows by construction. On the other hand, the SoV approach has the advantage to produce naturally and quite universally determinant formulae for the scalar products of the so-called {\it separate states} \cite{GroMN12,Nic12,Nic13,Nic13a,Nic13b,GroMN14,FalN14,FalKN14,LevNT16,KitMNT16,KitMNT17,MaiNP17,KitMNT18}\footnote{This is the case for the rank one models while in \cite{MaiNV20b} it has been shown how these type of formulae extend to the gl(3) higher rank case, see also the interesting and recent papers \cite{CavGL19,GroLRV20}.}, a class of states which notably contains the eigenstates of the model. In particular, in \cite{KitMNT18} were derived scalar products formulae in determinant form for the separate states of the open spin 1/2 XXZ quantum chains under quite general boundary conditions; this is the first required step to compute correlation functions. 

In the last decades, there have been impressive progresses concerning the exact determination of correlation functions of quantum integrable models. However, the state of the art remains quite unsatisfactory with respect to the type of boundary conditions for which such results are available. In fact, most of the results that have been obtained so far concern the quantum non-linear Schr\"odinger model or the XXX/XXZ spin 1/2 quantum chains with periodic boundary conditions or, for the latter models, with open boundary conditions with longitudinal boundary magnetic fields, i.e. along the $z$-direction. 
More precisely, 
correlation functions of the XXX/XXZ spin 1/2 chains were first computed  in infinite volume and at zero temperature in the framework of the $q$-vertex operator approach in \cite{JimM95L,JimMMN92,JimM96}. These results were confirmed and generalized to the case of a non-zero global magnetic field by means of the study of the periodic chain in finite volume by algebraic Bethe Ansatz (ABA) in \cite{KitMT99,MaiT00,KitMT00,KitMST02a,KitMST05a,KitMST05b,KitKMST07}, and to the temperature cases in \cite{GohKS04,GohKS05,BooGKS07,GohS10,DugGKS15,GohKKKS17}. See also \cite{BooJMST05,BooJMST06,BooJMST06a,BooJMST06b,BooJMST06c,BooJMST07,BooJMST09,JimMS09,JimMS11,MesP14,Poz17} for relevant results on correlation functions by the use of the hidden Grassmann structure.
Still in this periodic background, analytical study in the thermodynamic limit of long distances two-point and multi-point correlation functions have been developed in \cite{KitKMST09a,KitKMST09b,KitKMST09c,KozMS11a,KozMS11b,KozT11,KitKMST11a,KitKMST11b,KitKMST12,DugGK13,KitKMT14} and numerical study of the dynamical structure factors in \cite{CauHM05,CauM05,PerSCHMWA06} have made accessible comparison with experimental settings as neutron scattering \cite{KenCTVHST02}. Correlation functions in the case of an open spin chain were considered in \cite{JimKKKM95,JimKKMW95,KitKMNST07,KitKMNST08}, but only for longitudinal boundary fields, i.e. along the $z$-direction. In fact, for different kind of boundary conditions, usual Bethe Ansatz techniques can in principle no longer be used to compute correlations.
In the case of the XXX spin chain, these limitations on the computation of correlation functions have been overcome only recently by the consideration of the model in the SoV framework: in \cite{NicPT20}, correlation functions were computed for all types of quasi-periodic (including anti-periodic) boundary conditions, and in \cite{Nic21}, correlation functions were computed in the open case  with unparallel boundary magnetic fields.
As previously announced, we will extend these results here to the case of an open XXZ spin 1/2 quantum chains with unparallel boundary magnetic fields. 

Open quantum spin chains have been recently at the center of a large research activity, see e.g. \cite{AlcBBBQ87,Skl88,GhoZ94,JimKKKM95,JimKKMW95,FanHSY96,Nep02,Nep04,CaoLSW03,YanZ07,Bas06,BasK07,KitKMNST07,KitKMNST08,CraRS10,CraRS11,FilK11,FraSW08,FraGSW11,Nic12,CaoYSW13b,FalN14,FalKN14,KitMN14,BelC13,Bel15,BelP15,BelP15b,AvaBGP15,BelP16,GriDT19} and related references, with application to the studies of classical stochastic models, as asymmetric simple exclusion models \cite{deGE05}, and also to
modelling numerous applications in quantum condensed matter physics, such as
out-of-equilibrium and transport properties in the spin chains \cite{Pro11}. Exact results on the spectrum for parallel magnetic fields have been accessible since the seminal works in coordinate Bethe ansatz \cite{AlcBBBQ87} and algebraic Bethe ansatz \cite{Skl88}. Instead, for general unparallel boundary magnetic fields, the description of the spectrum was only addressed recently, and is still not completely well understood, in particular concerning the accurate description of the Bethe roots for the ground state in the thermodynamic limit.
In fact, by using the reflection equation first introduced by Cherednik \cite{Che84}, Sklyanin \cite{Skl88} has extended the quantum inverse scattering method (QISM) to open quantum chains. Whereas this algebraic QISM framework applies {\em a priori} to both parallel and unparallel boundary magnetic fields, its ABA implementation as proposed in \cite{Skl88} holds only for diagonal boundary matrices\footnote{The boundary matrices allow to parametrize the magnetic fields at the boundaries of the quantum spin chain.},
i.e. for longitudinal parallel boundary magnetic fields. Other integrable techniques have been developed to handle the non-diagonal case, i.e. generally speaking the unparallel cases. In \cite{Nep02}, under a special constrain relating the parameters of the two boundary matrices, a first description of the spin chain spectrum with unparallel boundary fields has been obtained, by using the fusion procedure \cite{KulRS81}, in terms of polynomial solutions of some $TQ$-equation of Baxter's type \cite{Bax82L}, for the roots of unity points and later in \cite{Nep04} for general values of the anisotropy parameter. Under such a constraint, generalized Bethe eigenstates have also been constructed \cite{FanHSY96,CaoLSW03,YanZ07} by using Baxter's vertex-IRF transformations \cite{Bax73a} so as to simplify the boundary matrices. Such a constraint appears also in other integrable approaches leading to the spectrum description in terms of polynomial solutions of ordinary $TQ$-equations, such as by coordinate Bethe ansatz with elements of matrix product ansatz \cite{CraRS10,CraRS11}, $q$-Onsager algebra \cite{Bas06,BasK07} etc. This constraint has been overcome only more recently. In \cite{CaoYSW13b} a description of the spectrum of these open chains was proposed on the basis of analytic properties and functional relations satisfied by the transfer matrices. The proposed spectrum description has been there presented by polynomial solutions of inhomogeneous $TQ$-equations, i.e. equations admitting some extra term, see also \cite{Nep13}. Such a description of the spectrum in terms of an inhomogeneous $TQ$-equation also appears in the framework of the modified algebraic Bethe ansatz \cite{BelC13,Bel15,BelP15,AvaBGP15} dealing with unconstrained boundary conditions. The SoV method has been used, in particular, for open spin chains with the most general unconstrained non-diagonal boundary matrices \cite{FraSW08,FraGSW11,Nic12,FalKN14,FalN14}, and in \cite{KitMN14} SoV approach was used to prove that the complete spectrum characterization of these open chains, indeed, can be formulated in terms of polynomial solutions to functional $TQ$-equations of Baxter type which, for the most general unconstrained boundary matrices, have the aforesaid inhomogeneous extra term.

Our aim is here to derive exact expressions for the correlation functions of local/quasi-local operators in the vicinity of the first site of the chain. It is natural to expect such correlation functions to be strongly influenced by the magnetic field on this first site; in fact, we expect an explicit dependance of the result on the boundary field at site 1. Instead, for long enough chains, we expect the influence of the boundary field at the other side of the chain to be only indirect (i.e. mainly encoded in the structure of the ground state and not in the explicit expression of the correlation functions themselves, see \cite{GriDT19}). 
This type of physical argument motivates us to initiate our study by considering some special choice of the boundary magnetic field at site $N$. The idea is to adjust the magnetic field at site $N$ in such a way that we can gauge out some of the difficulties due to the consideration of non-diagonal boundaries while keeping unparallel boundary magnetic fields. In particular, we fix the magnetic field on site $N$ to a fixed value along the $z$-direction, whereas the magnetic field on site 1 is completely arbitrary. On the one hand, this enables us to have an easy description of the thermodynamic ground state distribution of the Bethe roots, which, as already mentioned, is not known in the most general case: indeed, in this special case of unparallel boundary fields that we consider here, the transfer matrix happens to be  isospectral to the transfer matrix of a chain with parallel longitudinal boundary magnetic fields.
On the other hand, this special choice of unparallel boundary magnetic fields is done here so as to simplify the computation of the action of the local/quasi local operators on the transfer matrix eigenstates. Let us recall that a reconstruction formula expressing the local operators in terms of the generators of the reflection algebra is so far missing in the literature, and that the alternative method proposed in \cite{KitKMNST07} consists in using the known reconstruction in terms of the bulk Yang-Baxter algebra, and in decomposing the boundary eigenstates in terms of bulk Bethe states ({\em boundary-bulk decomposition}) so as to be able to act with the local operators on these states. This is  possible due to the fact that the reflection algebra generators are quadratic functions of the bulk ones, and that the eigenstates are expressed in the form of Bethe states in terms of these generators. Here, we extend these constructions to the current unparallel boundary conditions. This requires first the introduction of a gauge deformation of the Yang-Baxter algebra and the generalization to it of the reconstruction formulae for the local operators. Then, we have to compute the boundary-bulk decomposition of the separate states in terms of some gauged version of bulk Bethe states. Finally, we have to compute the action of local operators first on the gauged bulk Bethe states and then on the boundary separate states by using the boundary-bulk decomposition. Here, the special choice of unparallel boundary magnetic fields is done so as to simplify the form of the boundary-bulk decomposition and to have an easier access to the formulae for the action of local operators. In this paper we develop all these technical steps which, together with the known formulae for the scalar products in the SoV framework\footnote{It is worth mentioning that, the recent and interesting results on scalar products for open chains with general boundary conditions \cite{BelPS21}, see also \cite{BelS19a,SlaZZ20,BelP16,FilK11,YanCFHHSZ11,DuvP15}, may put a basis for the computation of correlation functions in a generalized/modified Bethe Ansatz framework.}, give us access to the computation of the correlation functions here presented. 

\medskip

The paper is organized as  follows. 

Section~\ref{sec-model} provides a technical introduction to the model we consider, the XXZ spin 1/2 chain with general boundary conditions. There, we introduce the reflection algebras leading to the integrable description of the spin chains by using two kinds of symmetries which reverse the role of the plus/minus reflection algebras, and which happen to be convenient in that they enable us to make a direct use of some former results~\cite{KitMNT18}. We also introduce the gauge-transformed version of this algebra by Baxter's vertex-IRF transformation, which enables one to simplify the form of the boundary matrices. 

In section~\ref{sec-spectrum}, we recall the complete  characterization of the transfer matrix spectrum that was obtained for this model in the SoV framework. There, we particularize the SoV spectrum construction to the special boundary conditions associated to one arbitrary magnetic field in the site 1 and a fixed magnetic field oriented along the $z$-direction in the site $N$. In such a case, the transfer matrix spectrum is completely characterized by the polynomial solutions of a homogeneous $TQ$-equation of Baxter type, and the separate eigenstates can be constructed in the form of Bethe states. 
We also prove the isospectrality of this case with the case of parallel boundary magnetic fields along the $z$-direction. 

In section~\ref{sec-act-state}, we compute the action of a basis of quasi-local operators on Bethe-type boundary states. More precisely, we define a set of quasi-local operators on the first $m$ sites of the chain, which can be written as monomials in terms of the generator of the gauged Yang-Baxter algebra. We prove that for any fixed finite $m$, this operator set is an operator basis of the corresponding $m$-sites quasi-local operators and we compute the action of its elements on gauged bulk Bethe-type states. By means of the boundary-bulk decomposition of the boundary states, we rewrite this action as an action on general boundary gauged Bethe-type states. 

Finally, in section~\ref{sec-corr-fct}, this result concerning the action of local operators on transfer matrix eigenstates, as well as the known scalar product formulae (previously derived in the SoV framework \cite{KitMNT16, KitMNT18}) allow us to compute the correlation functions. Here, we explicitly present the expressions for the correlation functions of a particular class of elements of the aformentioned local operator basis, both on the finite chain and on the half-infinite chain. In the latter case, we obtain multiple integral representations that are quite similar, in their form, to the one previously obtained in \cite{KitKMNST07} for the case of a chain with diagonal boundary conditions, with however some  differences related to the different choice of boundary field at site 1 with respect to  \cite{KitKMNST07}.

The paper contains also three appendices with some technical details.
In appendix~\ref{app-gaugeYBbulk}, we gather some useful properties of the bulk gauge Yang-Baxter algebra. We show notably that the latter can in fact be described in terms of a single operator family with three continuous parameters, written as linear combinations of the ungauged Yang-Baxter generators: one parameter being the spectral parameter and the other two - the gauge parameters - defining the coefficients of the linear combination. In particular, we derive the commutation relations of this operator family and its action on gauged reference states and gauged Bethe-type states. 
Appendix~\ref{app-bound-bulk} is devoted to the boundary-bulk decomposition: we notably explain how to decompose  the transfer matrix eigenstates, and more generally the SoV separate states as linear combinations of generalized gauged Bethe states; i.e. a gauged version of the so-called boundary-bulk decomposition \cite{KitKMNST07, KitKMNST08}.
Finally, in appendix~\ref{app-diag}, we show how, taking the limit in which the boundary at site 1 becomes diagonal, we can obtain from our final result the result of \cite{KitKMNST07}.
\section{The open spin-1/2 XXZ quantum chain}
\label{sec-model}

The Hamiltonian of the integrable quantum spin-1/2 XXZ chain with open boundary conditions and general boundary fields can be written as: 
\begin{align}
H& =\sum_{n=1}^{N-1}\Big[\sigma _{n}^{x}\sigma _{n+1}^{x}+\sigma
_{n}^{y}\sigma _{n+1}^{y}+\cosh \eta \,\sigma _{n}^{z}\sigma _{n+1}^{z}\Big]
\notag \\
& \hspace{2cm}+\frac{\sinh \eta }{\sinh \varsigma _{-}}\Big[\sigma
_{1}^{z}\cosh \varsigma _{-}+2\kappa _{-}\big(\sigma _{1}^{x}\cosh \tau
_{-}+i\sigma _{1}^{y}\sinh \tau _{-}\big)\Big]  \notag \\
& \hspace{2cm}+\frac{\sinh \eta }{\sinh \varsigma _{+}}\Big[\sigma _{
N}^{z}\cosh \varsigma _{+}+2\kappa _{+}\big(\sigma _{N}^{x}\cosh \tau
_{+}+i\sigma _{N}^{y}\sinh \tau _{+}\big)\Big].  \label{H-XXZ-Non-D}
\end{align}
The operators $\sigma _{n}^{\alpha },\ \alpha \in \{x,y,z\}$, act as the corresponding Pauli matrices on the local quantum spin space $\mathcal{H}_{n}\simeq \mathbb{C}^{2}$ at site $n$, so that $H$ is an operator acting on the $2^N$-dimensional quantum space $\mathcal{H}=\otimes _{n=1}^{N}\mathcal{H}_{n}$. 
The anisotropy parameter is $\Delta =\cosh \eta $.
The boundary fields are parametrized by the six boundary parameters $\varsigma _{\pm }$, $\kappa _{\pm }$, $\tau _{\pm }$. We shall also use the following re-parametrization $\varphi _{\pm },\psi_\pm$ of the boundary parameters $\varsigma _{\pm },\kappa _{\pm }$:
\begin{equation}
\sinh \varphi _{\pm }\,\cosh \psi _{\pm }=\frac{\sinh \varsigma _{\pm }}{%
2\kappa _{\pm }},\qquad \cosh \varphi _{\pm }\,\sinh \psi _{\pm }=\frac{%
\cosh \varsigma _{\pm }}{2\kappa _{\pm }}.  \label{reparam-bords}
\end{equation}
The Hamiltonian \eqref{H-XXZ-Non-D} corresponds to the most general case of open boundary conditions with non-diagonal integrable boundary interactions. 
It is worth noting that this Hamiltonian is manifestly invariant under the
following simultaneous change of sign:
\begin{equation}\label{symm-eta}
\{\eta ,\varsigma _{\pm }\}\rightarrow \{-\eta ,-\varsigma _{\pm }\},
\end{equation}
while the others parameters $\kappa _{\pm }$ and $\tau _{\pm }$ remain
unchanged. It is also invariant under a relabelling of each site $n$ by $N-n+1$ together with the interchange of the boundary parameters $+$ and $-$:
\begin{equation}\label{symm-miroir}
  \begin{cases}
   & n\rightarrow N-n+1,\quad 1\le n\le N, \\
   &\{\varsigma_\pm,\kappa_\pm,\tau_\pm\}\rightarrow \{\varsigma_\mp,\kappa_\mp,\tau_\mp\}.
   \end{cases}
\end{equation}

\subsection{Reflection algebra and symmetries}

The open chain with Hamiltonian \eqref{H-XXZ-Non-D} can be studied in the formalism introduced in \cite{Skl88}, that we recall here.

Let us introduce the 6-vertex trigonometric solution of the Yang-Baxter equation, 
\begin{equation}
R_{12}(\lambda )=%
\begin{pmatrix}
\sinh (\lambda +\eta ) & 0 & 0 & 0 \\ 
0 & \sinh \lambda & \sinh \eta & 0 \\ 
0 & \sinh \eta & \sinh \lambda & 0 \\ 
0 & 0 & 0 & \sinh (\lambda +\eta )%
\end{pmatrix}%
\in \text{End}(\mathbb{C}^{2}\otimes \mathbb{C}^{2}),  \label{R-6V}
\end{equation}
and let us consider, on $\mathbb{C}^{2}\otimes \mathbb{C}^{2}\otimes \mathcal{H}$, the reflection equation for the so-called boundary monodromy matrix $\mathcal{U}_{-}(\lambda )\in \text{End}(\mathbb{C}^{2}\otimes 
\mathcal{H})$:
\begin{equation}
R_{21}(\lambda -\mu )\,\mathcal{U}_{-,1}(\lambda )\,R_{12}(\lambda +\mu
-\eta )\,\mathcal{U}_{-,2}(\mu )=\mathcal{U}_{-,2}(\mu )\,R_{21}(\lambda
+\mu -\eta )\,\mathcal{U}_{-,1}(\lambda )\,R_{12}(\lambda -\mu ).
\label{bYB}
\end{equation}
Here the subscripts parametrize the subspaces of $\mathbb{C}%
^{2}\otimes \mathbb{C}^{2}$ on which the corresponding operators act
non-trivially, and $R_{21}(\lambda )$ is obtained from $R_{12}(\lambda )$ \eqref{R-6V} as $R_{21}(\lambda )=P_{12}\,R_{12}(\lambda )\,P_{12}$, where $P_{12}$ is the
permutation operator on $\mathbb{C}^{2}\otimes \mathbb{C}^{2}$.

For convenience, we shall define here two different realizations of the reflection algebra, i.e. construct two different  solutions $\mathcal{U}_{-}(\lambda )$ of the reflection equation \eqref{bYB}. These realizations slightly differs from the standard one introduced by Sklyanin~\cite{Skl88} and used in our previous work \cite{KitMNT18} in that they rely on the two types of symmetries \eqref{symm-miroir} and \eqref{symm-eta} of the model. They can easily be related to each other as described below.

Let us first introduce the following monodromy matrix $T_0(\lambda)\in\End(\mathcal{H}_0\otimes\mathcal{H})$, where $\mathcal{H}_0=\mathbb{C}^2$ is the so-called auxiliary space:
\begin{equation}\label{mon-T}
T_{0}(\lambda )=R_{01}(\lambda -\xi _{1}-\eta /2)\dots R_{0N}(\lambda -\xi
_{N}-\eta /2).
\end{equation}
Note that here we have intentionally reversed the order in the quantum
sites w.r.t. the one that is often used in the literature, and in particular in our previous series of papers.
$T_{0}(\lambda )$ is a solution of the Yang-Baxter equation on $\mathbb{C}^2\otimes \mathbb{C}^2\otimes \mathcal{H}$:
\begin{equation}
R_{1,2}(\lambda -\mu )\, T_{1}(\lambda )\, T_{2}(\mu )=T_{2}(\mu )\, T_{1}(\lambda
)\, R_{1,2}(\lambda -\mu ).
\end{equation}
Let us define from \eqref{mon-T} the following boundary monodromy matrix :
\begin{equation}
\mathcal{U}_{-,0}(\lambda )=T_{0}(\lambda )\,K_{-,0}(\lambda )\,\hat{T}%
_{0}(\lambda )=%
\begin{pmatrix}
\mathcal{A}_{-}(\lambda ) & \mathcal{B}_{-}(\lambda ) \\ 
\mathcal{C}_{-}(\lambda ) & \mathcal{D}_{-}(\lambda )%
\end{pmatrix}%
,  \label{def-U+}
\end{equation}
solution of the reflection equation \eqref{bYB}, where
\begin{align}
\hat{T}_{0}(\lambda ) &=(-1)^{N}\,\sigma _{0}^{y}\,T_{0}^{t_{0}}(-\lambda
)\,\sigma _{0}^{y}  \notag \\
&=R_{0N}(\lambda +\xi _{N}-\eta /2)\dots R_{01}(\lambda +\xi _{1}-\eta /2),
\label{Mhat}
\end{align}
and
\begin{equation}\label{def-Kpm}
K_{-}(\lambda )=K(\lambda ;\varsigma _{+},\kappa _{+},\tau _{+}),\qquad
K_{+}(\lambda )=K(\lambda +\eta ;\varsigma _{-},\kappa _{-},\tau _{-}).
\end{equation}
We have here introduced the boundary parameters plus in the
minus boundary $K$-matrix and vice versa, while the definition of the $K$-matrix is unchanged with respect to our previous paper \cite{KitMNT18}:
\begin{equation}
K(\lambda ;\varsigma ,\kappa ,\tau )=\frac{1}{\sinh \varsigma }\,%
\begin{pmatrix}
\sinh (\lambda -\eta /2+\varsigma ) & \kappa e^{\tau }\sinh (2\lambda -\eta )
\\ 
\kappa e^{-\tau }\sinh (2\lambda -\eta ) & \sinh (\varsigma -\lambda +\eta
/2)%
\end{pmatrix}
.  \label{mat-K}
\end{equation}
In other words, we have here used the symmetry \eqref{symm-miroir} of the model with respect to our previous works.
We recall that \eqref{mat-K} is the most general scalar solution \cite{deVG93,deVG94,GhoZ94} of the
reflection equation \eqref{bYB} for general values of the parameters $\varsigma ,$ $\kappa $ and $\tau $.

It follows  \cite{Skl88} that the transfer matrices
, 
\begin{equation}
\mathcal{T}(\lambda )=\text{tr}_{0}\{K_{+,0}(\lambda )\,T_{0}(\lambda
)\,K_{-,0}(\lambda )\,\hat{T}_{0}(\lambda )\},  \label{transfer}
\end{equation}
form a one-parameter family of commuting operators on $\mathcal{H}$. 
The Hamiltonian \eqref{H-XXZ-Non-D} of the spin-1/2 open chain can be obtained  in the homogeneous limit  $\xi _{m}=0$, $m=1,\ldots ,N$, as the following derivative of the transfer matrix \eqref{transfer}: 
\begin{equation}
H=\frac{2\,(\sinh \eta )^{1-2N}}{\text{tr}\{K_{+}(\eta /2)\}\,\text{tr}%
\{K_{-}(\eta /2)\}}\,\frac{d}{d\lambda }\mathcal{T}(\lambda )_{\,\vrule %
height13ptdepth1pt\>{\lambda =\eta /2}\!}+\text{constant.}  \label{Ht}
\end{equation}

We also recall that the quantum determinant of $\mathcal{U}_{-}(\lambda )$,
\begin{align}
 \mathrm{det}_{q}\mathcal{U}_{-}(\lambda )
 & = \sinh (2\lambda -2\eta ) 
 \big[\mathcal{A}_{-}(\eta /2\pm \lambda )\,\mathcal{A}_{-}(\eta /2\mp \lambda )+%
\mathcal{B}_{-}(\eta /2\pm \lambda )\,\mathcal{C}_{-}(\eta /2\mp \lambda ) \big]
\notag \\
& =\sinh (2\lambda -2\eta ) 
 \big[\mathcal{D}_{-}(\eta /2\pm \lambda )\,\mathcal{D}_{-}(\eta /2\mp \lambda
)+\mathcal{C}_{-}(\eta /2\pm \lambda )\,\mathcal{B}_{-}(\eta /2\mp \lambda )\big]
\notag\\
&=\mathrm{det}_q T(\lambda)\, \mathrm{det}_q T(-\lambda)\, \mathrm{det}_q K_-(\lambda),
\label{det-U-}
\end{align}
is a central element of the reflection algebra: $[\det_{q}\mathcal{U}%
_{-}(\lambda ),\mathcal{U}_{-}(\lambda )]=0$.
In \eqref{det-U-}, $\det_q T(\lambda)$ stands for the bulk quantum determinant which can be expressed as
\begin{equation}
\mathrm{det}_{q}T(\lambda )=a(\lambda +\eta /2)\,d(\lambda -\eta /2),
\label{det-M}
\end{equation}
where 
\begin{equation}
a(\lambda )=\prod_{n=1}^{N}\sinh (\lambda -\xi _{n}+\eta /2),\qquad
d(\lambda )=\prod_{n=1}^{N}\sinh (\lambda -\xi _{n}-\eta /2),  \label{a-d}
\end{equation}
and $\mathrm{det}_q K_-(\lambda)$ stands for the quantum determinant of the scalar boundary matrix $K_{-}(\lambda) $. The quantum determinant of the scalar boundary matrices $K_{\mp
}(\lambda )$ \eqref{def-Kpm} can be expressed as 
\begin{equation}
\frac{\det_{q}K_{\mp }(\lambda )}{\sinh (2\lambda \mp 2\eta )}=\mp \frac{%
\big(\sinh ^{2}\lambda -\sinh ^{2}\varphi _{\pm }\big)\big(\sinh ^{2}\lambda
+\cosh ^{2}\psi _{\pm }\big)}{\sinh ^{2}\varphi _{\pm }\,\cosh ^{2}\psi
_{\pm }}.  \label{det-K-}
\end{equation}
The boundary quantum monodromy matrix $\mathcal{U}_{-}(\lambda )$  \eqref{def-U+} satisfies the  quantum 
inversion relation  
\begin{equation}
\mathcal{U}_{-}(\lambda +\eta /2)\,\mathcal{U}_{-}(-\lambda +\eta /2)=\frac{%
\det_{q}\mathcal{U}_{-}(\lambda )}{\sinh (2\lambda -2\eta )},  \label{inv-U-}
\end{equation}
and its elements satisfy the following relations:
\begin{align}
   &\mathcal{A}_-(\lambda)=\frac{\sinh\eta}{\sinh(2\lambda)}\,\mathcal{D}_-(\lambda)+\frac{\sinh(2\lambda-\eta)}{\sinh(2\lambda)}\,\mathcal{D}_-(-\lambda),
   \label{rel-A-D}\\
   &\mathcal{D}_-(\lambda)=\frac{\sinh\eta}{\sinh(2\lambda)}\,\mathcal{A}_-(\lambda)+\frac{\sinh(2\lambda-\eta)}{\sinh(2\lambda)}\,\mathcal{A}_-(-\lambda),
   \label{rel-D-A}\\
   &\mathcal{B}_-(-\lambda)=-\frac{\sinh(2\lambda+\eta)}{\sinh(2\lambda-\eta)}\,\mathcal{B}_-(\lambda),
   \qquad
   \mathcal{C}_-(-\lambda)=-\frac{\sinh(2\lambda+\eta)}{\sinh(2\lambda-\eta)}\,\mathcal{C}_-(\lambda).
   \label{rel-B-C}
\end{align}


Let us now introduce a second realization relying on the symmetry \eqref{symm-eta}.
We define
\begin{equation}\label{sym-par}
    \bar{\eta}=-\eta ,
    \quad
    \bar{\varsigma}_{\pm }=-\varsigma _{\pm },
    \quad
    \bar{\kappa}_{\pm }=\kappa _{\pm },
    \quad
    \bar{\tau}_{\pm }=\tau _{\pm },
\end{equation}
Let us consider the R-matrix
\begin{equation}\label{barR}
\bar R_{12}(\lambda)=%
\begin{pmatrix}
\sinh (\lambda -\eta ) & 0 & 0 & 0 \\ 
0 & \sinh \lambda & -\sinh \eta & 0 \\ 
0 & -\sinh \eta & \sinh \lambda & 0 \\ 
0 & 0 & 0 & \sinh (\lambda -\eta )%
\end{pmatrix}
=-R_{12}(-\lambda),
\end{equation}
which corresponds to a change of parameter $\eta\rightarrow \bar\eta=-\eta$ with respect to \eqref{R-6V}, and let us introduce the bulk monodromy matrix
\begin{equation}
M(\lambda )=\bar R_{0N}(\lambda -\xi _{N}+\eta /2)\dots \bar R_{01}(\lambda-\xi _{1}+\eta /2)=%
\begin{pmatrix}
A(\lambda ) & B(\lambda ) \\ 
C(\lambda ) & D(\lambda )%
\end{pmatrix},  \label{bulk-mon}
\end{equation}
which satisfies the Yang-Baxter equation w.r.t. $\bar R_{12}(\lambda )$. From it we also define the matrix 
\begin{align}
\hat{M}(\lambda ) 
     &=(-1)^{N}\,\sigma _{0}^{y}\,M^{t_{0}}(-\lambda )\,\sigma_{0}^{y} \nonumber\\
     &=\bar R_{01}(\lambda +\xi _1+\eta /2)\dots \bar R_{0N}(\lambda +\xi_N+\eta /2).
\end{align}
Note that the bulk monodromy matrices \eqref{bulk-mon} and \eqref{mon-T} are related by
\begin{equation}\label{MT}
M(-\lambda )=(-1)^{N}\hat{T}(\lambda ),
\qquad
\hat{M}(-\lambda )=(-1)^{N}T(\lambda ).
\end{equation}
Similarly, we define the analogue of \eqref{mat-K} with a change of parameter $\eta\rightarrow \bar\eta=-\eta$:
\begin{equation}
\bar K(\lambda ;\varsigma ,\kappa ,\tau )=\frac{1}{\sinh \varsigma }\,%
\begin{pmatrix}
\sinh (\lambda +\eta /2+\varsigma ) & \kappa e^{\tau }\sinh (2\lambda +\eta )
\\ 
\kappa e^{-\tau }\sinh (2\lambda +\eta ) & \sinh (\varsigma -\lambda -\eta
/2)%
\end{pmatrix},
 \label{mat-barK}
\end{equation}
and we introduce
\begin{equation}\label{def-barKpm}
 \bar K_{-}(\lambda )=\bar K(\lambda ;\bar \varsigma _-,\bar \kappa _-,\bar \tau _-),\qquad
 \bar K_{+}(\lambda )=\bar K(\lambda -\eta ;\bar \varsigma _+,\bar \kappa _+,\bar \tau _+),
\end{equation}
in terms of the parameters \eqref{sym-par}.
From \eqref{bulk-mon}-\eqref{def-barKpm}, we define the boundary monodromy matrices
\begin{align}
&\overline{\mathcal{U}}_{-}(\lambda)\equiv
\overline{\mathcal{U}}_{-}(\lambda |\bar{\eta},\bar{\varsigma}_{-},\bar{\kappa}_{-},\bar{\tau}_{-}) =M(\lambda)\, \bar K_{-}(\lambda )\,\hat{M}(\lambda), 
\label{barU-}\\
&\overline{\mathcal{U}}_{+}^{t_{0}}(\lambda)\equiv
\overline{\mathcal{U}}_{+}^{t_{0}}(\lambda |\bar{\eta},\bar{\varsigma}_{+},\bar{\kappa}_{+},\bar{\tau}_{+}) 
=M^{t_{0}}(\lambda )\, \bar K_{+}^{t_{0}}(\lambda )\,\hat{M}^{t_{0}}(\lambda).
\label{barU+}
\end{align}

These boundary matrices can easily be expressed in terms of the previous ones. More precisely, the following identities hold, which can be shown by direct computation:

\begin{proposition}
The boundary monodromy matrix \eqref{def-U+} can be expressed in terms of the bulk monodromy matrix \eqref{bulk-mon} as
\begin{equation}\label{bmon-M}
  \mathcal{U}_{-}(\lambda)
  =\hat{M}(-\lambda )\, K_{-}(\lambda )\,M(-\lambda ),
\end{equation}
whereas the boundary matrices \eqref{def-Kpm} and \eqref{def-barKpm} are related by:
\begin{align}
 &K_{-}(\lambda ) 
 =\sigma_{0}^{y}\, \bar K_{+}^{t_{0}}(\lambda )\,\sigma _{0}^{y}, 
 \label{K-barK+}\\
 &K_{+}(\lambda )
  =\sigma_{0}^{y}\, \bar K_{-}^{t_{0}}(\lambda)\,\sigma _{0}^{y}.
  \label{K+barK-}
\end{align}
As a consequence, the boundary monodromy matrix \eqref{def-U+} can be expressed in terms of the boundary matrix \eqref{barU+} as
\begin{equation}\label{U-barU+}
  \mathcal{U}_{-}(\lambda )
 =\sigma _{0}^{y}\,\overline{\mathcal{U}}_{+}^{t_{0}}(\lambda)\,\sigma _{0}^{y}.
\end{equation}
\end{proposition}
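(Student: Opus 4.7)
The plan is to establish the three displayed identities \eqref{bmon-M}, \eqref{K-barK+}-\eqref{K+barK-}, and \eqref{U-barU+} by direct verification, in that order, as each uses the previous one together with the relations \eqref{MT} between the two bulk monodromy matrices. No clever manipulation of the reflection algebra is required; the whole content is careful bookkeeping in the auxiliary space $\mathbb{C}^2$.

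For \eqref{bmon-M}, I would simply start from the definition $\mathcal{U}_-(\lambda)=T(\lambda)\,K_-(\lambda)\,\hat T(\lambda)$ in \eqref{def-U+}, and insert the identities \eqref{MT}, $T(\lambda)=(-1)^N\hat M(-\lambda)$ and $\hat T(\lambda)=(-1)^N M(-\lambda)$. The two factors of $(-1)^N$ combine into $(-1)^{2N}=1$ and yield $\mathcal{U}_-(\lambda)=\hat M(-\lambda)\,K_-(\lambda)\,M(-\lambda)$ at once.

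For the two scalar identities \eqref{K-barK+} and \eqref{K+barK-}, the calculation is purely $2\times 2$. I would write out $\bar K_{\pm}(\lambda)$ explicitly from \eqref{mat-barK}-\eqref{def-barKpm} with the flipped parameters \eqref{sym-par} (i.e.\ $\bar\varsigma_{\pm}=-\varsigma_{\pm}$, $\bar\kappa_{\pm}=\kappa_{\pm}$, $\bar\tau_{\pm}=\tau_{\pm}$), take the transpose in auxiliary space (which swaps the $e^{\tau_\pm}$ and $e^{-\tau_\pm}$ off-diagonal entries), and then conjugate by $\sigma^y$. A direct computation shows $\sigma^y Y\sigma^y=\bigl(\begin{smallmatrix} y_{22} & -y_{21}\\ -y_{12} & y_{11}\end{smallmatrix}\bigr)$, and applying this to $Y=\bar K_+^{t_0}(\lambda)$ together with the identities $\sinh(-\varsigma)=-\sinh\varsigma$ and $\sinh(-x)=-\sinh(x)$ reproduces \eqref{mat-K} evaluated at $(\varsigma_+,\kappa_+,\tau_+)$, which is $K_-(\lambda)$ by \eqref{def-Kpm}. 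The argument for \eqref{K+barK-} is identical, with the shift $\lambda\to\lambda+\eta$ in $K_+(\lambda)$ matching the shift $\lambda\to\lambda-\eta$ in $\bar K_+(\lambda)$ once the sign of $\eta$ is flipped.

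Finally, for \eqref{U-barU+} I would combine the three ingredients. Starting from \eqref{barU+}, insert $I=(\sigma_0^y)^2$ between consecutive factors and use
\begin{equation}
\sigma_0^y\,\overline{\mathcal{U}}_+^{t_0}(\lambda)\,\sigma_0^y
 =\bigl[\sigma_0^y\,M^{t_0}(\lambda)\,\sigma_0^y\bigr]
  \bigl[\sigma_0^y\,\bar K_+^{t_0}(\lambda)\,\sigma_0^y\bigr]
  \bigl[\sigma_0^y\,\hat M^{t_0}(\lambda)\,\sigma_0^y\bigr].
\end{equation}
The middle bracket is $K_-(\lambda)$ by \eqref{K-barK+}. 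For the outer brackets I would use the definition $\hat M(\lambda)=(-1)^N\sigma_0^y M^{t_0}(-\lambda)\sigma_0^y$, which inverted gives $\sigma_0^y M^{t_0}(\lambda)\sigma_0^y=(-1)^N\hat M(-\lambda)$, and after taking a further auxiliary-space transpose (using $(\sigma_0^y)^{t_0}=-\sigma_0^y$) one obtains $\sigma_0^y\hat M^{t_0}(\lambda)\sigma_0^y=(-1)^N M(-\lambda)$. The product of the two signs $(-1)^{2N}$ is again trivial, and one is left with $\hat M(-\lambda)\,K_-(\lambda)\,M(-\lambda)$, which equals $\mathcal{U}_-(\lambda)$ by \eqref{bmon-M}.

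There is no conceptual obstacle; the only delicate point is sign tracking, particularly in the two places where the antisymmetry of $\sigma^y$ and the sign flips $\sinh(-x)=-\sinh(x)$ combine to produce the expected result. I would lay out the $2\times 2$ computation once and for all as a lemma at the start, so the verification of each of the four identities reduces to a short substitution.
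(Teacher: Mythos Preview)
Your proposal is correct and follows exactly the approach the paper indicates: the proposition is stated with the remark that the identities ``can be shown by direct computation,'' and you have carried out precisely that direct computation, using the relations \eqref{MT} for \eqref{bmon-M}, the explicit $2\times 2$ form of $\bar K_\pm$ for \eqref{K-barK+}--\eqref{K+barK-}, and the combination of these via $\sigma_0^y$-conjugation for \eqref{U-barU+}. The sign bookkeeping you flag (in particular $(\sigma_0^y)^{t_0}=-\sigma_0^y$ and the two $(-1)^N$ factors) is handled correctly.
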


In the following, for technical reasons, it will be convenient to use the representation \eqref{bmon-M} of the boundary monodromy matrix $\mathcal{U}_-(\lambda)$ in terms of the bulk monodromy matrix $M(\lambda)$.

\subsection{Gauge transformation of the reflection algebra}

The quantum version of the Separation of Variable (SoV) approach \cite{Skl85,Skl85a,Skl92,Skl95}, see
also \cite{MaiN18,MaiN19c} for a more general version of the SoV approach, has been used to solve the spectral model of the boundary
transfer matrices \eqref{transfer}.  Even if this is not required in the latter framework, we recall here the gauge transformations introduced in \cite{CaoLSW03,FalKN14} and used in \cite{KitMNT18} to
transform the model into an effective one in which at least one of the
boundary matrices becomes triangular \cite{Nic12}.

\subsubsection{Vertex-IRF transformation}

Let us recall the definition of the trigonometric solid-on-solid (SOS) (or dynamical) $R$-matrix:
\begin{equation}
R^{\mathrm{SOS}}(\lambda |\beta )=%
\begin{pmatrix}
\sinh (\lambda +\eta ) & 0 & 0 & 0 \\ 
0 & \frac{\sinh (\eta (\beta +1))}{\sinh (\eta \beta )}\,\sinh \lambda & 
\frac{\sinh (\lambda +\eta \beta )}{\sinh (\eta \beta )}\,\sinh \eta & 0 \\ 
0 & \frac{\sinh (\eta \beta -\lambda )}{\sinh (\eta \beta )}\,\sinh \eta & 
\frac{\sinh (\eta (\beta -1))}{\sinh (\eta \beta )}\,\sinh \lambda & 0 \\ 
0 & 0 & 0 & \sinh (\lambda +\eta )%
\end{pmatrix}%
,  \label{R-SOS}
\end{equation}
which also depends on the so-called parameter $\beta $. This $R$-matrix can be related to the  6-vertex $R$-matrix \eqref{R-6V} by the so-called Vertex-IRF transformation as
\begin{equation}
R_{12}(\lambda -\mu )\,S_{1}(\lambda |\alpha ,\beta )\,S_{2}(\mu |\alpha
,\beta +\sigma _{1}^{z})=S_{2}(\mu |\alpha ,\beta )\,S_{1}(\lambda |\alpha
,\beta +\sigma _{2}^{z})\,R_{12}^{\mathrm{SOS}}(\lambda -\mu |\beta ),
\label{Vertex-IRF}
\end{equation}
or equivalently as
\begin{equation}
R_{12}(\lambda -\mu )\,S_{2}(-\mu |\alpha ,\beta )\,S_{1}(-\lambda |\alpha
,\beta +\sigma _{2}^{z})=S_{1}(-\lambda |\alpha ,\beta )\,S_{2}(-\mu |\alpha
,\beta +\sigma _{1}^{z})\,R_{21}^{\mathrm{SOS}}(\lambda -\mu |\beta ).
\label{Vertex-IRF2}
\end{equation}
The relations \eqref{Vertex-IRF} and \eqref{Vertex-IRF2} involve the Vertex-IRF transformation matrix
\begin{equation}
S(\lambda |\alpha ,\beta )=%
\begin{pmatrix}
e^{\lambda -\eta (\beta +\alpha )} & e^{\lambda +\eta (\beta -\alpha )} \\ 
1 & 1%
\end{pmatrix}%
,  \label{mat-S}
\end{equation}
which depends on the spectral parameter $\lambda$, on the dynamical parameter $\beta$, and on an arbitrary shift  $\alpha $  of the spectral parameter.

\subsubsection{Gauge transformation of the boundary monodromy matrices}

Let us introduce the following gauged transformed boundary
monodromy matrix $\mathcal{U}_{-}(\lambda )$: 
\begin{align}
\mathcal{U}_{-}(\lambda |\alpha ,\beta )& =S_{0}^{-1}(\eta /2-\lambda
|\alpha ,\beta )\ \mathcal{U}_{-}(\lambda )\ S_{0}(\lambda -\eta /2|\alpha
,\beta )  \notag \\
& =%
\begin{pmatrix}
\mathcal{A}_{-}(\lambda |\alpha ,\beta ) & \mathcal{B}_{-}(\lambda |\alpha
,\beta ) \\ 
\mathcal{C}_{-}(\lambda |\alpha ,\beta ) & \mathcal{D}_{-}(\lambda |\alpha
,\beta )%
\end{pmatrix}%
.  \label{gauged-U}
\end{align}
It is worth remarking that, by definition,
\begin{align}
\mathcal{B}_{-}(\lambda |\alpha ,\beta )
&=\mathcal{C}_{-}(\lambda |\alpha,-\beta )
\nonumber\\
&=\frac{e^{\eta \beta }[e^{\lambda -\frac{\eta }{2}+\eta (\beta
-\alpha )}\mathcal{C}_{-}(\lambda )+\mathcal{D}_{-}(\lambda )-e^{2\lambda
-\eta } \mathcal{A}_{-}(\lambda )
-e^{\lambda -\frac{\eta }{2}+\eta (\alpha -\beta )}\mathcal{B}_{-}(\lambda )]}{2\sinh (\eta
\beta )},
\end{align}
so that $\sinh (\eta \beta )\,e^{-\eta \beta }\mathcal{B}_{-}(\lambda |\alpha
,\beta )$ and $\sinh (\eta \beta )\,e^{\eta \beta }\mathcal{C}_{-}(\lambda
|\alpha ,\beta )$ depend on the external gauge parameters only through the combinations $\alpha -\beta $ and $\alpha +\beta $, respectively.
We will denote:
\begin{align}
  &\widehat{\mathcal{B}}_-(\lambda|\alpha-\beta)
  =\sinh (\eta \beta )\,e^{-\eta \beta }\, e^{-(\lambda-\eta/2)}\,
  \mathcal{B}_{-}(\lambda |\alpha,\beta ),
  \label{Bhat}\\
  &\widehat{\mathcal{C}}_-(\lambda|\alpha+\beta)
  =\sinh (\eta \beta )\,e^{\eta \beta }\, e^{-(\lambda-\eta/2)}\,\mathcal{C}_{-}(\lambda |\alpha,\beta ).
  \label{Chat}
\end{align}
Moreover, the gauged monodromy matrix satisfies the dynamical reflection equation:
\begin{multline}
R_{21}^{\mathrm{SOS}}(\lambda -\mu |\beta )\,\mathcal{U}_{-,1}(\lambda
|\alpha ,\beta +\sigma _{2}^{z})\,R_{12}^{\mathrm{SOS}}(\lambda +\mu -\eta
|\beta )\,\mathcal{U}_{-,2}(\mu |\alpha ,\beta +\sigma _{1}^{z})
\label{dyn_refl} \\
=\mathcal{U}_{-,2}(\mu |\alpha ,\beta +\sigma _{1}^{z})\,R_{21}^{\mathrm{SOS}%
}(\lambda +\mu -\eta |\beta )\,\mathcal{U}_{-,1}(\lambda |\alpha ,\beta
+\sigma _{2}^{z})\,R_{12}^{\mathrm{SOS}}(\lambda -\mu |\beta ).
\end{multline}
From the inversion relation \eqref{inv-U-} of $\mathcal{U}_{-}(\lambda )$, we derive the inversion relation for the matrix $\mathcal{U}_{-}(\lambda |\alpha,\beta )$:   
\begin{equation}
\mathcal{U}_{-}(\lambda +\eta /2|\alpha ,\beta )\,\mathcal{U}_{-}(-\lambda
+\eta /2|\alpha ,\beta )=\frac{\det_{q}\mathcal{U}_{-}(\lambda )}{\sinh
(2\lambda -2\eta )},
\end{equation}
where $\det_{q}\mathcal{U}_{-}(\lambda )$ is the quantum determinant %
\eqref{det-U-}, so that
\begin{align}
\frac{\det_{q}\mathcal{U}_{-}(\lambda )}{\sinh (2\lambda -2\eta )}
& =\mathcal{A}_{-}(\eta /2+\epsilon \lambda |\alpha ,\beta )\,\mathcal{A}%
_{-}(\eta /2-\epsilon \lambda |\alpha ,\beta )  
+\mathcal{B}_{-}(\eta /2+\epsilon \lambda |\alpha ,\beta )\,%
\mathcal{C}_{-}(\eta /2-\epsilon \lambda |\alpha ,\beta )  \notag \\
& =\mathcal{D}_{-}(\eta /2+\epsilon \lambda |\alpha ,\beta )\,\mathcal{D}%
_{-}(\eta /2-\epsilon \lambda |\alpha ,\beta )  
+\mathcal{C}_{-}(\eta /2+\epsilon \lambda |\alpha ,\beta )\,%
\mathcal{B}_{-}(\eta /2-\epsilon \lambda |\alpha ,\beta ),  \label{det-U-SOS}
\end{align}
for any $\epsilon\in\{+,-\}$.
Let us also recall the relations between the elements of \eqref{gauged-U}, which can easily be deduced from the relations \eqref{rel-A-D}-\eqref{rel-B-C}:
\begin{align}
  &\mathcal{D}_-(\lambda|\alpha,\beta+1)
  =\frac{\sinh\eta\,\sinh(2\lambda+\eta\beta)}{\sinh(\eta(\beta+1))\,\sinh(2\lambda)}\,\mathcal{A}_-(\lambda|\alpha,\beta-1)
  \nonumber\\
  &\hspace{6cm}+e^{2\lambda}\frac{\sinh(\eta\beta)\,\sinh(2\lambda-\eta)}{\sinh(\eta(\beta+1))\,\sinh(2\lambda)}\, \mathcal{A}_-(-\lambda|\alpha,\beta-1),
  \\
   &\mathcal{A}_-(\lambda|\alpha,\beta-1)
  =-\frac{\sinh\eta\,\sinh(2\lambda-\eta\beta)}{\sinh(\eta(\beta-1))\,\sinh(2\lambda)}\,\mathcal{D}_-(\lambda|\alpha,\beta+1)
  \nonumber\\
  &\hspace{6cm}+e^{2\lambda}\frac{\sinh(\eta\beta)\,\sinh(2\lambda-\eta)}{\sinh(\eta(\beta-1))\,\sinh(2\lambda)}\, \mathcal{D}_-(-\lambda|\alpha,\beta+1),
  \\
  &\mathcal{B}_-(-\lambda|\alpha,\beta)=-e^{-2\lambda}\frac{\sinh(2\lambda+\eta)}{\sinh(2\lambda-\eta)}\,\mathcal{B}_-(\lambda|\alpha,\beta),
  \\
  &\mathcal{C}_-(-\lambda|\alpha,\beta)=-e^{-2\lambda}\frac{\sinh(2\lambda+\eta)}{\sinh(2\lambda-\eta)}\,\mathcal{C}_-(\lambda|\alpha,\beta),
\end{align}

It will be convenient in the following to use the boundary-bulk decomposition \eqref{bmon-M} of the boundary monodromy matrix $\mathcal{U}_-(\lambda)$ \eqref{def-U+} in terms of the bulk monodromy matrix $M(\lambda)$ \eqref{bulk-mon}. The latter can be rewritten at the level of the gauge-transformed boundary monodromy matrix \eqref{gauged-U} as
\begin{equation}\label{bound-bulk-gauge}
\mathcal{U}_{-}(\lambda |\alpha,\beta )
=\hat{M}(-\lambda |(\gamma ,\delta ),(\alpha ,\beta))\,
K_{-}(\lambda |(\gamma ,\delta),(\gamma ^{\prime },\delta ^{\prime }))\,
M(-\lambda |(\gamma^{\prime },\delta ^{\prime }),(\alpha ,\beta )),
\end{equation}
in which we have defined
\begin{align}
M(\lambda |(\alpha ,\beta ),(\gamma ,\delta )) 
&=S^{-1}(-\eta /2-\lambda|\alpha ,\beta )\,M(\lambda )S(-\eta /2-\lambda |\gamma ,\delta ) \nonumber\\
&=
\begin{pmatrix}
A(\lambda |(\alpha ,\beta ),(\gamma ,\delta )) & B(\lambda |(\alpha ,\beta),(\gamma ,\delta )) \\ 
C(\lambda |(\alpha ,\beta ),(\gamma ,\delta )) & D(\lambda |(\alpha ,\beta),(\gamma ,\delta ))
\end{pmatrix} ,
\label{gauge-M}\\
\hat{M}(\lambda |(\alpha ,\beta ),(\gamma ,\delta ))
&=S^{-1}(\lambda +\eta/2|\gamma ,\delta )\,\hat{M}(\lambda )\,S(\lambda +\eta /2|\alpha ,\beta ).
\label{gauge-Mhat}
\end{align}
and
\begin{equation}\label{gauge-K}
K_{-}(\lambda |(\gamma ,\delta ),(\gamma ^{\prime },\delta ^{\prime}))
=S^{-1}(\eta /2-\lambda |\gamma ,\delta )\,K_{-}(\lambda)\,
S(\lambda -\eta /2|\gamma ^{\prime },\delta ^{\prime }).
\end{equation}
It is easy to see that, up to a global normalization factor, the  gauged operators $A$, $B$, $C$ and $D$ elements of \eqref{gauge-M} depend in fact only on two combinations $\alpha\pm\beta$ and $\gamma\pm\delta$ of the gauge parameters. To highlight this dependence, we will sometimes use the following notation:
\begin{equation}\label{redef-gauge-op}
M(\lambda |(\alpha ,\beta ),(\gamma ,\delta )) 
 =\frac{e^{\eta (\alpha +1/2)}}{2\sinh \eta \beta } 
 \begin{pmatrix} 
A(\lambda |\alpha -\beta ,\gamma +\delta ) & B(\lambda |\alpha -\beta,\gamma -\delta ) \\ 
C(\lambda |\alpha +\beta ,\gamma +\delta ) & D(\lambda |\alpha +\beta,\gamma -\delta ) 
\end{pmatrix}. 
\end{equation}
%
We will also use the notation:
\begin{equation}
M_{\epsilon ,\epsilon ^{\prime }}(\lambda |\alpha +(-1)^{\epsilon }\beta,\gamma -(-1)^{\epsilon ^{\prime }}\delta )
=\frac{2\sinh \eta \beta }{e^{\eta (\alpha +1/2)}}\,
M_{\epsilon ,\epsilon ^{\prime }}(\lambda |(\alpha,\beta ),(\gamma ,\delta )).
\end{equation}
Some useful properties of the bulk gauge Yang-Baxter algebra satisfied by \eqref{gauge-M}-\eqref{redef-gauge-op} are listed in Appendix~\ref{app-gaugeYBbulk}.
In particular, by using \eqref{MhatM-comp}, we can rewrite \eqref{bound-bulk-gauge} in components  as 
%
\begin{multline}
\mathcal{U}_{-}(\lambda |\alpha ,\beta )
= (-1)^N\,  \frac{e^{-\eta\gamma}\sinh (\eta \delta )}{e^{-\eta\alpha}\sinh (\eta \beta )} 
\begin{pmatrix}
D(\lambda |(\gamma-1,\delta),(\alpha-1,\beta)) & -B(\lambda |(\gamma-1,\delta),(\alpha-1,\beta)) \\ 
-C(\lambda |(\gamma-1,\delta),(\alpha-1,\beta)) & A(\lambda |(\gamma-1,\delta),(\alpha-1,\beta))
\end{pmatrix}
\\
 \times
 K_{-}(\lambda |(\gamma ,\delta ),(\gamma^{\prime },\delta ^{\prime }))\,
\begin{pmatrix}
A(-\lambda |(\gamma ^{\prime },\delta ^{\prime }),(\alpha,\beta) ) & 
B(-\lambda |(\gamma ^{\prime },\delta ^{\prime }),(\alpha ,\beta) ) \\ 
C(-\lambda |(\gamma ^{\prime },\delta ^{\prime }),(\alpha ,\beta) ) & 
D(-\lambda |(\gamma ^{\prime },\delta ^{\prime }),(\alpha ,\beta) )%
\end{pmatrix}.
\end{multline}
%

\subsubsection{Expression of the transfer matrix in terms of the gauged generators}

The transfer matrix can be expressed in terms of the elements of the
gauged monodromy matrix as follows:

\begin{proposition}\label{Diagonal-TM}
Under the following choice of the gauge parameters: 
\begin{align}
& \eta \alpha =-\tau _{-}+\frac{\epsilon _{-}^{\prime }-\epsilon _{-}}{2}%
(\varphi _{-}-\psi _{-})-\frac{\epsilon _{-}+\epsilon _{-}^{\prime }}{4}i\pi %
\mod i\pi ,  \label{Gauge-cond-A} \\
& \eta \beta =\frac{\epsilon _{-}+\epsilon _{-}^{\prime }}{2}(\varphi
_{-}-\psi _{-})+\frac{2+\epsilon _{-}-\epsilon _{-}^{\prime }}{4}i\pi \mod %
i\pi ,  \label{Gauge-cond-B}
\end{align}
for $\epsilon _{-},\epsilon _{-}^{\prime }\in \{1,-1\}$,
the transfer matrix can be expressed as 
\begin{align}
\mathcal{T}(\lambda )
& =\mathsf{\bar a}_{+}(\lambda )\frac{\sinh (2\lambda +\eta )}{\sinh 2\lambda }\,
      \mathcal{A}_{-}(\lambda |\alpha ,\beta -1)
     +\mathsf{\bar a}_{+}(-\lambda )\frac{\sinh (2\lambda -\eta )}{\sinh 2\lambda }\,
     \mathcal{A}_{-}(-\lambda |\alpha ,\beta -1),  \label{Gauge-T-decomp-A} \\
& =\mathsf{\bar d}_{+}(\lambda)\frac{\sinh (2\lambda +\eta )}{\sinh 2\lambda }\,
     \mathcal{D}_{-}(\lambda |\alpha ,\beta +1)
   +\mathsf{\bar d}_{+}(-\lambda)\frac{\sinh (2\lambda -\eta )}{\sinh 2\lambda }\,
   \mathcal{D}_{-}(-\lambda |\alpha,\beta +1),  \label{Gauge-T-decomp-D}
\end{align}
where 
\begin{align}
\mathsf{\bar a}_{+}(\lambda )& =\epsilon _{-}\,e^{-\lambda +\frac{\eta }{2}}\,%
\frac{\sinh (\lambda -\frac{\eta }{2}+\epsilon _{-}\varphi _{-})\,\cosh
(\lambda -\frac{\eta }{2}-\epsilon _{-}\psi _{-})}{\sinh \varphi _{-}\,\cosh
\psi _{-}}, \\
\mathsf{\bar d}_{+}(\lambda )& =-\epsilon _{-}\,e^{-\lambda +\frac{\eta }{2}}\,%
\frac{\sinh (\lambda -\frac{\eta }{2}-\epsilon _{-}\varphi _{-})\,\cosh
(\lambda -\frac{\eta }{2}+\epsilon _{-}\psi _{-})}{\sinh \varphi _{-}\,\cosh
\psi _{-}}.
\end{align}
\end{proposition}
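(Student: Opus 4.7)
The plan is to exploit the gauge transformation \eqref{gauged-U} of the boundary monodromy matrix to rewrite the transfer matrix in a form where the gauge-transformed $K_+$ matrix acquires a factorized structure under the conditions \eqref{Gauge-cond-A}-\eqref{Gauge-cond-B}. Starting from $\mathcal{T}(\lambda) = \tr_0\{K_{+,0}(\lambda)\,\mathcal{U}_{-,0}(\lambda)\}$ and inverting \eqref{gauged-U} at a shifted gauge parameter $\beta'$, the cyclicity of the trace yields
\[
\mathcal{T}(\lambda) = \tr_0\bigl\{\widetilde K_+(\lambda|\alpha,\beta')\,\mathcal{U}_{-,0}(\lambda|\alpha,\beta')\bigr\},
\]
with $\widetilde K_+(\lambda|\alpha,\beta') = S_0^{-1}(\lambda-\eta/2|\alpha,\beta')\, K_{+}(\lambda)\, S_0(\eta/2-\lambda|\alpha,\beta')$. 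Choosing $\beta' = \beta-1$ targets the decomposition \eqref{Gauge-T-decomp-A} and $\beta' = \beta+1$ targets \eqref{Gauge-T-decomp-D}.

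The main computation is then to evaluate $\widetilde K_+(\lambda|\alpha,\beta')$ explicitly as a $2\times 2$ matrix product, using the form \eqref{mat-K} of $K_+(\lambda) = K(\lambda+\eta;\varsigma_-,\kappa_-,\tau_-)$ and the form \eqref{mat-S} of $S$, and to verify that under \eqref{Gauge-cond-A}-\eqref{Gauge-cond-B} — together with the reparametrization \eqref{reparam-bords} of $(\varsigma_-,\kappa_-)$ in terms of $(\varphi_-,\psi_-)$ — the four entries of $\widetilde K_+$ factorize as products of $\sinh(\lambda-\eta/2\pm\epsilon_-\varphi_-)$ and $\cosh(\lambda-\eta/2\mp\epsilon_-\psi_-)$, precisely the hyperbolic factors appearing in the definitions of $\mathsf{\bar a}_+(\pm\lambda)$ and $\mathsf{\bar d}_+(\pm\lambda)$. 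The four discrete sign choices $(\epsilon_-,\epsilon_-')\in\{+1,-1\}^2$ appearing in the gauge conditions account for the four ways of distributing the signs of $\varphi_-$ and $\psi_-$ among these factors; the discrete $i\pi$ shifts in \eqref{Gauge-cond-A}-\eqref{Gauge-cond-B} are precisely what is needed to adjust the overall sign of each factorization.

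Given the factorized form of $\widetilde K_+$, expanding the trace produces a linear combination of $\mathcal{A}_-$, $\mathcal{B}_-$, $\mathcal{C}_-$ and $\mathcal{D}_-$ at gauge $(\alpha,\beta')$. The $\mathcal{B}_-$, $\mathcal{C}_-$ and (for $\beta'=\beta-1$) $\mathcal{D}_-$ contributions can be consolidated using the $\lambda\to-\lambda$ symmetry relations \eqref{rel-B-C} and the inversion-type identities between $\mathcal{A}_-(\lambda|\alpha,\beta-1)$, $\mathcal{A}_-(-\lambda|\alpha,\beta-1)$ and $\mathcal{D}_-(\pm\lambda|\alpha,\beta+1)$ stated just before the proposition. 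The rational prefactors $\sinh(2\lambda\pm\eta)/\sinh(2\lambda)$ in \eqref{Gauge-T-decomp-A}-\eqref{Gauge-T-decomp-D} arise naturally from these consolidations, and matching with the explicit expressions of $\mathsf{\bar a}_+$ and $\mathsf{\bar d}_+$ completes the identification.

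The main technical obstacle is the explicit factorization of $\widetilde K_+(\lambda|\alpha,\beta')$ under the gauge conditions — a careful hyperbolic-identity manipulation whose success depends crucially on the precise form of \eqref{Gauge-cond-A}-\eqref{Gauge-cond-B}, and on the way the reparametrization \eqref{reparam-bords} conspires with the gauge shifts to produce the required product structure. The equivalence of the two decompositions \eqref{Gauge-T-decomp-A} and \eqref{Gauge-T-decomp-D} then follows automatically from the gauge-shift relations between $\mathcal{A}_-(\lambda|\alpha,\beta-1)$ and $\mathcal{D}_-(\lambda|\alpha,\beta+1)$, providing an internal consistency check that fixes the remaining normalization ambiguity between the two forms.
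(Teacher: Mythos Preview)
Your outline is a correct and standard computational route to this result. The paper, however, does not supply a proof at all: its entire argument is the sentence ``This is just a rewriting of our known results, see for example \cite{KitMNT18}.'' In other words, the proposition is quoted from earlier work and the authors simply cite the reference where the derivation was carried out.

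What you sketch --- inserting the gauge transformation into the trace, computing the gauged $K_+$ explicitly, showing that under the conditions \eqref{Gauge-cond-A}--\eqref{Gauge-cond-B} its entries factorize into the hyperbolic products defining $\mathsf{\bar a}_+(\pm\lambda)$ and $\mathsf{\bar d}_+(\pm\lambda)$, and then using the $\lambda\to-\lambda$ identities to consolidate the trace into the two-term forms \eqref{Gauge-T-decomp-A}--\eqref{Gauge-T-decomp-D} --- is precisely the kind of direct verification one finds in the cited literature (e.g.\ the appendices of \cite{FalKN14} or the relevant sections of \cite{KitMNT18}). So your approach is not different in spirit from what the paper relies on; it is simply an explicit version of what the paper leaves to a citation. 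The one place where care is genuinely needed is the factorization step: the precise $i\pi$ shifts in \eqref{Gauge-cond-A}--\eqref{Gauge-cond-B} and the reparametrization \eqref{reparam-bords} must be tracked carefully to get the signs and the $\epsilon_-$ dependence in $\mathsf{\bar a}_+$, $\mathsf{\bar d}_+$ right, but you have correctly identified this as the crux.
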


\begin{proof}
This is just a rewriting of our known results, see for example \cite%
{KitMNT18}.
\end{proof}

\section{Transfer matrix spectrum and eigenstates by SoV}
\label{sec-spectrum}

The transfer matrix $\mathcal{T}(\lambda )$ is a
polynomial in $\sinh ^{2}\lambda $ of degree $N+2$  which satisfies the following centrality conditions:
\begin{equation}
\mathcal{T}(\lambda )\underset{\lambda \rightarrow \pm \infty }{\sim }\frac{%
\kappa _{+}\kappa _{-}\,\cosh (\tau _{+}-\tau _{-})}{2^{2N+1}\,\sinh
\varsigma _{+}\sinh \varsigma _{-}}\,e^{\pm 2(N+2)\lambda },
\label{T-asympt}
\end{equation}
and
\begin{align}
& \mathcal{T}(\eta /2)=2\,(-1)^{N}\cosh \eta \, {\det }_{q}T(0),
\label{T-value1} \\
& \mathcal{T}(\eta /2+i\pi /2)=-2\,\cosh \eta \,\coth \varsigma _{+}\coth
\varsigma _{-}\,{\det }_{q}T(i\pi /2),  \label{T-value2}
\end{align}
plus the quantum determinant identity: 
\begin{equation}
\mathcal{T}(\xi _{n}+\eta /2)\, \mathcal{T}(\xi _{n}-\eta /2)=-\frac{%
\det_{q}K_{+}(\xi _{n})\,\det_{q}\mathcal{U}_{-}(\xi _{n})}{\sinh (2\xi
_{n}+\eta )\,\sinh (2\xi _{n}-\eta )},\quad \forall n\in \{1,\ldots ,N\}.
\end{equation}
A basis of the space of states (the SoV basis) which separates the variables
for the $\mathcal{T}(\lambda )$-spectral problem at particular values of the
spectral parameter $\lambda $ (related to the inhomogeneity parameters of
the model) was constructed in \cite{Nic12,FalKN14} in the framework of a
generalization of the Sklyanin's SoV approach. There, the complete spectrum
characterization of the transfer matrix was derived. This characterization
was rewritten in \cite{KitMN14} in terms of polynomial solutions of some
functional $TQ$-equation: when the boundary parameters satisfy a particular
constraint (see \eqref{homog-cond} below), the $TQ$-equation is a usual one,
otherwise the $TQ$-equation proposed in \cite{KitMN14} contains an additional
inhomogeneous term which modifies accordingly the resulting Bethe equations.
Let us also mention that, in \cite{MaiN19c}, the construction of \cite%
{Nic12,FalKN14} was generalized beyond the pseudo-diagonalizability of the
gauged $\mathcal{B}_{-}$-operator, i.e. beyond the applicability of the
generalized Sklyanin's SoV approach.

Here, we briefly recall the main results concerning the characterization of the transfer matrix spectrum in this framework.
We more particularly focus on the case with a constraint, for which the SoV description of the spectrum of \cite{Nic12,FalKN14} can be reformulated in terms of solutions of a usual (homogeneous) $TQ$-equation and the eigenstates as generalized Bethe states  \cite{KitMN14}.

\subsection{On the SoV basis}

In this subsection, we recall briefly the SoV basis both in the generalized
Sklyanin's approach 
\cite{Nic12,FalKN14} and in the new the SoV schema 
\cite{MaiN19}, which are at the
basis of the SoV characterization of the transfer matrix spectrum presented in the
next subsections.

Let us introduce some notations. We define\footnote{%
Note these are the coefficients that enters in the triangularization of the $%
K_{-}(\lambda )\,$\ matrix by gauge transformation.}
\begin{equation}
\mathsf{b}_{-}(\lambda |\alpha ,\beta )=\mathsf{c}_{-}(\lambda |\alpha
,-\beta )=\frac{e^{\lambda +\eta \beta -\eta /2}\sinh (2\lambda -\eta )}{%
\sinh (\eta \beta )\,}\,\mathsf{\bar{b}}_{-}(\alpha -\beta ),
\label{b-lambda}
\end{equation}
where
\begin{align}
\mathsf{\bar{b}}_{-}(x) 
&=\frac{-1\,}{2\sinh \varsigma _{+}}\left[ 2\kappa_{+}\sinh (x\eta +\tau _{+})+e^{\varsigma _{+}}\right]  \\
&=-\frac{\kappa _{+}\,}{\sinh \varsigma _{+}}\left[ \sinh (x\eta +\tau_{+})+\sinh (\varphi _{+}+\psi _{+})\right] ,
\end{align}
and from them the following product along the chain:
\begin{equation}
N_{-}(\alpha ,\beta ,\{\xi \})\equiv \prod_{j=1}^{N}\left[ \frac{\mathsf{b}%
_{-}(\frac{\eta }{2}-\xi _{j}|\alpha ,\beta +1+N-2j)}{\mathsf{g}_{-}(\frac{%
\eta }{2}-\xi _{j})}\frac{\sinh (\eta (\beta +1+N-2j))}{\sinh (\eta (j-\beta
+N))}\right] ^{1/2}.
\end{equation}
Here, $\mathsf{g}_{-}(\lambda )$ is a function such that 
\begin{equation}
\mathsf{g}_{-}(\lambda +\eta /2)\,\mathsf{g}_{-}(-\lambda +\eta /2)=\frac{%
\det_{q}K_{-}(\lambda )}{\sinh (2\lambda -2\eta )}.  \label{g-}
\end{equation}
Then, we also introduce the coefficients: 
\begin{equation}
\mathbf{A}_{\boldsymbol{\varepsilon}}(\lambda )=(-1)^{N}\,\frac{\sinh
(2\lambda +\eta )}{\sinh (2\lambda )}\,\mathbf{a}_{\boldsymbol{\varepsilon}%
}(\lambda )\,a(\lambda )\,d(-\lambda ),\label{DefFullA}
\end{equation}
where 
\begin{align}\label{a_eps}
\mathbf{a}_{\boldsymbol{\varepsilon}}(\lambda )&=\frac{\sinh (\lambda -\frac{%
\eta }{2}+\epsilon _{\varphi _{+}}\varphi _{+})\,\cosh (\lambda -\frac{\eta 
}{2}+\epsilon _{\psi _{+}}\psi _{+})}{\sinh (\epsilon _{\varphi _{+}}\varphi
_{+})\,\cosh (\epsilon _{\psi _{+}}\psi _{+})}\notag\\
&\times\frac{\sinh (\lambda -\frac{\eta }{2}+\epsilon _{\varphi _{-}}\varphi _{-})\,\cosh (\lambda -\frac{\eta 
}{2}-\epsilon _{\psi _{-}}\psi _{-})}{\sinh (\epsilon _{\varphi _{-}}\varphi
_{-})\,\cosh (\epsilon _{\psi _{-}}\psi _{-})},
\end{align}
for any choice of $\boldsymbol{\varepsilon}\equiv (\epsilon _{\varphi
_{+}},\epsilon _{\varphi _{-}},\epsilon _{\psi _{+}},\epsilon _{\psi
_{-}})\in \{-1,1\}^{4}$ such that $\epsilon _{\varphi _{+}}\epsilon
_{\varphi _{-}}\epsilon _{\psi _{+}}\epsilon _{\psi _{-}}=1$, which satisfy
the quantum determinant condition:
\begin{equation}
\mathbf{A}_{\boldsymbol{\varepsilon}}(\lambda +\eta /2)\,\mathbf{A}_{%
\boldsymbol{\varepsilon}}(-\lambda +\eta /2)=-\frac{\det_{q}K_{+}(\lambda
)\,\det_{q}\mathcal{U}_{-}(\lambda )}{\sinh (2\lambda +\eta )\,\sinh
(2\lambda -\eta )}.
\end{equation}
Finally, we define, for any $N$-tuple of variables $(x_{1},\ldots ,x_{N})$,
the generalized Vandermonde determinant $\widehat{V}(x_{1},\ldots ,x_{N})$
as 
\begin{equation}
\widehat{V}(x_{1},\ldots ,x_{N})=\det_{1\leq i,j\leq N}\left[ \sinh
^{2(j-1)}x_{i}\right] =\prod_{j<k}(\sinh ^{2}x_{k}-\sinh ^{2}x_{j}),
\end{equation}
and by them the following function of the inhomogeneities
\begin{equation}
N(\{\xi \})=\widehat{V}(\xi _{1},\ldots ,\xi _{N})\,\frac{\widehat{V}(\xi
_{1}^{(0)},\ldots ,\xi _{N}^{(0)})}{\widehat{V}(\xi _{1}^{(1)},\ldots ,\xi
_{N}^{(1)})},
\end{equation}
where 
\begin{equation}\label{def-xi-shift}
\xi _{n}^{(h)}=\xi _{n}+\eta /2-h\eta ,\qquad 1\leq n\leq N,\quad h\in
\{0,1\}.
\end{equation}

\subsubsection{Generalized Sklyanin's SoV basis}

Following standard notations, let us define
\begin{alignat}{2}
  &\bra{0} =\otimes _{n=1}^{\mathsf{N}}\left( 1,0\right) _{n},
  & \qquad
  &\bra{\underline{0} }=\otimes _{n=1}^{\mathsf{N}}\left( 0,1\right)_{n}, 
  \label{state0}\\
  &\ket{0} =\otimes _{n=1}^{\mathsf{N}}\begin{pmatrix} \,1\, \\ 0 \end{pmatrix} _{\! n},
  & \qquad
  &\ket{\underline{0} } =\otimes _{n=1}^{\mathsf{N}}\begin{pmatrix}\, 0\, \\ 1 \end{pmatrix}_{\! n}.
  \label{state0bar}
\end{alignat}
%
For each choice of $\boldsymbol{\varepsilon}\in \{-1,1\}^{4}$ and each $N$-tuple 
$\mathbf{h}\equiv (h_{1},\ldots ,h_{N})\in \{0,1\}^{N}$, we define the
following states\footnote{%
Up to a different normalization and the use of the symmetry \eqref{symm-miroir}, the states \eqref{SOVstate-R}
correspond to the states defined in eq. (4.6) of \cite{KitMNT18} multiplied
on the right by $S_{1\dots N}(\{\xi \}|\alpha ,\beta )$, whereas the states %
\eqref{SOVstate-L} correspond to the states defined in eq. (4.7) of \cite%
{KitMNT18} multiplied on the left by $S_{1\ldots N}(\{\xi \}|\alpha ,\beta
)^{-1}$.}
\begin{align}
& \ket{ \mathbf{h},\alpha ,\beta +1 }_\mathrm{Sk}
   =\frac{1}{N_{-}(\alpha ,\beta ,\{\xi \})}\,
   \prod_{j=1}^{N}\left( \frac{\mathcal{D}_{-}(\xi _{j}+\eta /2|\beta +1)}{k_{j}\,\mathsf{A}_{-}(\eta/2-\xi _{j})}\right) ^{\! h_{j}}
   S_{1\ldots N}(\{\xi \}|\alpha ,\beta)\, \ket{\underline{0} },
\label{SOVstate-R} \\
& {}_\mathrm{Sk\!}\bra{\alpha ,\beta -1,\mathbf{h} }
   =\frac{1}{N_{-}(\alpha ,\beta ,\{\xi \})}\,
     \bra{0}\,S_{1\ldots N}(\{\xi \}|\alpha ,\beta )^{-1}
     \prod_{j=1}^{N}\left( \frac{\mathcal{A}_{-}(\eta /2-\xi_{j}|\beta -1)}{\mathsf{A}_{-}(\eta /2-\xi _{j})}\right) ^{\! 1-h_{j}},
\label{SOVstate-L}
\end{align}
in which $h_j$, $j\in\{1,\ldots,N\}$, denotes the $j$-th component of the $N$-tuple $\mathbf{h}$.
Here we have defined the following product of local gauge matrices \eqref{mat-S} on $\mathcal{H}=\otimes _{n=1}^{N}\mathcal{H}_{n}$: 
\begin{align}
S_{1\dots N}(\{\xi \}|\alpha ,\beta )
  & =S_{1}(-\xi _{1}|\alpha ,\beta)\,S_{2}(-\xi _{2}|\alpha ,\beta +\sigma _{1}^{z})\ldots 
  S_{N}(-\xi_{N}|\alpha ,\beta +\sigma _{1}^{z}+\ldots +\sigma _{N-1}^{z})  \notag \\
  & =\prod_{n=1\rightarrow N}S_{n}\Bigg(-\xi _{n}\,\Big|\,\alpha ,\beta+\sum_{j=1}^{n-1}\sigma _{j}^{z}\Bigg),  \label{Sq}
\end{align}
where the arrow indicates in which order we have to consider the product of the non-commuting operators. 
The normalization coefficients $k_j$ and $\mathsf{A}_{-}(\lambda )$ are chosen as
\begin{equation}
k_{j}=\frac{\sinh (2\xi _{j}+\eta )}{\sinh (2\xi _{j}-\eta )},\qquad \mathsf{A}_{-}(\lambda )=\mathsf{g}_{-}(\lambda )\,a(\lambda )\,d(-\lambda ),
\label{A-}
\end{equation}
with here $\mathsf{g}_-$ given in terms of $\boldsymbol{\varepsilon}$ and of the boundary parameters $\varphi_\pm,\psi_\pm$ as
%
\begin{multline}\label{g-eps}
   \mathsf{g}_-(\lambda+\eta/2)=\epsilon_+\,\epsilon_{\varphi_-}\,(-1)^N\,
   \frac{\sinh(\lambda+\epsilon_{\varphi_+}\varphi_+)\,\cosh(\lambda+\epsilon_{\psi_+}\psi_+)}{\sinh(\epsilon_{\varphi_+}\varphi_+)\,\cosh(\epsilon_{\psi_+}\psi_+)}
   \\
   \times
   \frac{\sinh(\lambda+\epsilon_{\varphi_-}\varphi_-)\,\cosh(\lambda-\epsilon_{\psi_-}\psi_-)}{\sinh(\lambda+\epsilon_{+}\varphi_-)\,\cosh(\lambda-\epsilon_{+}\psi_-)},
\end{multline}
for some fixed $\epsilon_+\in\{-1,1\}$.
Then, the following proposition holds:

\begin{proposition}
Let us suppose that the inhomogeneity parameters are generic, i.e.
\begin{equation}\label{cond-inh}
\xi _{j},\xi _{j}\pm \xi _{k}\notin \{0,-\eta ,\eta \}\text{ mod}(i\pi
),\quad \forall j,k\in \{1,\ldots ,N\},\ j\neq k,
\end{equation}
and that the following nonzero conditions,
\begin{equation}
\prod_{j=1}^{N}\mathsf{\bar{b}}_{-}(\alpha -(\beta +1+N-2j))\neq 0,
\label{Sklyanin-range}
\end{equation}
are satisfied.
Then, $\mathcal{B}_{-}(\lambda |\alpha ,\beta )$ is right and
left pseudo-diagonalizable with right and left pseudo-eigenstates \eqref{SOVstate-R} and \eqref{SOVstate-L}, and its action on these states is given as
%
\begin{align}
& \mathcal{B}_{-}(\lambda |\alpha ,\beta -1)\,\ket{\mathbf{h},\alpha ,\beta-1}_\mathrm{Sk}
=(-1)^{N}a_{\mathbf{h}}(\lambda )\,a_{\mathbf{h}}(-\lambda)\,  \notag \\
& \hspace{1.5cm}\times \mathsf{b}_{-}(\lambda |\alpha,\beta -N-1)\,\frac{\sinh
(\eta (\beta -N-1))}{\sinh (\eta (\beta -1))}\,
\frac{N_{-}(\alpha ,\beta ,\{\xi \})}{N_{-}(\alpha ,\beta-2 ,\{\xi \})}\,\, \ket{\mathbf{h},\alpha ,\beta+1 }_\mathrm{Sk},  
   \label{act-BR} \\
& _\mathrm{Sk\!}\bra{\alpha ,\beta +1,\mathbf{h}}\,\mathcal{B}_{-}(\lambda|\alpha ,\beta +1)
=(-1)^{N}a_{\mathbf{h}}(\lambda )\,a_{\mathbf{h}}(-\lambda)\,  \notag \\
& \hspace{1.5cm}\times \mathsf{b}_{-}(\lambda |\alpha,\beta +N+1)\,\frac{\sinh
(\eta \beta )}{\sinh (\eta (\beta +N))}\,
\frac{N_{-}(\alpha ,\beta ,\{\xi \})}{N_{-}(\alpha ,\beta+2 ,\{\xi \})}\,\, _\mathrm{Sk\!}\bra{\alpha ,\beta -1,\mathbf{h}},  
\label{act-BL}
\end{align}
where 
\begin{equation}
a_{\mathbf{h}}(\lambda )=\prod_{n=1}^{N}\sinh (\lambda -\xi _{n}-\eta
/2+h_{n}\eta ).  \label{a-h}
\end{equation}
Moreover, the two left and right basis satisfy the following orthogonal conditions:
\begin{equation}\label{orth-SoV-Sk}
\langle\, \alpha ,\beta -1,\mathbf{h}\,|\,\mathbf{k,}\alpha ,\beta +1\,\rangle
=\delta _{\mathbf{h},\mathbf{k}}\, 
\frac{N(\{\xi\})\, e^{2\sum_{j=1}^{N}h_{j}\xi _{j}}}{\widehat{V}(\xi _{1}^{(h_{1})},\ldots
,\xi _{N}^{(h_{N})})}.
\end{equation}
\end{proposition}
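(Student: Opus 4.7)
The plan is to follow Sklyanin's SoV construction, adapted here to the gauged boundary (dynamical) reflection algebra as in \cite{Nic12,FalKN14,KitMNT18}. There are three tasks: identify the reference state and its behavior under the gauged boundary monodromy matrix, compute the action of $\mathcal{B}_-$ on the proposed pseudo-eigenstates, and verify the orthogonality relation.

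First I would establish that the gauge-transformed reference state $S_{1\ldots N}(\{\xi\}|\alpha,\beta)\, \ket{\underline{0}}$ has the required triangular behavior under the gauged boundary algebra. The gauge is precisely tailored, via the Vertex-IRF transformation \eqref{Vertex-IRF} applied sitewise and the bulk-boundary decomposition \eqref{bound-bulk-gauge}, so that the $K_-$ matrix acquires an effectively triangular form on this state: the diagonal entries produce the pseudo-eigenvalues of $\mathcal{A}_-(\lambda|\alpha,\beta-1)$ and $\mathcal{D}_-(\lambda|\alpha,\beta+1)$, while the off-diagonal entry yields the prefactor $\mathsf{b}_-(\lambda|\alpha,\beta)$ of \eqref{b-lambda}. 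This simultaneously fixes the explicit pseudo-eigenvalue of $\mathcal{A}_-$ on the reference state and shows that the appropriate gauged $\mathcal{C}$-component annihilates it.

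Next, using the dynamical reflection equation \eqref{dyn_refl}, I would derive the commutation of $\mathcal{B}_-(\lambda|\alpha,\beta-1)$ past a product of $\mathcal{D}_-(\xi_j+\eta/2|\cdot)$'s. The SoV points $\xi_j+\eta/2$ are chosen because the commutation specializes there in a particularly clean way, yielding a ``wanted'' term plus an ``unwanted'' one whose coefficient can be evaluated explicitly using the quantum determinant \eqref{det-U-SOS}. The operators $\mathcal{D}_-(\xi_j+\eta/2|\cdot)$ with the appropriate shifted dynamical parameters mutually commute on the reference state (genericity \eqref{cond-inh} prevents degeneracies), so that the $2^N$ states \eqref{SOVstate-R} are well defined. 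To establish \eqref{act-BR} I would observe that both sides are polynomials in $\sinh^2\lambda$ of matching degree, and match them at the $N$ SoV values $\lambda=\xi_n+\eta/2-h_n\eta$, where the action collapses in a controlled way, together with the asymptotics $\lambda\to\pm\infty$ fixing the overall normalization. The factor $\sinh(\eta(\beta-N-1))/\sinh(\eta(\beta-1))$ then arises from the cumulative gauge shifts as $\mathcal{B}_-$ passes through the $N$ operators $\mathcal{D}_-$. The action \eqref{act-BL} is obtained by the transposed argument on the left reference state.

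For orthogonality I would exploit the pseudo-diagonal action: by \eqref{act-BR} and \eqref{act-BL}, $\mathcal{B}_-(\lambda|\alpha,\beta)$ acts on $\mathbf{h}$-states with coefficient proportional to $a_{\mathbf{h}}(\lambda)\, a_{\mathbf{h}}(-\lambda)$, and under \eqref{cond-inh} these polynomials are pairwise distinct on $\{0,1\}^N$. Sandwiching $\mathcal{B}_-$ between left and right SoV states therefore forces the overlap to vanish unless $\mathbf{h}=\mathbf{k}$. For the diagonal overlap I would set up a recursion on $\sum_j h_j$, using the right action of $\mathcal{D}_-(\xi_j+\eta/2|\cdot)$ and the left action of $\mathcal{A}_-(\eta/2-\xi_j|\cdot)$ together with the inversion relation \eqref{inv-U-}, so as to reduce to the vacuum-vacuum overlap; the generalized Vandermonde factor in \eqref{orth-SoV-Sk} then emerges from the determinantal change-of-basis between the power basis in $\sinh^2\lambda$ and the values at the SoV points $\xi_n^{(h_n)}$. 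The main obstacle throughout will be the careful bookkeeping of the dynamical-parameter shifts $\beta\to\beta\pm 1$ that accompany every application of the gauged operators, so as to reproduce the exact prefactors in \eqref{act-BR}--\eqref{act-BL} and the normalization $N(\{\xi\})$ in \eqref{orth-SoV-Sk}; the nonzero condition \eqref{Sklyanin-range} is precisely what guarantees that none of the recursions degenerate and that all $2^N$ SoV states remain linearly independent.
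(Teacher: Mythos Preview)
The paper does not actually prove this proposition: it is stated as a known result, with a footnote indicating that the states \eqref{SOVstate-R}--\eqref{SOVstate-L} correspond (up to normalization and the symmetry \eqref{symm-miroir}) to states constructed in \cite{KitMNT18}, and the text immediately following the proposition refers to \cite{Nic12,FalKN14} for the original construction of such SoV bases. So there is no in-paper proof to compare against.

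That said, your outline is a faithful sketch of the standard argument used in those references: triangularize the gauged $K_-$ on the reference state via the Vertex--IRF transformation, derive the $\mathcal{B}_-$--$\mathcal{D}_-$ (resp.\ $\mathcal{B}_-$--$\mathcal{A}_-$) exchange relations from the dynamical reflection equation \eqref{dyn_refl}, identify the pseudo-eigenvalues by interpolation at the SoV points $\xi_n^{(h_n)}$ plus asymptotics, and obtain orthogonality from the simplicity of the $\mathcal{B}_-$ pseudo-spectrum under \eqref{cond-inh}. The role you assign to \eqref{Sklyanin-range} (nonvanishing of $\mathsf{\bar b}_-$ so that the chain of $\mathcal{D}_-$'s does not degenerate and the $2^N$ states are independent) is also the correct one. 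Your plan is therefore sound and matches the approach of the cited literature; the only caveat is that the detailed bookkeeping of the $\beta$-shifts and the exact diagonal normalization $N(\{\xi\})$ in \eqref{orth-SoV-Sk} is where the real work lies, and your sketch acknowledges but does not carry this out.
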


In \cite{Nic12,FalKN14}, it has been shown that such basis are SoV basis of
generalized Sklyanin's type for the spectral problem of the open chain
transfer matrix under the choice \eqref{Gauge-cond-A} and \eqref{Gauge-cond-B} of the gauge parameters. The main reason
for these basis to be SoV basis is that under the choice \eqref{Gauge-cond-A} and \eqref{Gauge-cond-B} the transfer matrix become diagonal in $\mathcal{A}_{-}(\lambda |\alpha ,\beta -1)$ and $\mathcal{D}_{-}(\lambda |\alpha ,\beta +1)$ (see Proposition \ref{Diagonal-TM}) and these gauged operators act as shift operators on the
pseudospectrum of $\mathcal{B}_{-}(\lambda |\alpha ,\beta \pm 1)$,
respectively, on the left and right basis elements.

\subsubsection{The new general SoV basis}

One of the interesting achievements of the quantum SoV approach, in particular of the new SoV approach developed in \cite{MaiN19}, is its ability to describe in an universal way the fundamental objects of different models. This is for example the case for SoV basis, for which mainly the same characterization derived in Proposition 3.1 of \cite{Nic21} for the XXX spin chain holds true also for the XXZ chain:

\begin{proposition}
For a given co-vector $\bra{ S } $, let us define the following co-vectors
\begin{equation}
   \bra{\mathbf{h} }
   \equiv \bra{ S } 
   \prod_{n=1}^{N}\left( \frac{\mathcal{T}(\xi _{n}-\eta /2)}{\mathbf{A}_{\boldsymbol{\varepsilon}}(\eta /2-\xi _{n})}\right) ^{\! 1-h_{n}},
   \qquad
   \mathbf{h}\equiv (h_{1},\ldots ,h_{N})\in \{0,1\}^{N},
    \label{SoV-Basis-Open-L}
\end{equation}
and vectors
\begin{equation}
  \ket{ \mathbf{h} } \equiv \prod_{n=1}^{N}\left( \frac{\mathcal{T}(\xi
_{n}+\eta /2)}{\mathfrak{t}_{n}\mathbf{A}_{\boldsymbol{\varepsilon}}(\eta /2-\xi _{n})}%
\right) ^{\! h_{n}}
   \ket{ R } ,
   \qquad \mathbf{h}\in \{0,1\}^{N},
\label{SoV-Basis-Open-R}
\end{equation}
where
\begin{equation}
\mathfrak{t}_{n}=e^{-2\xi _{n}}\frac{\sinh2(\xi _{n}+\eta )}{\sinh2(\xi _{n}-\eta )},
\end{equation}
and where $\ket{R}$ is such that
\begin{equation}\label{def-R}
    \bra{\mathbf{h} }\,R\,\rangle =\delta _{\mathbf{h},\mathbf{0}}\,
    \frac{\widehat{V}(\xi _{1},\ldots ,\xi _{N})}{\widehat{V}(\xi _{1}^{(1)},\ldots,\xi _{N}^{(1)})}.
\end{equation}
Let us moreover assume that the boundary matrices $K_{-}(\lambda )$ and $K_{+}(\lambda )$ are
not both proportional to the identity.
Then, for almost any choice of the co-vector $\bra{S}$ and of the inhomogeneity parameters satisfying \eqref{cond-inh}, the vector $ \ket{R} $ is uniquely defined by \eqref{def-R} and the families of co-vectors \eqref{SoV-Basis-Open-L} and of vectors \eqref{SoV-Basis-Open-R} are basis of $\mathcal{H^*}$ and of $\mathcal{H}$ respectively, which moreover satisfy the following orthogonality conditions: 
\begin{equation}
\langle\, \mathbf{h}\,|\,\mathbf{h}^{\prime }\,\rangle 
=\delta _{\mathbf{h},\mathbf{h}^{\prime }}\,
\frac{N(\{\xi \})\, e^{2\sum_{j=1}^{N}h_{j}\xi _{j}}}{\widehat{V}(\xi _{1}^{(h_{1})},\ldots ,\xi _{N}^{(h_{N})})}.
\label{Ortho-norm}
\end{equation}
\end{proposition}

The fact that these basis are SoV basis is here mainly a clear consequence of
their very definitions. Indeed, they imply the factorized form of the
transfer matrix wave functions in terms of the transfer matrix eigenvalues.
In Theorem 3.2 of \cite{MaiN19}, it was shown that the complete
characterization of the spectrum follows by imposing the quantum determinant
condition together with the known polynomial form in $\cosh 2\lambda $ to
the transfer matrix eigenvalues.

\subsection{The transfer matrix spectrum and eigenstates} 

From the previous SoV basis follows a characterization of the transfer matrix spectrum in terms of discrete equations \cite{Nic12,FalKN14,KitMNT18,MaiN19}. 
Here, following \cite{KitMN14,KitMNT18,MaiN19}, we present instead its equivalent
characterization in terms of $TQ$-functional equations, and we more particularly insist on the special case of interest for the present paper,
in which one constrain is imposed between the boundary parameters of the sites $1$ and $N$, so that the corresponding $TQ$-equation is a {\em homogeneous} equation with {\em polynomial} Q-solutions. 

\subsubsection{SoV characterization of the transfer matrix spectrum and eigenstates in terms of solutions of a functional $TQ$-equation}

Let us start by recalling some further notations: 
\begin{align}
\mathsf{u}_{n}& =\frac{\sinh (2\xi _{n}-\eta )}{\sinh (2\xi _{n}+\eta )}%
\frac{a(\xi _{n}+\eta /2)\,d(-\xi _{n}-\eta /2)}{a(-\xi _{n}+\eta /2)\,d(\xi
_{n}-\eta /2)}  \notag \\
& =-\prod_{j\not=n}\frac{\sinh (\xi _{n}-\xi _{j}+\eta )\,\sinh (\xi
_{n}+\xi _{j}+\eta )}{\sinh (\xi _{n}+\xi _{j}-\eta )\,\sinh (\xi _{n}-\xi
_{j}-\eta )},
\end{align}
and 
\begin{equation}
\mathsf{v}_{n,\boldsymbol{\varepsilon}}=\frac{\mathbf{a}_{%
\boldsymbol{\varepsilon}}(\xi _{n}+\frac{\eta }{2})}{\mathbf{a}_{%
\boldsymbol{\varepsilon}}(-\xi _{n}+\frac{\eta }{2})}=\frac{\mathbf{a}_{%
\boldsymbol{\varepsilon}}(\xi _{n}+\frac{\eta }{2})}{\mathbf{a}_{-%
\boldsymbol{\varepsilon}}(\xi _{n}+\frac{\eta }{2})},
\end{equation}
so that 
\begin{equation}
\frac{\sinh (2\xi _{n}-2\eta )}{\sinh (2\xi _{n}+2\eta )}\,\frac{\mathbf{A}_{%
\boldsymbol{\varepsilon}}(\xi _{n}+\frac{\eta }{2})}{\mathbf{A}_{%
\boldsymbol{\varepsilon}}(-\xi _{n}+\frac{\eta }{2})}=\mathsf{u}_{n}\,%
\mathsf{v}_{n,\boldsymbol{\varepsilon}}.
\end{equation}
Moreover, we denote by $\Sigma _{Q}^{M}$ the set of $Q(\lambda )$
polynomials in $\cosh (2\lambda )$ of degree $M$ of the form 
\begin{equation}\label{Q-form}
Q(\lambda )=\prod_{j=1}^{M}\frac{\cosh (2\lambda )-\cosh (2\lambda _{j})}{2}%
=\prod_{j=1}^{M}\left( \sinh ^{2}\lambda -\sinh ^{2}\lambda _{j}\right) ,
\end{equation}
with 
\begin{equation}\label{cond-roots}
\cosh (2\lambda _{j})\not=\cosh (2\xi _{n}^{(h)}),\quad \forall \,(j,n,h)\in
\{1,\ldots ,M\}\times \{1,\ldots ,N\}\times \{0,1\}.
\end{equation}
Finally, for $r\in \mathbb{N}$ and $\boldsymbol{\varepsilon}\equiv (\epsilon
_{\varphi _{+}},\epsilon _{\varphi _{-}},\epsilon _{\psi _{+}},\epsilon
_{\psi _{-}})\in \{-1,1\}^{4}$, we define 
\begin{equation}
\mathfrak{f}_{\boldsymbol{\varepsilon}}^{(r)}\equiv 
\mathfrak{f}_{\boldsymbol{\varepsilon}}^{(r)}(\tau _{+},\tau _{-},\varphi
_{+},\varphi _{-},\psi _{+},\psi _{-})=\frac{2\kappa _{+}\kappa _{-}}{\sinh
\varsigma _{+}\,\sinh \varsigma _{-}}\,\mathfrak{g}_{\boldsymbol{\varepsilon}%
}^{(r)}(\tau _{+},\tau _{-},\varphi _{+},\varphi _{-},\psi _{+},\psi _{-}),
\end{equation}
with 
\begin{multline}
\mathfrak{g}_{\boldsymbol{\varepsilon}}^{(r)}(\tau _{+},\tau _{-},\varphi_{+},\varphi _{-},\psi _{+},\psi _{-}) 
= \cosh (\tau _{+}-\tau_{-}) \\
  -\epsilon _{\varphi _{+}}\epsilon _{\varphi _{-}}\cosh (\epsilon_{\varphi _{+}}\varphi _{+}+\epsilon _{\varphi _{-}}\varphi _{-}+\epsilon_{\psi _{+}}\psi _{+}-\epsilon _{\psi _{-}}\psi _{-}+(N-1-2r)\eta ).
\end{multline}
\begin{theorem}[\cite{KitMN14,KitMNT18}]\label{th-sprectrum-inhom}
Let the two boundary matrices be not both proportional to the identity matrix, the inhomogeneity parameters be generic, and the following identity be satisfied:
\begin{equation}\label{cond-inhom}
   \forall r\in\{0,\ldots,N-1\},\qquad \mathfrak{f}_{\boldsymbol{\varepsilon}}^{(r)}(\tau _{+},\tau _{-},\varphi
_{+},\varphi _{-},\psi _{+},\psi _{-})\not=0.
\end{equation}
with $\boldsymbol{\varepsilon}\equiv (\epsilon _{\varphi _{+}},\epsilon _{\varphi
_{-}},\epsilon _{\psi _{+}},\epsilon _{\psi _{-}})\in \{-1,1\}^{4}$ and $\epsilon _{\varphi _{+}}\epsilon _{\varphi _{-}}\epsilon _{\psi
_{+}}\epsilon _{\psi _{-}}=1$.
Then, the transfer matrix $\mathcal{T}(\lambda)$ is diagonalizable
with simple spectrum, and the set $\Sigma _{\mathcal{T}}$ of its eigenvalues  is given by the set of entire functions $\tau(\lambda)$ such that there exists a polynomial $Q(\lambda )\in \Sigma _{Q}^{N}$ satisfying with $\tau(\lambda)$ the $TQ$-equation
\begin{equation}
\tau (\lambda )\, Q(\lambda )
=\mathbf{A}_{\boldsymbol{\varepsilon}}(\lambda)\,Q(\lambda -\eta )
+\mathbf{A}_{\boldsymbol{\varepsilon}}(-\lambda)\,Q(\lambda +\eta )
+\mathbf{F}_{\boldsymbol{\varepsilon}}(\lambda),  \label{inhom-TQ}
\end{equation}
with  inhomogeneous term
\begin{equation}
  \mathbf{F}_{\boldsymbol{\varepsilon}}(\lambda)
  =\mathfrak{f}_{\boldsymbol{\varepsilon}}^{(N)}\, a(\lambda)\, a(-\lambda)\, d(\lambda)\, d(-\lambda)\, 
  [\cosh^2(2\lambda)-\cosh^2\eta].
\end{equation}
Moreover, in that case, the
corresponding $Q(\lambda )\in \Sigma _{Q}^{N}$ satisfying \eqref{inhom-TQ}
with $\tau (\lambda )$ is unique, and the unique (up overall constants) left
and right $\mathcal{T}(\lambda )$ eigenstates can be expressed as
\begin{align}
 &\ket{Q}=\sum_{\mathbf{h}\in \{0,1\}^{N}}
                 \prod_{n=1}^{N}\!\frac{Q(\xi _{n}^{(h_{n})})}{Q(\xi _{n}^{(0)})}\ 
                 e^{-\sum_{j}h_{j}\xi _{j}}\,\widehat{V}(\xi _{1}^{(h_{1})},\ldots ,\xi _{N}^{(h_{N})})\ 
                 \ket{\mathbf{h} } , \label{eigenR}\\
 &\bra{Q}=\sum_{\mathbf{h}\in \{0,1\}^{N}}
                 \prod_{n=1}^{N}\left[ (\mathsf{u}_{n}\,\mathsf{v}_{n,\boldsymbol{\varepsilon}})^{h_{n}}\ 
                 \frac{Q(\xi_{n}^{(h_{n})})}{Q(\xi _{n}^{(0)})}\right] 
                 e^{-\sum_{j}h_{j}\xi _{j}}\,\widehat{V}(\xi _{1}^{(h_{1})},\ldots ,\xi _{N}^{(h_{N})})\ 
                 \bra{\mathbf{h} }.
                 \label{eigenL}
\end{align}
\end{theorem}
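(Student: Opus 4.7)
The plan is to combine the SoV discrete characterization of the transfer matrix spectrum, obtained from the basis \eqref{SoV-Basis-Open-L}-\eqref{SoV-Basis-Open-R} as in \cite{MaiN19}, with a reformulation of the resulting discrete equations into the functional $TQ$-equation \eqref{inhom-TQ}, in the spirit of \cite{KitMN14, KitMNT18}. First, from the very definition \eqref{SoV-Basis-Open-L} of the SoV co-vector basis, the transfer matrix is diagonal on the $\ket{\mathbf{h}}$ with the prescribed values $\mathcal{T}(\xi_n \pm \eta/2) \to \tau(\xi_n \pm \eta/2)$. Combined with the known polynomial form of $\mathcal{T}(\lambda)$ in $\cosh(2\lambda)$ of degree $N+2$, its central leading asymptotics \eqref{T-asympt}, the central values \eqref{T-value1}-\eqref{T-value2}, and the $N$ quantum determinant conditions at the inhomogeneity points, the spectrum is completely characterized discretely. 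The hypothesis that the boundary $K$-matrices are not both proportional to the identity ensures, as in \cite{MaiN19}, diagonalizability with simple spectrum.

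Next I would establish the equivalence with the $TQ$-equation. For the ``only if'' direction, given $\tau \in \Sigma_{\mathcal{T}}$, I search for $Q \in \Sigma_Q^{N}$ such that \eqref{inhom-TQ} holds. Evaluating \eqref{inhom-TQ} at $\lambda = \xi_n \pm \eta/2$, the inhomogeneous term $\mathbf{F}_{\boldsymbol{\varepsilon}}(\lambda)$ vanishes (because it carries the factors $a(\lambda)d(\lambda)$ and $a(-\lambda)d(-\lambda)$), and \eqref{inhom-TQ} reduces exactly to the discrete SoV equations; these rewrite, thanks to \eqref{cond-roots}, as Bethe-like conditions on the roots $\lambda_j$ of $Q$. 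Existence and uniqueness of $Q$ follow by matching the coefficients in powers of $e^{2\lambda}$ of $\tau(\lambda)\, Q(\lambda) - \mathbf{A}_{\boldsymbol{\varepsilon}}(\lambda)\, Q(\lambda-\eta) - \mathbf{A}_{\boldsymbol{\varepsilon}}(-\lambda)\, Q(\lambda+\eta)$: asymptotic analysis, using \eqref{T-asympt} and the explicit form of $\mathbf{A}_{\boldsymbol{\varepsilon}}$ via \eqref{DefFullA}-\eqref{a_eps}, forces the remainder to take precisely the form $\mathbf{F}_{\boldsymbol{\varepsilon}}$, with the prefactor $\mathfrak{f}_{\boldsymbol{\varepsilon}}^{(N)}$ fixed by the leading coefficient. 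The non-vanishing conditions \eqref{cond-inhom} at intermediate levels $r<N$ are precisely what prevents accidental collapses of the degree of $Q$, or collisions of its roots with the inhomogeneity shifts \eqref{def-xi-shift}.

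For the converse direction, any pair $(\tau, Q)$ with $Q \in \Sigma_Q^N$ solving \eqref{inhom-TQ} yields, after specialization at $\xi_n \pm \eta/2$ and division by $Q(\xi_n \pm \eta/2) \neq 0$ (guaranteed by \eqref{cond-roots}), the discrete quantum determinant conditions, so that $\tau \in \Sigma_{\mathcal{T}}$ by the SoV characterization. The eigenstate expressions \eqref{eigenR}-\eqref{eigenL} then follow from the definitions \eqref{SoV-Basis-Open-R} and \eqref{SoV-Basis-Open-L}: since $\mathcal{T}(\xi_n + \eta/2)$ acts on $\ket{\mathbf{h}}$ by incrementing $h_n$ from $0$ to $1$ (up to a known factor involving $k_n\,\mathbf{A}_{\boldsymbol{\varepsilon}}(\eta/2-\xi_n)$), the components of an eigenvector with eigenvalue $\tau$ are forced to be proportional to the telescoping product $\prod_n Q(\xi_n^{(h_n)})/Q(\xi_n^{(0)})$ reconstructed from \eqref{inhom-TQ} at $\lambda = \xi_n + \eta/2$, together with the Vandermonde-type normalization inherited from \eqref{Ortho-norm}. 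The extra multiplicative factors $(\mathsf{u}_n\, \mathsf{v}_{n,\boldsymbol{\varepsilon}})^{h_n}$ appearing in \eqref{eigenL} arise from translating between the action of $\mathcal{T}(\xi_n - \eta/2)$ and $\mathcal{T}(\xi_n + \eta/2)$ via the quantum determinant identity.

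The main technical obstacle I anticipate is the asymptotic bookkeeping at the equivalence step: one must track the full expansion of $\mathbf{A}_{\boldsymbol{\varepsilon}}(\lambda)\, Q(\lambda \mp \eta)$ in $e^{2\lambda}$, isolate the precise structure $\mathbf{F}_{\boldsymbol{\varepsilon}}(\lambda) = \mathfrak{f}_{\boldsymbol{\varepsilon}}^{(N)}\, a(\lambda)a(-\lambda)d(\lambda)d(-\lambda)[\cosh^2(2\lambda)-\cosh^2\eta]$ of the remainder, and show that \eqref{cond-inhom} is exactly the non-degeneracy needed to guarantee that the so-constructed $Q$ lies in $\Sigma_Q^N$ with roots avoiding the forbidden set \eqref{cond-roots}. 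Everything else (simplicity of the spectrum, uniqueness of $Q$, and the eigenstate formulas) reduces to standard SoV manipulations once this identification is in place.
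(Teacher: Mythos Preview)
The paper does not give a proof of this theorem; it is stated without proof as a known result recalled from \cite{KitMN14,KitMNT18,MaiN19}. Your proposal follows essentially the approach developed in those references---the SoV discrete spectrum characterization from the basis \eqref{SoV-Basis-Open-L}-\eqref{SoV-Basis-Open-R} combined with the functional reformulation into the inhomogeneous $TQ$-equation, with \eqref{cond-inhom} serving as the non-degeneracy condition guaranteeing the bijection between eigenvalues and $Q$-solutions.
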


Note that the characterization of the transfer matrix spectrum and eigenstates of Theorem~\ref{th-sprectrum-inhom} is complete.

However, as announced above, we are more particularly interested here
in the special case in which the following type of
constraint on the boundary parameters is satisfied: 
\begin{equation}
\mathfrak{f}_{\boldsymbol{\varepsilon}}^{(r)}(\tau _{+},\tau _{-},\varphi
_{+},\varphi _{-},\psi _{+},\psi _{-})=0,  \label{homog-cond}
\end{equation}
for some $r\in\{0,\ldots,N\}$, for which the transfer matrix spectrum and eigenstates can be (at least partially) characterized in terms of solutions of some usual (homegeneous) functional $TQ$-equation of Baxter type.
More precisely, we can state the following proposition:

\begin{proposition}[\cite{KitMN14,KitMNT18}]\label{prop-spectrum1}
Let us suppose that the inhomogeneity parameters are generic, and that the
two boundary matrices are not both proportional to the identity matrix. We
moreover suppose that the condition \eqref{homog-cond} is satisfied for a
given $r=M\in \{0,\ldots ,N\}$ and a given choice of $\boldsymbol{%
\varepsilon}\equiv (\epsilon _{\varphi _{+}},\epsilon _{\varphi
_{-}},\epsilon _{\psi _{+}},\epsilon _{\psi _{-}})\in \{-1,1\}^{4}$ such
that $\epsilon _{\varphi _{+}}\epsilon _{\varphi _{-}}\epsilon _{\psi
_{+}}\epsilon _{\psi _{-}}=1$. Then, the transfer matrix $\mathcal{T}(\lambda)$ is diagonalizable
with simple spectrum, and any entire function $\tau (\lambda )$ such that
there exists $Q(\lambda )\in \Sigma _{Q}^{M}$ satisfying the following
homogeneous $TQ$-equation, 
\begin{equation}
\tau (\lambda )\, Q(\lambda )=\mathbf{A}_{\boldsymbol{\varepsilon}}(\lambda
)\,Q(\lambda -\eta )+\mathbf{A}_{\boldsymbol{\varepsilon}}(-\lambda
)\,Q(\lambda +\eta ),  \label{hom-TQ}
\end{equation}
is an eigenvalue of $\mathcal{T}(\lambda )$ (we write $%
\tau (\lambda )\in \Sigma _{\mathcal{T}}$). Moreover, in that case, the
corresponding $Q(\lambda )\in \Sigma _{Q}^{M}$ satisfying \eqref{hom-TQ}
with $\tau (\lambda )$ is unique, and the unique (up overall constants) left
and right $\mathcal{T}(\lambda )$ eigenstates can be expressed in terms of $Q$ as in \eqref{eigenR}-\eqref{eigenL}.
\end{proposition}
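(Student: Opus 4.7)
The plan is to show that any pair $(\tau, Q)$ with $Q\in\Sigma_Q^M$ satisfying the homogeneous $TQ$-equation \eqref{hom-TQ} defines a function $\tau(\lambda)$ that fulfills the discrete SoV characterization of transfer matrix eigenvalues established in \cite{Nic12,FalKN14,KitMN14,MaiN19}. Once this is done, the formulas \eqref{eigenR}-\eqref{eigenL} for the eigenstates follow by specializing the general SoV expansion over the basis \eqref{SoV-Basis-Open-L}-\eqref{SoV-Basis-Open-R}, replacing the ratios of transfer matrix values at the inhomogeneities by ratios of $Q$-values computed from \eqref{hom-TQ}. Diagonalizability with simple spectrum is inherited from the non-degeneracy of the SoV construction under the assumption that the two boundary matrices are not both proportional to the identity.

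The first step is to check that $\tau(\lambda)=[\mathbf{A}_{\boldsymbol{\varepsilon}}(\lambda)Q(\lambda-\eta)+\mathbf{A}_{\boldsymbol{\varepsilon}}(-\lambda)Q(\lambda+\eta)]/Q(\lambda)$ is a polynomial of degree $N+2$ in $\sinh^2\lambda$ with the prescribed leading asymptotic \eqref{T-asympt} and central values \eqref{T-value1}-\eqref{T-value2}. Entireness at the zeros of $Q$, which is part of the hypothesis, imposes Bethe-type equations on the roots $\lambda_j$ of $Q$; the apparent poles coming from the $1/\sinh(2\lambda)$ factor in $\mathbf{A}_{\boldsymbol{\varepsilon}}$ cancel on the right-hand side of \eqref{hom-TQ} by the $\lambda\leftrightarrow-\lambda$ symmetry of $Q$. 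A direct expansion at $\lambda\to\pm\infty$ then shows that the subleading coefficient at order $e^{\pm 2(N+M+2)\lambda}$ of $\mathbf{A}_{\boldsymbol{\varepsilon}}(\lambda)Q(\lambda-\eta)+\mathbf{A}_{\boldsymbol{\varepsilon}}(-\lambda)Q(\lambda+\eta)$ is proportional to $\mathfrak{g}_{\boldsymbol{\varepsilon}}^{(M)}$, so that the constraint \eqref{homog-cond} is precisely what forces this cross-term to vanish and produces the leading behavior \eqref{T-asympt} on the left-hand side. The central values are obtained by direct evaluation of \eqref{hom-TQ} at $\lambda=\eta/2$ and $\lambda=\eta/2+i\pi/2$ using the explicit form of $\mathbf{A}_{\boldsymbol{\varepsilon}}$ there.

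The second step is to verify the quantum determinant condition at the inhomogeneities. From \eqref{a-d} one has $a(\xi_n-\eta/2)=d(\xi_n+\eta/2)=0$, hence $\mathbf{A}_{\boldsymbol{\varepsilon}}(\xi_n-\eta/2)=\mathbf{A}_{\boldsymbol{\varepsilon}}(-\xi_n-\eta/2)=0$. Evaluating \eqref{hom-TQ} at $\lambda=\xi_n^{(0)}$ and $\lambda=\xi_n^{(1)}$ yields $\tau(\xi_n^{(0)})\,Q(\xi_n^{(0)})=\mathbf{A}_{\boldsymbol{\varepsilon}}(\xi_n+\eta/2)\,Q(\xi_n^{(1)})$ and $\tau(\xi_n^{(1)})\,Q(\xi_n^{(1)})=\mathbf{A}_{\boldsymbol{\varepsilon}}(-\xi_n+\eta/2)\,Q(\xi_n^{(0)})$; multiplying these two identities, the $Q$-values cancel thanks to \eqref{cond-roots}, and the quantum determinant identity satisfied by $\mathbf{A}_{\boldsymbol{\varepsilon}}$ reproduces the required $\tau(\xi_n^{(0)})\tau(\xi_n^{(1)})$ relation. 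Together with the polynomial form and asymptotics verified above, this completes the SoV discrete characterization, so that $\tau\in\Sigma_{\mathcal{T}}$. Uniqueness of $Q$ of degree $M\le N$ for given $\tau$ follows from the same two identities: the $N$ ratios $Q(\xi_n^{(0)})/Q(\xi_n^{(1)})$ are fixed by $\tau$, and this together with the normalization \eqref{Q-form} of $Q$ as a polynomial of degree $M$ in $\sinh^2\lambda$ determines $Q$ uniquely.

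The main obstacle I anticipate is the careful asymptotic bookkeeping showing that the single constraint $\mathfrak{f}_{\boldsymbol{\varepsilon}}^{(M)}=0$ is exactly the cancellation condition for the subleading coefficient of the right-hand side of \eqref{hom-TQ}, and therefore encodes the precise relation between the integer $r=M$ appearing in \eqref{homog-cond} and the polynomial degree $M$ of $Q$. This combinatorial step, which rests on the explicit structure of $\mathbf{a}_{\boldsymbol{\varepsilon}}$ in \eqref{a_eps} and of $\mathfrak{g}_{\boldsymbol{\varepsilon}}^{(r)}$, is what genuinely distinguishes the homogeneous case from the inhomogeneous setting of Theorem~\ref{th-sprectrum-inhom}; the remainder is a direct specialization of the SoV machinery already developed in our earlier works.
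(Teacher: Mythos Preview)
Your proposal is correct and follows essentially the same route as the paper, which in fact does not spell out a proof for this proposition but treats it as a known consequence of the SoV characterization established in \cite{KitMN14,KitMNT18,MaiN19}. Your reconstruction of that argument---verifying that a solution of the homogeneous $TQ$-equation produces a $\tau(\lambda)$ with the correct polynomial form, asymptotics \eqref{T-asympt}, central values \eqref{T-value1}--\eqref{T-value2}, and quantum determinant identities at the inhomogeneities---is exactly the logic underlying those references.

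One small point of language: when you write that ``the subleading coefficient at order $e^{\pm 2(N+M+2)\lambda}$ \dots\ is proportional to $\mathfrak{g}_{\boldsymbol{\varepsilon}}^{(M)}$'', this is the \emph{leading} exponential order of both sides, not a subleading one. What the constraint $\mathfrak{f}_{\boldsymbol{\varepsilon}}^{(M)}=0$ actually enforces is that the leading coefficient of the right-hand side of \eqref{hom-TQ}, divided by the leading coefficient of $Q$, matches the required transfer-matrix asymptotic \eqref{T-asympt}; the difference between the two is precisely $\mathfrak{g}_{\boldsymbol{\varepsilon}}^{(M)}$ (up to the overall prefactor $2\kappa_+\kappa_-/\sinh\varsigma_+\sinh\varsigma_-$). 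This is a phrasing issue rather than a gap, and your identification of this step as the crux that links the integer $r=M$ in \eqref{homog-cond} to the degree of $Q$ is exactly right.
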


Contrary to Theorem~\ref{th-sprectrum-inhom}, the above proposition does not {\em a priori} provide a complete characterization of the spectrum: as
explained in \cite{KitMN14,KitMNT18}, a part of the
spectrum is associated to solutions of the  inhomogeneous $TQ$-equation \eqref{inhom-TQ} if
the condition \eqref{homog-cond} is satisfied only for a
fixed $r=M<N$, that is if \eqref{homog-cond} is satisfied for 
$r=M<N$ being 
\begin{equation}
\mathfrak{g}_{\boldsymbol{\varepsilon}}^{(M)}(\tau _{+},\tau _{-},\varphi
_{+},\varphi _{-},\psi _{+},\psi _{-})=0,
\end{equation}
while holding
\begin{equation}
\frac{\kappa _{+}\kappa _{-}}{\sinh \varsigma _{+}\,\sinh \varsigma _{-}}%
\neq 0.
\end{equation}
This gives instead a complete characterization of the spectrum in terms
of only homogeneous $TQ$-equations when we have either
\begin{equation}
\mathfrak{g}_{\boldsymbol{\varepsilon}}^{(N)}(\tau _{+},\tau _{-},\varphi
_{+},\varphi _{-},\psi _{+},\psi _{-})=0,  \label{Full-homo-1}
\end{equation}
or
\begin{equation}
\frac{\kappa _{+}\kappa _{-}}{\sinh \varsigma _{+}\,\sinh \varsigma _{-}}=0.
\label{Fullhomo-2}
\end{equation}
More precisely, we have in that case the following result:

\begin{theorem}[\cite{KitMN14,KitMNT18}]\label{th-spectrum2}
Let the two boundary matrices be not both proportional to the identity matrix, the inhomogeneity parameters be generic, and one of the two conditions \eqref{Full-homo-1} and \eqref{Fullhomo-2} be satisfied. Then, for almost any choice of the boundary parameters, the transfer matrix $\mathcal{T}(\lambda)$ is diagonalizable with simple spectrum. 
Moreover, defined
\begin{equation}
\Sigma _{Q} 
     =\begin{cases}
       \Sigma _{Q}^{N}    
               &\text{if }\ \frac{\kappa _{+}\kappa _{-}}{\sinh \varsigma_{+}\,\sinh \varsigma _{-}}\neq 0, \\ 
       \cup _{n=0}^{N}\Sigma _{Q}^{n}
               &\text{if }\ \frac{\kappa _{+}\kappa _{-}}{\sinh \varsigma _{+}\,\sinh \varsigma _{-}}=0,
       \end{cases}
\end{equation}
the set $\Sigma _{\mathcal{T}}$ of the transfer matrix eigenvalues  is given by the set of entire functions $\tau(\lambda)$ such that there exists a polynomial $Q(\lambda )\in \Sigma _{Q}$ satisfying with $\tau(\lambda)$ the homogeneous $TQ$-equation  \eqref{hom-TQ}. For $\tau(\lambda)\in\Sigma _{\mathcal{T}}$, the corresponding $Q(\lambda )\in \Sigma _{Q}$ solution of  \eqref{hom-TQ} is unique,
and the corresponding unique (up overall constants) left and right eigenstates of $\mathcal{T}(\lambda)$ are respectively given by \eqref{eigenL} and \eqref{eigenR}.
\end{theorem}

\subsubsection{On the ABA rewriting of separate states: the Sklyanin's case}

Some remarks are in order here. In the previous results, we have used
the new SoV basis to provide the SoV characterization of the
transfer matrix eigenstates. Indeed, this SoV characterization holds for a
larger set of boundary conditions than the generalized Sklyanin's
one. However, we should comment that, when the generalized
Sklyanin's SoV approach works, the transfer matrix eigenstates can alternatively be expressed by the same formulas as \eqref{eigenR}-\eqref{eigenL} by using the basis \eqref{SOVstate-R}-\eqref{SOVstate-L} instead of \eqref{SoV-Basis-Open-L}-\eqref{SoV-Basis-Open-R}, i.e. as
\begin{align}
   &\ket{Q}_\mathrm{Sk} =\sum_{\mathbf{h}\in \{0,1\}^{N}}
                 \prod_{n=1}^{N}\!\frac{Q(\xi _{n}^{(h_{n})})}{Q(\xi _{n}^{(0)})}\ 
                 e^{-\sum_{j}h_{j}\xi _{j}}\,\widehat{V}(\xi _{1}^{(h_{1})},\ldots ,\xi _{N}^{(h_{N})})\ 
                 \ket{\mathbf{h},\alpha ,\beta +1} _\mathrm{Sk}, \label{eigenR-Sk}\\
    &_\mathrm{Sk}\bra{ Q}=\sum_{\mathbf{h}\in \{0,1\}^{N}}
                 \prod_{n=1}^{N}\left[ (\mathsf{u}_{n}\,\mathsf{v}_{n,\boldsymbol{\varepsilon}})^{h_{n}}\ 
                 \frac{Q(\xi_{n}^{(h_{n})})}{Q(\xi _{n}^{(0)})}\right] 
                 e^{-\sum_{j}h_{j}\xi _{j}}\,\widehat{V}(\xi _{1}^{(h_{1})},\ldots ,\xi _{N}^{(h_{N})})\ 
    _\mathrm{Sk}\bra{ \mathbf{h},\alpha ,\beta -1},
    \label{eigenL-Sk}
\end{align}
once the gauge parameters are fixed by $\left( \ref{Gauge-cond-A}\right) $
and $\left( \ref{Gauge-cond-B}\right) $. Moreover, due to the simplicity of the
transfer matrix spectrum, these two SoV
representations (Sklyanin and non-Sklyanin ones) of the eigenstates must
coincide up to nonzero normalization, i.e. it must holds\footnote{%
That is, the above proportionality holds independently from the precise
relation between the two SoV basis.}:
%
\begin{equation}\label{prop-SoVeigen}
   \ket{Q}_\mathrm{Sk}=\mathsf{c}^\mathrm{Sk}_Q\,\ket{Q} ,
   \quad
   _\mathrm{Sk}\bra{Q}=\bra{Q}/\mathsf{c}^\mathrm{Sk}_Q,
\end{equation}
for some normalization coefficient $\mathsf{c}^\mathrm{Sk}_Q\neq 0$, and this is always the case for any choice of
the origin states $\bra{S}$ and $\ket{R}$, used to construct the
SoV bases \eqref{SoV-Basis-Open-L} and \eqref{SoV-Basis-Open-R}.

Moreover, in the range of validity of the
generalized Sklyanin's approach, separate states of the form \eqref{eigenR-Sk}-\eqref{eigenL-Sk} for any $Q\in\Sigma_Q^M$ 
admit an Algebraic Bethe Ansatz form. More precisely, defining 
\begin{align}
 \ket{\Omega _{\alpha ,\beta +1} } 
 &=\frac{1}{N(\{\xi \})}\sum_{\mathbf{h}\in \{0,1\}^{N}}
 e^{-\sum_{j}h_{j}\xi _{j}}\,\widehat{V}(\xi_{1}^{(h_{1})},\ldots ,\xi _{N}^{(h_{N})})\,
 \ket{\mathbf{h},\alpha ,\beta+1}_\mathrm{Sk},  
 \label{ref-state-R} \\
 \bra{\Omega _{\alpha ,\beta -1}}
  &=\frac{1}{N(\{\xi \})}\sum_{\mathbf{h}\in \{0,1\}^{N}}
  \prod_{n=1}^{N}(\mathsf{u}_{n}\,\mathsf{v}_{n,\boldsymbol{\varepsilon}})^{h_{n}}\,
  e^{-\sum_{j}h_{j}\xi _{j}}\,\widehat{V}(\xi _{1}^{(h_{1})},\ldots ,\xi _{N}^{(h_{N})})\ 
  {}_\mathrm{Sk}\bra{\alpha,\beta -1,\mathbf{h}},
  \label{ref-state-L}
\end{align}
and using the following shortcut notation for a product of gauged boundary operators \eqref{Bhat}:
\begin{align}\label{product-Bhat}
  \underline{\widehat{\mathcal{B}}}_{-,M}(\{\lambda _{i}\}_{i=1}^{M}|\alpha -\beta+1)
  &= 
  \widehat{\mathcal{B}}_{-}(\lambda _{1}|\alpha -\beta +1)\cdots 
  \widehat{\mathcal{B}}_{-}(\lambda _{M}|\alpha -\beta +2M-1)
  \nonumber\\
  &= \prod_{j=1\to M} \widehat{\mathcal{B}}_{-}(\lambda_j|\alpha -\beta +2j-1),
\end{align}
we can easily show from \eqref{act-BR}-\eqref{act-BL} that, under the condition \eqref{Sklyanin-range}, any separate state of the form \eqref{eigenR-Sk}-\eqref{eigenL-Sk} for $Q\in\Sigma_Q^M$ (not necessarily solution of a $TQ$-equation) can be rewritten as
\begin{align}
  & \ket{Q}_\mathrm{Sk} = 
  \mathsf{c}^{(R)}_{Q,\text{ABA}}\,
   \underline{\widehat{\mathcal{B}}}_{-,M}(\{\lambda_i\}_{i=1}^{M}|\alpha -\beta+1)\,
   \ket{\Omega_{\alpha ,\beta +1-2M }},
   \label{separate-ABA-R}\\
   &_\mathrm{Sk}\bra{Q}=\mathsf{c}^{(L)}_{Q,\text{ABA}}\, \bra{\Omega _{\alpha ,\beta -1+2M}}\,
   \underline{\widehat{\mathcal{B}}}_{-,M}(\{\lambda_i\}_{i=1}^{M}|\alpha -\beta+1-2M),
    \label{separate-ABA-L}
\end{align}
with normalization coefficients
\begin{align}
  &\mathsf{c}^{(R)}_{Q,\text{ABA}}
  =\frac{N(\{\xi\})\, (-1)^{NM}\, e^{NM\eta} }{
  \prod_{j=1}^M a_{\boldsymbol{0}}(\lambda_j)\, a_{\boldsymbol{0}}(-\lambda_j)\,\sinh(2\lambda_j-\eta)\, \mathsf{\bar{b}}_{-}(\alpha -\beta+2j+N-1)}
  \frac{N_{-}(\alpha ,\beta-2M ,\{\xi \})}{N_{-}(\alpha ,\beta ,\{\xi \})},\label{nomr-ABA-R}\\
  &\mathsf{c}^{(L)}_{Q,\text{ABA}}
  =\frac{N(\{\xi\})\,(-1)^{NM}\, e^{-NM\eta}}{\prod_{j=1}^M a_{\boldsymbol{0}}(\lambda_j)\, a_{\boldsymbol{0}}(-\lambda_j)\,\sinh(2\lambda_j-\eta)\,\mathsf{\bar{b}}_{-}(\alpha -\beta-2j-N+1)}
  \nonumber\\
 & \hspace{6cm}\times
  \prod_{j=0}^{2M-1}\frac{\sinh(\eta(\beta+N+j))}{\sinh(\eta(\beta+j))}\,
  \frac{N_{-}(\alpha ,\beta ,\{\xi \})}{N_{-}(\alpha ,\beta-2M ,\{\xi \})}.
\label{nomr-ABA-L}
\end{align}
Here $\lambda_1,\ldots,\lambda_M$ label the roots of $Q$ similarly as in \eqref{Q-form}.
We underline once again that \eqref{separate-ABA-R}-\eqref{separate-ABA-L} hold for arbitrary $Q$ of the form \eqref{Q-form}, i.e. not necessarily solution of a $TQ$-equation.

This ABA representation of the separate states enables us to reformulate Theorem~\ref{th-spectrum2} in the Sklyanin's case as follows:

\begin{corollary}\label{cor-Skl-ABA}
Let us suppose that the generalized Sklyanin's approach is applicable%
\footnote{%
That is, let us suppose that the inhomogeneity parameters are generic and
that the nonzero condition \eqref{Sklyanin-range} is
satisfied imposing \eqref{Gauge-cond-A} and \eqref{Gauge-cond-B}.} 
and that one of the two conditions \eqref{Full-homo-1} and \eqref{Fullhomo-2} is satisfied.
Then the transfer matrix $\mathcal{T}(\lambda)$  is diagonalizable with simple spectrum and 
the set $\Sigma _{\mathcal{T}}$ of its eigenvalues  is given by the set of entire functions $\tau(\lambda)$ such that there exists a polynomial $Q(\lambda )\in \Sigma _{Q}$ satisfying with $\tau(\lambda)$ the homogeneous $TQ$-equation  \eqref{hom-TQ}. For $\tau(\lambda)\in\Sigma _{\mathcal{T}}$, the corresponding $Q(\lambda )\in \Sigma _{Q}$ solution of  \eqref{hom-TQ} is unique,
and the corresponding unique (up overall constants) left and right eigenstates of $\mathcal{T}(\lambda)$  admit the following ABA representations:
\begin{equation}\label{Bethe-st-r}
   \underline{\widehat{\mathcal{B}}}_{-,N_{K_\pm}}(\{\lambda _{i}\}_{i=1}^{N_{K_\pm}}|\alpha -\beta+1)\,
   \ket{\Omega_{\alpha ,\beta +1-2N_{K_{\pm }}} } ,
\end{equation}
and 
\begin{equation}\label{Bethe-st-l}
   \bra{\Omega _{\alpha ,\beta -1+2N_{K_{\pm }}} }\,
   \underline{\widehat{\mathcal{B}}}_{-,N_{K_\pm}}(\{\lambda _{i}\}_{i=1}^{N_{K_\pm}}|\alpha -\beta+1-2 N_{K_\pm}).
\end{equation}
Here $N_{K_{\pm }}$ denotes the degree of the polynomial $Q\in \Sigma _{Q}$, $\lambda_1,\ldots,\lambda_{N_{K_\pm}}$ label its roots similarly as in \eqref{Q-form},
and the gauge parameters are fixed in terms of the boundary parameters by \eqref{Gauge-cond-A} and \eqref{Gauge-cond-B}.
\end{corollary}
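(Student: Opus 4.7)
The plan is to obtain the corollary as a direct combination of Theorem~\ref{th-spectrum2} with the ABA rewriting \eqref{separate-ABA-R}--\eqref{separate-ABA-L} of Sklyanin-type separate states. Since the assumptions include generic inhomogeneities, non-proportionality to the identity of the boundary matrices (which is automatic within the Sklyanin range since \eqref{Sklyanin-range} would force at least $\mathcal{B}_-$ to be non-trivial), and one of \eqref{Full-homo-1} or \eqref{Fullhomo-2}, Theorem~\ref{th-spectrum2} applies verbatim. This immediately yields the diagonalizability and simplicity of the transfer matrix spectrum, the ``iff'' characterization of eigenvalues as entire functions $\tau(\lambda)$ for which a unique $Q(\lambda)\in\Sigma_Q$ solves the homogeneous Baxter $TQ$-equation \eqref{hom-TQ}, and the explicit expressions \eqref{eigenR}--\eqref{eigenL} for the corresponding eigenstates in the new SoV basis.

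The next step is to transport these eigenstates to the Sklyanin SoV basis. Under the applicability of the generalized Sklyanin approach, the gauge parameters are fixed by \eqref{Gauge-cond-A}--\eqref{Gauge-cond-B} and the condition \eqref{Sklyanin-range} holds, so the two SoV bases \eqref{SoV-Basis-Open-L}--\eqref{SoV-Basis-Open-R} and \eqref{SOVstate-R}--\eqref{SOVstate-L} are both valid. By the simplicity of the spectrum, each eigenspace is one-dimensional, and therefore the representations \eqref{eigenR}--\eqref{eigenL} and \eqref{eigenR-Sk}--\eqref{eigenL-Sk} of the same eigenvector must coincide up to an overall non-zero scalar, which is exactly the content of \eqref{prop-SoVeigen}.

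Having established that the eigenstates admit the Sklyanin SoV representations \eqref{eigenR-Sk}--\eqref{eigenL-Sk}, I would then invoke the ABA rewriting \eqref{separate-ABA-R}--\eqref{separate-ABA-L}, which is valid for any separate state of the form \eqref{eigenR-Sk}--\eqref{eigenL-Sk} associated to $Q\in\Sigma_Q^M$, regardless of whether $Q$ solves a $TQ$-equation. Specializing to the Q-polynomial of the eigenstate in question, of degree $M=N_{K_\pm}$ and with roots $\lambda_1,\ldots,\lambda_{N_{K_\pm}}$ labelled as in \eqref{Q-form}, the formulas \eqref{separate-ABA-R}--\eqref{separate-ABA-L} directly produce the advertised Bethe-state expressions \eqref{Bethe-st-r} and \eqref{Bethe-st-l}, with explicit normalizations given by \eqref{nomr-ABA-R}--\eqref{nomr-ABA-L}.

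The only genuine point to verify, rather than a serious obstacle, is the consistency of the various assumptions: one must check that the gauge choice \eqref{Gauge-cond-A}--\eqref{Gauge-cond-B} combined with \eqref{Sklyanin-range} remains compatible with the constraint \eqref{Full-homo-1} or \eqref{Fullhomo-2}, and that the roots $\lambda_j$ of $Q$ satisfy the non-resonance condition \eqref{cond-roots} built into the definition of $\Sigma_Q$ so that the product \eqref{product-Bhat} of gauged $\widehat{\mathcal{B}}_-$ operators is well defined and non-vanishing. Both are generic conditions on the boundary parameters, and the reference states \eqref{ref-state-R}--\eqref{ref-state-L} obtained by summing the Sklyanin pseudo-eigenstates are non-zero for generic parameters, so the ABA representations \eqref{Bethe-st-r}--\eqref{Bethe-st-l} are indeed non-trivial. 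No new computation beyond Theorem~\ref{th-spectrum2} and the rewriting already recorded in \eqref{separate-ABA-R}--\eqref{separate-ABA-L} is required.
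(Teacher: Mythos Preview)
Your proposal is correct and mirrors exactly what the paper does: the corollary is not given a standalone proof but is introduced with the sentence ``This ABA representation of the separate states enables us to reformulate Theorem~\ref{th-spectrum2} in the Sklyanin's case as follows,'' i.e.\ it is the combination of Theorem~\ref{th-spectrum2} with the rewriting \eqref{separate-ABA-R}--\eqref{separate-ABA-L} via the collinearity \eqref{prop-SoVeigen}. Your additional remarks on the compatibility of the hypotheses and the non-vanishing of the reference states are reasonable sanity checks that the paper leaves implicit.
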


Let us comment that the left and right \textit{reference states} $\bra{\Omega _{\alpha ,\beta -1} }$ and $\ket{\Omega _{\alpha ,\beta +1} }$, with slightly different notations and normalizations, have been presented in our previous
paper \cite{KitMNT18}. 
Here, we are interested in reformulating these left and right reference states by using, under special conditions, states introduced in \cite{FadT79,CaoLSW03} in the context of algebraic Bethe Ansatz. In order
to do so, let us first introduce some notations that will be used throughout the rest of the paper:
\begin{equation}\label{ref-x}
   \bra{ x,\eta } \equiv \otimes _{n=1}^N \left( -1,e^{-(N-n+x)\eta -\xi _{n}}\right) _{\! n},
   \qquad
   \ket{ \eta ,x } \equiv \otimes_{n=1}^N
                                   \begin{pmatrix}
                                   e^{-(n-N+x)\eta -\xi _{n}} \\ 
                                   1%
                                    \end{pmatrix}_{\! n}.
\end{equation}
It is easy to verify that such states result from the action of the Vertex-IRF transformation \eqref{mat-S} or of its inverse on the states \eqref{state0} and \eqref{state0bar}:
%
\begin{align}
  &\bra{\alpha -\beta ,\eta } =e^{-\sum_n\xi_n}\, N_{\alpha ,\beta ,\eta }\,
          \bra{0}\prod_{n=1}^N S_{n}^{-1}(-\xi _{n}|\alpha ,\beta -N+n), 
          \label{Left-B-ref}
          \\
  &\ket{\eta ,\alpha +\beta }=\prod_{n=1}^N S_{n}(-\xi_{n}|\alpha ,\beta +n-N)\, \ket{0} ,
  \label{Right-C-ref}
\end{align}
with
\begin{equation}
N_{\alpha ,\beta ,\eta }=2^{N}e^{-\alpha N\eta
}\prod_{n=1}^N\sinh (\eta(n-N+\beta )) .
\end{equation}
By using these states, we can present the announced closed form of the
left and right \textit{reference states}.

\begin{proposition}
\label{Ref-States}
Let us suppose that the generalized Sklyanin's approach is applicable, with $\alpha$ and $\beta $ fixed by the gauge conditions \eqref{Gauge-cond-A}-\eqref{Gauge-cond-B} and
\begin{equation}
\epsilon _{\varphi _{+}}=\epsilon _{\psi _{+}}, 
\text{ \ \ }
\epsilon_{\varphi _{-}}=\epsilon _{\psi _{-}}=\epsilon_-=\epsilon'_-. 
\label{E-homo-cond}
\end{equation}
Then, under the condition
\begin{equation} \label{+homo-cond}
   \eta ( \alpha+\beta+N-1-2M )+ \tau_+ =-\epsilon_{\varphi_+}(\varphi_++\psi_+)+\frac{1-\epsilon_{\varphi_+}}2 i\pi \mod 2\pi i,
\end{equation}
for $M\leq N$, 
the following identity holds:
\begin{equation}\label{Ref-SoV}
 \ket{\eta,\alpha+\beta+N-1-2M}=\mathsf{c}^{(R)}_{M,\textnormal{ref}}\,\ket{\Omega_{\alpha,\beta-2M+1}},
\end{equation}
where $\mathsf{c}^{(R)}_{M,\text{ref}}$ is a nonzero scalar factor which only depends on $M$.

Let us suppose that the generalized Sklyanin's approach is applicable, with $\alpha$ and $\beta $ fixed by the gauge conditions \eqref{Gauge-cond-A}-\eqref{Gauge-cond-B} and
\begin{equation}
\epsilon _{\varphi _{+}}=\epsilon _{\psi _{+}}, 
\text{ \ \ }
\epsilon_{\varphi _{-}}=\epsilon _{\psi _{-}}=-\epsilon_-=-\epsilon'_-. 
\label{E-homo-cond-bis}
\end{equation}
Then, under the condition
\begin{equation}
   \eta (\alpha +\beta -N+1+2M)+\tau _{+}=\epsilon_{\varphi_+}(\varphi_{+}+\psi _{+})+\frac{1+\epsilon_{\varphi_+}}{2} i\pi \mod 2\pi i,  \label{+homo-cond-l}
\end{equation}
for $M\leq N$, the following identity holds:
\begin{equation}
   \bra{\alpha +\beta -N+2M+1,\eta }= \mathsf{c}^{(L)}_{M,\textnormal{ref}}\,\bra{\Omega _{\alpha ,\beta +2M-1} } ,  \label{Ref-SoV-l}
\end{equation}
where $\mathsf{c}^{(L)}_{M,\textnormal{ref}}$ is a nonzero scalar factor which only depends on $M$.
\end{proposition}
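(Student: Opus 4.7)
The plan is to identify, up to a scalar, the product state $\ket{\eta,\alpha+\beta+N-1-2M}$ with the Sklyanin reference state $\ket{\Omega_{\alpha,\beta-2M+1}}$ by exhibiting both as the (essentially unique) joint pseudo-vacuum of the gauged boundary algebra at shift $\beta-2M$, under the triangularization condition \eqref{+homo-cond}. A useful preliminary observation is that by \eqref{Right-C-ref} with the substitution $\alpha+\beta\to \alpha+\beta+N-1-2M$, and using $\sigma_j^z\ket{0}=\ket{0}$ inside \eqref{Sq}, the state in question is exactly $\ket{\eta,\alpha+\beta+N-1-2M}=S_{1\ldots N}(\{\xi\}|\alpha,\beta-2M)\ket{0}$, i.e.\ the gauge transform of the all-up bulk vacuum. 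Under the hypothesis \eqref{E-homo-cond}, the genericity and nonzero conditions \eqref{cond-inh}, \eqref{Sklyanin-range} hold (with $\beta$ shifted to $\beta-2M$), so the generalized Sklyanin SoV construction is available at this gauge.

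The central computation I would then carry out is the action of the gauged boundary operators $\mathcal{A}_-(\lambda|\alpha,\beta-2M-1)$, $\mathcal{D}_-(\lambda|\alpha,\beta-2M+1)$ and $\widehat{\mathcal{C}}_-(\lambda|\alpha+\beta-2M)$ on $\ket{\eta,\alpha+\beta+N-1-2M}$ by means of the boundary-bulk decomposition \eqref{bound-bulk-gauge} and the bulk gauge Yang-Baxter properties collected in Appendix~\ref{app-gaugeYBbulk}. The point is that $\ket{\eta,x}$ is, by construction through \eqref{Right-C-ref}, a pseudo-vacuum for the relevant gauged bulk generators: the gauged $C$-element annihilates it and the gauged $A$- and $D$-elements act diagonally. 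Substituting these bulk actions into \eqref{bound-bulk-gauge} reduces the boundary action to a simple $2\times2$ multiplication by the gauged $K_-$-matrix \eqref{gauge-K} dressed with scalar bulk eigenvalues. A direct calculation shows that with the gauge \eqref{Gauge-cond-A}-\eqref{Gauge-cond-B} and the signs \eqref{E-homo-cond}, the condition \eqref{+homo-cond} is exactly equivalent to the vanishing of the lower-left entry of this gauged $K_-$-matrix, i.e.\ to its triangularity. This reproduces the well-known triangularization argument of \cite{CaoLSW03,FalKN14,KitMNT18}, and yields $\widehat{\mathcal{C}}_-(\lambda|\alpha+\beta-2M)\ket{\eta,\alpha+\beta+N-1-2M}=0$ together with diagonal action of $\mathcal{A}_-$ and $\mathcal{D}_-$ with explicit scalar eigenvalues.

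The conclusion then follows by uniqueness: I would check that $\ket{\Omega_{\alpha,\beta-2M+1}}$ satisfies exactly the same diagonal/annihilation relations, using its defining sum \eqref{ref-state-R} together with the actions \eqref{act-BR}-\eqref{act-BL} and the diagonal action of the gauged $\mathcal{A}_-$, $\mathcal{D}_-$ on Sklyanin SoV basis states (the Vandermonde weighting in \eqref{ref-state-R} is precisely what forces the result to be an eigenstate rather than just a superposition). Since in the generalized Sklyanin range such a simultaneous pseudo-eigenstate of the commuting family is one-dimensional (this is essentially the simplicity of the spectrum at the level of the reference sector, and it is also what makes the ABA representation of Corollary~\ref{cor-Skl-ABA} well-defined), the two states must be proportional, which is \eqref{Ref-SoV}; the constant $\mathsf{c}^{(R)}_{M,\text{ref}}$ is then extracted by matching a single component (e.g.\ the $\ket{0}$-component).

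The main obstacle is the bookkeeping in the central step: carefully tracking the three pairs of gauge parameters $(\gamma,\delta)$, $(\gamma',\delta')$, $(\alpha,\beta)$ in \eqref{bound-bulk-gauge}, the shifts induced by the chain of $\sigma_j^z$ inside the Vertex-IRF transformation \eqref{Sq}, and verifying that the combination $\alpha+\beta+N-1-2M$ is precisely the argument entering the relevant lower-left entry of the gauged $K_-$-matrix, so that \eqref{+homo-cond} becomes the exact triangularization condition. Once this is done cleanly, the bra-version \eqref{Ref-SoV-l} follows by a completely analogous computation, using \eqref{Left-B-ref} in place of \eqref{Right-C-ref}, the dual boundary-bulk decomposition, and the sign pattern \eqref{E-homo-cond-bis} that makes \eqref{+homo-cond-l} the triangularity condition for the covector gauged $K_-$-matrix.
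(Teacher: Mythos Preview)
Your approach — characterizing both states as pseudo-vacua for the gauged boundary algebra at the shifted gauge and then invoking uniqueness — is natural, but there is a genuine gap in the second step. The assertion that the gauged $\mathcal{A}_-$, $\mathcal{D}_-$ act \emph{diagonally} on Sklyanin SoV basis states is incorrect: as stated explicitly just after \eqref{act-BL}, these operators act as \emph{shift} operators on the SoV basis (they move the indices $h_j$), and it is $\mathcal{B}_-$ that is pseudo-diagonal, via \eqref{act-BR}--\eqref{act-BL}. Consequently, there is no straightforward way to read off from the Vandermonde-weighted sum \eqref{ref-state-R} that $\ket{\Omega_{\alpha,\beta-2M+1}}$ is an $\mathcal{A}_-$-eigenstate or is annihilated by $\mathcal{C}_-$. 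The Vandermonde weighting is precisely what makes $\ket{\Omega}$ behave nicely under repeated $\widehat{\mathcal{B}}_-$-action (this is what underlies \eqref{separate-ABA-R}); it does not by itself control the $\mathcal{A}_-,\mathcal{C}_-$-action. In fact, proving that $\ket{\Omega}$ has the pseudo-vacuum properties is essentially equivalent to the proposition itself, since Lemma~\ref{lemme-eigen-A} already pins down the pseudo-vacuum as $\ket{\eta,\alpha+\beta+N-1-2M}$.

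The paper's proof sidesteps this by working one level up. It applies a product of $M$ creation operators $\underline{\widehat{\mathcal{B}}}_{-,M}(\{\lambda_j\}|\alpha-\beta+1)$, with $\{\lambda_j\}$ a solution of the Bethe equations, to \emph{both} candidate reference states. On $\ket{\Omega_{\alpha,\beta-2M+1}}$ the result is a transfer-matrix eigenstate by Corollary~\ref{cor-Skl-ABA} (equivalently by the SoV-to-ABA rewriting \eqref{separate-ABA-R}); on $\ket{\eta,\alpha+\beta+N-1-2M}$ the result is a transfer-matrix eigenstate by Lemma~\ref{lemme-eigen-A} together with the decomposition \eqref{Gauge-T-decomp-A} and standard ABA commutation relations. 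Simplicity of the spectrum then forces the two Bethe states to be collinear. Finally, since $\widehat{\mathcal{B}}_-$ is invertible at admissible Bethe roots (condition \eqref{cond-roots}), one strips off the $M$ creation operators and obtains collinearity of the reference states themselves, with a proportionality constant that is manifestly independent of the particular $Q$ chosen. If you want to salvage your direct approach, the missing ingredient is an independent proof that the SoV sum \eqref{ref-state-R} is annihilated by $\mathcal{C}_-$; this is not obvious and amounts to a nontrivial resummation identity on the SoV basis.
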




\begin{proof}
It follows from \eqref{+homo-cond} that the condition \eqref{homog-cond} is satisfied
for $M\leq N$ when we impose the condition \eqref{E-homo-cond}.
Let $Q(\lambda )\in \Sigma _{Q}^{M}$ solution of the homogeneous
$TQ$-equation \eqref{hom-TQ} for some entire function $\tau (\lambda )$, i.e. such that the parameters $\lambda_1,\ldots,\lambda_M$ labelling the roots of $Q$ satisfy the corresponding Bethe equations.
Then, the associated right transfer matrix eigenstate can be written from corollary~\ref{cor-Skl-ABA} as
\begin{equation}\label{Bethe-state-1}
  \underline{\widehat{\mathcal B}}_{-,M}(\{\lambda_j\} _{j=1}^M |\alpha -\beta +1)\,
  \ket{\Omega _{\alpha ,\beta+1-2M}}.  
\end{equation}
On the other hand, it follows from \eqref{+homo-cond} and from Lemma~\ref{lemme-eigen-A} that the state $\ket{\eta,\alpha+\beta+N-1-2M}$ is an eigenstate of the operator $\mathcal{A}_{-}(\lambda |\alpha ,\beta -1)$:
\begin{multline}
   \mathcal{A}_{-}(\lambda |\alpha ,\beta -1)\, \ket{\eta,\alpha+\beta+N-1-2M}
   = (-1)^N\, a(\lambda)\, d(-\lambda)\, 
   \\
   \times
   e^{\lambda-\eta/2}\frac{\sinh(\lambda-\eta+\epsilon_{\varphi_+}\varphi_+)\, \cosh(\lambda-\eta+\epsilon_{\varphi_+}\psi_+)}{\sinh(\epsilon_{\varphi_+}\varphi_+)\,\cosh\psi_+}\, \ket{\eta,\alpha+\beta+N-1-2M}.
\end{multline}
From the decomposition \eqref{Gauge-T-decomp-A} of
the transfer matrix 
in terms of the gauged boundary
operators $\mathcal{A}_{-}(\pm \lambda |\alpha ,\beta -1)$, with $\alpha $
and $\beta $ fixed by \eqref{Gauge-cond-A} and \eqref{Gauge-cond-B}, one can then show by direct computation, using the commutation relations between $\mathcal{A}_{-}(\lambda |\alpha ,\beta )$ and $\mathcal{B}_{-}(\mu |\alpha ,\beta )$, that the state
\begin{equation}\label{Bethe-state-2}
   \underline{\widehat{\mathcal{B}}}_{-,M}(\{\lambda_j\} _{j=1}^M |\alpha -\beta +1)\,
   \ket{\eta,\alpha+\beta+N-1-2M}
\end{equation}
is also an eigenstate of the transfer matrix $\mathcal{T}(\lambda )$ with the same eigenvalue $\tau(\lambda)$ as \eqref{Bethe-state-1}. Due to the simplicity of the transfer matrix spectrum, the two states \eqref{Bethe-state-1} and \eqref{Bethe-state-2} are therefore collinear:
\begin{multline}
   \underline{\widehat{\mathcal{B}}}_{-,M}(\{\lambda_j\} _{j=1}^M |\alpha -\beta +1)\,
   \ket{\eta,\alpha+\beta+N-1-2M}
   \\
   =\mathsf{c}_{Q,\text{ref}}\,
   \underline{\widehat{\mathcal B}}_{-,M}(\{\lambda_j\} _{j=1}^M |\alpha -\beta +1)\,
  \ket{\Omega _{\alpha ,\beta+1-2M}},
\end{multline}
for some constant $\mathsf{c}_{Q,\text{ref}}$ which depends a priori on the particular choice of $Q$, i.e. of the set of Bethe roots $\{\lambda_j\}_{j=1}^M$.
From the fact that the operators $\widehat{\mathcal B}_-(\lambda_j|\alpha-\beta+2j-1)$ are invertible when evaluated at any admissible Bethe root $\lambda_j$, i.e. satisfying \eqref{cond-roots}, we get that
\begin{equation}
  \ket{\eta,\alpha+\beta+N-1-2M}
   =\mathsf{c}_{Q,\text{ref}}\,
  \ket{\Omega _{\alpha ,\beta+1-2M}}.
\end{equation}
This relation should be true for any $Q\in\Sigma_Q^M$ solution of the $TQ$-equation, so that the proportionality constant $\mathsf{c}_{Q,\text{ref}}$ should in fact coincide for all such $Q\in\Sigma_Q^M$.

The second assertion can be proven similarly.
\end{proof}

\begin{rem}
The two conditions \eqref{+homo-cond} and \eqref{+homo-cond-l} are in general not compatible, and we made different choices for the gauges parameters in \eqref{Ref-SoV} and in \eqref{Ref-SoV-l}, so that we cannot in principle use the ABA form of the right separate states with reference state \eqref{Ref-SoV} and of the left separate states with reference state \eqref{Ref-SoV-l} simultaneously. In the following, in our study of correlation functions, we shall in fact only use the ABA form of the right separate state with condition \eqref{+homo-cond}.
\end{rem}

\subsubsection{Spectrum and eigenstates for the case of a generic magnetic field on site 1 and a
special z-oriented on site $N$}

Here, we write explicitly the form that takes the spectrum and eigenstates of the transfer
matrix under the particular case considered in this paper: when the most general boundary conditions are considered on the site 1, i.e. general values of the boundary parameters $\varsigma _{-}$, $\kappa _{-}$ and $\tau_{-}$, whereas the following very special choice of the boundary conditions is made on the
site $N$:
\begin{align}
K_{-,0}(\lambda ) 
&=K_{-,0}(\lambda ;\varsigma _{+}=-\infty ,\kappa_{+},\tau _{+})  \label{Special-K+} \\
&=%
\begin{pmatrix}
e^{(\eta /2-\lambda )} & 0 \\ 
0 & e^{(\lambda -\eta /2)}%
\end{pmatrix}%
_{0}=e^{(\eta /2-\lambda )\sigma _{0}^{z}}.
\end{align}
Then the transfer matrix spectrum characterization reads:

\begin{proposition}
\label{prop-special+BC}
Let us fix the boundary condition at site $N$ by imposing \eqref{Special-K+}, and let us suppose that the inhomogeneity parameters are generic, i.e. satisfy \eqref{cond-inh}.
Let us moreover define
\begin{equation}
\Sigma _{Q}=\cup _{n=0}^{N}\Sigma _{Q}^{n}.
\end{equation}

Then, for almost any choice of the boundary parameters $\varsigma _{-}$, $\kappa
_{-}$ and $\tau _{-}$ at site 1, the transfer matrix $\mathcal{T}(\lambda)$ is diagonalizable with simple
spectrum. Moreover, the set $\Sigma _{\mathcal{T}}$ of its eigenvalues  is given by the set of entire functions $\tau(\lambda)$ such that there exists a polynomial $Q(\lambda )\in \Sigma _{Q}$ satisfying with $\tau(\lambda)$ the homogeneous $TQ$-equation  \eqref{hom-TQ}
with coefficient $\mathbf{A}_{\boldsymbol{\varepsilon}}(\lambda)$ defined as in \eqref{DefFullA} in terms  of 
\begin{equation}
\mathbf{a}_{\boldsymbol{\varepsilon}}(\lambda )
=\frac{\sinh (\lambda -\frac{\eta }{2}+\epsilon _{\varphi _{-}}\varphi _{-})\,
          \cosh (\lambda -\frac{\eta }{2}-\epsilon _{\psi _{-}}\psi _{-})}
          {\sinh (\epsilon _{\varphi _{-}}\varphi_{-})\,\cosh (\epsilon _{\psi _{-}}\psi _{-})},  \label{Reduced form}
\end{equation}
for $\epsilon _{\varphi _{-}}=\epsilon _{\psi _{-}}\in\{+1,-1\}$.
For $\tau(\lambda)\in\Sigma _{\mathcal{T}}$, the corresponding $Q(\lambda )\in \Sigma _{Q}$ solution of  \eqref{hom-TQ} is unique,
and the corresponding unique (up overall constants) left and right eigenstates of $\mathcal{T}(\lambda)$ are respectively given by the SoV states $\ket{Q}$ \eqref{eigenR} and $\bra{Q}$ \eqref{eigenL}.

Moreover, if $\lambda_1,\ldots,\lambda_M$ label the roots of $Q(\lambda)$ similarly as in \eqref{Q-form},
then the Bethe states
\begin{equation}\label{Bethe-state-r}
 \underline{\widehat{\mathcal{B}}}_{-,M}(\{\lambda _{i}\}_{i=1}^{M}|\alpha -\beta+1)\,
 \ket{\eta ,\alpha +\beta +N-2M-1} ,
\end{equation}
for $\alpha$ and $\beta$ fixed in terms of the boundary parameters by \eqref{Gauge-cond-A} and \eqref{Gauge-cond-B} with $\epsilon_-=\epsilon'_-=\epsilon_{\varphi _{-}}=\epsilon _{\psi _{-}}$, and 
\begin{equation}\label{Bethe-state-l}
  \bra{\alpha +\beta +2M-N+1,\eta}\, 
  \underline{\widehat{\mathcal{B}}}_{-,M}(\{\lambda _{i}\}_{i=1}^{M}|\alpha -\beta+1-2M),
\end{equation}
for $\alpha$ and $\beta$ fixed in terms of the boundary parameters by \eqref{Gauge-cond-A} and \eqref{Gauge-cond-B} with $\epsilon_-=\epsilon'_-=-\epsilon_{\varphi _{-}}=-\epsilon _{\psi _{-}}$,
are respectively collinear to $\ket{Q}$ and $\bra{Q}$. 

Finally, the transfer matrix is isospectral to the transfer matrix of an open spin chain with  diagonal boundary conditions with  boundary  parameters $\varsigma _{\pm }^{(D)}$ given as
\begin{equation}
\varsigma _{\epsilon }^{(D)}=\epsilon _{\varphi _{-}}\varphi _{-},\qquad
\varsigma _{-\epsilon }^{(D)}=-\epsilon  _{\varphi _{-}}\psi _{-}+i\pi /2,
\label{Id-boundary}
\end{equation}
for $\epsilon _{\varphi _{-}} =1$ or $-1.$
\end{proposition}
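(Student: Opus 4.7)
The plan is to deduce this result as a specialization of Theorem~\ref{th-spectrum2} and Proposition~\ref{Ref-States} (together with Corollary~\ref{cor-Skl-ABA}) to the particular boundary matrix \eqref{Special-K+}, which arises as the $\varsigma_+\to-\infty$ limit of the general scalar solution \eqref{mat-K} (with $\kappa_+$ and $\tau_+$ bounded), after normalization by $\sinh\varsigma_+$.

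Since the off-diagonal entries of $K_+$ vanish in this limit, $\kappa_+/\sinh\varsigma_+\to 0$ and condition \eqref{Fullhomo-2} is automatically satisfied. Theorem~\ref{th-spectrum2} thus yields the diagonalizability and simplicity of the spectrum (for almost any $\varsigma_-,\kappa_-,\tau_-$), the homogeneous $TQ$-characterization with $\Sigma_Q=\cup_{n=0}^N\Sigma_Q^n$, and the SoV form \eqref{eigenR}-\eqref{eigenL} of the eigenstates. It remains to identify the coefficient $\mathbf{a}_{\boldsymbol{\varepsilon}}(\lambda)$ in the limit. Using the reparametrization \eqref{reparam-bords}, one gets $\sinh(\varphi_++\psi_+)=e^{\varsigma_+}/(2\kappa_+)\to 0$ and $\sinh(\varphi_+-\psi_+)=-e^{-\varsigma_+}/(2\kappa_+)\to\infty$, so $\varphi_++\psi_+\to 0\mod i\pi$ while $|\varphi_+|,|\psi_+|\to\infty$. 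An asymptotic evaluation of the first factor of \eqref{a_eps} (the one containing $\varphi_+,\psi_+$) then shows that this factor tends to unity precisely when $\epsilon_{\varphi_+}=\epsilon_{\psi_+}$, reducing \eqref{a_eps} to \eqref{Reduced form}. Combined with the global sign constraint $\epsilon_{\varphi_+}\epsilon_{\varphi_-}\epsilon_{\psi_+}\epsilon_{\psi_-}=1$, this forces $\epsilon_{\varphi_-}=\epsilon_{\psi_-}$, as announced.

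For the Bethe-state representations \eqref{Bethe-state-r} and \eqref{Bethe-state-l}, I would apply Proposition~\ref{Ref-States}. With the sign condition isolated above, \eqref{E-homo-cond} holds. In the limit $\varphi_++\psi_+\to 0$ and since $\tau_+$ drops out of $K_+$, the parameter $\tau_+$ becomes a free auxiliary variable; it can be tuned so that \eqref{+homo-cond} is satisfied for each $M\leq N$, with $\alpha$ and $\beta$ fixed through Sklyanin's gauge conditions \eqref{Gauge-cond-A}-\eqref{Gauge-cond-B} with $\epsilon_-=\epsilon'_-=\epsilon_{\varphi_-}$. Applying \eqref{Ref-SoV} to the ABA form of the right eigenstate given in Corollary~\ref{cor-Skl-ABA} then produces exactly \eqref{Bethe-state-r}. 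The left representation \eqref{Bethe-state-l} follows symmetrically from \eqref{+homo-cond-l} and \eqref{Ref-SoV-l} under the alternative sign choice \eqref{E-homo-cond-bis}.

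Finally, the isospectrality claim is proved by comparing $TQ$-equations. For an open chain with parallel longitudinal diagonal boundaries of parameters $\varsigma_\pm^{(D)}$ the homogeneous $TQ$-equation has coefficient proportional to
\begin{equation*}
(-1)^N\,\frac{\sinh(2\lambda+\eta)}{\sinh(2\lambda)}\,\frac{\sinh(\lambda-\eta/2+\varsigma_+^{(D)})\,\sinh(\lambda-\eta/2+\varsigma_-^{(D)})}{\sinh\varsigma_+^{(D)}\,\sinh\varsigma_-^{(D)}}\,a(\lambda)\,d(-\lambda).
\end{equation*}
Substituting the identification \eqref{Id-boundary} and using $\cosh x=-i\sinh(x+i\pi/2)$, this coincides exactly with $\mathbf{A}_{\boldsymbol{\varepsilon}}(\lambda)$ obtained from \eqref{DefFullA} with $\mathbf{a}_{\boldsymbol{\varepsilon}}(\lambda)$ in the reduced form \eqref{Reduced form}, up to an overall constant absorbable in $Q$. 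The two transfer matrices thus admit the same family of $TQ$-solutions with the same $\Sigma_Q$, hence the same spectrum. The main technical obstacle is the careful control of the $\varsigma_+\to-\infty$ limit: one must verify that the exponential prefactors appearing along the way cancel consistently, and that the residual gauge freedom (in particular the freedom in $\tau_+$, which disappears from $K_+$) is sufficient to guarantee \eqref{+homo-cond} so that Proposition~\ref{Ref-States} genuinely applies.
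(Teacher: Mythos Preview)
Your reduction to Theorem~\ref{th-spectrum2} via the condition \eqref{Fullhomo-2}, the identification of the limiting coefficient \eqref{Reduced form} under $\epsilon_{\varphi_+}=\epsilon_{\psi_+}$, and the isospectrality argument by matching the homogeneous $TQ$-equations are all essentially what the paper does.

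The one genuine difference is in the treatment of the Bethe-state representations \eqref{Bethe-state-r}--\eqref{Bethe-state-l}. You route this through Corollary~\ref{cor-Skl-ABA} and Proposition~\ref{Ref-States}, both of which presuppose the generalized Sklyanin SoV framework, i.e.\ condition \eqref{Sklyanin-range}. But in the limit $\varsigma_+\to-\infty$ one has $\mathsf{\bar b}_-(x)=-\tfrac{1}{2\sinh\varsigma_+}\big[2\kappa_+\sinh(x\eta+\tau_+)+e^{\varsigma_+}\big]\to 0$, so \eqref{Sklyanin-range} degenerates and those results are not directly available at the endpoint. Your closing paragraph acknowledges that a careful limiting procedure is needed, and indeed the paper later exploits exactly such a trajectory argument (cf.\ \S\ref{subsec-strategy-corr}), but for the present proposition it takes a shortcut: it observes that at $\varsigma_+=-\infty$ the prefactor $\kappa_+/\sinh\varsigma_+$ vanishes, so conditions \eqref{Gauge-B+} and \eqref{cond-eigen-D+} of Lemma~\ref{lemme-eigen-A} hold \emph{automatically} for any gauge parameters. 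This makes $\ket{\eta,\alpha+\beta+N-1}$ an honest $\mathcal{A}_-(\lambda|\alpha,\beta-1)$-eigenvector (and dually for the covector), after which the standard ABA computation (the same one underlying the proof of Proposition~\ref{Ref-States}, using \eqref{Gauge-T-decomp-A} and the $\mathcal{A}_-$/$\mathcal{B}_-$ commutation relations) shows that \eqref{Bethe-state-r} and \eqref{Bethe-state-l} are transfer-matrix eigenstates, hence collinear to $\ket{Q}$ and $\bra{Q}$ by spectral simplicity. This bypasses the Sklyanin framework entirely and avoids the delicate limit you flag.
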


\begin{proof}
The condition \eqref{Special-K+}
implies that $K_{-,0}(\lambda )$ is non-proportional to the identity so that
the requirements of Theorem~\ref{th-spectrum2} are satisfied. Therefore the above
characterization is a corollary of Theorem~\ref{th-spectrum2} once we remark that the
coefficients \eqref{Reduced form} are compatible with the
quantum determinant condition:
\begin{equation}
\det_{q}K_{-,0}(\lambda )=1,
\end{equation}
and that they can be obtained as the limit $\varsigma _{+}\rightarrow -\infty $
of the general coefficients: 
\begin{equation}
\mathbf{\lim_{\varsigma _{+}\rightarrow -\infty }}\frac{\sinh (\lambda -%
\frac{\eta }{2}+\epsilon _{\varphi _{+}}\varphi _{+})\,\cosh (\lambda -\frac{%
\eta }{2}+\epsilon _{\psi _{+}}\psi _{+})}{\sinh (\epsilon _{\varphi
_{+}}\varphi _{+})\,\cosh (\epsilon _{\psi _{+}}\psi _{+})}=1,
\end{equation}
for the choice $\epsilon _{\varphi _{+}}=\epsilon _{\psi _{+}}$. Indeed, the limit $\varsigma _{+}\rightarrow -\infty$ can be achieved by taking
\begin{equation}\label{lim-infty}
  \varphi_{+}\rightarrow -\infty ,\quad \psi _{+}\rightarrow +\infty
  \qquad
  \text{with}\quad \varphi _{+}-\psi _{+}- \varsigma _{+}\ \text{finite}.
\end{equation}

In the limit $\varsigma _{+}\rightarrow -\infty $, both conditions \eqref{Gauge-B+} and \eqref{cond-eigen-D+} of Lemma~\ref{lemme-eigen-A} are satisfied, so that one can show similarly as in Proposition~\ref{prop-special+BC} by direct ABA computations that the states \eqref{Bethe-state-r} and \eqref{Bethe-state-l} are eigenstates of the transfer matrix with the same eigenvalue as $\ket{Q}$ and $\bra{Q}$, so that they should be collinear to the latter due to the simplicity of the transfer matrix spectrum.

The second part of the proposition, about isospectrality, follows once we
observe that under the identification of the boundary parameters given in $%
\left( \ref{Reduced form}\right) $ the corresponding diagonal transfer
matrix and the original non-diagonal one have their eigenvalues completely
characterized by the same homogeneous $TQ$-equation.
\end{proof}

In the following, for the computation of correlation functions for the chain with such special boundary conditions, we will use the fact that boundary Bethe states of the form \eqref{Bethe-state-r} can be decomposed into bulk Bethe states, similarly as what was done in \cite{KitKMNST07}, as explained in Appendix~\ref{app-boundary-bulk}. This will enable us to compute the action of local operators on such states as done in \cite{KitKMNST07}. The computation of such an action is the purpose of the next section.

\section{Action of local operators on boundary states}
\label{sec-act-state}

In this section, we compute the action of a basis of local operators on boundary states of ABA form \eqref{Bethe-state-r}. The strategy is similar to the one used in the diagonal case \cite{KitKMNST07}, i.e.
\begin{enumerate}
\item we decompose the boundary Bethe states of ABA form in terms of bulk Bethe states;
\item we act with local operators on the bulk Bethe states by means of the solution of the bulk inverse problem;
\item we reconstruct the resulting states in terms of boundary states.
\end{enumerate}
However, here, all this process should be done in terms of the gauged boundary and bulk operators.

The boundary-bulk decomposition of arbitrary states of the form \eqref{Bethe-state-r} in terms of gauged bulk Bethe states is done in Appendix~\ref{app-boundary-bulk} for the particular boundary conditions that we consider in this paper, i.e. with the boundary matrix \eqref{Special-K+}. From Proposition~\ref{prop-boundary-bulk}, it takes the form
\begin{multline}\label{Boundary-bulk-state}
   \underline{\widehat{\mathcal B}}_{-,M}(\{\lambda_i\}_{i=1}^M | \alpha-\beta+1)\, \ket{\eta,\alpha +\beta +N-2M-1}
   = \mathsf{h}_M(x;\alpha,\beta) 
   \sum_{\sigma_1=\pm,\ldots,\sigma_M=\pm} \hspace{-4mm}
   H_{\sigma_1,\ldots,\sigma_M}(\{\lambda_i\}_{i=1}^M)\, 
   \\
   \times
   \underline{B}_M(\{\sigma_i\lambda_i\}_{i=1}^M|x-1,\alpha-\beta)\, \ket{\eta,\alpha +\beta +N-M-1},
\end{multline}
where the coefficient $H_{\sigma_1,\ldots,\sigma_M}$ is given by \eqref{H_sigma}, that we recall here:
\begin{equation}\label{H_sig}
   H_{\sigma_1,\ldots,\sigma_M}(\{\lambda_i\}_{i=1}^M)
   =\prod_{j=1}^M\left[ \sigma_j\, a(-\lambda_j^{(\sigma)})\,
                                    \frac{\sinh(2\lambda_j-\eta)}{\sinh(2\lambda_j)} \right]
     \prod_{1\le i<j\le M}\frac{\sinh(\lambda_i^{(\sigma)}+\lambda_j^{(\sigma)}+\eta)}{\sinh(\lambda_i^{(\sigma)}+\lambda_j^{(\sigma)})},
\end{equation}
and where $\mathsf{h}_M(x;\alpha,\beta)\equiv h_M(x,\alpha-\beta,\alpha +\beta +N-2M-1)$ is an overall non-zero constant depending exclusively on the gauge parameters $x,\alpha,\beta$ and on the number $M$ of $\widehat{\mathcal B}_-$ operators (see \eqref{hM}):
\begin{align}\label{hMalpha}
  \mathsf{h}_M(x;\alpha,\beta)
 &=(-1)^{M(N+1)} \, e^{M\eta\frac{x+\alpha-\beta+N}2}\, \prod_{j=1}^M\frac{\sinh(\eta(\beta-M-j) )}{2\sinh(\eta\frac{-x+\alpha+\beta+N-2M-1+2j}2)}.
\end{align}
In \eqref{H_sig} and in the following, we use the shortcut notations $\lambda_j^{(\sigma)}\equiv \sigma_j\lambda_j$, $1\le j\le M$, and
\begin{multline}\label{bulk-state}
   \underline{B}_M(\{\mu_i\}_{i=1}^M|x-1,\alpha-\beta)\, \ket{\eta,\alpha +\beta +N-M-1}
   \\
   =\prod_{j=1\to M} \hspace{-2mm} B(\mu_j|x-j,\alpha-\beta+j-1)\, \ket{\eta,\alpha +\beta +N-M-1}.
\end{multline}
To compute correlation functions, we therefore need to act with local operators on the bulk gauged states of the form \eqref{bulk-state}. In subsection~\ref{sec-reconstr}, we explain how to reconstruct local operators in terms of the elements of the gauged bulk monodromy matrix \eqref{gauge-M}. 
In subsection~\ref{sec-act-bulk}, we identify a basis of local operators whose action on states of the form \eqref{bulk-state} is relatively simple, and we explicitly compute this action. Finally, in subsection~\ref{sec-act-boundary}, we compute the resulting action on the boundary states \eqref{Boundary-bulk-state}.

\subsection{Reconstruction of local operators in terms of bulk gauged Yang-Baxter generators}
\label{sec-reconstr}

Let us start remarking that, using the ordinary bulk inverse relation, 
\begin{equation}
\hat{M}(\pm \lambda -\eta /2)=(-1)^{N}\,M^{-1}(\mp \lambda -\eta/2)\,\det_{q}M(\mp \lambda ),
\end{equation}
we obtain the following inversion relation for the gauge bulk monodromy matrix \eqref{gauge-M}:
\begin{equation}
\hat{M}(\pm \lambda -\eta /2|(\alpha,\beta ),(\gamma,\delta))
=(-1)^{N}\,M^{-1}(\mp \lambda-\eta/2 |(\alpha ,\beta ),(\gamma ,\delta))\,\det_{q}M^{t_{0}}(\mp \lambda ).
\end{equation}
Let us now define, for $i,j\in\{1,2\}$, the following local operators at the site $n$ of the chain:
\begin{equation}\label{gauge-op}
E_{n}^{i,j}(\lambda |(\alpha ,\beta ),(\gamma ,\delta ))
=S_{n}(-\lambda|\gamma ,\delta )\, E_{n}^{i,j}\, S_{n}^{-1}(-\lambda |\alpha ,\beta ),
\end{equation}
where $E^{i,j}\in\End(\mathbb{C}^2)$ is such that $(E^{i,j})_{k,\ell}=\delta_{i,k}\,\delta_{j,\ell}$.
Then, by the ungauged results \cite{KitMT99,MaiT00,KitKMNST07}, one can show the following reconstruction:

\begin{proposition}
The local operators \eqref{gauge-op} can be reconstructed in terms of the elements of the bulk monodromy matrix \eqref{gauge-M} as
\begin{align}
  E_{n}^{i,j}(\xi _{a}|(\alpha ,\beta ),(\gamma ,\delta ))
  &=\prod_{k=1}^{n-1}t(\xi _k-\eta /2)\
  M_{j,i}(\xi _{n}-\eta /2|(\alpha ,\beta ),(\gamma ,\delta ))\,
  \prod_{k=1}^{n} \big[ t(\xi_k-\eta/2)\big]^{-1},
  \label{reconstr-1}
\end{align}
and
\begin{align}
  E_{n}^{i,j}(\xi _{a}|(\gamma ,\delta ),(\alpha ,\beta ))
  &=(-1)^N \prod_{k=1}^n t(\xi_k-\eta /2)\ 
      \frac{\hat{M}_{j,i}(-\eta /2-\xi _{n}|(\alpha ,\beta),(\gamma ,\delta ))}{\det_q M(\xi_n)}\,
      \prod_{k=1}^{n-1} \big[ t(\xi_k-\eta/2)\big]^{-1}
  \nonumber\\
  &=(-1)^{i-j}\,\frac{\det S(-\xi_n|\alpha,\beta)}{\det S(-\xi_n|\gamma,\delta)}\,  
       \prod_{k=1}^n t(\xi_k-\eta /2)\nonumber\\
   &\quad\times
       \frac{M_{3-i,3-j}(\xi_n+\eta/2|(\alpha-1,\beta),(\gamma-1,\delta))}{\det_q M(\xi_n)}\,
      \prod_{k=1}^{n-1} \big[ t(\xi_k-\eta/2)\big]^{-1},
  \label{reconstr-2}
\end{align}
where $t(\lambda )$ is the bulk transfer matrix:
\begin{equation}\label{tramsfer-bulk}
t(\lambda )=\tr_{0}[M_{0}(\lambda )],
\end{equation}
and
\begin{align}
   &M_{j,i}(\lambda|(\alpha ,\beta ),(\gamma ,\delta ))
   =\tr_{0}\!\left[ E_{0}^{i,j}\, M_{0}(\lambda |(\alpha ,\beta ),(\gamma ,\delta ))\right] ,
   \\
   &\hat{M}_{j,i}(\lambda|(\alpha ,\beta),(\gamma ,\delta ))
   =\tr_{0}\!\left[ E_{0}^{i,j}\, \hat{M}_{0}(\lambda|(\alpha ,\beta),(\gamma ,\delta ))\right] .
\end{align}
\end{proposition}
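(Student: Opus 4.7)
The strategy is to deduce both identities from the known ungauged solution of the inverse problem (\cite{KitMT99,MaiT00,KitKMNST07}) by purely algebraic manipulations involving the gauge transformation \eqref{gauge-M} and the gauged bulk inversion relation. The crucial observation is that no gauge dependence appears in the products $T \equiv \prod_{k=1}^{n-1} t(\xi_k - \eta/2)$ and $T' \equiv \prod_{k=1}^{n} t(\xi_k - \eta/2)$ of bulk transfer matrices; all of it must therefore be absorbed into the middle $M$-element on the right-hand side of each identity.

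For \eqref{reconstr-1}, I would first apply the definition \eqref{gauge-M} at $\lambda = \xi_n - \eta/2$ (so that $-\eta/2 - \lambda = -\xi_n$) and use the cyclicity of $\tr_0$ to get
\begin{equation*}
  M_{j,i}(\xi_n - \eta/2|(\alpha,\beta),(\gamma,\delta)) = \sum_{k,\ell} [S(-\xi_n|\gamma,\delta)]_{k,i}\, [S^{-1}(-\xi_n|\alpha,\beta)]_{j,\ell}\, M_{\ell,k}(\xi_n - \eta/2).
\end{equation*}
Next I would invoke the ungauged reconstruction in its inverted form $M_{\ell,k}(\xi_n - \eta/2) = T^{-1}\, E_n^{k,\ell}\, T'$ and substitute it into this expression. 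Since $[S(\cdot)]_{k,i}$ and $[S^{-1}(\cdot)]_{j,\ell}$ are scalars, they commute past $T$ and $T'$, and the remaining finite sum $\sum_{k,\ell} [S]_{k,i}[S^{-1}]_{j,\ell} E_n^{k,\ell}$ coincides with $S_n(-\xi_n|\gamma,\delta)\, E_n^{i,j}\, S_n^{-1}(-\xi_n|\alpha,\beta) = E_n^{i,j}(\xi_n|(\alpha,\beta),(\gamma,\delta))$ by \eqref{gauge-op}. Rearranging gives \eqref{reconstr-1}.

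The first equality in \eqref{reconstr-2} follows from the same sandwich argument, starting instead from the gauged bulk inversion relation
\begin{equation*}
   \hat M(-\xi_n - \eta/2|(\alpha,\beta),(\gamma,\delta)) = (-1)^N\, M^{-1}(\xi_n - \eta/2|(\alpha,\beta),(\gamma,\delta))\,\det_q M^{t_0}(\xi_n),
\end{equation*}
which is inherited from its ungauged counterpart by conjugating with $S$'s, the quantum determinant being central and hence gauge-invariant. Taking the $(j,i)$-entry, dividing by $\det_q M(\xi_n)$, and noting that inversion of $M(\lambda|(\alpha,\beta),(\gamma,\delta))$ swaps the roles of the gauges $(\alpha,\beta)\leftrightarrow(\gamma,\delta)$, the same steps as in the previous paragraph then yield \eqref{reconstr-2} with $E_n^{i,j}(\xi_n|(\gamma,\delta),(\alpha,\beta))$ on the left-hand side.

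For the second equality in \eqref{reconstr-2}, I would translate $\hat M_{j,i}(-\xi_n - \eta/2|(\alpha,\beta),(\gamma,\delta))$ into an $M$-entry by combining $\hat M(\lambda) = (-1)^N \sigma_0^y M^{t_0}(-\lambda)\sigma_0^y$ with its gauge-conjugated counterpart \eqref{gauge-Mhat}, and by applying the 2-by-2 identities $\sigma^y E^{i,j}\sigma^y = (-1)^{i-j} E^{3-i,3-j}$ and $\sigma^y A \sigma^y = (\det A)(A^T)^{-1}$. This converts the $(j,i)$-entry of the gauged $\hat M$ into an entry with cofactor indices $(3-i, 3-j)$; the explicit form \eqref{mat-S} of $S$, together with the quantum-determinant identities for $M$, then generate the shifts $\alpha \to \alpha - 1$, $\gamma \to \gamma - 1$, and $-\xi_n - \eta/2 \to \xi_n + \eta/2$, while the leftover scalar factors reorganize into the prefactor $\det S(-\xi_n|\alpha,\beta)/\det S(-\xi_n|\gamma,\delta)$ and the overall sign $(-1)^{i-j}$. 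The main obstacle is this bookkeeping: tracking all the shifts and scalar prefactors demands patient algebra, but introduces no idea beyond the 2-by-2 adjugate formula and the ungauged bulk inversion relation.
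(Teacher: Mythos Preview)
Your proposal is correct and follows essentially the same route as the paper: both reduce to the ungauged reconstruction of \cite{KitMT99,MaiT00,KitKMNST07} via cyclicity of the trace (equivalently, expanding in components and commuting scalar $S$-entries past the transfer matrices), and both obtain the second line of \eqref{reconstr-2} from the identity \eqref{Mhat-M}. The only cosmetic difference is direction: the paper starts from $E_n^{i,j}(\xi_n|\ldots)$, applies the ungauged reconstruction to this generic local operator, and then observes that $\tr_0[E_0^{i,j}(\xi_n|\ldots)\,M_0(\xi_n-\eta/2)]=\tr_0[E_0^{i,j}\,M_0(\xi_n-\eta/2|\ldots)]$, whereas you start from the gauged $M_{j,i}$ and unpack it into a sum of ungauged $M_{\ell,k}$'s before invoking the inverse problem.
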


\begin{rem}
It will be convenient, for further computations, to express the product of inverse transfer matrices $\big[ t(\xi_k-\eta/2)\big]^{-1}$ in terms of the transfer matrices $t(\xi_k+\eta/2)$ thanks to the quantum determinant condition:
\begin{equation}
 \big[ t(\xi_k-\eta/2)\big]^{-1}=\frac{t(\xi_k+\eta /2)}{\det_q M(\xi_k)}.
\end{equation}

\end{rem}

\begin{proof}
We use  the known reconstruction \cite{KitMT99,MaiT00,KitKMNST07} for the local operators $E_n^{i,j}(\xi_n|(\alpha,\beta),(\gamma,\delta))$:
\begin{align}
   E_n^{i,j}(\xi_n|(\alpha,\beta),(\gamma,\delta))
   &=\prod_{k=1}^{n-1}t(\xi _k-\eta /2)\,  \tr_{0}\!\left[ E_{0}^{i,j}(\xi_n|(\alpha,\beta),(\gamma,\delta))\, M_{0}(\xi _{n}-\eta /2)\right]  \big[ t(\xi_k-\eta/2)\big]^{-1}
   \nonumber\\
   &=\prod_{k=1}^{n}t(\xi _k-\eta /2)\, \frac{\tr_{0}\!\left[ E_{0}^{i,j}(\xi _{a}|(\alpha ,\beta ),(\gamma,\delta ))\,
   \sigma _{0}^{y} \, M^{t_{0}}(\xi _{n}+\eta/2)\,\sigma _{0}^{y}\right]}{\det_q M(\xi_n)}\,
   \nonumber\\
   &\hspace{6cm}\times
   \prod_{k=1}^{n-1} \big[ t(\xi_k-\eta/2)\big]^{-1},
\end{align}
and remark that
\begin{equation}
  \tr_{0}\!\left[ E_{0}^{i,j}(\xi _{a}|(\alpha ,\beta),(\gamma ,\delta ))\,M_{0}(\xi _{n}-\eta /2)\right]
  =\tr_{0}\!\left[ E_{0}^{i,j}\, M_{0}(\xi _{n}-\eta /2|(\alpha ,\beta ),(\gamma,\delta ))\right] , 
\end{equation}
and
\begin{align}
  &\tr_{0}\!\left[ E_{0}^{i,j}(\xi _{a}|(\gamma,\delta ),(\alpha ,\beta ))\,
   \sigma _{0}^{y} \, M^{t_{0}}(\xi _{n}+\eta/2)\,\sigma _{0}^{y}\right]
   =(-1)^N \tr_{0}\!\left[ E_{0}^{i,j}\,\hat{M}_{0}(-\eta /2-\xi _{n}|(\alpha ,\beta),(\gamma ,\delta ))\right]
   \nonumber\\
   &\quad=\frac{\det S(\lambda +\eta /2|\alpha ,\beta )}{ \det S(\lambda +\eta /2|\gamma ,\delta )}\,
   \tr_{0}\!\left[ E_{0}^{i,j}\,\sigma_{0}^{y}\, M^{t_{0}}(\xi_n+\eta/2 |(\alpha -1,\beta ),(\gamma -1,\delta ))\, \sigma_{0}^{y} \right]
    ,
\end{align}
in which we have used \eqref{Mhat-M}.
\end{proof}

As a consequence, one can show, similarly as for the ungauged case, the following identities for the gauged monodromy matrix elements:

\begin{corollary}
For any $\epsilon,\epsilon',\bar\epsilon'\in\{1,2\}$, and any choice of parameters $\alpha,\beta,\gamma,\delta,\gamma',\delta'$ such that $\sinh(\eta\beta)\not=0$, $\sinh(\eta\delta)\not=0$ and $\sinh(\eta\delta')\not=0$, the gauged Yang-Baxter algebra generators satisfy the following annihilation
relations:
\begin{align}
  &M_{\epsilon ,\epsilon ^{\prime }}(\xi _{n}-\eta /2|(\alpha ,\beta ),(\gamma,\delta ))\,
  M_{\epsilon,\bar{\epsilon}^{\prime }}(\xi _{n}+\eta/2|(\alpha -1,\beta ),(\gamma' -1,\delta' )) =0, 
  \label{MM=0-1}\\
  &M_{\epsilon ^{\prime },\epsilon }(\xi _{n}+\eta /2|(\gamma -1,\delta),(\alpha -1,\beta ))\,
  M_{\bar{\epsilon}^{\prime },\epsilon}(\xi_{n}-\eta /2|(\gamma' ,\delta' ),(\alpha ,\beta )) =0.
  \label{MM=0-2}
\end{align}
One also has, for any $\epsilon,\bar\epsilon,\epsilon',\bar\epsilon'\in\{1,2\}$, and any choice of parameters $\alpha,\beta,\alpha',\beta',\gamma,\delta,\gamma',\delta'$ such that $\sinh(\eta\beta)\not=0$, $\sinh(\eta\beta')\not=0$, $\sinh(\eta\delta)\not=0$ and $\sinh(\eta\delta')\not=0$,
\begin{align}
  &M_{\epsilon,\epsilon'}(\xi _{n}-\eta /2|(\alpha ,\beta),(\gamma ,\delta ))\,
  M_{3-\epsilon,\bar\epsilon'}(\xi_n+\eta/2|(\alpha-1,\beta),(\gamma'-1,\delta'))
  =(-1)^{\bar\epsilon-\epsilon}  \frac{e^{\eta\alpha}\sinh(\eta\beta')}{e^{\eta\alpha'}\sinh(\eta\beta)}
  \nonumber\\
  &\hspace{2cm}\times
  M_{\bar\epsilon,\epsilon'}(\xi _{n}-\eta /2|(\alpha' ,\beta'),(\gamma ,\delta ))\,
  M_{3-\bar\epsilon,\bar\epsilon'}(\xi_n+\eta/2|(\alpha'-1,\beta'),(\gamma'-1,\delta')),
  \label{MM=MM-1}
  \\
  &
  M_{\epsilon',3-\epsilon}(\xi_n+\eta/2|(\gamma-1,\delta),(\alpha-1,\beta))\,
  M_{\bar\epsilon',\epsilon}(\xi_n-\eta/2|(\gamma',\delta'),(\alpha,\beta))
  =(-1)^{\bar\epsilon-\epsilon}\,\frac{e^{\eta\alpha'}\sinh(\eta\beta)}{e^{\eta\alpha}\sinh(\eta\beta')}\,
  \nonumber\\
  &\hspace{2cm}\times
  M_{\epsilon',3-\bar\epsilon}(\xi_n+\eta/2|(\gamma-1,\delta),(\alpha'-1,\beta'))\,
  M_{\bar\epsilon',\bar\epsilon}(\xi_n-\eta/2|(\gamma',\delta'),(\alpha',\beta')).
  \label{MM=MM-2}
\end{align}
\end{corollary}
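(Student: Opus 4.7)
The plan is to deduce both the annihilation relations \eqref{MM=0-1}--\eqref{MM=0-2} and the exchange relations \eqref{MM=MM-1}--\eqref{MM=MM-2} directly from the reconstruction formulas \eqref{reconstr-1}--\eqref{reconstr-2} of the preceding proposition, combined with the elementary algebra of the matrix units, namely $E^{i,j}E^{k,l}=\delta_{j,k}\,E^{i,l}$, at the level of the local gauged operators \eqref{gauge-op}. The key observation is that \eqref{reconstr-1} expresses $M_{j,i}(\xi_n-\eta/2|(\alpha,\beta),(\gamma,\delta))$ as a conjugate of $E_n^{i,j}(\xi_n|(\alpha,\beta),(\gamma,\delta))$ by a string of commuting bulk transfer matrices $t(\xi_k-\eta/2)$, whereas \eqref{reconstr-2} expresses $M_{3-i,3-j}(\xi_n+\eta/2|(\alpha-1,\beta),(\gamma-1,\delta))$ as essentially the same conjugate (shifted by one factor) of $E_n^{i,j}(\xi_n|(\gamma,\delta),(\alpha,\beta))$, with an extra scalar proportional to $\det S(-\xi_n|\alpha,\beta)/\det S(-\xi_n|\gamma,\delta)$.

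The core step is then to multiply the two reconstructions. When forming $E_n^{\epsilon',\epsilon}(\xi_n|(\alpha,\beta),(\gamma,\delta))\cdot E_n^{\epsilon,3-\bar\epsilon'}(\xi_n|(\gamma',\delta'),(\alpha,\beta))$, the intermediate block $\prod_{k=1}^n [t(\xi_k-\eta/2)]^{-1}\cdot\prod_{k=1}^n t(\xi_k-\eta/2)$ collapses to the identity, because the bulk transfer matrices commute. Using the explicit form \eqref{gauge-op}, namely $E_n^{i,j}(\xi_n|(\alpha,\beta),(\gamma,\delta))=S_n(-\xi_n|\gamma,\delta)\,E^{i,j}\,S_n^{-1}(-\xi_n|\alpha,\beta)$, the inner $S_n^{-1}(-\xi_n|\alpha,\beta)\,S_n(-\xi_n|\alpha,\beta)$ cancels, so the product reduces to $S_n(-\xi_n|\gamma,\delta)\,E^{\epsilon',\epsilon}E^{\epsilon,3-\bar\epsilon'}\,S_n^{-1}(-\xi_n|\gamma',\delta')$, conjugated by $\prod_{k=1}^{n-1}t(\xi_k-\eta/2)$. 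The annihilation relation \eqref{MM=0-1} is then immediate from the choice of middle indices that breaks the Kronecker delta, namely $E^{\epsilon',\epsilon}E^{3-\epsilon,j}=0$; the exchange relation \eqref{MM=MM-1} follows from the fact that $E^{\epsilon',\epsilon}E^{\epsilon,3-\bar\epsilon'}=E^{\epsilon',3-\bar\epsilon'}$ is independent of the contracted middle index~$\epsilon$, so the corresponding reconstructions with $\epsilon$ replaced by $\bar\epsilon=3-\epsilon$ (and $(\alpha,\beta)$ replaced by $(\alpha',\beta')$) must yield the same local operator.

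All that then remains is to equate the two reconstructed expressions and collect the scalar prefactors coming from \eqref{reconstr-2}. Using the explicit determinant $\det S(-\xi_n|\alpha,\beta)=-2\,e^{-\xi_n-\eta\alpha}\sinh(\eta\beta)$, which follows directly from \eqref{mat-S}, the ratio of such determinants gives $\det S(-\xi_n|\alpha',\beta')/\det S(-\xi_n|\alpha,\beta)=e^{\eta(\alpha-\alpha')}\sinh(\eta\beta')/\sinh(\eta\beta)$, which together with the sign $(-1)^{i-j}$ in \eqref{reconstr-2} reproduces exactly the prefactor $(-1)^{\bar\epsilon-\epsilon}\,e^{\eta\alpha}\sinh(\eta\beta')/[e^{\eta\alpha'}\sinh(\eta\beta)]$ appearing on the right of \eqref{MM=MM-1}. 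Invertibility of the conjugating transfer matrices, as already used in \cite{KitMT99,MaiT00,KitKMNST07}, then allows us to strip off the outer conjugation. The dual identities \eqref{MM=0-2} and \eqref{MM=MM-2} are obtained by the mirror argument, reconstructing one monodromy element via \eqref{reconstr-2} and the other via \eqref{reconstr-1} with the roles of the two gauge pairs exchanged.

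The only nontrivial obstacle is careful bookkeeping of the four pairs of gauge parameters and the various scalar prefactors (the ratio of $\det S$, the sign $(-1)^{i-j}$, and the $\det_qM(\xi_n)^{\pm 1}$ in \eqref{reconstr-2}), and of the shifts $\alpha\to\alpha-1$ that come with the $\hat M$-reconstruction; beyond this, the proof is essentially an algebraic consequence of the matrix-unit identity $E^{a,b}E^{b,c}=E^{a,c}$ pushed through the reconstruction. No new structural input (e.g.\ dynamical Yang-Baxter relations or quantum determinant identities beyond those used to write \eqref{reconstr-1}--\eqref{reconstr-2}) is needed.
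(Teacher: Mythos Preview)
Your proposal is correct and follows essentially the same route as the paper's proof: both arguments multiply the two reconstruction formulas \eqref{reconstr-1} and \eqref{reconstr-2}, collapse the intervening bulk transfer-matrix strings, reduce the product of gauged local operators to $S_n(-\xi_n|\gamma,\delta)\,E^{i,j}E^{k,\ell}\,S_n^{-1}(-\xi_n|\gamma',\delta')$, and then read off the annihilation from $E^{i,j}E^{k,\ell}=0$ for $j\neq k$ and the exchange from $E^{i,j}E^{j,\ell}=E^{i,k}E^{k,\ell}$, with the prefactor in \eqref{MM=MM-1}--\eqref{MM=MM-2} coming from the ratio of $\det S$'s exactly as you compute. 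The paper presents this slightly more compactly by writing the single chain of equalities \eqref{EE} and observing both consequences at once, but the content is the same.
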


\begin{proof}
The proof works as for the ungauged case. 
From the definition \eqref{gauge-op} and the reconstruction formulas \eqref{reconstr-1} and \eqref{reconstr-2}, we have:
\begin{align}
  &S_{n}(-\xi_n|\gamma ,\delta )\, E_{n}^{i,j}\, E_{n}^{k,\ell}\, S_{n}^{-1}(-\xi_n |\gamma' ,\delta' )
  =E_{n}^{i,j}(\xi _{n}|(\alpha ,\beta ),(\gamma,\delta )) \cdot  E_{n}^{k,\ell}(\xi _{n}|(\gamma' ,\delta' ),(\alpha,\beta ))
\nonumber\\
  &\quad
  =(-1)^{k-\ell} \frac{\det S(-\xi_n|\alpha,\beta)}{\det S(-\xi_n|\gamma',\delta')\, \det_q M(\xi_n)}\,
  \prod_{a=1}^{n-1}t(\xi _{a}-\eta/2)\, 
  \nonumber\\
   &\quad\times
  M_{j,i}(\xi _{n}-\eta /2|(\alpha ,\beta),(\gamma ,\delta ))\,
  M_{3-k,3-\ell}(\xi_n+\eta/2|(\alpha-1,\beta),(\gamma'-1,\delta'))\,
  \prod_{a=1}^{n-1}t(\xi _{a}-\eta/2)^{-1},
  \nonumber\\
  &\quad
  =(-1)^{i-j}\frac{\det S(-\xi_n|\gamma,\delta)}{\det S(-\xi_n|\alpha,\beta)\,\det_q M(\xi_n)}\,
  \prod_{a=1}^n t(\xi_a-\eta/2)\,
  \nonumber\\
  &\quad\times
  M_{3-i,3-j}(\xi_n+\eta/2|(\gamma-1,\delta),(\alpha-1,\beta))\,
  M_{\ell,k}(\xi_n-\eta/2|(\gamma',\delta'),(\alpha,\beta))\,
  \prod_{a=1}^n t(\xi_a-\eta/2)^{-1}.
  \label{EE}
\end{align}
To prove \eqref{MM=0-1} and \eqref{MM=0-2}, it is then enough to notice that \eqref{EE} vanishes when $k\not=j$, i.e. when $j=3-k$, which implies the cancellation of the corresponding product of monodromy matrix elements. 

\eqref{MM=MM-1} and \eqref{MM=MM-2} follow from the fact that
\begin{align}
   E_n^{i,j}\, E_n^{j,\ell}=E_n^{i,k}\, E_n^{k,\ell},
\end{align}
which implies
\begin{multline}
  E_{n}^{i,j}(\xi _{n}|(\alpha ,\beta ),(\gamma,\delta )) \cdot  E_{n}^{j,\ell}(\xi _{n}|(\gamma' ,\delta' ),(\alpha,\beta ))
  \\
  =E_{n}^{i,k}(\xi _{n}|(\alpha' ,\beta' ),(\gamma,\delta )) \cdot  E_{n}^{k,\ell}(\xi _{n}|(\gamma' ,\delta' ),(\alpha',\beta' )),
\end{multline}
and one uses the first  identity of \eqref{EE}, respectively the second identity of \eqref{EE}.
%
%
\end{proof}

As a particular case of \eqref{MM=MM-1} for $\epsilon=\bar\epsilon'=2,\epsilon'=\bar\epsilon=1$, we notably have 
\begin{multline}\label{Alternative-q-det}
 C(\xi _{n}-\eta /2|(\alpha ,\beta),(\gamma ,\delta ))\,
  B(\xi_n+\eta/2|(\alpha-1,\beta),(\gamma'-1,\delta'))
  \\  
  = -\frac{e^{\eta\alpha}\sinh(\eta\beta')}{e^{\eta\alpha'}\sinh(\eta\beta)}\,
  A(\xi _{n}-\eta /2|(\alpha' ,\beta'),(\gamma ,\delta ))\,
  D(\xi_n+\eta/2|(\alpha'-1,\beta'),(\gamma'-1,\delta')),
\end{multline}

Note also that \eqref{MM=0-1}, \eqref{MM=0-2} can be extended to the case in which the two operators belong to a larger product of operators:

\begin{corollary}\label{cor-prodM=0}
Let $n\le m$ and let $\boldsymbol{\alpha}\equiv(\alpha_n,\ldots,\alpha_m)$, $\boldsymbol{\beta}\equiv(\beta_n,\ldots,\beta_m)$, $\boldsymbol{\alpha'}\equiv(\alpha'_n,\ldots,\alpha'_m)$, $\boldsymbol{\beta'}\equiv(\beta'_n,\ldots,\beta'_m)$, $\boldsymbol{\gamma}\equiv(\gamma_n,\ldots,\gamma_m)$ and $\boldsymbol{\delta}\equiv(\delta_n,\ldots,\delta_m)$ be arbitrary $(m-n+1)$-tuples of parameters such that all operators
\begin{equation}\label{prod-S}
   \mathfrak{S}^{n,m} _{(\boldsymbol{\omega},\boldsymbol{\omega'})}=\prod_{k=n}^{m} S_{k}(-\xi_k|\omega _{k},\omega'_{k}),
\end{equation}
for $(\boldsymbol{\omega},\boldsymbol{\omega'})\in\{ (\boldsymbol{\alpha},\boldsymbol{\beta}),(\boldsymbol{\alpha'},\boldsymbol{\beta'}),(\boldsymbol{\gamma},\boldsymbol{\delta})\}$ are invertible.
Then, the product of operators
\begin{equation}
   \prod_{k=n\to m} M_{\epsilon_k,\epsilon'_k}(\xi_k-\eta/2|(\gamma_k,\delta_k),(\alpha_k,\beta_k))\,
   \prod_{k=m\to n} M_{\bar\epsilon_k,\bar\epsilon'_k}(\xi_k+\eta/2|(\gamma_k-1,\delta_k),(\alpha'_k-1,\beta'_k))
\end{equation}
vanishes as soon as there exists some $k\in\{n,\ldots,m\}$ such that $\epsilon_k=\bar\epsilon_k$.
\end{corollary}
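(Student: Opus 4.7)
The plan is to reduce the multi-site statement to the single-site identities \eqref{MM=0-1}--\eqref{MM=0-2} by reconstructing every gauged monodromy-matrix entry appearing in the product as a local $E$-operator dressed by a commuting product of bulk transfer matrices. Set $t_a=t(\xi_a-\eta/2)$. Inverting \eqref{reconstr-1} at site $k$ yields $M_{\epsilon_k,\epsilon'_k}(\xi_k-\eta/2|(\gamma_k,\delta_k),(\alpha_k,\beta_k))=\prod_{a=1}^{k-1}t_a^{-1}\cdot E_k^{\epsilon'_k,\epsilon_k}(\xi_k|(\gamma_k,\delta_k),(\alpha_k,\beta_k))\cdot\prod_{a=1}^{k}t_a$, while inverting the second form of \eqref{reconstr-2} at site $k$ yields $M_{\bar\epsilon_k,\bar\epsilon'_k}(\xi_k+\eta/2|(\gamma_k-1,\delta_k),(\alpha'_k-1,\beta'_k))=c_k\,\prod_{a=1}^{k}t_a^{-1}\cdot E_k^{3-\bar\epsilon_k,3-\bar\epsilon'_k}(\xi_k|(\alpha'_k,\beta'_k),(\gamma_k,\delta_k))\cdot\prod_{a=1}^{k-1}t_a$, where $c_k=(-1)^{\bar\epsilon_k-\bar\epsilon'_k}\,\det_q M(\xi_k)\,\det S_k(-\xi_k|\alpha'_k,\beta'_k)/\det S_k(-\xi_k|\gamma_k,\delta_k)$ is a non-zero scalar.

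Since the $t_a$'s form a commuting family, within each ordered product the intermediate blocks $\prod_{a=1}^{k}t_a\cdot\prod_{a=1}^{k}t_a^{-1}$ that appear between consecutive factors collapse to the identity. This reduces the first product to $\prod_{a=1}^{n-1}t_a^{-1}\,\prod_{k=n\to m}E_k^{\epsilon'_k,\epsilon_k}(\xi_k|\cdots)\,\prod_{a=1}^{m}t_a$ and the second to $\bigl(\prod_{k=n}^{m}c_k\bigr)\,\prod_{a=1}^{m}t_a^{-1}\,\prod_{k=m\to n}E_k^{3-\bar\epsilon_k,3-\bar\epsilon'_k}(\xi_k|\cdots)\,\prod_{a=1}^{n-1}t_a$. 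When the two are composed, the central junction $\prod_{a=1}^{m}t_a\cdot\prod_{a=1}^{m}t_a^{-1}$ likewise collapses. Using that local operators at distinct sites commute, I can then regroup the remaining $E$-operators site by site, producing the central factor $\prod_{k=n}^{m}\bigl[E_k^{\epsilon'_k,\epsilon_k}(\xi_k|(\gamma_k,\delta_k),(\alpha_k,\beta_k))\cdot E_k^{3-\bar\epsilon_k,3-\bar\epsilon'_k}(\xi_k|(\alpha'_k,\beta'_k),(\gamma_k,\delta_k))\bigr]$, flanked by $\prod_{a=1}^{n-1}t_a^{-1}$ on the left and $\prod_{a=1}^{n-1}t_a$ on the right.

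It remains to inspect a single site-$k$ factor. Unpacking the definition \eqref{gauge-op}, the inner pair of $S$-matrices at parameters $(\gamma_k,\delta_k)$ cancels exactly, leaving $S_k(-\xi_k|\alpha_k,\beta_k)\,E_k^{\epsilon'_k,\epsilon_k}\cdot E_k^{3-\bar\epsilon_k,3-\bar\epsilon'_k}\,S_k^{-1}(-\xi_k|\alpha'_k,\beta'_k)$. The bare matrix-unit product evaluates to $\delta_{\epsilon_k,\,3-\bar\epsilon_k}\,E_k^{\epsilon'_k,\,3-\bar\epsilon'_k}$, and the Kronecker delta vanishes exactly when $\epsilon_k=\bar\epsilon_k$. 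Hence a single such index collision at any $k_0\in\{n,\ldots,m\}$ annihilates the site-$k_0$ factor and therefore the entire product, which is what had to be shown. The main obstacle is purely combinatorial bookkeeping: one must verify that the inner gauge labels of the two $E$-operators produced at each site $k$ by the inverse reconstructions really do match at $(\gamma_k,\delta_k)$, which is exactly what the shift prescription $(\gamma_k-1,\delta_k)$, $(\alpha'_k-1,\beta'_k)$ in the statement is designed to arrange; the invertibility hypothesis on the three families $\mathfrak{S}^{n,m}_{(\boldsymbol{\omega},\boldsymbol{\omega'})}$ ensures that all local $S_k(-\xi_k|\cdot,\cdot)$ appearing in the reconstruction and in this inner cancellation are invertible.
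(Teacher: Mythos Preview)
Your proof is correct and follows essentially the same route as the paper: both arguments hinge on the reconstruction formulas \eqref{reconstr-1}--\eqref{reconstr-2} to identify the ordered $M$-product with a dressed product of local gauged operators $E_k^{\epsilon'_k,\epsilon_k}(\xi_k|(\gamma_k,\delta_k),(\alpha_k,\beta_k))\cdot E_k^{3-\bar\epsilon_k,3-\bar\epsilon'_k}(\xi_k|(\alpha'_k,\beta'_k),(\gamma_k,\delta_k))$, and then use the bare matrix-unit identity $E^{i,j}E^{k,\ell}=\delta_{j,k}E^{i,\ell}$ to conclude. The only cosmetic difference is the direction: the paper starts from the $E$-product and reconstructs it as the $M$-product, whereas you start from the $M$-product and invert the reconstruction; the telescoping of the bulk transfer matrices and the inner $S_k(-\xi_k|\gamma_k,\delta_k)$-cancellation are identical in both.
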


\begin{proof}
This is once again a direct consequence of the reconstruction formulas \eqref{reconstr-1} and \eqref{reconstr-2}. We have
\begin{align}
   &\prod_{k=n}^m E_k^{\epsilon'_k,\epsilon_k}(\xi_k|(\gamma_k,\delta_k),(\alpha_k,\beta_k))
   \prod_{k=n}^m E_k^{3-\bar\epsilon_k,3-\bar\epsilon'_k}(\xi_k|(\alpha'_k,\beta'_k),(\gamma_k,\delta_k))
   \nonumber\\
   &\qquad
   =\mathfrak{S}^{n,m} _{(\boldsymbol{\alpha},\boldsymbol{\beta})}
     \prod_{k=n}^m E_k^{\epsilon'_k,\epsilon_k} \prod_{k=n}^m E_k^{3-\bar\epsilon_k,3-\bar\epsilon'_k}
    \left[ \mathfrak{S}^{n,m} _{(\boldsymbol{\alpha'},\boldsymbol{\beta'})}\right]^{-1}
    \label{prod-E}\\
   &\qquad
   =\prod_{k=n}^m\left[(-1)^{\bar\epsilon'_k-\bar\epsilon_k}
   \frac{\det S(-\xi_k|\gamma_k,\delta_k)}{\det S(-\xi_k|\alpha'_k,\beta'_k)}\right]
   \prod_{k=1}^{n-1} t(\xi_k-\eta) 
   \prod_{k=n\to m} \!\!\!\! M_{\epsilon_k,\epsilon'_k}(\xi_k-\eta/2|(\gamma_k,\delta_k),(\alpha_k,\beta_k))\,
   \nonumber\\
   &\qquad\hspace{2cm}\times
   \prod_{k=m\to n}
   \frac{M_{\bar\epsilon_k,\bar\epsilon'_k}(\xi_k+\eta/2|(\gamma_k-1,\delta_k),(\alpha'_k-1,\beta'_k))}{\det_q M(\xi_k)} 
   \prod_{k=1}^{n-1} t(\xi_k-\eta)^{-1},
\end{align}
in which we have used \eqref{reconstr-1} to reconstruct the first product of operators and \eqref{reconstr-2} to reconstruct the second product. \eqref{prod-E} obviously vanishes if $\epsilon_k=\bar\epsilon_k$ for some $k$, which achieves the proof.
\end{proof}

We can now use the above results to produce the reconstruction of a basis of
quasi-local operators:

\begin{proposition}\label{prop-prod-gauge}
Let us consider the following monomials of local operators:
\begin{align}\label{gauge-m-basis}
\prod_{n=1}^m E_n^{\epsilon'_n,\epsilon_n}(\xi_n|(\alpha'_n,\beta'_n),(\alpha_n,\beta_n))
=\mathfrak{S}^{1,m}_{(\boldsymbol{\alpha},\boldsymbol{\beta})}\,
\prod_{n=1}^{m}E_{n}^{\epsilon _{n}^{\prime },\epsilon_{n}}\
\left[\mathfrak{S}^{1,m}_{(\boldsymbol{\alpha' },\boldsymbol{\beta'})}\right]^{-1},
\end{align}
obtained from the elementary basis of local operators $\prod_{n=1}^{m}E_{n}^{\epsilon _{n}^{\prime },\epsilon _{n}}$, in which we have denoted $\boldsymbol{\epsilon}=(\epsilon_1,\ldots,\epsilon_n)$, $\boldsymbol{\epsilon'}=(\epsilon'_1,\ldots,\epsilon'_n)$, by means of the tensor product like operators $\mathfrak{S}^{1,m}_{(\boldsymbol{\alpha},\boldsymbol{\beta})}$ and $\mathfrak{S}^{1,m}_{(\boldsymbol{\alpha' },\boldsymbol{\beta'})}$ defined as in \eqref{prod-S}.
Here $\boldsymbol{\alpha}\equiv(\alpha_1,\ldots,\alpha_m)$, $\boldsymbol{\beta}\equiv(\beta_1,\ldots,\beta_m)$, $\boldsymbol{\alpha'}\equiv(\alpha'_1,\ldots,\alpha'_m)$ and $\boldsymbol{\beta'}\equiv(\beta'_1,\ldots,\beta'_m)$ are arbitrary $m$-tuples of parameters such that the operators
$\mathfrak{S}^{1,m}_{(\boldsymbol{\alpha},\boldsymbol{\beta})}$ and $\mathfrak{S}^{1,m}_{(\boldsymbol{\alpha' },\boldsymbol{\beta'})}$
are invertible. 
Then, the elements \eqref{gauge-m-basis} admit the
following reconstruction in terms of the gauge
transformed Yang-Baxter generators:
\begin{multline}\label{recontr-m-op}
  \prod_{i=1}^m E_i^{\epsilon'_i,\epsilon_i}(\xi_i|(\alpha'_i,\beta'_i),(\alpha_i,\beta_i))
  =\prod_{i=1}^m\frac{\det S(-\xi_i|\gamma_i,\delta_i)}{\det S(-\xi_i|\alpha_i',\beta_i')}\
  \prod_{i=1\to m}\hspace{-2mm} M_{\epsilon_i,\epsilon'_i}(\xi_i-\eta/2|(\gamma_i,\delta_i),(\alpha_i,\beta_i))
  \\
  \times
  \prod_{i=m\to 1}\hspace{-2mm}\frac{M_{3-\epsilon_i,3-\epsilon_i}(\xi _i+\eta /2|(\gamma _i-1,\delta _i),(\alpha' _i-1,\beta'_i))}{\det_q M(\xi_i)},
\end{multline}
for any $m$-tuples of arbitrary parameters $\boldsymbol{\gamma}\equiv(\gamma_1,\ldots,\gamma_m)$ and $\boldsymbol{\delta}\equiv(\delta_1,\ldots,\delta_m)$ such that the operator $\mathfrak{S}^{1,m}_{(\boldsymbol{\gamma},\boldsymbol{\delta})}$, defined similarly as in \eqref{prod-S}, is invertible.\end{proposition}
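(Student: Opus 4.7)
The plan is to prove the reconstruction by induction on $m$, building on the single-site formulas \eqref{reconstr-1}-\eqref{reconstr-2} and the annihilation relations of the gauge Yang-Baxter algebra.

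For the base case $m=1$, I would split $E^{\epsilon'_1,\epsilon_1}=E^{\epsilon'_1,\epsilon_1}\cdot E^{\epsilon_1,\epsilon_1}$ in $\mathrm{End}(\mathbb{C}^2)$ and insert the identity $S_1^{-1}(-\xi_1|\gamma_1,\delta_1)\, S_1(-\xi_1|\gamma_1,\delta_1)$ between the two factors. After conjugation by the outer gauge matrices this yields
$$E_1^{\epsilon'_1,\epsilon_1}(\xi_1|(\alpha'_1,\beta'_1),(\alpha_1,\beta_1))=E_1^{\epsilon'_1,\epsilon_1}(\xi_1|(\gamma_1,\delta_1),(\alpha_1,\beta_1))\cdot E_1^{\epsilon_1,\epsilon_1}(\xi_1|(\alpha'_1,\beta'_1),(\gamma_1,\delta_1)),$$
to which \eqref{reconstr-1} and \eqref{reconstr-2} apply respectively. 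The inner transfer-matrix products cancel (they are empty for $n=1$ apart from two mutually inverse $[t(\xi_1-\eta/2)]^{\pm 1}$ factors), giving the $m=1$ case of the claim, with the expected $\det S$-ratio prefactor.

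For the inductive step, I would write $\prod_{i=1}^{m+1}E_i=\prod_{i=1}^m E_i\cdot E_{m+1}$, apply the inductive hypothesis to the first factor and the base-case formula to $E_{m+1}$ (which, by \eqref{reconstr-1}-\eqref{reconstr-2} at $n=m+1$, brings conjugating products $\prod_{\ell=1}^m t(\xi_\ell-\eta/2)$ and their inverses). The task then reduces to pushing each $t(\xi_\ell-\eta/2)$, for $\ell\le m$, through the right-hand tower $\prod_{i=m\to 1}M_{3-\epsilon_i,3-\epsilon_i}(\xi_i+\eta/2|(\gamma_i-1,\delta_i),(\alpha'_i-1,\beta'_i))/\det_q M(\xi_i)$ coming from the hypothesis, so that they recombine with their inverses on the right of $E_{m+1}$ and cancel.

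This last commutation is the main obstacle. I would handle it, for each $\ell\in\{1,\ldots,m\}$, by rewriting $t(\xi_\ell-\eta/2)=\sum_{k=1,2}M_{k,k}(\xi_\ell-\eta/2|(\gamma_\ell,\delta_\ell),(\gamma_\ell,\delta_\ell))$ (valid by gauge-invariance of the trace) and invoking the annihilation identity \eqref{MM=0-1}: the term $k=3-\epsilon_\ell$ vanishes once paired with the adjacent tower factor $M_{3-\epsilon_\ell,3-\epsilon_\ell}(\xi_\ell+\eta/2|(\gamma_\ell-1,\delta_\ell),\cdot)$, leaving a product which, via \eqref{Alternative-q-det} together with the bulk inversion relation, collapses to $\det_q M(\xi_\ell)$ and cancels the denominator present in the tower. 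Iterating for $\ell=1,\ldots,m$ eliminates all stray transfer matrices and produces the claimed ordered product of gauge-transformed $M$-operators; the overall scalar prefactor is assembled from the $\det S$ ratios accumulated in the successive single-site reconstructions.
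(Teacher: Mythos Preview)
Your base case is fine and in fact coincides with the paper's key idea. The problem is the inductive step: the commutation you need does not reduce the way you claim. After you use the annihilation \eqref{MM=0-2} to kill the $k=3-\epsilon_\ell$ term, you are left with
\[
M_{3-\epsilon_\ell,3-\epsilon_\ell}(\xi_\ell+\eta/2|(\gamma_\ell-1,\delta_\ell),(\alpha'_\ell-1,\beta'_\ell))\,
M_{\epsilon_\ell,\epsilon_\ell}(\xi_\ell-\eta/2|\cdot,(\alpha'_\ell,\beta'_\ell)),
\]
and this product does \emph{not} collapse to $\det_q M(\xi_\ell)$. Reconstructing each factor via \eqref{reconstr-1}--\eqref{reconstr-2} shows it equals, up to scalars, $\det_q M(\xi_\ell)\,[t(\xi_\ell-\eta/2)]^{-1}\,E_\ell^{\epsilon_\ell,\epsilon_\ell}(\cdots)\,t(\xi_\ell-\eta/2)$, which still carries a conjugated local projector. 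So the ``cancel the denominator and move on'' step fails, and you are back to commuting nontrivial operators past the remaining tower. Moreover, for $\ell>1$ you would first need to push $t(\xi_\ell-\eta/2)$ past the tower factors at sites $1,\ldots,\ell-1$, which is a separate commutation problem your argument does not address.

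The paper avoids all of this by applying your base-case splitting \emph{globally} rather than site by site: write $\prod_n E_n^{\epsilon'_n,\epsilon_n}=\big[\prod_n E_n^{\epsilon'_n,\epsilon_n}(\xi_n|(\gamma_n,\delta_n),(\alpha_n,\beta_n))\big]\cdot\big[\prod_n E_n^{\epsilon_n,\epsilon_n}(\xi_n|(\alpha'_n,\beta'_n),(\gamma_n,\delta_n))\big]$, then apply \eqref{reconstr-1} to the entire first product and \eqref{reconstr-2} to the entire second product. The transfer-matrix strings from the successive sites telescope, leaving exactly $\prod_{i=1}^m[t(\xi_i-\eta/2)]^{-1}$ at the end of the first block and $\prod_{i=1}^m t(\xi_i-\eta/2)$ at the start of the second, which cancel directly. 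No commutation past any $M$-tower is required.
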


\begin{proof}
This reconstruction follows from the identities
\begin{equation}
     E_n^{\epsilon'_n ,\epsilon_n}\, E_n^{\epsilon_n,\epsilon_n}=E_n^{\epsilon'_n ,\epsilon_n},
\end{equation}
for any $\epsilon_n,\epsilon'_n\in\{1,2\}$,
which enable us to write
\begin{align}
&\prod_{n=1}^m E_n^{\epsilon'_n,\epsilon_n}(\xi_n|(\alpha'_n,\beta'_n),(\alpha_n,\beta_n))
   =\mathfrak{S}^{1,m}_{(\boldsymbol{\alpha},\boldsymbol{\beta})}\,
\prod_{n=1}^{m}E_{n}^{\epsilon _{n}^{\prime },\epsilon_{n}}\
\left[\mathfrak{S}^{1,m}_{(\boldsymbol{\gamma },\boldsymbol{\delta})}\right]^{-1}\,
\mathfrak{S}^{1,m}_{(\boldsymbol{\gamma},\boldsymbol{\delta})}
\prod_{n=1}^m E_n^{\epsilon_n,\epsilon_n}
\left[\mathfrak{S}^{1,m}_{(\boldsymbol{\alpha' },\boldsymbol{\beta'})}\right]^{-1}
\nonumber\\
   &\hspace{3cm}=\prod_{n=1}^m E_n^{\epsilon'_n,\epsilon_n}(\xi_n|(\gamma_n,\delta_n),(\alpha_n,\beta_n))\,
   \prod_{n=1}^m E_n^{\epsilon_n,\epsilon_n}(\xi_n|(\alpha'_n,\beta'_n),(\gamma_n,\delta_n)).
\end{align}
Then we use \eqref{reconstr-1} to reconstruct the first product of operators:
\begin{equation}
   \prod_{i=1}^m E_i^{\epsilon'_i,\epsilon_i}(\xi_i |(\gamma_i,\delta_i),(\alpha_i,\beta_i))
   =
   \prod_{i=1\to m} \hspace{-2mm} M_{\epsilon_i,\epsilon'_i}(\xi_i-\eta /2|(\gamma_i,\delta_i),(\alpha _i,\beta_i))\,
\prod_{i=1}^m\big[ t(\xi_i-\eta/2)\big]^{-1},
\end{equation}
and \eqref{reconstr-2} to reconstruct the second product of operators:
\begin{multline}
   \prod_{i=1}^m E_i^{\epsilon_i,\epsilon_i}(\xi_i |(\alpha'_i,\beta'_i),(\gamma_i,\delta_i))
   =\prod_{i=1}^m \frac{\det S(-\xi_i |\gamma_i,\delta_i)}{\det S(-\xi_i |\alpha'_i,\beta'_i)}\,
   \prod_{i=1}^m t(\xi_i-\eta/2)\,
   \\
   \times 
   \prod_{i=m\to 1}\hspace{-2mm}
   \frac{M_{3-\epsilon_i,3-\epsilon_i}(\xi_i+\eta/2|(\gamma_i-1,\delta_i),(\alpha'_i-1,\beta'_i))}{\det_q M(\xi_i)},
\end{multline}
to obtain the reconstruction \eqref{recontr-m-op}.
\end{proof}

\subsection{Action of a basis of local operators on gauged bulk states}
\label{sec-act-bulk}

Here, we identify a basis of local operators and we compute the action of
its elements on bulk gauged Bethe like states. 


\begin{proposition}\label{prop-local-basis}
For two given gauge parameters $\alpha$ and $\beta$, 
the set of monomial of local operators 
\begin{equation}
  \mathbb{E}_m(\alpha,\beta)
  =\left\{ \prod_{n=1}^{m}E_{n}^{\epsilon _{n}^{\prime},\epsilon _{n}}(\xi _{n}|(a_n,b_{n}),(\bar a_n,\bar{b}_{n}))\ \mid\
  \boldsymbol{\epsilon},\boldsymbol{\epsilon'}\in\{1,2\}^m
  \right\},  \label{Local-Basis}
\end{equation}
where the gauge parameters $a_n,\bar a_n, b_n,\bar b_n$, $1\le n\le m$, are fixed in terms of $\alpha,\beta$ and of the $m$-tuples $\boldsymbol{\epsilon}\equiv(\epsilon_1,\ldots,\epsilon_m)$ and $\boldsymbol{\epsilon'}\equiv(\epsilon'_1,\ldots,\epsilon'_m)$ as
\begin{alignat}{2}
   &a_n =\alpha+1,\qquad & 
   &b_n=\beta-\sum_{r=1}^n (-1)^{\epsilon_r},   
   \label{Gauge.Basis-1} \\
   &\bar{a}_n =\alpha-1,  
   \qquad &
   &\bar{b}_n=\beta+\sum_{r=n+1}^{m}(-1)^{\epsilon'_r}-\sum_{r=1}^{m}(-1)^{\epsilon_r}
   =b_n+2\tilde m_{n+1},  
   \label{Gauge.Basis-2}
\end{alignat}
with
\begin{align}\label{tilde-m_n}
   &\tilde m_n=\sum_{r=n}^{m}(\epsilon'_r-\epsilon_r)=\sum_{r=n}^{m}\frac{( -1) ^{\epsilon'_r}-( -1) ^{\epsilon_r}}{2}.
\end{align}
%
defines a basis of $\End(\otimes_{n=1}^m\mathcal{H}_n)$, except for a finite numbers of values of $\beta\!\!\mod 2\pi/\eta$.
\end{proposition}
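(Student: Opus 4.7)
The strategy proceeds by a dimension count followed by a determinant analysis. First, the cardinality of $\mathbb{E}_m(\alpha,\beta)$ equals $4^m=\dim\End(\otimes_{n=1}^m\mathcal{H}_n)$, so it suffices to prove linear independence of its elements. Expanding each gauged monomial via $E_n^{\epsilon',\epsilon}(\xi_n|(a,b),(\bar a,\bar b))=S_n(-\xi_n|a,b)\,E_n^{\epsilon',\epsilon}\,S_n^{-1}(-\xi_n|\bar a,\bar b)$ in the elementary tensor basis $\{\prod_{n=1}^m E_n^{i_n,j_n}\}$ yields a $4^m\times 4^m$ transition matrix $\mathcal{M}(\beta)$ with entries
\begin{equation*}
\mathcal{M}_{(\boldsymbol{i},\boldsymbol{j}),(\boldsymbol{\epsilon},\boldsymbol{\epsilon'})}=\prod_{n=1}^m [S_n(-\xi_n|\alpha+1,b_n)]_{i_n,\epsilon'_n}\,[S_n^{-1}(-\xi_n|\alpha-1,\bar b_n)]_{\epsilon_n,j_n}.
\end{equation*}
Each entry is a rational exponential function of $\beta$ whose denominator consists solely of $\sinh(\eta\bar b_n)$-factors (from $\det S_n$), and moreover all entries of a given column share the same denominator; consequently, clearing column-by-column gives a matrix whose determinant is a Laurent polynomial in $e^{\eta\beta}$, and linear independence of $\mathbb{E}_m(\alpha,\beta)$ reduces to proving that this Laurent polynomial is non-zero.

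The main obstacle is that the shifts $b_n$ and $\bar b_n$ depend on $(\boldsymbol{\epsilon},\boldsymbol{\epsilon'})$ in a non-factorised way: $b_n$ depends on $\epsilon_1,\ldots,\epsilon_n$ and $\bar b_n$ on all of $\boldsymbol{\epsilon}$ together with $\epsilon'_{n+1},\ldots,\epsilon'_m$, so $\mathcal{M}(\beta)$ is not a direct tensor product across sites. To handle this, I would argue by induction on $m$. For $m=1$, one has $b_1=\bar b_1=\beta-(-1)^{\epsilon_1}$, and a direct $4\times 4$ computation expresses the determinant as a product of two non-zero $S_1$-matrix determinants, non-vanishing outside a finite set of $\beta$'s modulo $2\pi/\eta$. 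For the inductive step $m\to m+1$, I would split the $4^{m+1}$ monomials into four sectors indexed by $(\epsilon_{m+1},\epsilon'_{m+1})\in\{1,2\}^2$ and observe that, within each sector, the parameters $b_n$ for $n\le m$ coincide with those of $\mathbb{E}_m(\alpha,\beta)$ while the $\bar b_n$'s are all shifted by the same sector-dependent integer $(-1)^{\epsilon'_{m+1}}-(-1)^{\epsilon_{m+1}}\in\{-2,0,2\}$. This leads to a mild strengthening of the inductive hypothesis allowing uniform shifts of all $\bar b_n$'s; the strengthened statement admits the same proof scheme and, combined with the invertibility of the site-$(m+1)$ block, yields non-vanishing of the Laurent polynomial at stage $m+1$.

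Once $\det\mathcal{M}(\beta)$ is shown to be a non-trivial Laurent polynomial in $e^{\eta\beta}$, it has only finitely many zeros per period $\beta\!\!\mod 2\pi/\eta$, which is the finite exceptional set of the statement. The principal technical difficulty lies in formulating the strengthened inductive hypothesis so that the four-sector decomposition at stage $m+1$ matches an instance at stage $m$; this is where the careful bookkeeping of gauge shifts takes place.
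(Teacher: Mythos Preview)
Your approach differs from the paper's in a substantive way. The paper does not attempt the full $4^m\times 4^m$ determinant or an induction on $m$: it argues that, thanks to the tensor-product form, it suffices to check at each single site $n$ (for \emph{every} fixed choice of the other $\epsilon_j,\epsilon'_j$) that the four operators $E_n^{\epsilon'_n,\epsilon_n}(\xi_n|(a_n,b_n),(\bar a_n,\bar b_n))$ span $\End(\mathcal H_n)$. Writing $b_n=\check b_n-(-1)^{\epsilon_n}$, $\bar b_n=\check{\bar b}_n-(-1)^{\epsilon_n}$ with $\check b_n,\check{\bar b}_n$ independent of $(\epsilon_n,\epsilon'_n)$, one then computes a single $4\times 4$ change-of-basis determinant, observes it is a rational function of $e^{-2\eta\beta}$, and concludes. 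This is considerably shorter than your global determinant, at the price of leaving the reduction step (``site-wise basis for all choices of the other indices $\Rightarrow$ global basis'') implicit.

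Your induction, as written, has a genuine gap. Splitting by $(\epsilon_{m+1},\epsilon'_{m+1})$ does shift all $\bar b_n$ for $n\le m$ uniformly, as you say, but the site-$(m+1)$ tensor factor $F_{m+1}=S_{m+1}(\bar b_{m+1})E_{m+1}^{\epsilon'_{m+1},\epsilon_{m+1}}S_{m+1}^{-1}(b_{m+1})$ carries $b_{m+1}=\bar b_{m+1}=\beta-\sum_{r=1}^{m+1}(-1)^{\epsilon_r}$, which still depends on \emph{all} of $\epsilon_1,\dots,\epsilon_m$. So within a fixed sector the elements are $A_{\mathbf k}\otimes B_{\mathbf k}$ with $\{A_{\mathbf k}\}$ a basis of $\End(\otimes_{n\le m}\mathcal H_n)$ but $B_{\mathbf k}\in\End(\mathcal H_{m+1})$ genuinely varying with $\mathbf k$; neither the inductive hypothesis nor the ``invertibility of the site-$(m{+}1)$ block'' tells you such a family is independent, let alone how the four sectors (each with its own shift $\delta$) recombine. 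A route that makes both your strategy and the paper's reduction rigorous is to factor the full transition map as an ordered product of $2m$ ``controlled'' $2\times 2$ operations: for each $n$, one sending $\epsilon'_n\mapsto i_n$ via $S_n(\bar b_n)$ (controlled by $\epsilon_1,\dots,\epsilon_m,\epsilon'_{n+1},\dots,\epsilon'_m$), and one sending $\epsilon_n\mapsto j_n$ via the $2\times 2$ matrix with columns $[S_n^{-1}(b_n(\epsilon_n))]_{\epsilon_n,\bullet}$ (controlled by $\epsilon_1,\dots,\epsilon_{n-1}$), applied in an order compatible with the control dependences. Each factor is block-diagonal with explicit $2\times 2$ blocks whose determinants are exactly the single-site quantities the paper computes; their product gives $\det\mathcal M(\beta)$ as a manifest nonzero Laurent polynomial in $e^{\eta\beta}$.
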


In other words, this means that any local operator acting on the first $m$ sites of the chain can be expressed as a linear combination of elements of \eqref{Local-Basis}.

\begin{proof}
Thanks to the tensor product form of the elements of \eqref{Local-Basis}, we have just
to prove that, for any fixed $n\in \{1,\ldots,m\}$ the four operators 
$E_{n}^{\epsilon _{n}^{\prime},\epsilon _{n}}(\xi _{n}|(a_n,b_{n}),(\bar a_n,\bar{b}_{n}))$
associated with the four possible choices of $(\epsilon _{n},\epsilon _{n}^{\prime})\in \{1,2\}^{2}$,
form a basis independently from the choice of the other $\epsilon_j,\epsilon'_j$ for $j\not= n$. 
Here, the main point is that the gauge parameters $b_{n}$ and $\bar{b}_{n}$, defined in \eqref{Gauge.Basis-1}-\eqref{Gauge.Basis-2}, depend not only on $(\epsilon _{n},\epsilon _{n}^{\prime })\in\{1,2\}^{2}$ but also on the choice of the other $\epsilon_j,\epsilon'_j$ for $j\not= n$. 
We evidence this dependence as follows:
\begin{alignat}{2}
   &b_{n}=\check{b}_{n}-( -1) ^{\epsilon _{n}}, \quad &
   &\text{with}\quad
   \check{b}_{n}=b-\sum_{r=1}^{n-1}(-1)^{\epsilon_r}, \\
   &\bar{b}_{n} =\check{\bar{b}}_{n}-(-1)^{\epsilon _{n}}, \quad &
   &\text{with}\quad
   \check{\bar{b}}_{n}=\check{b}_{n}+ 2\tilde m_{n+1},
\end{alignat}
so that $\check{b}_n$ and $\check{\hat b}_n$ do not depend on $\epsilon_n,\epsilon'_n$, but only on the other $\epsilon_j,\epsilon'_j$, $j\not= n$.
Let us 
%
%
define the coefficients $E_{n,i,j}^{\epsilon _{n}^{\prime},\epsilon _{n}}(\xi _{n}|(a_n,b_{n}),(\bar a_n,\bar{b}_{n}))$, $1\le i,j\le 2$, by the decomposition
\begin{equation}
    E_{n}^{\epsilon _{n}^{\prime },\epsilon _{n}}(\xi _{n}|(a_n,b_{n}),(\bar a_n,\bar{b}_{n}))
    =\sum_{i=1}^{2}\sum_{j=1}^{2}
    E_{n,i,j}^{\epsilon_{n}^{\prime },\epsilon _{n}}(\xi _{n}|(a_n,b_{n}),(\bar a_n,\bar{b}_{n}))\,
    E_{n}^{i,j},
\end{equation}
in the natural basis $E_{n}^{i,j}$, $1\le i,j\le 2$, of local operators on the site $n$.
The determinant
\begin{equation}
  \det_{\substack{(i,j)\in\{1,2\}^2 \\ (\epsilon'_n.\epsilon_n)\in\{1,2\}^2}} \left[ E_{n,i,j}^{\epsilon _{n}^{\prime },\epsilon _{n}}(\xi_{n}|(a_n,b_{n}),(\bar a_n,\bar{b}_{n})) \right]
\end{equation}
can easily be computed, and one can remark that it is a rational function of $e^{-2\eta\beta}$ with coefficients being integers powers of $e^{-2\eta}$. Hence this 
determinant is always nonzero, up to a finite number of values of $\beta$ (modulo the
periodicity $i\pi /\eta $), so that the local operators $E_{n}^{\epsilon'_{n},\epsilon _{n}}(\xi _{n}|(a_n,b_{n}),\bar a_n,\bar{b}_{n}))$ form a basis for any choice of $\epsilon_j,\epsilon'_j$, $j\not= n$.
\end{proof}

The choice of this basis is due to the fact that the action of its elements on the gauged Bethe-type bulk states \eqref{bulk-state} has a relatively simple form, as shown in the remaining part of this section.
To compute this action,  we use the following reconstruction of the local operators elements of \eqref{Local-Basis} (see Proposition~\ref{prop-prod-gauge}): 
\begin{multline}\label{reconst-basis}
   \prod_{n=1}^{m}E_{n}^{\epsilon _{n}^{\prime },\epsilon _{n}}(\xi_{n}|(a_{n},b_{n}),(\bar{a}_{n},\bar{b}_{n}))
    = \prod_{n=1}^m\frac{\det S(-\xi_n|c_n,d_n)}{\det S(-\xi_n| a_n, b_n)}
    \prod_{n=1\to m} \hspace{-2mm}
    M_{\epsilon_n,\epsilon'_n}(\xi_n-\eta /2|(c_n,d_n),(\bar{a}_n,\bar{b}_n)) 
    \\
    \times
    \prod_{n=m\to 1} 
    \frac{M_{3-\epsilon_n,3-\epsilon'_n}(\xi_n+\eta /2|(c_n-1,d_n),(a_n-1,b_n))}{\det_q M(\xi_n)},
\end{multline}
with the following choice for the internal gauge parameters:
\begin{align}
   &c_n=\frac{x+\alpha+\beta+N-1}2-M, \\ 
   &d_n=\frac{-x+\alpha+\beta+N-1}2-M-\sum_{r=1}^n (-1)^{\epsilon_r},
\end{align}
for any arbitrary choice of $x$ which leaves $\mathfrak{S}^{(1,m)}_{(\mathbf{c},\mathbf{d})}$ invertible. 

In the remaining part of this subsection,  we derive the action of \eqref{reconst-basis} on the Bethe-type states \eqref{bulk-state} in full details.
In order to do so we introduce the following notations.

Let us associate, to two given $m$-tuples $\boldsymbol{\epsilon}\equiv(\epsilon_1,\ldots,\epsilon_m), \boldsymbol{\epsilon'}\equiv(\epsilon'_1,\ldots,\epsilon'_m)\in \{1,2\}^{m}$, the following sets of integers, for any $n\in\{1,\ldots,m\}$:
\begin{alignat}{2}
   &\{i_{p}^{(n) }\}_{p\in \{1,\ldots ,s_{(n)}\}},\quad  &
   &\text{with}\quad   i_{k}^{(n)}<i_{h}^{(n)}\quad
   \text{for}\quad   0<k<h\leq s_{(n)},  
   \label{i_p-Def0} \\
  &\{i_{p}^{(n)}\}_{p\in \{s_{(n)}+1,\ldots ,s_{(n)}+s'_{(n)}\}},\quad &
  &\text{with}\quad  i_{k}^{(n)}>i_{h}^{(n)}\quad
  \text{for}\quad  s_{(n)}<k<h\leq s_{(n)}+s_{(n)}^{\prime },  
  \label{i_p-Def1}
\end{alignat}
defined by the conditions
\begin{alignat}{2}
  &j\in \{i_{p}^{(n)}\}_{p\in \{1,\ldots ,s_{(n)}\}} \qquad & 
  &\text{iff}\qquad n\le j\le m\quad\text{and}\quad\epsilon _{j}=2,
  \\
  &j\in \{i_{p}^{(n)}\}_{p\in \{s_{(n) }+1,\ldots ,s_{(n) }+s_{(n) }^{\prime }\}} \qquad & 
  &\text{iff}\qquad n\le j\le m\quad\text{and}\quad \epsilon'_{j}=1. 
\end{alignat}
Note that we have
\begin{equation}
   s_{(n)}=\sum_{j=n}^m(\epsilon_j-1),\qquad
   s'_{(n)}=\sum_{j=n}^m(2-\epsilon'_j),\qquad
   s_{(n)}+s'_{(n)}=m-n+1+\sum_{j=n}^m (\epsilon_j-\epsilon'_j).
\end{equation}
We use the following simplified notations for the case $n=1$,
\begin{equation}\label{def-s-s'-i_p}
    s=s_{( n=1) },\quad s^{\prime}=s_{( n=1) }^{\prime },
    \qquad\text{and}\qquad
    i_{p}=i_{p}^{( n=1) }\quad\forall p\in \{1,\ldots,s+s^{\prime }\}.
\end{equation}
We can relate these sets for two different values of $n$ by
\begin{align}
    &i_{1}^{( n-1) } =(n-1)\delta _{\epsilon _{n-1},2}+(1-\delta_{\epsilon _{n-1},2})\, i_{1}^{( n) },
    \\
    &i_{p+\delta _{\epsilon_{n-1},2}}^{( n-1) }=i_{p}^{( n) },\hspace{6.7mm}
    \forall p\in\{1,\ldots ,s_{( n) }\}, 
    \\
    &i_{s_{( n-1) }+p}^{( n-1) } =i_{s_{( n)}+p}^{( n) },\quad \forall p\in \{1,\ldots ,s_{( n)}^{\prime }\},
    \\
    &i_{s_{( n-1) }+s_{( n-1)}^{\prime }}^{( n-1) }=n\, \delta _{\epsilon _{n-1}^{\prime},1}+(1-\delta _{\epsilon _{n-1}^{\prime },1})\, i_{s_{( n)}+s_{( n) }^{\prime }}^{( n) },
\end{align}
with
\begin{equation}
       s_{( n-1) }=\delta _{\epsilon _{n-1},2}+s_{( n) },
       \qquad
       s_{( n-1) }^{\prime }=\delta _{\epsilon _{n-1}^{\prime},1}+s_{( n) }^{\prime }.
\end{equation}
Then, we can derive the following result:

\begin{proposition}\label{prop-rec-action}
For any given $m$-tuples $\boldsymbol{\epsilon}\equiv(\epsilon_1,\ldots,\epsilon_m), \boldsymbol{\epsilon'}\equiv(\epsilon'_1,\ldots,\epsilon'_m)\in \{1,2\}^{m}$
and gauge parameters $\alpha,\beta,x$, let us consider, for each $n\in\{1,\ldots,m\}$, the following monomials of elements of the bulk gauged monodromy matrix:
\begin{multline}
  \underline{M}^{(n,m)}_{\boldsymbol{\epsilon},\boldsymbol{\epsilon'}}(\alpha,\beta,x)
  =\prod_{k=n\to m} M_{\epsilon_k,\epsilon'_k}(\xi_k-\eta/2|(c_k,d_k),(\bar a_k,\bar b_k))\,
  \\
  \times
  \prod_{k=m\to n} \frac{M_{3-\epsilon_k,3-\epsilon_k}(\xi_k+\eta/2 |(c_k-1,d_k),( a_k-1,b_k))}{\det_q M(\xi_k)},
\end{multline}
%
%
where we have defined
\begin{alignat}{2}
    &a_k = \alpha+1,\qquad & & b_k =\beta-\sum_{r=1}^k (-1)^{\epsilon_r},\\
    &\bar{a}_k = \alpha-1, \qquad & 
    & \bar{b}_k =\beta+\sum_{r=k+1}^{m} \! (-1)^{\epsilon'_r}-\sum_{r=1}^{m}(-1)^{\epsilon_r}
                  = b_k +2\tilde m_{k+1},
    \\
    &c_k = \frac{x+\alpha+\beta+N-1}2-M, \qquad &   &d_k = \frac{-x+\alpha+\beta+N-1}2-M-\sum_{r=1}^k (-1)^{\epsilon_r}.
\end{alignat}
with
\begin{align} 
   &\tilde m_k=\sum_{r=k}^{m}(\epsilon'_r-\epsilon_r)=\sum_{r=k}^{m}\frac{( -1) ^{\epsilon'_r}-( -1) ^{\epsilon_r}}{2}.
\end{align}
Then, the action of $\underline{M}^{(n,m)}_{\boldsymbol{\epsilon},\boldsymbol{\epsilon'}}(\alpha,\beta,x)$
on the following gauged Bethe-like bulk states
\begin{equation}\label{BulkBLS}
   \underline{B}_M(\{\mu_i\}_{i=1}^M | x_n-1,z_n)\,\ket{\eta,y_n}
\end{equation}
where
\begin{align}
   &x_n=x+\sum_{r=1}^{n-1}( -1)^{\epsilon_r}=c_n-d_n-(-1)^{\epsilon_n},\\
   &z_n=\alpha -\beta+\sum_{r=1}^{n-1}( -1)^{\epsilon_r}=a_n-b_n-(-1)^{\epsilon_n}-1,\\
   &y_n=\alpha+\beta+N-M-1 -\sum_{r=1}^{n-1}( -1) ^{\epsilon _r} =a_n+b_n+(-1)^{\epsilon_n}+N-M-2
   \nonumber\\
   &\hphantom{y_n=\alpha+\beta+N-M-1 -\sum_{r=1}^{n-1}( -1) ^{\epsilon _r}}
 =c_n+d_n+M+(-1)^{\epsilon_n},
\end{align}
%
%
is
\begin{multline}\label{action_Mn}
   \underline{M}^{(n,m)}_{\boldsymbol{\epsilon},\boldsymbol{\epsilon'}}(\alpha,\beta,x)\
   \underline{B}_M(\{\mu_i\}_{i=1}^M | x_n-1,z_n)\,\ket{\eta,y_n}
   = 
   \sum_{\mathsf{B}_{\boldsymbol{\epsilon},\boldsymbol{\epsilon'}}^{(n)}}
   \mathcal{F}_{\mathsf{B}_{\boldsymbol{\epsilon},\boldsymbol{\epsilon'}}^{(n)}}(\{\mu_j\}_{j=1}^{M},\{\xi_j^{(1)}\}_{j=n}^{m} \mid  \alpha,\beta,x)\\
   \times
   \underline{B}_{M+\tilde m_n}   (\{\mu_j\}_{j\in\mathsf{A}_{\boldsymbol{\epsilon},\boldsymbol{\epsilon'}}^{(n)}} | x_n-1,z_n-2\tilde m_n)\, \ket{\eta,y_n+\tilde{m}_n},
\end{multline}
where we have defined $\mu _{M+j}\equiv\xi _{m+1-j}^{( 1) }$ for $j\in \{1,\ldots ,m+1-n\}$.
In \eqref{action_Mn}, the sum runs over all the possible sets of integers 
$\mathsf{B}^{( n) }_{\boldsymbol{\epsilon},\boldsymbol{\epsilon'}}
=\{\text{\textsc{b}}_{1}^{( n) },\ldots ,\text{\textsc{b}}_{s_{(n)}+s'_{(n)}}^{( n) }\}$ whose elements satisfy the conditions 
\begin{equation}
\begin{cases}
   \text{\textsc{b}}_{p}^{( n) }\in 
   \{1,\ldots ,M\}\setminus \{\text{\textsc{b}}_{1}^{( n) },\ldots ,\text{\textsc{b}}_{p-1}^{(n) }\}
   \qquad & \text{for}\quad 0<p\leq s_{( n) }, 
   \\ 
   \text{\textsc{b}}_{p}^{( n) }\in \{1,\ldots ,M+m+1-i_{p}^{(n) }\}\setminus \{\text{\textsc{b}}_{1}^{( n) },\ldots ,\text{\textsc{b}}_{p-1}^{( n) }\}
   \quad  & \text{for}\quad s<p\leq s_{( n) }+s_{( n) }^{\prime },
\end{cases}
\label{Def-Bss}
\end{equation}
whereas
\begin{equation}\label{Def-Ass}
   \mathsf{A}^{(n)}_{\boldsymbol{\epsilon},\boldsymbol{\epsilon'}}
   \equiv \big\{\text{\textsc{a}}_{1}^{( n) },\ldots,\text{\textsc{a}}_{M+\tilde m_n}^{( n)}\big\}
   =\{1,\ldots ,M+m+1-n\}\setminus \mathsf{B}^{( n) }_{\boldsymbol{\epsilon},\boldsymbol{\epsilon'}}.
\end{equation}
Finally
\begin{align}\label{Def-F-rec}
     &\mathcal{F}_{\mathsf{B}_{\boldsymbol{\epsilon},\boldsymbol{\epsilon'}}^{(n)}}(\{\mu_j\}_{j=1}^{M},\{\xi_j^{(1)}\}_{j=n}^{m} \mid \alpha,\beta,x)
     = \prod_{k=n}^m 
     \frac{e^{\eta (c_k-a_k+1)}\, }{\sinh(\eta d_k)}\,   
     f^{(n,m)}_{\boldsymbol{\epsilon},\boldsymbol{\epsilon'}}(\alpha,\beta,x)\, 
     \nonumber\\
     &\qquad\times
     \frac{\prod\limits_{j=1}^{s_{(n)}+s'_{(n)}} \bigg[ d(\mu_{\text{\textsc{b}}_j^{(n)}}) \,           
             \frac{\prod_{k=1}^M\sinh(\mu_k-\mu_{\text{\textsc{b}}_j^{(n)}}-\eta)}
                    {\prod_{\substack{k=1 \\ k\neq \text{\textsc{b}}_j^{(n)}}}^M\sinh(\mu_k-\mu_{\text{\textsc{b}}_j^{(n)}})}\bigg] }
           {\prod\limits_{j=n}^m \bigg[ d(\xi_j^{(1)}) \,
            \prod\limits_{k=1}^M\frac{\sinh(\mu_k-\xi_j^{(1)}-\eta)}{\sinh(\mu_k-\xi_j^{(1)})}
            \bigg]}  
           \,
     \prod_{1\leq i<j\leq s_{(n)}+s_{(n)}^{\prime }}
     \frac{\sinh (\mu _{\text{\textsc{b}}_i^{(n)}}-\mu_{\text{\textsc{b}}_j^{(n)}})}
            {\sinh (\mu _{\text{\textsc{b}}_i^{(n)}}-\mu_{\text{\textsc{b}}_j^{(n) }}-\eta )}
     \nonumber\\
     &\qquad\times
     \prod_{p=1}^{s_{(n)}}\vast[ 
     \sinh(\xi_{i_p^{(n)}}^{(1)}-\mu_{\text{\textsc{b}}_p^{(n)}}+\eta(1+b_{i_p^{(n)}}))\,
     \frac{\prod_{k=i_p^{(n)}+1}^m \sinh(\mu_{\text{\textsc{b}}_p^{(n)}}-\xi_k^{(1)}-\eta)}
            {\prod_{k=i_p^{(n)}}^m \sinh(\mu_{\text{\textsc{b}}_p^{(n)}}-\xi_k^{(1)})} 
     \vast]    
     \nonumber\\
     &\qquad\times
     \prod_{p=s_{(n)}+1}^{s_{(n)}+s'_{(n)}}\vast[ 
     \sinh(\xi_{i_p^{(n)}}^{(1)}-\mu_{\text{\textsc{b}}_p^{(n)}}-\eta(1-\bar b_{i_p^{(n)}}))\,
     \frac{\prod_{k=i_p^{(n)}+1}^m \sinh(\xi_k^{(1)}-\mu_{\text{\textsc{b}}_p^{(n)}}-\eta)}
            {\prod_{\substack{k=i_p^{(n)} \\ k\not= M+m+1-{\text{\textsc{b}}_p^{(n)}}}}^m \sinh(\xi_k^{(1)}-\mu_{\text{\textsc{b}}_p^{(n)}})} 
     \vast]      ,     
\end{align}
where
\begin{equation}\label{Def-f-rec}
  f^{(n,m)}_{\boldsymbol{\epsilon},\boldsymbol{\epsilon'}}(\alpha,\beta,x)
   = \begin{cases}
     \displaystyle \prod\limits_{k=1}^{\tilde m_n}\frac{e^{(\frac{x+\alpha+\beta+N}2-M)\eta}\,\sinh(\eta(b_{n-1}+k-M-1))}{2\sinh(\eta(d_{n-1}+k+M))} &\text{if}\quad \tilde m_n>0,\\
     1  &\text{if}\quad \tilde m_n=0,\\
    \displaystyle \prod\limits_{k=1}^{|\tilde m_n|}\frac{2\sinh(\eta(d_{n-1}-k+M+1))}{e^{(\frac{x+\alpha+\beta+N}2-M)\eta}\,\sinh(\eta(b_{n-1}-k-M))} &\text{if}\quad \tilde m_n<0.
   \end{cases}
\end{equation}
%
%
%
\end{proposition}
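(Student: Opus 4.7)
The plan is to prove the formula by downward induction on $n$, going from $n=m+1$ (where the statement is trivial, since $\underline{M}^{(m+1,m)}$ is the identity, $\tilde m_{m+1}=0$, and the sum contains only the empty set) down to $n=1$. The key factorization is
\begin{equation*}
\underline{M}^{(n,m)}_{\boldsymbol{\epsilon},\boldsymbol{\epsilon'}}(\alpha,\beta,x)
= M_{\epsilon_n,\epsilon'_n}(\xi_n-\tfrac{\eta}{2}|(c_n,d_n),(\bar a_n,\bar b_n))\,
\underline{M}^{(n+1,m)}_{\boldsymbol{\epsilon},\boldsymbol{\epsilon'}}(\alpha,\beta,x)\,
\frac{M_{3-\epsilon_n,3-\epsilon_n}(\xi_n+\tfrac{\eta}{2}|(c_n-1,d_n),(a_n-1,b_n))}{\det_q M(\xi_n)},
\end{equation*}
which is possible because the parameters $(a_k,b_k),(\bar a_k,\bar b_k),(c_k,d_k)$ only depend on $k$ and on the full tuples $\boldsymbol{\epsilon},\boldsymbol{\epsilon'}$, not on the starting index of the product.

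First, I would act with the rightmost factor $M_{3-\epsilon_n,3-\epsilon_n}(\xi_n+\tfrac{\eta}{2}|\cdot)/\det_q M(\xi_n)$ on $\underline{B}_M(\{\mu_i\}|x_n-1,z_n)\ket{\eta,y_n}$. Since the auxiliary indices coincide, this factor is a diagonal element of the gauged bulk monodromy matrix (an $A$ or a $D$ depending on $\epsilon_n$). Using the commutation relations of the bulk gauge Yang--Baxter algebra from Appendix~\ref{app-gaugeYBbulk}, together with the diagonal action of $A,D$ on the gauged reference state $\ket{\eta,\cdot}$, the result is a "direct term" (in which the operator passes through unchanged, producing an eigenvalue that combines with the quantum determinant $\det_q M(\xi_n)$ into the coefficients $d(\xi_n^{(1)})$ appearing in \eqref{Def-F-rec}) plus "exchange terms" in which the spectral parameter $\xi_n-\eta/2=\xi_n^{(1)}$ is inserted into the string of $B$'s in place of one of the original $\mu_j$'s, the evicted $\mu_j$ becoming a spectator argument.

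Next, I apply the induction hypothesis to the action of $\underline{M}^{(n+1,m)}_{\boldsymbol{\epsilon},\boldsymbol{\epsilon'}}$ on the resulting states. This yields a sum over subsets $\mathsf{B}^{(n+1)}_{\boldsymbol{\epsilon},\boldsymbol{\epsilon'}}$ of a coefficient $\mathcal{F}_{\mathsf{B}^{(n+1)}}$ times a Bethe-like state of length $M+\tilde m_{n+1}$. Finally, I act with the leftmost factor $M_{\epsilon_n,\epsilon'_n}(\xi_n-\tfrac{\eta}{2}|\cdot)$: this is an $A,B,C$ or $D$ operator depending on $(\epsilon_n,\epsilon'_n)$. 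In the case $(\epsilon_n,\epsilon'_n)=(1,2)$ it simply prepends a new $B(\xi_n^{(1)}|\cdot,\cdot)$ to the string, incorporating $\xi_n^{(1)}$ as an additional argument (the "new element" appended to $\mathsf{B}^{(n+1)}$ to form $\mathsf{B}^{(n)}$ in index position $s_{(n)}$ or $s_{(n)}+s'_{(n)}$ according to whether $\epsilon_n=2$ or $\epsilon'_n=1$); in the remaining cases, I use again the diagonal or off-diagonal commutation relations and the action on $\ket{\eta,\cdot}$. Matching the recursion between $\{i_p^{(n+1)}\}$ and $\{i_p^{(n)}\}$ listed just before the proposition, and summing the direct plus exchange contributions, one recovers exactly the structure of the index sets \eqref{Def-Bss}-\eqref{Def-Ass} and the factorized form of the coefficient \eqref{Def-F-rec}.

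The main obstacle is the careful bookkeeping of the gauge parameters and of the dynamical shifts through the induction: each time a diagonal operator is commuted past a $B$ or acts on a reference state, it shifts both the external and internal gauge labels, and one has to verify that after all such shifts the final parameters still match $(x_n-1,z_n-2\tilde m_n,y_n+\tilde m_n)$ as claimed. Here the annihilation identities \eqref{MM=0-1}--\eqref{MM=0-2} of Corollary~4.1 guarantee that many a priori possible contributions vanish (in particular, those coming from the coincident indices $(3-\epsilon_k,3-\epsilon_k)$ of the rightmost $M$'s), which is why $\mathsf{B}^{(n)}$ is restricted as in \eqref{Def-Bss}. The global prefactor \eqref{Def-f-rec} accounts for the accumulated ratios $e^{\eta(c_k-a_k+1)}/\sinh(\eta d_k)$ produced by the gauge matrices $S$ at each step, and a straightforward (if tedious) induction on $|\tilde m_n|$ confirms the three cases of \eqref{Def-f-rec}.
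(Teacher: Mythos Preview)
Your strategy matches the paper's proof: downward induction on $n$ using the factorization you wrote, acting first with the right factor at $\xi_n^{(0)}$, then applying the hypothesis at level $n+1$, then acting with the left factor at $\xi_n^{(1)}$, with the annihilation identities of Corollary~\ref{cor-prodM=0} killing unwanted terms. The paper starts the induction at $n=m$ rather than $n=m+1$, but this is immaterial.

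Where your sketch is too loose is in the description of which terms survive at each step, and this is not uniform across the four cases $(\epsilon_n,\epsilon'_n)$. When $\epsilon_n=1$ the right factor is a $D$ at $\xi_n^{(0)}$, and it is the annihilation corollary (not any eigenvalue argument) that forces only the \emph{direct} term to contribute. When $\epsilon_n=2$ the right factor is an $A$ at $\xi_n^{(0)}$, and here the direct term vanishes for a different reason, namely $d(\xi_n^{(0)})=0$, so only \emph{indirect} terms survive; this is what inserts $\xi_n^{(0)}$ into the $B$-string, later swapped for $\xi_n^{(1)}$ by the left factor. Your ``direct plus exchange'' picture for the right factor is therefore not accurate. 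Second, the case $(\epsilon_n,\epsilon'_n)=(2,1)$ needs an extra ingredient you did not mention: the leftmost $C(\xi_n^{(1)}|\cdot)$ cannot be commuted directly, and the paper uses the quantum-determinant-type identity \eqref{Alternative-q-det} to rewrite $C\cdot B$ as $-\mathrm{const}\cdot A\cdot D$ before proceeding. Finally, the factor $f^{(n,m)}$ in \eqref{Def-f-rec} is not the accumulated $e^{\eta(c_k-a_k+1)}/\sinh(\eta d_k)$ (those appear separately in \eqref{Def-F-rec}); it is the accumulated product of the case-dependent scalars $\mathsf{a}_n^{(\mathrm{tot})},\mathsf{b}_n^{(\mathrm{tot})},\mathsf{c}_n^{(\mathrm{tot})},\mathsf{d}_n^{(\mathrm{tot})}$ coming from the reference-state actions, and the three-branch formula follows from the telescoping relations $\mathsf{b}=\mathsf{c}^{-1}$, $\mathsf{a}=\mathsf{d}^{-1}$ established in the paper.
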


\begin{proof}
Let us prove this result by induction. 
We can easily prove that it holds for $n=m$ using the actions \eqref{Action-A-0}, \eqref{Action-D-0} and the identity \eqref{Alternative-q-det} according to the required case. 

So let us assume that the stated form of the action holds for  $2\leq n+1\leq m$ and let us prove it for $n$. We distinguish the four possible choices of $(\epsilon _{n},\epsilon _{n}^{\prime })\in
\{1,2\}^{2}$. 


A) Let us first consider the case $(\epsilon_n,\epsilon'_n)=(1,1)$. We have to act on the state \eqref{BulkBLS} with the monomial
\begin{multline}   \label{mon-11}
    \frac{e^{2\eta c_n}}{4\sinh^{2}(\eta d_{n})}\, A(\xi_n^{(1)} | c_n-d_n, \bar{a}_n+\bar{b}_n)\ 
    \underline{M}^{(n+1,m)}_{\boldsymbol{\epsilon},\boldsymbol{\epsilon'}}(\alpha,\beta,x)\
    \frac{D(\xi_n^{(0)} | c_n+d_n-1, a_n-b_n-1)}{\det_q M(\xi_n)}
    \\
 =    \frac{e^{2\eta c_n}}{4\sinh^{2}(\eta d_{n})}\,
 A(\xi_n^{(1)} |x_{n+1},\tilde{y}_{n+1}-N+ M+\tilde m_{n+1})\
    \underline{M}^{(n+1,m)}_{\boldsymbol{\epsilon},\boldsymbol{\epsilon'}}(\alpha,\beta,x)\ 
    \frac{D(\xi_n^{(0)} | y_n-M,z_n-1)}{a(\xi_n^{(0)})\, d(\xi_n^{(1)})}    ,
\end{multline}
in which we have used that
\begin{align}
   &a_n-b_n-1=z_n+(-1)^{\epsilon_n},\label{id:a-b}\\
   &c_n+d_n-1=y_n-M-(-1)^{\epsilon_n}-1,\label{id:c+d}\\
   &c_n-d_n=x_n+(-1)^{\epsilon_n}=x_{n+1},\label{id:c-d}\\
   &\bar{a}_n+\bar{b}_n=y_{n+1}-N+M+2\tilde m_{n+1} 
   =\tilde y_{n+1}-N+(M +\tilde m_{n+1}),\label{id:abar+bbar}
\end{align}
with the shortcut notations
\begin{align}\label{ztilde-ytilde}
   &\tilde{z}_n = z_n-2\tilde m_n,
   \qquad
   \tilde{y}_n=   y_n+\tilde m_n.
\end{align}
We can use \eqref{Action-D-0} to act with the last factor of \eqref{mon-11} on this state. It follows from Corollary~\ref{cor-prodM=0} that only the direct term of the action of the $D$-operator can finally lead to a non-zero contribution, so that the result of this action that we have to take into account  is
\begin{multline}   \label{act-mon11-D}
   \frac{\mathsf{d}_{n}^{(R)}}{d(\xi _{n}^{(1)})}\, 
   \prod_{b=1}^{M}\frac{\sinh (\mu_{b}-\xi _{n}^{(1)})}{\sinh (\mu _{b}-\xi _{n}^{(0)})}\
   \underline{B}_M(\{\mu_i\}_{i=1}^M| x_n-2,z_n)\, \ket{\eta,y_n+1}
   \\
   =   \frac{\mathsf{d}_{n}^{(R)}}{d(\xi _{n}^{(1)})}\, 
   \prod_{b=1}^{M}\frac{\sinh (\mu_{b}-\xi _{n}^{(1)})}{\sinh (\mu _{b}-\xi _{n}^{(0)})}\
   \underline{B}_M(\{\mu_i\}_{i=1}^M| x_{n+1}-1,z_{n+1}+1)\, \ket{\eta,y_{n+1}},
\end{multline}
with 
\begin{align}
   \mathsf{d}_{n}^{(R)}
   &=\frac{e^{-(z_n+M-1)\eta}-e^{-(y_n-N)\eta}}{e^{\eta/2}}\,
   \frac{\sinh(\eta \frac{y_n-x_n-M+2}{2})}{\sinh(\eta\frac{y_n-x_n+M+2}{2})}
   \nonumber\\
   &=\frac{e^{-(a_n-b_n-1+M)\eta}-e^{-(a_n+b_n-1-M-2)\eta}}{e^{\eta/2}}\,
   \frac{\sinh(\eta d_n)}{\sinh(\eta(d_n+M))},
   \nonumber\\
   &=\frac{2\sinh(\eta(b_n-M-1))}{e^{(a_n-2)\eta+\eta/2}}\,
   \frac{\sinh(\eta d_n)}{\sinh(\eta(d_n+M))}.
\end{align}
%
%
We now use the induction formula to compute the action of $\underline{M}^{(n+1,m)}_{\boldsymbol{\epsilon},\boldsymbol{\epsilon'}}(\alpha,\beta,x)$ on this state, which gives
\begin{multline} \label{D+BloA}
  \frac{\mathsf{d}_{n}^{(R)}}{d(\xi _{n}^{(1)})}\, 
   \prod_{b=1}^{M}\frac{\sinh (\mu_{b}-\xi _{n}^{(1)})}{\sinh (\mu _{b}-\xi _{n}^{(0)})}\
  \sum_{\mathsf{B}_{\boldsymbol{\epsilon},\boldsymbol{\epsilon'}}^{(n+1)}}
   \mathcal{F}_{\mathsf{B}_{\boldsymbol{\epsilon},\boldsymbol{\epsilon'}}^{(n+1)}}(\{\mu_j\}_{j=1}^{M},\{\xi_j^{(1)}\}_{j=n+1}^{m}\mid \alpha,\beta,x)\
   \\
   \times
   \underline{B}_{M+\tilde m_{n+1}}   (\{\mu_j\}_{j\in\mathsf{A}_{\boldsymbol{\epsilon},\boldsymbol{\epsilon'}}^{(n+1)}} | x_{n+1}-1,\tilde{z}_{n+1})\, \ket{\eta,\tilde{y}_{n+1}}.
\end{multline}
We finally act with the last $A$-operator on this state using \eqref{Action-A-0}.
%
%
We have
\begin{multline}
  A(\xi_n^{(1)} |x_{n+1},\tilde{y}_{n+1}-N+M+\tilde m_{n+1})\,
  \underline{B}_{M+\tilde m_{n+1}}   (\{\mu_j\}_{j\in\mathsf{A}_{\boldsymbol{\epsilon},\boldsymbol{\epsilon'}}^{(n+1)}} | x_{n+1}-1,\tilde{z}_{n+1})\, \ket{\eta,\tilde{y}_{n+1}}
  \\
  =\mathsf{a}_n^{(L)}
  \hspace{-3mm}
  \sum_{a\in\mathsf{A}^{(n+1)}_{\boldsymbol{\epsilon,\epsilon'}}\cup\{M+m+1-n\}} \hspace{-7mm}   
  d(\mu_a)\,\sinh(\xi_n^{(1)}-\mu_a+\eta(\bar{b}_n-1))\,
  \frac{\prod_{j\in\mathsf{A}^{(n+1)}_{\boldsymbol{\epsilon,\epsilon'}} }
  \sinh(\mu_j-\mu_a-\eta)}
  {\prod_{\substack{ j\in\mathsf{A}^{(n+1)}_{\boldsymbol{\epsilon,\epsilon'}}\cup\{M+m+1-n\} \\ j\neq a}}
  \sinh(\mu_j-\mu_a)}
  \\
  \times
  \underline{B}_{M+\tilde m_{n+1}}    (\{\mu_j\}_{j\in\mathsf{A}_{\boldsymbol{\epsilon},\boldsymbol{\epsilon'}}^{(n+1)}\cup\{M+m+1-n\}}\setminus\{\mu_a\} | x_{n+1},\tilde{z}_{n+1}+2)\, \ket{\eta,\tilde{y}_{n+1}-1}
\end{multline}
in which we have used that
\begin{equation}
   \frac{\tilde y_{n+1}-N-\tilde z_{n+1}-1+(M+\tilde m_{n+1})}{2}
   =b_n+2\tilde m_{n+1}-1=\bar b_n-1,
\end{equation}
and where we have defined
\begin{align}
  \mathsf{a}_n^{(L)} 
  &=\frac{e^{-(x_{n+1}-M-\tilde m_{n+1} )\eta}-e^{-\tilde{y}_{n+1}\eta}}
              {e^{\eta/2}\, \sinh(\eta\frac{\tilde y_{n+1}-N-\tilde z_{n+1}-1-M-\tilde m_{n+1}}{2})}
  \nonumber\\
  &= \frac{e^{-(c_n-d_n-M-\tilde m_n)\eta}-e^{-(c_n+d_n+M+\tilde m_n)\eta}}{e^{\eta/2}\, \sinh(\eta(\bar b_n-1-M-\tilde m_n))}
  \nonumber\\
  &= \frac{2\sinh(\eta(d_n+M+\tilde m_n))}{e^{c_n\eta}\, e^{\eta/2}\, \sinh(\eta(\bar b_n-1-M-\tilde m_n))}
  \nonumber\\
  &= \frac{2\sinh(\eta(d_n+M+\tilde m_n))}{e^{c_n\eta}\, e^{\eta/2}\, \sinh(\eta(b_n-1-M+\tilde m_n))}.
\end{align}
Note that here
\begin{align}
  &\tilde m_{n+1} =\tilde m_n,\quad
  x_{n+1}=x_n-1,\quad
  \tilde z_{n+1}= \tilde z_n-1,\quad
  \tilde y_{n+1}= \tilde y_n+1.
\end{align}
Note also that, with our notations, 
\begin{alignat}{2}
   &s_{(n)}=s_{(n+1)}, \qquad  & &s_{(n)}^{\prime }=s_{(n+1)}^{\prime}+1,
   \\
   &\mathsf{B}^{(n)}_{\boldsymbol{\epsilon,\epsilon'}}=\mathsf{B}^{(n+1)}_{\boldsymbol{\epsilon,\epsilon'}}\cup\{ \text{\textsc{b}}_{s_{(n)}+s_{(n)}^{\prime }}^{( n) } \},
   \qquad
   & &
   \mathsf{A}^{(n)}_{\boldsymbol{\epsilon,\epsilon'}}=\mathsf{A}^{(n+1)}_{\boldsymbol{\epsilon,\epsilon'}}\cup\{M+m+1-n\}\setminus\{ \text{\textsc{b}}_{s_{(n)}+s_{(n)}^{\prime }}^{( n) } \},
\end{alignat}
with
\begin{align}
   &\text{\textsc{b}}_{l}^{(n)} =\text{\textsc{b}}_{l}^{(n+1)}\qquad\,\,
   \forall l\in \{1,\ldots ,s_{(n)}+s_{(n)}^{\prime }-1\}, 
   \\
   &\text{\textsc{b}}_{s_{(n)}+s_{(n)}^{\prime }}^{( n) } \in 
   \mathsf{A}^{(n+1)}_{\boldsymbol{\epsilon,\epsilon'}}\cup\{M+m+1-n\}
   =\{1,\ldots ,M+m+1-n\}\setminus \{\text{\textsc{b}}_{1}^{( n)},\ldots ,\text{\textsc{b}}_{s_{(n)}+s_{(n)}^{\prime }-1}^{( n)}\},
\end{align}
being $i_p^{(n)}=i_p^{(n+1)}$ ($1\le p\le s_{(n)}+s'_{(n)}-1$) and $i_{s_{(n)}+s_{(n)}^{\prime }}^{\left( n\right) }=n$.
Hence, gathering all contributions, we can rewrite the action of \eqref{mon-11} on \eqref{BulkBLS}
as
\begin{multline}
   \frac{e^{\eta(c_n-a_n+1)}}{\sinh(\eta d_n)}\,
   \frac{\mathsf{a}^\mathrm{(tot)}_n}{d(\xi _{n}^{(1)})}\, 
   \prod_{b=1}^{M}\frac{\sinh (\mu_{b}-\xi _{n}^{(1)})}{\sinh (\mu _{b}-\xi _{n}^{(0)})}\
  \sum_{\mathsf{B}_{\boldsymbol{\epsilon},\boldsymbol{\epsilon'}}^{(n)}}
   \mathcal{F}_{\mathsf{B}_{\boldsymbol{\epsilon},\boldsymbol{\epsilon'}}^{(n+1)}}(\{\mu_j\}_{j=1}^{M},\{\xi_j^{(1)}\}_{j=n+1}^{m} \mid \alpha,\beta,x)\
   \\
   \times
     d(\mu_{\text{\textsc{b}}_{s_{(n)}+s_{(n)}^{\prime }}^{( n) }} )\,
     \sinh(\xi_n^{(1)}-\mu_{\text{\textsc{b}}_{s_{(n)}+s_{(n)}^{\prime }}^{( n) }}\hspace{-2mm}-\eta(1-\bar{b}_n))\,
     \frac{  \prod_{ j\in\mathsf{A}^{(n+1)}_{\boldsymbol{\epsilon,\epsilon'}} }  \sinh(\mu_j-\mu_{\text{\textsc{b}}_{s_{(n)}+s_{(n)}^{\prime }}^{( n) }}\hspace{-2mm}-\eta)}
  {   \prod_{ j\in\mathsf{A}^{(n)}_{\boldsymbol{\epsilon,\epsilon'}} }  \sinh(\mu_j-\mu_{\text{\textsc{b}}_{s_{(n)}+s_{(n)}^{\prime }}^{( n) }})}
  \\
  \times
  \underline{B}_{M+\tilde m_n}   (\{\mu_j\}_{j\in\mathsf{A}_{\boldsymbol{\epsilon},\boldsymbol{\epsilon'}}^{(n)} } | x_n-1,\tilde{z}_{n})\, \ket{\eta,\tilde{y}_{n}},
\end{multline}
with
\begin{align}\label{coeff-atot}
  \mathsf{a}^\mathrm{(tot)}_n
  =  \frac{e^{\eta (c_n+a_n-1)}\, \mathsf{d}_{n}^{(R)}\,  \mathsf{a}_n^{(L)}}{4\sinh(\eta d_{n})}
  &=\frac{\sinh(\eta(d_n+M+\tilde m_n))\,\sinh(\eta(b_n-M-1))}
   {\sinh(\eta(d_n+M))\,\sinh(\eta(b_n-M-1+\tilde m_n))}\nonumber\\
  &=\frac{\sinh(\eta(d_{n-1}+M+\tilde m_n+1))\,\sinh(\eta(b_{n-1}-M))}
   {\sinh(\eta(d_{n-1}+M+1))\,\sinh(\eta(b_{n-1}-M+\tilde m_n))}.
\end{align}
We therefore obtain the following recursion relation:
\begin{multline}\label{rec-F-11}
  \mathcal{F}_{\mathsf{B}_{\boldsymbol{\epsilon},\boldsymbol{\epsilon'}}^{(n)}}(\{\mu_j\}_{j=1}^{M},\{\xi_j^{(1)}\}_{j=n}^{m} \mid \alpha,\beta,x)
    = 
    \frac{e^{\eta(c_n-a_n+1)}\,\mathsf{a}^\mathrm{(tot)}_n}{\sinh(\eta d_n)}\,
    \frac{d(\mu_{\text{\textsc{b}}_{s_{(n)}+s_{(n)}^{\prime }}^{( n) }} )}{d(\xi _{n}^{(1)})}\,
    \prod_{b=1}^{M}\frac{\sinh (\mu_{b}-\xi _{n}^{(1)})}{\sinh (\mu _{b}-\xi _{n}^{(0)})}\,
    \\
    \times
    \sinh(\xi_n^{(1)}-\mu_{\text{\textsc{b}}_{s_{(n)}+s_{(n)}^{\prime }}^{( n) }}\hspace{-2mm}-\eta(1-\bar{b}_n)) \,
    \frac{  \prod_{ j\in\mathsf{A}^{(n+1)}_{\boldsymbol{\epsilon,\epsilon'}} }  \sinh(\mu_j-\mu_{\text{\textsc{b}}_{s_{(n)}+s_{(n)}^{\prime }}^{( n) }}\hspace{-2mm}-\eta)}
  {   \prod_{ j\in\mathsf{A}^{(n)}_{\boldsymbol{\epsilon,\epsilon'}} }  \sinh(\mu_j-\mu_{\text{\textsc{b}}_{s_{(n)}+s_{(n)}^{\prime }}^{( n) }})}
  \\
  \times
  \mathcal{F}_{\mathsf{B}_{\boldsymbol{\epsilon},\boldsymbol{\epsilon'}}^{(n+1)}}(\{\mu_j\}_{j=1}^{M},\{\xi_j^{(1)}\}_{j=n+1}^{m} \mid \alpha,\beta,x),
\end{multline}
which, using that
\begin{align*}
  & \hspace{-1mm}\prod_{ j\in\mathsf{A}^{(n+1)}_{\boldsymbol{\epsilon,\epsilon'}} } \hspace{-1mm} 
     \sinh(\mu_j-\mu_{\text{\textsc{b}}_{s_{(n)}+s_{(n)}^{\prime }}^{( n) }}\hspace{-3mm}-\eta)
  =\frac{\prod\limits_{j=1}^M\sinh(\mu_j-\mu_{\text{\textsc{b}}_{s_{(n)}+s_{(n)}^{\prime }}^{( n) }}\hspace{-3mm}-\eta)\,
            \prod\limits_{j=n+1}^m \sinh(\xi_j^{(1)}-\mu_{\text{\textsc{b}}_{s_{(n)}+s_{(n)}^{\prime }}^{( n) }}\hspace{-3mm}-\eta)}
           {\prod\limits_{j=1}^{s_{(n)}+s'_{(n)}-1}\hspace{-2mm}\sinh( \mu_{\text{\textsc{b}}_j^{(n)}}-\mu_{\text{\textsc{b}}_{s_{(n)}+s_{(n)}^{\prime }}^{( n) }}\hspace{-3mm}-\eta)},
           \\
   &\prod_{ j\in\mathsf{A}^{(n)}_{\boldsymbol{\epsilon,\epsilon'}} }  
     \sinh(\mu_j-\mu_{\text{\textsc{b}}_{s_{(n)}+s_{(n)}^{\prime }}^{( n) }}\hspace{-1mm})
   =\frac{\prod\limits_{\substack{j=1 \\ j\neq \text{\textsc{b}}_{s_{(n)}+s'_{(n)}} }}^M \hspace{-5mm}
             \sinh(\mu_j-\mu_{\text{\textsc{b}}_{s_{(n)}+s_{(n)}^{\prime }}^{( n) }} \hspace{-1mm} )\,
            \prod\limits_{\substack{ j=n \\ j\neq M+m+1- \text{\textsc{b}}_{s_{(n)}+s'_{(n)}}^{(n)}  }}^m 
            \hspace{-11mm}
            \sinh(\xi_j^{(1)}-\mu_{\text{\textsc{b}}_{s_{(n)}+s_{(n)}^{\prime }}^{( n) }}\hspace{-1mm})}
           {\prod\limits_{j=1}^{s_{(n)}+s'_{(n)}-1}\hspace{-2mm}\sinh( \mu_{\text{\textsc{b}}_j^{(n)}}-\mu_{\text{\textsc{b}}_{s_{(n)}+s_{(n)}^{\prime }}^{( n) }}\hspace{-1mm})},
\end{align*}
can be rewritten as
\begin{multline}\label{rec-F-11bis}
  \mathcal{F}_{\mathsf{B}_{\boldsymbol{\epsilon},\boldsymbol{\epsilon'}}^{(n)}}(\{\mu_j\}_{j=1}^{M},\{\xi_j^{(1)}\}_{j=n}^{m} \mid \alpha,\beta,x)
    = 
    \frac{e^{\eta(c_n-a_n+1)}\,\mathsf{a}^\mathrm{(tot)}_n}{\sinh(\eta d_n)}\, 
    \frac{d(\mu_{\text{\textsc{b}}_{s_{(n)}+s_{(n)}^{\prime }}^{( n) }} )\, 
            \frac{\prod_{j=1}^M\sinh(\mu_j-\mu_{\text{\textsc{b}}_{s_{(n)}+s_{(n)}^{\prime }}^{( n) }}\hspace{-3mm}-\eta)}{\prod_{\substack{j=1 \\ j\neq \text{\textsc{b}}_{s_{(n)}+s'_{(n)}}^{(n)} }}^M \hspace{-4mm}
             \sinh(\mu_j-\mu_{\text{\textsc{b}}_{s_{(n)}+s_{(n)}^{\prime }}^{( n) }})}
             }
             {d(\xi _{n}^{(1)})\,
    \prod_{b=1}^{M}\frac{\sinh (\mu _{b}-\xi _{n}^{(0)})}{\sinh (\mu_{b}-\xi _{n}^{(1)})}}\,
    \\
    \times
    \sinh(\xi_n^{(1)}-\mu_{\text{\textsc{b}}_{s_{(n)}+s_{(n)}^{\prime }}^{( n) }}\hspace{-2mm}-\eta(1-\bar{b}_n)) \,
    \frac{ \prod\limits_{j=n+1}^m \sinh(\xi_j^{(1)}-\mu_{\text{\textsc{b}}_{s_{(n)}+s_{(n)}^{\prime }}^{( n) }}\hspace{-2mm}-\eta)}
  { \hspace{-2mm} \prod\limits_{\substack{ j=n \\ j\neq M+m+1- \text{\textsc{b}}_{s_{(n)}+s'_{(n)}}^{(n)}  }}^m 
            \hspace{-11mm}
            \sinh(\xi_j^{(1)}-\mu_{\text{\textsc{b}}_{s_{(n)}+s_{(n)}^{\prime }}^{( n) }}\hspace{-1mm})}
  \\
  \times
     \prod\limits_{j=1}^{s_{(n)}+s'_{(n)}-1}\hspace{-1mm}
    \frac{\sinh(\mu_{\text{\textsc{b}}_j^{(n)}}- \mu_{\text{\textsc{b}}_{s_{(n)}+s_{(n)}^{\prime }}^{( n) }}\hspace{-1mm})}{\sinh(\mu_{\text{\textsc{b}}_j^{(n)}}- \mu_{\text{\textsc{b}}_{s_{(n)}+s_{(n)}^{\prime }}^{( n) }}\hspace{-3mm}-\eta)}\ \
  \mathcal{F}_{\mathsf{B}_{\boldsymbol{\epsilon},\boldsymbol{\epsilon'}}^{(n+1)}}(\{\mu_j\}_{j=1}^{M},\{\xi_j^{(1)}\}_{j=n+1}^{m} \mid \alpha,\beta,x),
\end{multline}
and gives the result once we notice that
\begin{equation}
   f^{(n,m)}_{\boldsymbol{\epsilon},\boldsymbol{\epsilon'}}(\alpha,\beta,x)
   = \mathsf{a}^\mathrm{(tot)}_n\,
   f^{(n+1,m)}_{\boldsymbol{\epsilon},\boldsymbol{\epsilon'}}(\alpha,\beta,x).
\end{equation}

B) Let us now consider the case $(\epsilon _{n},\epsilon _{n}^{\prime})=(1,2)$, which means that we have to act on \eqref{BulkBLS} with the monomial
\begin{multline}   \label{mon-12}
    \frac{e^{2\eta c_n}}{4\sinh^{2}(\eta d_{n})}\, 
    B(\xi_n^{(1)} | c_n-d_n, \bar{a}_n-\bar{b}_n)\ 
    \underline{M}^{(n+1,m)}_{\boldsymbol{\epsilon},\boldsymbol{\epsilon'}}(\alpha,\beta,x)\
    \frac{D(\xi_n^{(0)} | c_n+d_n-1, a_n-b_n-1)}{\det_q M(\xi_n)}
    \\
 =    \frac{e^{2\eta c_n}}{4\sinh^{2}(\eta d_{n})}\,
    B(\xi_n^{(1)} |x_{n+1},\tilde{z}_{n+1}-1)\
    \underline{M}^{(n+1,m)}_{\boldsymbol{\epsilon},\boldsymbol{\epsilon'}}(\alpha,\beta,x)\ 
    \frac{D(\xi_n^{(0)} | y_n-M,z_n-1)}{a(\xi_n^{(0)})\, d(\xi_n^{(1)})}    ,
\end{multline}
in which we have used \eqref{id:a-b}, \eqref{id:c+d}, \eqref {id:c-d} and
\begin{equation}\label{id:abar-bbar}
   \bar a_n-\bar b_n=z_{n+1}-1-2\tilde m_{n+1} 
   =\tilde z_{n+1}-1.
\end{equation}
As in the previous case, only the direct term \eqref{act-mon11-D} from the action of the $D$-operator contributes to the final result, and the action of $\underline{M}^{(n+1,m)}_{\boldsymbol{\epsilon},\boldsymbol{\epsilon'}}(\alpha,\beta,x)$ on this contribution gives \eqref{D+BloA}.
Note that here
\begin{align}
  &\tilde m_{n+1} =\tilde m_n-1,\quad
  x_{n+1}=x_n-1,\quad
  \tilde z_{n+1}= \tilde z_n-1,\quad
  \tilde y_{n+1}= \tilde y_n.
\end{align}
Note also that, with our notations, 
\begin{alignat}{2}
   &s_{(n)}=s_{(n+1)}, \qquad & &s_{(n)}^{\prime }=s_{(n+1)}^{\prime},
   \\
   &\mathsf{B}^{(n)}_{\boldsymbol{\epsilon,\epsilon'}}=\mathsf{B}^{(n+1)}_{\boldsymbol{\epsilon,\epsilon'}},\qquad
  & &\mathsf{A}^{(n)}_{\boldsymbol{\epsilon,\epsilon'}}=\mathsf{A}^{(n+1)}_{\boldsymbol{\epsilon,\epsilon'}}\cup\{M+m+1-n\},
\end{alignat}
with $i_p^{(n)}=i_p^{(n+1)}$, $1\le p\le s_{(n)}+s'_{(n)}$.
Gathering all contributions, we can therefore rewrite the action of \eqref{mon-12} on \eqref{BulkBLS}
as
\begin{multline} 
   \frac{e^{2\eta c_n}}{4\sinh^{2}(\eta d_{n})}\,   \frac{\mathsf{d}_{n}^{(R)}}{d(\xi _{n}^{(1)})}\, 
   \prod_{b=1}^{M}\frac{\sinh (\mu_{b}-\xi _{n}^{(1)})}{\sinh (\mu _{b}-\xi _{n}^{(0)})}\
  \sum_{\mathsf{B}_{\boldsymbol{\epsilon},\boldsymbol{\epsilon'}}^{(n+1)}}
   \mathcal{F}_{\mathsf{B}_{\boldsymbol{\epsilon},\boldsymbol{\epsilon'}}^{(n+1)}}(\{\mu_j\}_{j=1}^{M},\{\xi_j^{(1)}\}_{j=n+1}^{m} \mid \alpha,\beta,x)\
   \\
   \times
   B(\xi_n^{(1)} |x_{n}-1,\tilde{z}_{n}-2)\,
   \underline{B}_{M+\tilde m_n-1}   (\{\mu_j\}_{j\in\mathsf{A}_{\boldsymbol{\epsilon},\boldsymbol{\epsilon'}}^{(n+1)}} | x_{n}-2,\tilde{z}_{n}-1)\, \ket{\eta,\tilde{y}_{n}}.
\end{multline}
We therefore obtain the following recursion relation:
\begin{multline}
  \mathcal{F}_{\mathsf{B}_{\boldsymbol{\epsilon},\boldsymbol{\epsilon'}}^{(n)}}(\{\mu_j\}_{j=1}^{M},\{\xi_j^{(1)}\}_{j=n}^{m} \mid \alpha,\beta,x)
  = \frac{e^{\eta (c_n-a_n+1)}\, }{\sinh(\eta d_{n})}\, 
    \frac{\mathsf{b}_{n}^\mathrm{(tot)}}{d(\xi _{n}^{(1)})}\,\prod_{b=1}^{M}\frac{\sinh (\mu_{b}-\xi _{n}^{(1)})}{\sinh (\mu _{b}-\xi _{n}^{(0)})}\,
  \\
  \times
   \mathcal{F}_{\mathsf{B}_{\boldsymbol{\epsilon},\boldsymbol{\epsilon'}}^{(n+1)}}(\{\mu_j\}_{j=1}^{M},\{\xi_j^{(1)}\}_{j=n+1}^{m} \mid \alpha,\beta,x),
\end{multline}
with
\begin{align}
  \mathsf{b}_{n}^\mathrm{(tot)}
  =\frac{e^{\eta (c_n+a_n-1)}\, \mathsf{d}_{n}^{(R)}}{4\sinh(\eta d_{n})}
  &= \frac{e^{(c_n+1)\eta-\eta/2}\sinh(\eta(b_n-M-1))}{2\sinh(\eta(d_n+M))} \nonumber\\
  &= \frac{e^{\frac\eta 2(x+\alpha+\beta+N)-M\eta}\,\sinh(\eta(b_{n-1}-M))}{2\sinh(\eta(d_{n-1}+M+1))},
\end{align}
which gives the result once we notice that
\begin{equation}
   f^{(n,m)}_{\boldsymbol{\epsilon},\boldsymbol{\epsilon'}}(\alpha,\beta,x)
   = \mathsf{b}^\mathrm{(tot)}_n\,
   f^{(n+1,m)}_{\boldsymbol{\epsilon},\boldsymbol{\epsilon'}}(\alpha,\beta,x).
\end{equation}

C) Let us now consider the case $(\epsilon _{n},\epsilon _{n}^{\prime})=(2,2)$. We act on \eqref{BulkBLS} with the monomial: 
\begin{multline}   \label{mon-22}
    \frac{e^{2\eta c_n}}{4\sinh^{2}(\eta d_{n})}\, 
    D(\xi_n^{(1)} | c_n+d_n, \bar{a}_n-\bar{b}_n)\ 
    \underline{M}^{(n+1,m)}_{\boldsymbol{\epsilon},\boldsymbol{\epsilon'}}(\alpha,\beta,x)\
    \frac{A(\xi_n^{(0)} | c_n-d_n-1, a_n+b_n-1)}{\det_q M(\xi_n)}
    \\
 =    \frac{e^{2\eta c_n}}{4\sinh^{2}(\eta d_{n})}\,
    D(\xi_n^{(1)} |y_{n+1}-M,\tilde{z}_{n+1}-1)\
    \underline{M}^{(n+1,m)}_{\boldsymbol{\epsilon},\boldsymbol{\epsilon'}}(\alpha,\beta,x)\ 
    \frac{A(\xi_n^{(0)} | x_n,y_n-N+M)}{a(\xi_n^{(0)})\, d(\xi_n^{(1)})}    .
\end{multline}
From \eqref{Action-A-0}, and since $d(\xi_n^{(0)})=0$, the only non-zero-contributions from the action of the $A$-operator on the state \eqref{BulkBLS} come from indirect actions, so that it produces
\begin{multline}
   A(\xi_n^{(0)} | x_n,y_n-N+M)\, \underline{B}_M(\{\mu_i\}_{i=1}^M | x_n-1,z_n)\,\ket{\eta,y_n}
   =\mathsf{a}_n^{(R)} \sum_{\text{\textsc{b}}=1}^M d(\mu_\text{\textsc{b}})\,
    \frac{\prod_{k=1}^M \sinh (\mu_j-\mu_\text{\textsc{b}}-\eta)}{\prod_{\substack{j=1 \\ j\neq \text{\textsc{b}}}}^{M} \sinh (\mu_j-\mu_\text{\textsc{b}})}
   \\
   \times
   \frac{\sinh(\xi_n^{(0)}-\mu_\text{\textsc{b}}+\eta b_n)}{\sinh(\xi_n^{(0)}-\mu_\text{\textsc{b}})}
   \
  \underline{B}_M ( \{\mu_j\}_{\substack{ j=1 \\ j\neq {\text{\textsc{b}}}}}^{M}\cup\{\xi_n^{(0)}\}| x_n, z_n +1)\,
  \ket{\eta ,y_n-1} , 
\end{multline}
in which we have used that
\begin{equation}
  \frac{y_n-N-z_n-1+M}{2}=b_n,
\end{equation}
and defined
\begin{align}
    \mathsf{a}_n^{(R)} 
    &=  \frac{ e^{-(x_n-M)\eta }-e^{-y_n\eta }}{e^{\eta /2} \, \sinh(\eta\frac{y_n-z_n-N-M-1}{2})}
    \nonumber\\
    &= \frac{e^{-(c_n-d_n-(-1)^{\epsilon_n}-M)\eta}-e^{-(c_n+d_n+M+(-1)^{\epsilon_n})\eta}}{e^{\eta /2} \, \sinh(\eta(b_n-M))}
    \nonumber\\
    &=\frac{2\sinh(\eta(d_n+1+M))}{e^{\eta c_n+\eta/2}\, \sinh(\eta(b_n-M))}.
\end{align}
Noticing that
\begin{align}
  &x_n=x_{n+1}-1,\quad
  z_n=z_{n+1}-1,\quad
  y_n-1=y_{n+1},\quad
  \tilde{m}_{n+1}=\tilde{m}_n.
\end{align}
we can use the induction hypothesis so as to compute the action of $\underline{M}^{(n+1,m)}_{\boldsymbol{\epsilon},\boldsymbol{\epsilon'}}(\alpha,\beta,x)$ on the above state. 
We remark that, since $d(\xi_n^{(0)})=0$, the action of the monomial $\underline{M}^{(n+1,m)}_{\boldsymbol{\epsilon},\boldsymbol{\epsilon'}}(\alpha,\beta,x)$ cannot result into a replacement of the argument $\xi_n^{(0)}$ in the product of $B$-operators, so that we can make the following identification:
\begin{equation}
     \text{\textsc{b}}_{j+1}^{(n)}=\text{\textsc{b}}_{j}^{(n+1)}\quad
     \forall j\in \{1,\ldots,s_{(n)}+s_{(n)}^{\prime }-1\},
     \qquad\text{and}\qquad
     \text{\textsc{b}}_{1}^{(n)}=\text{\textsc{b}},
\end{equation}
with here 
\begin{alignat}{2}
   &s_{(n)}=s_{(n+1)}+1, \qquad & &s_{(n)}^{\prime }=s_{(n+1)}^{\prime },\\
   &\mathsf{B}_{\boldsymbol{\epsilon},\boldsymbol{\epsilon'}}^{(n)}
   =\mathsf{B}_{\boldsymbol{\epsilon},\boldsymbol{\epsilon'}}^{(n+1)}\cup\{ \text{\textsc{b}}_{1}^{(n)} \},
   \qquad & &
   \mathsf{A}_{\boldsymbol{\epsilon},\boldsymbol{\epsilon'}}^{(n)}
   =\mathsf{A}_{\boldsymbol{\epsilon},\boldsymbol{\epsilon'}}^{(n+1)}\cup\{M+m+1-n\}\setminus\{\text{\textsc{b}}_{1}^{(n)} \}.
\end{alignat}
Hence it gives
\begin{multline}\label{act-MA}
  \underline{M}^{(n+1,m)}_{\boldsymbol{\epsilon},\boldsymbol{\epsilon'}}(\alpha,\beta,x)\,
  A(\xi_n^{(0)} | x_n,y_n-N+M)\, \underline{B}_M(\{\mu_i\}_{i=1}^M | x_n-1,z_n)\,\ket{\eta,y_n}
  \\
   =\mathsf{a}_n^{(R)} 
   \sum_{\mathsf{B}_{\boldsymbol{\epsilon},\boldsymbol{\epsilon'}}^{(n)}}
    d(\mu_{\text{\textsc{b}}_1^{(n)}})\,
   \frac{\sinh(\xi_n^{(1)}-\mu_{\text{\textsc{b}}_1^{(n)}}+\eta (b_n+1))}{\sinh(\xi_n^{(1)}-\mu_{\text{\textsc{b}}_1^{(n)}}+\eta)}\,    
     \frac{\prod_{j=1}^M \sinh (\mu_j-\mu_{\text{\textsc{b}}_1^{(n)}}-\eta)}{ \prod_{\substack{j=1 \\ j\neq \text{\textsc{b}}_1^{(n)}}}^{M}\sinh (\mu_j-\mu_{\text{\textsc{b}}_1^{(n)}})} 
   \\
   \times
   \mathcal{F}_{\mathsf{B}_{\boldsymbol{\epsilon},\boldsymbol{\epsilon'}}^{(n+1)}}(\{\mu_j\}_{j=1}^{M}\setminus\{\mu_{\text{\textsc{b}}_1^{(n)}}\}\cup\{\xi_n^{(0)}\},\{\xi_j^{(1)}\}_{j=n+1}^{m} \mid \alpha,\beta,x)\
   \\
   \times
     \underline{B}_{M+\tilde m_{n+1}}   (\{\mu_j\}_{\substack{j=1 \\ j\notin\mathsf{B}_{\boldsymbol{\epsilon},\boldsymbol{\epsilon'}}^{(n)}}}^{M+m-n} 
     \cup\{\xi_n^{(0)}\} | x_{n+1}-1,\tilde z_{n+1})\, \ket{\eta,\tilde y_{n+1}}.
\end{multline}
Note that we have
\begin{align}
   &\{\mu_j\}_{\substack{j=1 \\ j\notin\mathsf{B}_{\boldsymbol{\epsilon},\boldsymbol{\epsilon'}}^{(n)}}}^{M+m-n} 
     \cup\{\xi_n^{(0)}\}
     =\{\mu_j\}_{j\in\mathsf{A}_{\boldsymbol{\epsilon},\boldsymbol{\epsilon'}}^{(n)} }\setminus\{\xi_n^{(1)}\}\cup\{\xi_n^{(0)}\}.
\end{align}
It remains to act with the last $D$-operator on this state. From \eqref{MM=0-1}, the only non-zero contribution of this action comes from the indirect term replacing $\xi_n^{(0)}$ by $\xi_n^{(1)}$ in the product of $B$:
\begin{multline}
   D(\xi_n^{(1)} |y_{n+1}-M,\tilde{z}_{n+1}-1)\
   \underline{B}_{M+\tilde m_{n+1}}   (\{\mu_j\}_{j\in\mathsf{A}_{\boldsymbol{\epsilon},\boldsymbol{\epsilon'}}^{(n)} }\cup\{\xi_n^{(0)}\}\setminus\{\xi_n^{(1)}\} | x_{n+1}-1,\tilde{z}_{n+1})\, \ket{\eta,\tilde {y}_{n+1}}
   \\
   = - \mathsf{d}_{n}^{(L) }\, a(\xi_n^{(0)})
   \hspace{-5mm}
   \prod_{\substack{j\in\mathsf{A}_{\boldsymbol{\epsilon},\boldsymbol{\epsilon'}}^{(n)} \\ j\not= M+m+1-n} }
   \hspace{-5mm}
   \frac{\sinh(\xi_n^{(0)}-\mu_j-\eta)}{\sinh(\xi_n^{(0)}-\mu_j)}\
   \underline{B}_{M+\tilde m_{n}}   (\{\mu_j\}_{j\in\mathsf{A}_{\boldsymbol{\epsilon},\boldsymbol{\epsilon'}}^{(n)} } | x_{n}-1,\tilde{z}_n)\, \ket{\eta,\tilde{y}_{n}},
\end{multline}
with
\begin{align}
  \mathsf{d}_{n}^{(L) }
  &= \frac{e^{-(\tilde z_{n+1}-1+M+\tilde m_n)\eta}-e^{-(\tilde y_{n+1}-N)\eta}}{e^{\eta/2}}\,
  \frac{\sinh(-\eta+\eta\frac{\tilde y_{n+1}-x_{n+1}-M-\tilde m_{n+1}+2}{2})}{\sinh(\eta\frac{\tilde y_{n+1}+M+\tilde m_{n+1}-x_{n+1}+2}{2})}
  \nonumber\\
  &=\frac{2\sinh(\eta(b_n-M+\tilde m_n))\,\sinh(\eta d_n)}{e^{\eta/2}\,e^{\eta\bar a_n}\,\sinh(\eta(d_n+M+\tilde m_n+1))}.
\end{align}
Hence, gathering all contributions, we obtain that the action of \eqref{mon-22} on \eqref{BulkBLS} results in
\begin{multline}
  -\frac{e^{2\eta c_n}\, \mathsf{a}_n^{(R)}\,\mathsf{d}_n^{(L)}}{4\sinh^{2}(\eta d_{n})\, d(\xi_n^{(1)})}\,
   \sum_{\mathsf{B}_{\boldsymbol{\epsilon},\boldsymbol{\epsilon'}}^{(n)}}
    d(\mu_{\text{\textsc{b}}_1^{(n)}})\,
      \frac{\sinh(\xi_n^{(1)}-\mu_{\text{\textsc{b}}_1^{(n)}}+\eta (b_n+1))}{\sinh(\xi_n^{(1)}-\mu_{\text{\textsc{b}}_1^{(n)}}+\eta)}\,    
     \frac{\prod_{j=1}^M \sinh (\mu_j-\mu_{\text{\textsc{b}}_1^{(n)}}-\eta)}{ \prod_{\substack{j=1 \\ j\neq \text{\textsc{b}}_1^{(n)}}}^{M}\sinh (\mu_j-\mu_{\text{\textsc{b}}_1^{(n)}})} 
   \\
   \times
   \mathcal{F}_{\mathsf{B}_{\boldsymbol{\epsilon},\boldsymbol{\epsilon'}}^{(n+1)}}(\{\mu_j\}_{j=1}^{M}\setminus\{\mu_{\text{\textsc{b}}_1^{(n)}}\}\cup\{\xi_n^{(0)}\},\{\xi_j^{(1)}\}_{j=n+1}^{m} \mid \alpha,\beta,x)\
   \\
   \times\hspace{-2mm}
    \prod_{\substack{j\in\mathsf{A}_{\boldsymbol{\epsilon},\boldsymbol{\epsilon'}}^{(n)} \\ j\not= M+m+1-n} }
   \hspace{-4mm}\frac{\sinh(\xi_n^{(1)}-\mu_j)}{\sinh(\xi_n^{(1)}-\mu_j+\eta)}\
   \underline{B}_{M+\tilde m_{n}}   (\{\mu_j\}_{j\in\mathsf{A}_{\boldsymbol{\epsilon},\boldsymbol{\epsilon'}}^{(n)} } | x_{n}-1,\tilde{z}_{n})\, \ket{\eta,\tilde{y}_{n}}.
\end{multline}
We therefore obtain the following recursion relation:
\begin{multline}\label{rec-F-22}
  \mathcal{F}_{\mathsf{B}_{\boldsymbol{\epsilon},\boldsymbol{\epsilon'}}^{(n)}}(\{\mu_j\}_{j=1}^{M},\{\xi_j^{(1)}\}_{j=n}^{m} \mid \alpha,\beta,x)
  = -\frac{e^{2\eta c_n}\, \mathsf{a}_n^{(R)}\,\mathsf{d}_n^{(L)}}{4\sinh^{2}(\eta d_{n})}\,
  \frac{d(\mu_{\text{\textsc{b}}_1^{(n)}})}{d(\xi _{n}^{(1)})}\,
      \frac{\sinh(\xi_n^{(1)}-\mu_{\text{\textsc{b}}_1^{(n)}}+\eta (b_n+1))}{\sinh(\xi_n^{(1)}-\mu_{\text{\textsc{b}}_1^{(n)}}+\eta)}\,    
      \\
      \times
     \frac{\prod_{j=1}^M \sinh (\mu_j-\mu_{\text{\textsc{b}}_1^{(n)}}-\eta)}{ \prod_{\substack{j=1 \\ j\neq \text{\textsc{b}}_1^{(n)}}}^{M}\sinh (\mu_j-\mu_{\text{\textsc{b}}_1^{(n)}})} 
   \prod_{\substack{j\in\mathsf{A}_{\boldsymbol{\epsilon},\boldsymbol{\epsilon'}}^{(n)} \\ j\not= M+m+1-n} }
   \hspace{-2mm}
   \frac{\sinh(\xi_n^{(1)}-\mu_j)}{\sinh(\xi_n^{(1)}-\mu_j+\eta)}
   \\
   \times
   \mathcal{F}_{\mathsf{B}_{\boldsymbol{\epsilon},\boldsymbol{\epsilon'}}^{(n+1)}}(\{\mu_j\}_{j=1}^{M}\setminus\{\mu_{\text{\textsc{b}}_1^{(n)}}\}\cup\{\xi_n^{(0)}\},\{\xi_j^{(1)}\}_{j=n+1}^{m} \mid \alpha,\beta,x) .
\end{multline}
Using moreover that
\begin{multline}\label{DevF}
  \mathcal{F}_{\mathsf{B}_{\boldsymbol{\epsilon},\boldsymbol{\epsilon'}}^{(n+1)}}(\{\mu_j\}_{j=1}^{M}\setminus\{\mu_{\text{\textsc{b}}_1^{(n)}}\}\cup\{\xi_n^{(0)}\},\{\xi_j^{(1)}\}_{j=n+1}^{m} \mid \alpha,\beta,x)
  \\
  =\hspace{-1mm}\prod_{j=2}^{s_{(n)}+s'_{(n)}}\frac{\sinh(\xi_n^{(0)}-\mu_{\text{\textsc{b}}_j^{(n)}}-\eta)\, \sinh(\mu_{\text{\textsc{b}}_1^{(n)}}-\mu_{\text{\textsc{b}}_j^{(n)}})}{\sinh(\xi_n^{(0)}-\mu_{\text{\textsc{b}}_j^{(n)}})\, \sinh(\mu_{\text{\textsc{b}}_1^{(n)}}-\mu_{\text{\textsc{b}}_j^{(n)}}-\eta)}
  \prod_{j=n+1}^m\frac{\sinh(\xi_n^{(0)}-\xi_j^{(1)})\,\sinh(\mu_{\text{\textsc{b}}_1^{(n)}}-\xi_j^{(1)}-\eta)}{\sinh(\xi_n^{(0)}-\xi_j^{(1)}-\eta)\,\sinh(\mu_{\text{\textsc{b}}_1^{(n)}}-\xi_j^{(1)})}
  \\
  \times
   \mathcal{F}_{\mathsf{B}_{\boldsymbol{\epsilon},\boldsymbol{\epsilon'}}^{(n+1)}}(\{\mu_j\}_{j=1}^{M},\{\xi_j^{(1)}\}_{j=n+1}^{m} \mid \alpha,\beta,x),
\end{multline}
and that
\begin{multline}\label{Dev-D}
      \prod_{\substack{j\in\mathsf{A}_{\boldsymbol{\epsilon},\boldsymbol{\epsilon'}}^{(n)} \\ j\not= M+m+1-n} }
   \hspace{-2mm}
   \frac{\sinh(\xi_n^{(1)}-\mu_j)}{\sinh(\xi_n^{(1)}-\mu_j+\eta)}
   =\prod_{k=1}^M\frac{\sinh(\mu_k-\xi_n^{(1)})}{\sinh(\mu_k-\xi_n^{(1)}-\eta)}\,
   \prod_{j=n+1}^m \frac{\sinh(\xi_n^{(1)}-\xi_j^{(1)})}{\sinh(\xi_n^{(1)}-\xi_j^{(1)}+\eta)}
   \\
   \times
   \prod_{j=1}^{s_{(n)}+s_{(n)}'}\frac{\sinh(\mu_{\text{\textsc{b}}_j^{(n)}}-\xi_n^{(1)}-\eta)}{\sinh(\mu_{\text{\textsc{b}}_j^{(n)}}-\xi_n^{(1)})},
\end{multline}
the relation \eqref{rec-F-22} becomes
\begin{multline}\label{rec-F-22-bis}
  \mathcal{F}_{\mathsf{B}_{\boldsymbol{\epsilon},\boldsymbol{\epsilon'}}^{(n)}}(\{\mu_j\}_{j=1}^{M},\{\xi_j^{(1)}\}_{j=n}^{m} \mid \alpha,\beta,x)
  = 
 - \frac{e^{\eta (c_n-a_n+1)}\, }{\sinh(\eta d_{n})}\,\mathsf{d}_n^\mathrm{(tot)}\,
  \frac{d(\mu_{\text{\textsc{b}}_1^{(n)}})\,    
          \frac{\prod_{j=1}^M \sinh (\mu_j-\mu_{\text{\textsc{b}}_1^{(n)}}-\eta)}
                 { \prod_{\substack{j=1 \\ j\neq \text{\textsc{b}}_1^{(n)}}}^{M}\sinh (\mu_j-\mu_{\text{\textsc{b}}_1^{(n)}})} }
       {d(\xi _{n}^{(1)})\, \prod_{k=1}^M\frac{\sinh(\mu_k-\xi_n^{(1)}-\eta)}{\sinh(\mu_k-\xi_n^{(1)})}}\,
         \\
         \times
         \frac{\sinh(\mu_{\text{\textsc{b}}_1^{(n)}}-\xi_n^{(1)}-\eta (b_n+1))}
                {\sinh(\mu_{\text{\textsc{b}}_1^{(n)}}-\xi_n^{(1)})}\,
         \prod_{j=n+1}^m\hspace{-1mm}\frac{\sinh(\mu_{\text{\textsc{b}}_1^{(n)}}-\xi_j^{(1)}-\eta)}
                                          {\sinh(\mu_{\text{\textsc{b}}_1^{(n)}}-\xi_j^{(1)})}\
         \prod_{j=2}^{s_{(n)}+s'_{(n)}}\hspace{-1mm}
         \frac{\sinh(\mu_{\text{\textsc{b}}_1^{(n)}}-\mu_{\text{\textsc{b}}_j^{(n)}})}
                { \sinh(\mu_{\text{\textsc{b}}_1^{(n)}}-\mu_{\text{\textsc{b}}_j^{(n)}}-\eta)}
     \\
       \times
   \mathcal{F}_{\mathsf{B}_{\boldsymbol{\epsilon},\boldsymbol{\epsilon'}}^{(n+1)}}(\{\mu_j\}_{j=1}^{M},\{\xi_j^{(1)}\}_{j=n+1}^{m} \mid \alpha,\beta,x),
\end{multline}
in which
\begin{align}\label{d_n-tot}
  \mathsf{d}_n^\mathrm{(tot)}
  &=\frac{e^{\eta (c_n+a_n-1)}\, \mathsf{a}_n^{(R)}\,\mathsf{d}_n^{(L)}}{4\sinh(\eta d_{n})}
  =\frac{ \sinh(\eta(d_n+1+M))\, \sinh(\eta(b_n-M+\tilde m_n))}{\sinh(\eta(d_n+M+\tilde m_n+1))\, \sinh(\eta(b_n-M))}=[\mathsf{a}_{n+1}^\mathrm{(tot)}]^{-1}\nonumber\\
  &=\frac{\sinh(\eta(d_{n-1}+M))\, \sinh(\eta(b_{n-1}-M+\tilde m_n-1))}{\sinh(\eta(d_{n-1}+M+\tilde m_n))\, \sinh(\eta(b_{n-1}-M-1))}=[\mathsf{a}_{n-1}^\mathrm{(tot)}]^{-1},
\end{align}
which gives  the result once we notice that $i_{p+1}^{(n)}=i_{p}^{(n+1)}$ ($1\le p\le s_{(n)}+s_{(n)}'-1$) and $i_1^{(n)}=n$ and that
\begin{equation}
   f^{(n,m)}_{\boldsymbol{\epsilon},\boldsymbol{\epsilon'}}(\alpha,\beta,x)
   = \mathsf{d}^\mathrm{(tot)}_n\,
   f^{(n+1,m)}_{\boldsymbol{\epsilon},\boldsymbol{\epsilon'}}(\alpha,\beta,x).
\end{equation}

D) Let us now consider the case $(\epsilon _{n},\epsilon _{n}^{\prime})=(2,1)$. 
Then we have to act on the state \eqref{BulkBLS} with the monomial
\begin{multline}\label{mon-21}
       \frac{e^{2\eta c_n}}{4\sinh^{2}(\eta d_{n})}\, 
    C(\xi_n^{(1)} | c_n+d_n, \bar{a}_n+\bar{b}_n)\ 
    \underline{M}^{(n+1,m)}_{\boldsymbol{\epsilon},\boldsymbol{\epsilon'}}(\alpha,\beta,x)\
    \frac{A(\xi_n^{(0)} | c_n-d_n-1, a_n+b_n-1)}{\det_q M(\xi_n)}
    \\
 =    \frac{e^{2\eta c_n}}{4\sinh^{2}(\eta d_{n})}\,
    C(\xi_n^{(1)} |y_{n+1}-M,\tilde{y}_{n+1}-N+(M+\tilde m_{n+1}))\
    \underline{M}^{(n+1,m)}_{\boldsymbol{\epsilon},\boldsymbol{\epsilon'}}(\alpha,\beta,x)\ 
    \\
    \times
    \frac{A(\xi_n^{(0)} | x_n,y_n-N+M)}{a(\xi_n^{(0)})\, d(\xi_n^{(1)})}    .
\end{multline}
The action \eqref{act-MA} is similar as in the previous case, except that here we have to make the following identification:
\begin{equation}
     \text{\textsc{b}}_{j+1}^{(n)}=\text{\textsc{b}}_{j}^{(n+1)}\quad
     \forall j\in \{1,\ldots,s_{(n+1)}+s_{(n+1)}^{\prime }=s_{(n)}+s_{(n)}^{\prime }-2\},
     \qquad\text{and}\qquad
     \text{\textsc{b}}_{1}^{(n)}=\text{\textsc{b}},
\end{equation}
with here 
\begin{alignat}{2}
   &s_{(n)}=s_{(n+1)}+1, \qquad & &s_{(n)}^{\prime }=s_{(n+1)}^{\prime }+1,\\
   &\mathsf{B}_{\boldsymbol{\epsilon},\boldsymbol{\epsilon'}}^{(n)}
   =\mathsf{B}_{\boldsymbol{\epsilon},\boldsymbol{\epsilon'}}^{(n+1)}\cup\{ \text{\textsc{b}}_{1}^{(n)},\text{\textsc{b}}_{s_{(n)}+s'_{(n)}}^{(n)} \},
   \qquad & &
   \mathsf{A}_{\boldsymbol{\epsilon},\boldsymbol{\epsilon'}}^{(n)}
   =\mathsf{A}_{\boldsymbol{\epsilon},\boldsymbol{\epsilon'}}^{(n+1)}\cup\{M+m+1-n\}\setminus\{\text{\textsc{b}}_{1}^{(n)}, \text{\textsc{b}}_{s_{(n)}+s'_{(n)}}^{(n)}\}.
\end{alignat}
Setting also
\begin{equation}
   \hat{\mathsf{B}}_{\boldsymbol{\epsilon},\boldsymbol{\epsilon'}}^{(n)}
   =\mathsf{B}_{\boldsymbol{\epsilon},\boldsymbol{\epsilon'}}^{(n+1)}\cup\{ \text{\textsc{b}}_{1}^{(n)}\}
   =\{\text{\textsc{b}}_{j}^{( n)}\}_{j=1}^{s_{(n)}+s_{(n)}^{\prime }-1},
\end{equation}
we therefore can write
\begin{multline}\label{act-MA-21}
  \underline{M}^{(n+1,m)}_{\boldsymbol{\epsilon},\boldsymbol{\epsilon'}}(\alpha,\beta,x)\,
  A(\xi_n^{(0)} | x_n,y_n-N+M)\, \underline{B}_M(\{\mu_i\}_{i=1}^M | x_n-1,z_n)\,\ket{\eta,y_n}
  \\
   =\mathsf{a}_n^{(R)} \,
   \sum_{\hat{\mathsf{B}}_{\boldsymbol{\epsilon},\boldsymbol{\epsilon'}}^{(n)}}
    d(\mu_{\text{\textsc{b}}_1^{(n)}})\,
    \frac{\sinh(\xi_n^{(1)}-\mu_{\text{\textsc{b}}_1^{(n)}}+\eta (b_n+1))}{\sinh(\xi_n^{(1)}-\mu_{\text{\textsc{b}}_1^{(n)}}+\eta)}\,    
     \frac{\prod_{j=1}^M \sinh (\mu_j-\mu_{\text{\textsc{b}}_1^{(n)}}-\eta)}{ \prod_{\substack{j=1 \\ j\neq \text{\textsc{b}}_1^{(n)}}}^{M}\sinh (\mu_j-\mu_{\text{\textsc{b}}_1^{(n)}})} 
   \\
   \times
   \mathcal{F}_{\mathsf{B}_{\boldsymbol{\epsilon},\boldsymbol{\epsilon'}}^{(n+1)}}(\{\mu_j\}_{j=1}^{M}\setminus\{\mu_{\text{\textsc{b}}_1^{(n)}}\}\cup\{\xi_n^{(0)}\},\{\xi_j^{(1)}\}_{j=n+1}^{m} \mid \alpha,\beta,x)\
   \\
   \times
     \underline{B}_{M+\tilde m_{n+1}}   (\{\mu_j\}_{\substack{j=1 \\ j\notin\hat{\mathsf{B}}_{\boldsymbol{\epsilon},\boldsymbol{\epsilon'}}^{(n)}}}^{M+m-n} 
     \cup\{\xi_n^{(0)}\} | x_{n+1}-1,\tilde z_{n+1})\, \ket{\eta,\tilde y_{n+1}}.
\end{multline}
Using \eqref{Alternative-q-det}, which can be rewritten as
\begin{multline}
   C(\xi_n-\eta/2|y,w)\, B(\xi_n+\eta/2|x-1,z)
   \\
   =-\frac{e^{\eta\frac{x'+y'}{2}}\,\sinh(\eta \frac{y-x}{2})}{e^{\eta\frac{x+y}{2}}\sinh(\eta \frac{y'-x'}{2})}\,
   A(\xi_n-\eta/2|x',w)\, D(\xi_n+\eta/2|y'-1, z),
\end{multline}
we obtain that
\begin{align}
  &C(\xi_n^{(1)} |y_{n+1}-M,\tilde{y}_{n+1}-N+(M+\tilde m_{n+1}))\,
  \nonumber\\
  &\qquad\times
   \underline{B}_{M+\tilde m_{n+1}}   (\{\mu_j\}_{\substack{j=1 \\ j\notin\hat{\mathsf{B}}_{\boldsymbol{\epsilon},\boldsymbol{\epsilon'}}^{(n)}}}^{M+m-n} 
     \cup\{\xi_n^{(0)}\} | x_{n+1}-1,\tilde z_{n+1})\, \ket{\eta,\tilde y_{n+1}}
  \nonumber\\
  &=C(\xi_n^{(1)} |y_{n+1}-M,\tilde{y}_{n+1}-N+(M+\tilde m_{n+1}))\,
      B(\xi_n^{(0)}|x_{n+1}-1,\tilde z_{n+1})\,
      \nonumber\\
   &\qquad\times
    \underline{B}_{M+\tilde m_{n+1}-1}   (\{\mu_j\}_{\substack{j=1 \\ j\notin\hat{\mathsf{B}}_{\boldsymbol{\epsilon},\boldsymbol{\epsilon'}}^{(n)}}}^{M+m-n} 
     | x_{n+1}-2,\tilde z_{n+1}+1)\, \ket{\eta,\tilde y_{n+1}}
     \nonumber\\
   &=-\frac{\sinh(\eta \frac{y_{n+1}-M-x_{n+1}}{2})}{\sinh(\eta \frac{y_{n+1}-M-x_{n+1}+4}{2})}\,
   A(\xi_n^{(1)}|x_{n+1}-2,\tilde{y}_{n+1}-N+M+\tilde m_{n+1})\, 
   \nonumber\\
   &\qquad\times
   D(\xi_n^{(0)}|y_{n+1}-M+1, \tilde z_{n+1})\,
    \underline{B}_{M+\tilde m_{n+1}-1}   (\{\mu_j\}_{\substack{j=1 \\ j\notin\hat{\mathsf{B}}_{\boldsymbol{\epsilon},\boldsymbol{\epsilon'}}^{(n)}}}^{M+m-n} 
     | x_{n+1}-2, \tilde z_{n+1}+1)\, \ket{\eta,\tilde y_{n+1}}     
     \nonumber\\
   &=-\frac{\sinh(\eta d_n)}{\sinh(\eta (d_n+2))}\,
   A(\xi_n^{(1)}|x_n-1,\tilde{y}_n-N+M+\tilde m_n+1)\, 
   \nonumber\\
   &\qquad\times
   D(\xi_n^{(0)}|y_n-M, \tilde z_n-1)\
    \underline{B}_{M+\tilde m_n}   (\{\mu_j\}_{\substack{j=1 \\ j\notin\hat{\mathsf{B}}_{\boldsymbol{\epsilon},\boldsymbol{\epsilon'}}^{(n)}}}^{M+m-n} 
     | x_n-1, \tilde z_n)\, \ket{\eta,\tilde y_n}
     \nonumber\\
   &=-\frac{\sinh(\eta d_n)}{\sinh(\eta (d_n+2))}\,
   A(\xi_n^{(1)}|x_n-1,\tilde{y}_n-N+M+\tilde m_n+1)\, 
   \nonumber\\
   &\qquad\times
   \mathsf{\hat{d}}_{n}^{(L)}\, a(\xi _{n}^{(0)})
   \prod_{\substack{ j=1 \\ j\notin \hat{\mathsf{B}}_{\boldsymbol{\epsilon},\boldsymbol{\epsilon'}}^{(n)} }}^{M+m-n}
   \frac{\sinh (\mu_j-\xi _{n}^{(1)})}{\sinh (\mu_j-\xi _{n}^{(0) })}\
    \underline{B}_{M+\tilde m_n}   (\{\mu_j\}_{\substack{j=1 \\ j\notin\hat{\mathsf{B}}_{\boldsymbol{\epsilon},\boldsymbol{\epsilon'}}^{(n)}}}^{M+m-n} 
     | x_n-2, \tilde z_n-1)\, \ket{\eta,\tilde y_n+1},
\end{align}
in which we have used that, here, $\tilde m_{n+1}=\tilde m_n+1$ and 
\begin{equation}
   x_{n+1}=x_n+1,\quad
   y_{n+1}=y_n-1,\quad
   \tilde y_{n+1}=\tilde y_n,\quad
   z_{n+1}=z_n+1,\quad
   \tilde z_{n+1}=\tilde z_n-1,
\end{equation}
and that, thanks to \eqref{MM=0-1}, the only non-zero contribution comes from the direct action of $D$.
Here we have defined
\begin{align}
  \mathsf{\hat{d}}_{n}^{(L)}
  &= \frac{e^{-(\tilde z_n-1+M+\tilde m_n)\eta}-e^{-(\tilde y_n-N)\eta}}{e^{\eta/2}}\,
        \frac{\sinh(\eta\frac{\tilde y_n-x_n-M-\tilde m_n+2}{2})}{\sinh(\eta \frac{\tilde y_n-x_n+M+\tilde m_n+2}{2})}
        \nonumber\\
  &=\frac{2\sinh(\eta(b_n+1+\tilde m_n-M))}{e^{(a_n-2+1/2)\eta}}\,
  \frac{\sinh(\eta(d_n+2))}{\sinh(\eta(d_n+2+M+\tilde m_n))}.
\end{align}
Then, the action of the last $A$-operator computed in $\xi _{n}^{(1)}$ can be both direct and
indirect and it generates a further sum over the index
\begin{equation}
         \text{\textsc{b}}_{s_{(n)}+s'_{(n)}}^{(n)}\in 
         \{1,\ldots ,M+m+1-n\}\setminus \mathsf{\hat{B}}_{\boldsymbol{\epsilon},\boldsymbol{\epsilon'}}^{(n)},
\end{equation}
so that we get
\begin{align}
  &C(\xi_n^{(1)} |y_{n+1}-M,\tilde{y}_{n+1}-N+(M+\tilde m_{n+1}))\,
  \nonumber\\
  &\qquad\times
   \underline{B}_{M+\tilde m_{n+1}}   (\{\mu_j\}_{\substack{j=1 \\ j\notin\hat{\mathsf{B}}_{\boldsymbol{\epsilon},\boldsymbol{\epsilon'}}^{(n)}}}^{M+m-n} 
     \cup\{\xi_n^{(0)}\} | x_{n+1}-1,\tilde z_{n+1})\, \ket{\eta,\tilde y_{n+1}}
  \nonumber\\
   &=-\frac{\sinh(\eta d_n)}{\sinh(\eta (d_n+2))}\,
   \mathsf{\hat{d}}_{n}^{(L)}\, a(\xi _{n}^{(0)})
   \prod_{\substack{ j=1 \\ j\notin \hat{\mathsf{B}}_{\boldsymbol{\epsilon},\boldsymbol{\epsilon'}}^{(n)} }}^{M+m-n}
   \frac{\sinh (\mu_j-\xi _{n}^{(1)})}{\sinh (\mu_j-\xi _{n}^{(0) })}\
   \sum_{\text{\textsc{b}}_{s_{(n)}+s'_{(n)}}^{(n)}} \mathsf{\hat{a}}_{n}^{(L)}\,
   d(\mu_{\text{\textsc{b}}_{s_{(n)}+s'_{(n)}}^{(n)}})
   \nonumber\\
   &\qquad\times
   \sinh(\xi_n^{(1)}-\mu_{\text{\textsc{b}}_{s_{(n)}+s'_{(n)}}^{(n)}}+\eta(\bar b_n-1))\,
   \frac{\prod_{\substack{j=1 \\ j\notin \mathsf{\hat B}_{\boldsymbol{\epsilon},\boldsymbol{\epsilon'}}^{(n)} }}^{M+m-n} \sinh( \mu_j-\mu_{\text{\textsc{b}}_{s_{(n)}+s'_{(n)}}^{(n)}}-\eta)}{\prod_{\substack{j=1 \\ j\notin \mathsf{B}_{\boldsymbol{\epsilon},\boldsymbol{\epsilon'}}^{(n)} }}^{M+m-n+1} \sinh(\mu_j- \mu_{\text{\textsc{b}}_{s_{(n)}+s'_{(n)}}^{(n)}})}
  \nonumber\\
  &\qquad\times
   \underline{B}_{M+\tilde m_n}   (\{\mu_j\}_{\substack{j=1 \\ j\notin\mathsf{B}_{\boldsymbol{\epsilon},\boldsymbol{\epsilon'}}^{(n)}}}^{M+m-n+1}  | x_n-1,\tilde z_n)\, \ket{\eta,\tilde y_n},
\end{align}
with
\begin{align}
  \mathsf{\hat{a}}_{n}^{(L)}
  &=\frac{e^{-(x_n-1-M-\tilde m_n)\eta}-e^{-(\tilde y_n+1-N)\eta}}{e^{\eta/2}\,\sinh(\eta\frac{\tilde y_n-N-\tilde z_n +1-M-\tilde m_n}{2})}
  \nonumber\\
  &=\frac{2\sinh(\eta(d_n+2+M+\tilde m_n))}{e^{c_n\eta+\eta/2}\,\sinh(\eta(b_n+1-M+\tilde m_n))}.
\end{align}
Setting
\begin{align}
  \mathsf{c}_{n}^{(L)}
  &=\frac{\sinh(\eta d_n)}{\sinh(\eta (d_n+2))}\,
   \mathsf{\hat{d}}_{n}^{(L)}\,\mathsf{\hat{a}}_{n}^{(L)}
   \nonumber\\
   &=   \frac{4\sinh(b_n+1+\tilde m_n-M)}{e^{(a_n+c_n-1)\eta}}\,
  \frac{\sinh(\eta d_n)}{\sinh(\eta(\bar b_n+1-M-\tilde m_n))}
  \nonumber\\
  &= \frac{4\sinh(\eta d_n)}{e^{(a_n+c_n-1)\eta}},
\end{align}
and gathering all contributions, we obtain that the action of the monomial \eqref{mon-21} on the state \eqref{BulkBLS} results into
\begin{multline}
  -\frac{e^{2\eta c_n}\,  \mathsf{a}_n^{(R)} \,\mathsf{c}_{n}^{(L)}}{4\sinh^{2}(\eta d_{n})}\,
  \sum_{\mathsf{B}_{\boldsymbol{\epsilon},\boldsymbol{\epsilon'}}^{(n)}}
    \frac{d(\mu_{\text{\textsc{b}}_1^{(n)}})}{d(\xi_n^{(1)})}\,
    \frac{\sinh(\xi_n^{(1)}-\mu_{\text{\textsc{b}}_1^{(n)}}+\eta (b_n+1))}{\sinh(\xi_n^{(1)}-\mu_{\text{\textsc{b}}_1^{(n)}}+\eta)}\,    
     \frac{\prod_{j=1}^M \sinh (\mu_j-\mu_{\text{\textsc{b}}_1^{(n)}}-\eta)}{ \prod_{\substack{j=1 \\ j\neq \text{\textsc{b}}_1^{(n)}}}^{M}\sinh (\mu_j-\mu_{\text{\textsc{b}}_1^{(n)}})} 
   \\
   \times
   \mathcal{F}_{\mathsf{B}_{\boldsymbol{\epsilon},\boldsymbol{\epsilon'}}^{(n+1)}}(\{\mu_j\}_{j=1}^{M}\setminus\{\mu_{\text{\textsc{b}}_1^{(n)}}\}\cup\{\xi_n^{(0)}\},\{\xi_j^{(1)}\}_{j=n+1}^{m} \mid \alpha,\beta,x)\
   \\
   \times
   d(\mu_{\text{\textsc{b}}_{s_{(n)}+s'_{(n)}}^{(n)}}\!)\,
   \sinh(\xi_n^{(1)}-\mu_{\text{\textsc{b}}_{s_{(n)}+s'_{(n)}}^{(n)}}\hspace{-2mm}-\eta(1-\bar b_n))\,
   \frac{\prod_{\substack{j=1 \\ j\notin \mathsf{\hat B}_{\boldsymbol{\epsilon},\boldsymbol{\epsilon'}}^{(n)} }}^{M+m-n} 
   \sinh( \mu_j-\mu_{\text{\textsc{b}}_{s_{(n)}+s'_{(n)}}^{(n)}}\hspace{-2mm}-\eta)
   \frac{\sinh (\mu_j-\xi _{n}^{(1)})}{\sinh (\mu_j-\xi _{n}^{(0) })}
   }{\prod_{\substack{j=1 \\ j\notin \mathsf{B}_{\boldsymbol{\epsilon},\boldsymbol{\epsilon'}}^{(n)} }}^{M+m-n+1} \sinh( \mu_j-\mu_{\text{\textsc{b}}_{s_{(n)}+s'_{(n)}}^{(n)}}\hspace{-2mm})}
  \\
   \times
   \underline{B}_{M+\tilde m_n}   (\{\mu_j\}_{\substack{j=1 \\ j\notin\mathsf{B}_{\boldsymbol{\epsilon},\boldsymbol{\epsilon'}}^{(n)}}}^{M+m-n+1}  | x_n-1,\tilde z_n)\, \ket{\eta,\tilde y_n}.
\end{multline}
We therefore obtain the following recursion relation
\begin{multline}\label{rec-F-21}
  \mathcal{F}_{\mathsf{B}_{\boldsymbol{\epsilon},\boldsymbol{\epsilon'}}^{(n)}}(\{\mu_j\}_{j=1}^{M},\{\xi_j^{(1)}\}_{j=n}^{m} \mid \alpha,\beta,x)
  = -\frac{e^{2\eta c_n}\, \mathsf{a}_n^{(R)}\,\mathsf{c}_n^{(L)}}{4\sinh^{2}(\eta d_{n})}\,
  \\
  \times
  \frac{d(\mu_{\text{\textsc{b}}_1^{(n)}})}{d(\xi _{n}^{(1)})}\,
    \frac{\sinh(\xi_n^{(1)}-\mu_{\text{\textsc{b}}_1^{(n)}}+\eta (b_n+1))}{\sinh(\xi_n^{(1)}-\mu_{\text{\textsc{b}}_1^{(n)}}+\eta)}\,    
     \frac{\prod_{j=1}^M \sinh (\mu_j-\mu_{\text{\textsc{b}}_1^{(n)}}-\eta)}{ \prod_{\substack{j=1 \\ j\neq \text{\textsc{b}}_1^{(n)}}}^{M}\sinh (\mu_j-\mu_{\text{\textsc{b}}_1^{(n)}})} 
     \\
     \times
   d(\mu_{\text{\textsc{b}}_{s_{(n)}+s'_{(n)}}^{(n)}}\hspace{-1mm})\,
   \sinh(\xi_n^{(1)}-\mu_{\text{\textsc{b}}_{s_{(n)}+s'_{(n)}}^{(n)}}\hspace{-3mm}-\eta(1-\bar b_n))\,
   \frac{\prod\limits_{j=1}^{M+m-n}  \sinh( \mu_j-\mu_{\text{\textsc{b}}_{s_{(n)}+s'_{(n)}}^{(n)}}\hspace{-3mm}-\eta)\,\frac{\sinh (\mu_j-\xi _{n}^{(1)})}{\sinh (\mu_j-\xi _{n}^{(0) })}}
   {\prod\limits_{j=1}^{s_{(n)}+s'_{(n)}-1}\hspace{-3mm} \sinh( \mu_{\text{\textsc{b}}_j}-\mu_{\text{\textsc{b}}_{s_{(n)}+s'_{(n)}}^{(n)}}\hspace{-3mm}-\eta)\,\frac{\sinh (\mu_{\text{\textsc{b}}_j}-\xi _{n}^{(1)})}{\sinh (\mu_{\text{\textsc{b}}_j}-\xi _{n}^{(0) })}}
   \\
   \times
   \frac{\prod\limits_{j=1}^{s_{(n)}+s'_{(n)}-1}\hspace{-3mm} \sinh( \mu_{\text{\textsc{b}}_j}-\mu_{\text{\textsc{b}}_{s_{(n)}+s'_{(n)}}^{(n)}}\hspace{-1mm})}
   {\prod\limits_{\substack{j=1 \\ j\not= \text{\textsc{b}}_{s_{(n)}+s'_{(n)}}^{(n)} }}^{M+m-n+1} \hspace{-3mm}\sinh( \mu_j-\mu_{\text{\textsc{b}}_{s_{(n)}+s'_{(n)}}^{(n)}}\hspace{-1mm})}\
      \mathcal{F}_{\mathsf{B}_{\boldsymbol{\epsilon},\boldsymbol{\epsilon'}}^{(n+1)}}(\{\mu_j\}_{j=1}^{M}\setminus\{\mu_{\text{\textsc{b}}_1^{(n)}}\}\cup\{\xi_n^{(0)}\},\{\xi_j^{(1)}\}_{j=n+1}^{m} \mid \alpha,\beta,x) ,
\end{multline}
which can equivalently be rewritten, using \eqref{DevF}, as
\begin{multline}
  \mathcal{F}_{\mathsf{B}_{\boldsymbol{\epsilon},\boldsymbol{\epsilon'}}^{(n)}}(\{\mu_j\}_{j=1}^{M},\{\xi_j^{(1)}\}_{j=n}^{m} \mid \alpha,\beta,x)
  = -\frac{e^{\eta (c_n-a_n+1)}\, }{\sinh(\eta d_{n})}\,\mathsf{c}_n^\mathrm{(tot)}\,
  \\
  \times
  \frac{d(\mu_{\text{\textsc{b}}_1^{(n)}})\, 
   \frac{\prod_{j=1}^M \sinh (\mu_j-\mu_{\text{\textsc{b}}_1^{(n)}}-\eta)}{ \prod_{\substack{j=1 \\ j\neq \text{\textsc{b}}_1^{(n)}}}^{M}\sinh (\mu_j-\mu_{\text{\textsc{b}}_1^{(n)}})} \
   d(\mu_{\text{\textsc{b}}_{s_{(n)}+s'_{(n)}}^{(n)}})\,
      \frac{\prod_{j=1}^{M}  \sinh( \mu_j-\mu_{\text{\textsc{b}}_{s_{(n)}+s'_{(n)}}^{(n)}}\hspace{-2mm}-\eta)} {\prod_{\substack{j=1 \\ j\not= \text{\textsc{b}}_{s_{(n)}+s'_{(n)}}^{(n)} }}^M \sinh( \mu_j-\mu_{\text{\textsc{b}}_{s_{(n)}+s'_{(n)}}^{(n)}}\hspace{-1mm})}
     }{d(\xi _{n}^{(1)})\,\prod_{k=1}^M\frac{\sinh(\mu_k-\xi_n^{(0)})}{\sinh(\mu_k-\xi_n^{(1)})}}
  \\
  \times
    \prod_{j=2}^{s_{(n)}+s'_{(n)}}\frac{ \sinh(\mu_{\text{\textsc{b}}_1^{(n)}}-\mu_{\text{\textsc{b}}_j^{(n)}})}{\sinh(\mu_{\text{\textsc{b}}_1^{(n)}}-\mu_{\text{\textsc{b}}_j^{(n)}}-\eta)}\
    \prod\limits_{j=1}^{s_{(n)}+s'_{(n)}-1}\hspace{-2mm}
    \frac{ \sinh(\mu_{\text{\textsc{b}}_j}- \mu_{\text{\textsc{b}}_{s_{(n)}+s'_{(n)}}^{(n)}}\hspace{-1mm})}{\sinh( \mu_{\text{\textsc{b}}_j}-\mu_{\text{\textsc{b}}_{s_{(n)}+s'_{(n)}}^{(n)}}\hspace{-2mm}-\eta)}
  \\
  \times
   \frac{\sinh(\mu_{\text{\textsc{b}}_1^{(n)}}-\xi_n^{(1)}-\eta (b_n+1))}{\sinh(\mu_{\text{\textsc{b}}_1^{(n)}}-\xi_n^{(1)})}\,
     \prod_{j=n+1}^m\frac{\sinh(\mu_{\text{\textsc{b}}_1^{(n)}}-\xi_j^{(1)}-\eta)}{\sinh(\mu_{\text{\textsc{b}}_1^{(n)}}-\xi_j^{(1)})}
     \\
     \times
   \sinh(\xi_n^{(1)}-\mu_{\text{\textsc{b}}_{s_{(n)}+s'_{(n)}}^{(n)}}\hspace{-2mm}-\eta(1-\bar b_n))\,
   \frac{\prod\limits_{j=M+1}^{M+m-n}  \sinh( \mu_j-\mu_{\text{\textsc{b}}_{s_{(n)}+s'_{(n)}}^{(n)}}\hspace{-2mm}-\eta)}
     {\prod\limits_{\substack{j=M+1 \\ j\not= \text{\textsc{b}}_{s_{(n)}+s'_{(n)}}^{(n)} }}^{M+m-n+1} \sinh( \mu_j-\mu_{\text{\textsc{b}}_{s_{(n)}+s'_{(n)}}^{(n)}}\hspace{-1mm})}\
     \frac{\sinh(\mu_{\text{\textsc{b}}_{s_{(n)}+s'_{(n)}}^{(n)}}\hspace{-2mm}-\xi_n^{(1)})}{\sinh(\mu_{\text{\textsc{b}}_{s_{(n)}+s'_{(n)}}^{(n)}}\hspace{-2mm}-\xi_n^{(0)})}
  \\
  \times
   \mathcal{F}_{\mathsf{B}_{\boldsymbol{\epsilon},\boldsymbol{\epsilon'}}^{(n+1)}}(\{\mu_j\}_{j=1}^{M},\{\xi_j^{(1)}\}_{j=n+1}^{m} \mid \alpha,\beta,x),
\end{multline}
with
\begin{align}
   \mathsf{c}_n^\mathrm{(tot)}
  &=\frac{e^{\eta (c_n+a_n-1)}\, \mathsf{a}_n^{(R)}\,\mathsf{c}_n^{(L)}}{4\sinh(\eta d_{n})}
  = \frac{2\sinh(\eta(d_n+1+M))}{e^{\eta c_n+\eta/2}\, \sinh(\eta(b_n-M))}
    =[\mathsf{b}_{n+1}^\mathrm{(tot)}]^{-1}\nonumber\\
  &= \frac{2\sinh(\eta(d_{n-1}+M))}{e^{\frac\eta 2(x+\alpha+\beta+N)-M\eta}\, \sinh(\eta(b_{n-1}-M-1))}
  =[\mathsf{b}_{n-1}^\mathrm{(tot)}]^{-1}.
\end{align}
Noticing that $i_1^{(n)}=n$, $i_{p+1}^{(n)}=i_p^{(n+1)}$ ($1\le p\le s_{(n)}+s'_{(n)}-2$) and $i_{s_{(n)}+s'_{(n)}}^{(n)}=n$, and that
\begin{equation}
   f^{(n,m)}_{\boldsymbol{\epsilon},\boldsymbol{\epsilon'}}(\alpha,\beta,x)
   = \mathsf{c}^\mathrm{(tot)}_n\,
   f^{(n+1,m)}_{\boldsymbol{\epsilon},\boldsymbol{\epsilon'}}(\alpha,\beta,x),
\end{equation}
we get the result.
\end{proof}

The previous result implies the following one for the action of the generic
element of the basis \eqref{Local-Basis} on the gauged
Bethe-like bulk states \eqref{bulk-state}.

\begin{theorem}\label{th-act-local-bulk}
For any given $m$-tuples $\boldsymbol{\epsilon}\equiv(\epsilon_1,\ldots,\epsilon_m), \boldsymbol{\epsilon'}\equiv(\epsilon'_1,\ldots,\epsilon'_m)\in \{1,2\}^{m}$ 
and gauge parameters $\alpha,\beta,x$, the action of the generic element $\prod_{n=1}^{m}E_{n}^{\epsilon _{n}^{\prime },\epsilon _{n}}(\xi_{n}|(a_{n},b_{n}),(\bar{a}_{n},\bar{b}_{n}))$ of the basis \eqref{Local-Basis}, with
\begin{alignat}{2}
   &a_n =\alpha+1,\qquad & 
   &b_n=\beta-\sum_{r=1}^n (-1)^{\epsilon_r},   \\
   &\bar{a}_n =\alpha-1,  
   \qquad &
   &\bar{b}_n=\beta+\sum_{r=n+1}^{m}(-1)^{\epsilon'_r}-\sum_{r=1}^{m}(-1)^{\epsilon_r},
\end{alignat}
on the gauged Bethe-like bulk states \eqref{bulk-state} is
\begin{multline}\label{act-local-op-bulk}
  \prod_{n=1}^m E_n^{\epsilon'_n,\epsilon_n}(\xi_n | (a_n ,b_n),(\bar{a}_n,\bar{b}_n))\,
   \underline{B}_M(\{\mu_i\}_{i=1}^M | x-1,\alpha-\beta)\, \ket{\eta,\alpha +\beta +N-M-1}
  \\
  =\sum_{\mathsf{B}_{\boldsymbol{\epsilon,\epsilon'}}} 
    \mathcal{F}_{\mathsf{B}_{\boldsymbol{\epsilon,\epsilon'}}}(\{\mu_j\}_{j=1}^M,\{\xi_j^{(1)}\}_{j=1}^m\mid\alpha,\beta,x)\
    \\
    \times
  \underline{B}_{M+\tilde m_{\boldsymbol{\epsilon,\epsilon'}}}(\{\mu_j\}_{j\in\mathsf{A}_{\boldsymbol{\epsilon,\epsilon'}}} | x-1,\alpha-\beta-2\tilde m_{\boldsymbol{\epsilon,\epsilon'}})\, 
  \ket{\eta,\alpha +\beta +N-M-1+\tilde m_{\boldsymbol{\epsilon,\epsilon'}}}
\end{multline}
in which we have defined $\mu_{M+j}=\xi_{m+1-j}^{(1)}$ for $j\in\{1,\ldots,m\}$ and
\begin{equation}
   \tilde m_{\boldsymbol{\epsilon,\epsilon'}}=\sum_{r=1}^m(\epsilon'_r-\epsilon_r)=m-(s+s').
\end{equation}
In \eqref{act-local-op-bulk}, the sum runs over all possible sets of integers $\mathsf{B}_{\boldsymbol{\epsilon,\epsilon'}}=\{\text{\textsc{b}}_1,\ldots,\text{\textsc{b}}_{s+s'}\}$ whose elements satisfy the conditions
\begin{equation}\label{beta-cond}
  \begin{cases}
  \text{\textsc{b}}_{p}\in \{1,\ldots ,M\}\setminus \{\text{\textsc{b}}_1,\ldots ,\text{\textsc{b}}_{p-1}\}\qquad 
  & \text{for}\quad 0<p\leq s, \\ 
  \text{\textsc{b}}_{p}\in \{1,\ldots ,M+m+1-i_{p}\}\setminus \{\text{\textsc{b}}_{1},\ldots ,\text{\textsc{b}}_{p-1}\}\qquad  
  & \text{for}\quad s<p\leq s+s',
\end{cases}
\end{equation}
whereas
\begin{equation}\label{def-setA}
  \mathsf{A}_{\boldsymbol{\epsilon,\epsilon'}}\equiv \{ \text{\textsc{a}}_a,\ldots,\text{\textsc{a}}_{M+ \tilde m_{\boldsymbol{\epsilon,\epsilon'}}}\}=\{1,\ldots,M+m\}\setminus  \mathsf{B}_{\boldsymbol{\epsilon,\epsilon'}}.
\end{equation}
Finally,
\begin{align}\label{Def-F_B}
     &\mathcal{F}_{\mathsf{B}_{\boldsymbol{\epsilon},\boldsymbol{\epsilon'}}}(\{\mu_j\}_{j=1}^{M},\{\xi_j^{(1)}\}_{j=1}^{m} \mid \alpha,\beta,x)
     = \prod_{n=1}^m\frac{e^\eta}{\sinh(\eta b_n)}\,
     f^{(1,m)}_{\boldsymbol{\epsilon},\boldsymbol{\epsilon'}}(\alpha,\beta,x)\, 
     \nonumber\\
     &\qquad\times
     \frac{\prod\limits_{j=1}^{s+s'} \bigg[ d(\mu_{\text{\textsc{b}}_j}) \,           
             \frac{\prod_{k=1}^M\sinh(\mu_k-\mu_{\text{\textsc{b}}_j}-\eta)}
                    {\prod_{\substack{k=1 \\ k\neq \text{\textsc{b}}_j}}^M\sinh(\mu_k-\mu_{\text{\textsc{b}}_j})}\bigg] }
           {\prod\limits_{j=1}^m \bigg[ d(\xi_j^{(1)}) \,
            \prod\limits_{k=1}^M\frac{\sinh(\mu_k-\xi_j^{(1)}-\eta)}{\sinh(\mu_k-\xi_j^{(1)})}
            \bigg]}  
           \,
     \prod_{1\leq i<j\leq s+s'}
     \frac{\sinh (\mu _{\text{\textsc{b}}_i}-\mu_{\text{\textsc{b}}_j})}
            {\sinh (\mu _{\text{\textsc{b}}_i}-\mu_{\text{\textsc{b}}_j}-\eta )}
     \nonumber\\
     &\qquad\times
     \prod_{p=1}^{s}\vast[ 
     \sinh(\xi_{i_p}^{(1)}-\mu_{\text{\textsc{b}}_p}+\eta(1+b_{i_p}))\,
     \frac{\prod_{k=i_p+1}^m \sinh(\mu_{\text{\textsc{b}}_p}-\xi_k^{(1)}-\eta)}
            {\prod_{k=i_p}^m \sinh(\mu_{\text{\textsc{b}}_p}-\xi_k^{(1)})} 
     \vast]    
     \nonumber\\
     &\qquad\times
     \prod_{p=s+1}^{s+s'}\vast[ 
     \sinh(\xi_{i_p}^{(1)}-\mu_{\text{\textsc{b}}_p}-\eta(1-\bar b_{i_p}))\,
     \frac{\prod_{k=i_p+1}^m \sinh(\xi_k^{(1)}-\mu_{\text{\textsc{b}}_p}-\eta)}
            {\prod_{\substack{k=i_p \\ k\not= M+m+1-{\text{\textsc{b}}_p}}}^m \sinh(\xi_k^{(1)}-\mu_{\text{\textsc{b}}_p})} 
     \vast]      ,     
\end{align}
where
\begin{equation}\label{Def-f}
  f^{(1,m)}_{\boldsymbol{\epsilon},\boldsymbol{\epsilon'}}(\alpha,\beta,x)
   = \begin{cases}
     \displaystyle \prod\limits_{k=1}^{\tilde m_{\boldsymbol{\epsilon,\epsilon'}}}\frac{e^{(\frac{x+\alpha+\beta+N}2-M)\eta}\,\sinh(\eta(\beta+k-M-1))}{2\sinh(\eta(\frac{-x+\alpha+\beta+N-1}2+k))} &\text{if}\quad \tilde m_{\boldsymbol{\epsilon,\epsilon'}}>0,\\
     1  &\text{if}\quad \tilde m_{\boldsymbol{\epsilon,\epsilon'}}=0,\\
    \displaystyle \prod\limits_{k=1}^{|\tilde m_{\boldsymbol{\epsilon,\epsilon'}}|}\frac{2\sinh(\eta(\frac{-x+\alpha+\beta+N-1}2-k+1))}{e^{(\frac{x+\alpha+\beta+N}2-M)\eta}\,\sinh(\eta(\beta-k-M))} &\text{if}\quad \tilde m_{\boldsymbol{\epsilon,\epsilon'}}<0.
   \end{cases}
\end{equation}
In all these expressions we have used the notations \eqref{def-s-s'-i_p}.
\end{theorem}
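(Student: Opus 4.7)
The plan is to deduce the theorem as the $n=1$ specialization of Proposition~\ref{prop-rec-action}, once the product of local operators on the left-hand side has been traded for the monomial $\underline{M}^{(1,m)}_{\boldsymbol{\epsilon},\boldsymbol{\epsilon'}}(\alpha,\beta,x)$ of gauged bulk Yang-Baxter generators via the reconstruction of Proposition~\ref{prop-prod-gauge}. So the first step is to apply \eqref{reconst-basis}, choosing the auxiliary gauge parameters $(c_n,d_n)$ exactly as in Proposition~\ref{prop-rec-action}. This rewrites
\[
   \prod_{n=1}^{m} E_{n}^{\epsilon'_n,\epsilon_n}(\xi_n|(a_n,b_n),(\bar{a}_n,\bar{b}_n))
   =\prod_{i=1}^m\frac{\det S(-\xi_i|c_i,d_i)}{\det S(-\xi_i|a_i,b_i)}\ \underline{M}^{(1,m)}_{\boldsymbol{\epsilon},\boldsymbol{\epsilon'}}(\alpha,\beta,x),
\]
with a choice of $x$ that keeps all the $S$-matrices invertible.

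Next I would invoke Proposition~\ref{prop-rec-action} at $n=1$. The key identifications are then $x_{1}=x$, $z_{1}=\alpha-\beta$, $y_{1}=\alpha+\beta+N-M-1$, $s_{(1)}=s$, $s'_{(1)}=s'$, $i_p^{(1)}=i_p$, and $\tilde m_1=\tilde m_{\boldsymbol{\epsilon,\epsilon'}}$, so that the sum over $\mathsf{B}^{(1)}_{\boldsymbol{\epsilon,\epsilon'}}$ with the constraints \eqref{Def-Bss} becomes the sum over $\mathsf{B}_{\boldsymbol{\epsilon,\epsilon'}}$ with the constraints \eqref{beta-cond}, and the resulting state on the right-hand side is exactly the gauged Bethe-like bulk state that appears in \eqref{act-local-op-bulk}.

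It remains to check that the scalar prefactors combine correctly into \eqref{Def-F_B}. Using \eqref{mat-S}, one has $\det S(-\xi|\gamma,\delta)=-2e^{-\xi-\eta\gamma}\sinh(\eta\delta)$ (up to the standard sign), so
\[
   \frac{\det S(-\xi_i|c_i,d_i)}{\det S(-\xi_i|a_i,b_i)}\cdot\frac{e^{\eta(c_i-a_i+1)}}{\sinh(\eta d_i)}
   =\frac{e^{\eta(c_i-a_i)}\sinh(\eta d_i)}{\sinh(\eta b_i)}\cdot\frac{e^{\eta(c_i-a_i+1)}}{\sinh(\eta d_i)}\cdot e^{\eta(a_i-c_i)}\cdot e^{\eta(a_i-c_i)},
\]
and after canceling the $e^{\pm\eta c_i}$ and $\sinh(\eta d_i)$ factors the overall coefficient reduces to $\prod_{n=1}^m e^{\eta}/\sinh(\eta b_n)$, which is precisely the prefactor in \eqref{Def-F_B}. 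The function $f^{(1,m)}_{\boldsymbol{\epsilon},\boldsymbol{\epsilon'}}$ in \eqref{Def-f} coincides with \eqref{Def-f-rec} at $n=1$, and the combinatorial factors involving the $\mathsf{B}_{\boldsymbol{\epsilon,\epsilon'}}$, the shifted inhomogeneities $\xi_j^{(1)}$ and the gauge parameters $b_{i_p},\bar b_{i_p}$ transcribe directly, so the assertion \eqref{act-local-op-bulk} follows.

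The only real obstacle I foresee is the bookkeeping of the auxiliary gauge parameter $x$ and of the $c_k,d_k$: it has to be checked that any $x$ compatible with the invertibility of $\mathfrak{S}^{1,m}_{(\mathbf{c},\mathbf{d})}$ yields the same final answer, so that the statement, which does not involve $x$ at all, is unambiguous. This follows from the fact that the recursion in Proposition~\ref{prop-rec-action} packs the $x$-dependence precisely into the combinations $x_n-1$ in the $B$-operator arguments and into $f^{(1,m)}_{\boldsymbol{\epsilon,\epsilon'}}(\alpha,\beta,x)$, while the remaining $x$-dependent pieces coming from the reconstruction cancel against those from $\underline{M}^{(1,m)}_{\boldsymbol{\epsilon,\epsilon'}}$; this cancellation is the content of the prefactor computation above. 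Modulo this $x$-independence check, the theorem is a direct corollary of the proposition.
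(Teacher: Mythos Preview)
Your overall strategy is exactly the paper's: the theorem is stated as a direct consequence of the reconstruction \eqref{reconst-basis} together with the $n=1$ case of Proposition~\ref{prop-rec-action}, and your identifications $x_1=x$, $z_1=\alpha-\beta$, $y_1=\alpha+\beta+N-M-1$, $\tilde m_1=\tilde m_{\boldsymbol{\epsilon,\epsilon'}}$ are the right ones.

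Two points need correcting, however. First, your displayed prefactor computation is garbled: from \eqref{mat-S} one has $\det S(-\xi|\gamma,\delta)=-2e^{-\xi-\eta\gamma}\sinh(\eta\delta)$, so
\[
   \frac{\det S(-\xi_i|c_i,d_i)}{\det S(-\xi_i|a_i,b_i)}
   =e^{\eta(a_i-c_i)}\,\frac{\sinh(\eta d_i)}{\sinh(\eta b_i)},
\]
with the sign $a_i-c_i$ in the exponent, not $c_i-a_i$; multiplying by $e^{\eta(c_i-a_i+1)}/\sinh(\eta d_i)$ then gives $e^\eta/\sinh(\eta b_i)$ cleanly, without the two extra $e^{\eta(a_i-c_i)}$ factors you appended. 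Second, your final paragraph worries about $x$-independence of the statement, but the theorem \emph{does} depend on $x$: it is listed among the gauge parameters in the hypothesis, it enters the resulting state through the argument $x-1$ of the $\underline{B}_{M+\tilde m_{\boldsymbol{\epsilon,\epsilon'}}}$ operators, and it appears explicitly in $f^{(1,m)}_{\boldsymbol{\epsilon,\epsilon'}}(\alpha,\beta,x)$ via \eqref{Def-f}. So there is nothing to check there; the proposition and the reconstruction are simply applied with the same $x$ throughout.
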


\begin{proof}
This is a direct consequence of the reconstruction \eqref{reconst-basis} and of Proposition~\ref{prop-rec-action}.
\end{proof}

Note that the expression of this action appears as the direct gauge generalization of (5.11)-(5.13) of \cite{KitKMNST07}, once we take into account the change $\eta$ into $-\eta$ and our slightly different definition of the inhomogeneity parameters with respect to \cite{KitKMNST07} which result here in a dependence of the formula into $\xi_n^{(1)}$, $1\le n\le m$, instead of $\xi_n$, $1\le n\le m$, in (5.13) of \cite{KitKMNST07}.

\subsection{Action of local operators on gauged boundary states}
\label{sec-act-boundary}

We now express, for the very specific boundary conditions that we consider in the framework of this paper, corresponding to the choice \eqref{Special-K+} of the boundary matrix, the action of the generic element of the basis of local operators \eqref{Local-Basis} on a generic gauged boundary Bethe-type state of the form
\begin{equation} \label{BoundarySS}
   \underline{\widehat{\mathcal B}}_{-,M}(\{\mu_i\}_{i=1}^M|\alpha-\beta+1)\, 
   \ket{\eta,\alpha+\beta+N-1-2M}.
\end{equation}
The result of this action is a direct consequence of Theorem~\ref{th-act-local-bulk} and of the boundary-bulk decomposition \eqref{Boundary-bulk-Bethe}:

\begin{theorem}
Let $\prod_{n=1}^{m}E_{n}^{\epsilon _{n}^{\prime },\epsilon _{n}}(\xi
_{n}|(a_{n},b_{n}),(\bar{a}_{n},\bar{b}_{n}))$ be the generic element of the
basis $\left( \ref{Local-Basis}\right) $ of local operators on the first m
sites of the chain, where we have defined
\begin{alignat}{2}
   &a_n=\alpha +1,\qquad & &b_n=\beta -\sum_{r=1}^n(-1)^{\epsilon_r},\\
   &\bar{a}_n =\alpha -1,\qquad & &\bar{b}_n=\beta+\sum_{r=n+1}^{m}(-1)^{\epsilon_r^{\prime }}-\sum_{r=1}^{m}(-1)^{\epsilon_r}.
\end{alignat}
Then, its action on the boundary separate states \eqref{BoundarySS} reads
\begin{multline}\label{act-boundary}
   \prod_{n=1}^m E_n^{\epsilon'_n,\epsilon_n}(\xi_n|(a_n,b_n),(\bar{a}_n,\bar{b}_n))\,
   \underline{\widehat{\mathcal B}}_{-,M}(\{\mu_i\}_{i=1}^M|\alpha-\beta+1)\, 
   \ket{\eta,\alpha+\beta+N-1-2M}
   \\
   = \sum_{\mathsf{B}_{\boldsymbol{\epsilon,\epsilon'}}} 
    \mathcal{\bar F}_{\mathsf{B}_{\boldsymbol{\epsilon,\epsilon'}}}(\{\mu_j\}_{j=1}^M,\{\xi_j^{(1)}\}_{j=1}^m | \beta)\
    \\
    \times
    \underline{\widehat{\mathcal B}}_{-,M+\tilde m_{\boldsymbol{\epsilon,\epsilon'}}}(\{\mu_i\}_{\substack{i=1\\ i\notin\mathsf{B}_{\boldsymbol{\epsilon,\epsilon'}}}}^{M+m}|\alpha-\beta+1-2\tilde m_{\boldsymbol{\epsilon,\epsilon'}})\, 
   \ket{\eta,\alpha+\beta+N-1-2M},
\end{multline}
where we have defined $\mu _{M+j}=\xi _{m+1-j}^{ (1) }$ for $j\in \{1,\ldots ,m\}$, and
\begin{equation}
   \tilde m_{\boldsymbol{\epsilon,\epsilon'}}=\sum_{r=1}^m(\epsilon'_r-\epsilon_r)=m-(s+s').
\end{equation}
The sum runs over all possible sets of integers $\mathsf{B}_{\boldsymbol{\epsilon,\epsilon'}}=\{\text{\textsc{b}}_1,\ldots,\text{\textsc{b}}_{s+s'}\}$ whose elements satisfy the conditions
\begin{equation} 
  \begin{cases}
  \text{\textsc{b}}_{p}\in \{1,\ldots ,M\}\setminus \{\text{\textsc{b}}_1,\ldots ,\text{\textsc{b}}_{p-1}\}\qquad 
  & \text{for}\quad 0<p\leq s, \\ 
  \text{\textsc{b}}_{p}\in \{1,\ldots ,M+m+1-i_{p}\}\setminus \{\text{\textsc{b}}_{1},\ldots ,\text{\textsc{b}}_{p-1}\}\qquad  
  & \text{for}\quad s<p\leq s+s',
\end{cases}
\end{equation}
and 
\begin{align}\label{act-bound-coeff}
&\mathcal{\bar F}_{\mathsf{B}_{\boldsymbol{\epsilon,\epsilon'}}}(\{\mu_j\}_{j=1}^M,\{\xi_j^{(1)}\}_{j=1}^m | \beta) 
=(-1)^{ (N+1)\tilde m_{\boldsymbol{\epsilon,\epsilon'}}} 
  e^{\eta \tilde m_{\boldsymbol{\epsilon,\epsilon'}}(\beta+ \tilde m_{\boldsymbol{\epsilon,\epsilon'}})}
  \prod_{n=1}^m\frac{e^\eta}{\sinh(\eta b_n)}\,
  \sum_{\sigma _{\alpha _{+}}=\pm }
  \frac{\prod_{j=1}^{s+s'}d(\mu_{\text{\textsc{b}}_{j}}^{\sigma })}{\prod_{j=1}^{m}d(\xi _{j}^{(1)})}\ 
  \notag \\
&\qquad \times 
  \frac{H_{\sigma _{\alpha_+}}(\{\mu _{\alpha_+}\})}{H_{1}(\{\xi _{\gamma_+}^{(1)}\})}\,
  \prod_{i\in \alpha _-}\prod_{\epsilon =\pm }\left\{
  \prod_{j\in \alpha_+}\frac{\sinh (\mu_j^\sigma +\epsilon \mu_i+\eta )}{\sinh (\mu_j^\sigma +\epsilon \mu_i)}
  \prod_{j\in\gamma_+}\frac{\sinh (\xi_j^{(1)}+\epsilon \mu_i)}{\sinh (\xi_j^{(0)}+\epsilon \mu_i)}\right\}  
  \notag \\
&\qquad \times
  \prod_{i\in \alpha_+}\left\{ \prod_{j\in \gamma_+}\frac{\sinh (\xi_j^{(1)}-\mu_i^{\sigma })}{\sinh(\xi_j^{(0) }-\mu_i^\sigma)}\ 
  \frac{\prod_{j\in\alpha_+}\sinh (\mu_j^\sigma -\mu_i^\sigma-\eta )}{\prod_{j\in \alpha_+\setminus\{i\}}\sinh (\mu_j^\sigma-\mu_i^\sigma)}\right\}  
  \prod_{1\leq i<j\leq s+s'}\frac{\sinh (\mu _{\text{\textsc{b}}_i}^\sigma-\mu_{\text{\textsc{b}}_j}^\sigma )}{\sinh (\mu _{\text{\textsc{b}}_i}^\sigma-\mu_{\text{\textsc{b}}_j}^\sigma -\eta )}  
   \notag \\
    &\qquad\times
     \prod_{p=1}^{s}\vast[ 
     \sinh(\xi_{i_p}^{(1)}-\mu_{\text{\textsc{b}}_p}^\sigma+\eta(1+b_{i_p}))\,
     \frac{\prod_{k=i_p+1}^m \sinh(\mu_{\text{\textsc{b}}_p}^\sigma-\xi_k^{(1)}-\eta)}
            {\prod_{k=i_p}^m \sinh(\mu_{\text{\textsc{b}}_p}^\sigma-\xi_k^{(1)})} 
     \vast]    
     \nonumber\\
     &\qquad\times
     \prod_{p=s+1}^{s+s'}\vast[ 
     \sinh(\xi_{i_p}^{(1)}-\mu_{\text{\textsc{b}}_p}^\sigma-\eta(1-\bar b_{i_p}))\,
     \frac{\prod_{k=i_p+1}^m \sinh(\xi_k^{(1)}-\mu_{\text{\textsc{b}}_p}^\sigma-\eta)}
            {\prod_{\substack{k=i_p \\ k\not= M+m+1-{\text{\textsc{b}}_p}}}^m \sinh(\xi_k^{(1)}-\mu_{\text{\textsc{b}}_p}^\sigma)} 
     \vast]   .   
\end{align}
Here, the sum is performed over all $\sigma_{j}\in \{+,-\}$ for $j\in \alpha _{+}$, we have defined $\mu _{i}^{\sigma}=\sigma _{i}\mu _{i}$ for $i\in \mathsf{B}_{\boldsymbol{\epsilon,\epsilon'}}$, with $\sigma_{i}=1$ if $i>M$, and 
\begin{alignat}{2}
& \alpha _{+}=\mathsf{B}_{\boldsymbol{\epsilon,\epsilon'}}\cap \{1,\ldots ,M\}, & 
& \alpha_{-}=\{1,\ldots ,M\}\setminus \alpha _{+}, \\
& \gamma _{-}=\{M+m+1-j\}_{j\in \mathsf{B}_{\boldsymbol{\epsilon,\epsilon'}}\cap \{N+1,\ldots,N+m\}},
\quad & & \gamma _{+}=\{1,\ldots ,m\}\setminus \gamma _{-}.
\end{alignat}
The function $H_{\sigma }^{\mathcal{B}_{+}}(\{\lambda \})$ is the
coefficient appearing in the boundary-bulk decomposition.
\end{theorem}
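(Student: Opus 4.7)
The plan is to combine three ingredients already established in the paper: the boundary-bulk decomposition \eqref{Boundary-bulk-state}, the explicit bulk action of Theorem~\ref{th-act-local-bulk}, and the reverse boundary-bulk reconstruction for the output state. First, I would use \eqref{Boundary-bulk-state} to write
\begin{equation*}
  \underline{\widehat{\mathcal B}}_{-,M}(\{\mu_i\}_{i=1}^M|\alpha-\beta+1)\,\ket{\eta,\alpha+\beta+N-2M-1}
  = \mathsf{h}_M(x;\alpha,\beta)\!\!\sum_{\sigma_1,\ldots,\sigma_M=\pm}\!\! H_{\boldsymbol{\sigma}}(\{\mu_i\})\,\underline{B}_M(\{\mu_i^\sigma\}|x-1,\alpha-\beta)\,\ket{\eta,\alpha+\beta+N-M-1},
\end{equation*}
so that the quasi-local operator acts on a linear combination of gauged bulk Bethe states for which the action formula \eqref{act-local-op-bulk} applies directly.

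Next, I would apply Theorem~\ref{th-act-local-bulk} term by term in $\boldsymbol{\sigma}$. This produces, for each $\boldsymbol{\sigma}$, a sum over admissible index sets $\mathsf{B}_{\boldsymbol{\epsilon,\epsilon'}}$ of coefficients $\mathcal{F}_{\mathsf{B}_{\boldsymbol{\epsilon,\epsilon'}}}$ multiplying bulk states
\[\underline{B}_{M+\tilde m_{\boldsymbol{\epsilon,\epsilon'}}}(\{\mu_j^\sigma\}_{j\in\mathsf{A}_{\boldsymbol{\epsilon,\epsilon'}}}|x-1,\alpha-\beta-2\tilde m_{\boldsymbol{\epsilon,\epsilon'}})\,\ket{\eta,\alpha+\beta+N-M-1+\tilde m_{\boldsymbol{\epsilon,\epsilon'}}},\]
where some of the $\mu_j$ with $j>M$ are replaced by the inhomogeneity arguments $\xi_k^{(1)}$ and are assigned $\sigma_j=+1$. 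A key observation is that the indices $j\in\gamma_-$ corresponding to such replacements are forced, so the sum over $\boldsymbol{\sigma}$ reduces effectively to a sum over $\sigma_{\alpha_+}$, with the partition $(\alpha_+,\alpha_-,\gamma_+,\gamma_-)$ of $\{1,\ldots,M+m\}$ determined by $\mathsf{B}_{\boldsymbol{\epsilon,\epsilon'}}$.

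The final step is to resum in $\sigma_{\alpha_+}$ so as to rebuild a boundary Bethe state on the new reduced set $\mathsf{A}_{\boldsymbol{\epsilon,\epsilon'}}$ of arguments. This is done by recognising that the coefficient of the bulk state in the expected output formula is exactly the boundary-bulk decomposition \eqref{Boundary-bulk-state} applied in reverse to
\[\underline{\widehat{\mathcal B}}_{-,M+\tilde m_{\boldsymbol{\epsilon,\epsilon'}}}(\{\mu_i\}_{i\in\mathsf{A}_{\boldsymbol{\epsilon,\epsilon'}}}|\alpha-\beta+1-2\tilde m_{\boldsymbol{\epsilon,\epsilon'}})\,\ket{\eta,\alpha+\beta+N-2M-1},\]
which provides the $H_{\sigma_{\alpha_+}}$ factor in \eqref{act-bound-coeff} together with the overall normalisation $\mathsf{h}_{M+\tilde m_{\boldsymbol{\epsilon,\epsilon'}}}/\mathsf{h}_M$ that, together with the $f^{(1,m)}_{\boldsymbol{\epsilon},\boldsymbol{\epsilon'}}$ prefactor of \eqref{Def-F_B}, gives the sign and exponential $(-1)^{(N+1)\tilde m_{\boldsymbol{\epsilon,\epsilon'}}}e^{\eta \tilde m_{\boldsymbol{\epsilon,\epsilon'}}(\beta+\tilde m_{\boldsymbol{\epsilon,\epsilon'}})}$ appearing in \eqref{act-bound-coeff}. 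The ratios $H_{\sigma_{\alpha_+}}(\{\mu_{\alpha_+}\})/H_{\boldsymbol{1}}(\{\xi^{(1)}_{\gamma_+}\})$ and the products over $i\in\alpha_-$ and $i\in\alpha_+$ in \eqref{act-bound-coeff} arise by cancelling the $H_{\boldsymbol{\sigma}}$ factors of the initial decomposition against those of the final reconstruction.

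The main obstacle is precisely this last bookkeeping step: one has to check that the $\boldsymbol{\sigma}$-dependent part of $\mathcal{F}_{\mathsf{B}_{\boldsymbol{\epsilon,\epsilon'}}}$ coming from \eqref{Def-F_B}, once multiplied by the initial $H_{\boldsymbol{\sigma}}(\{\mu_i\})$, factorises into a product depending only on $\sigma_{\alpha_+}$ (to be identified with $H_{\sigma_{\alpha_+}}(\{\mu_{\alpha_+}\})$ in the output) times a product that is independent of $\boldsymbol{\sigma}$ and combines with the $\mathsf{h}_M$ prefactors. The required identities are symmetric in the two kinds of factors $\sinh(\mu_j^\sigma\pm\mu_i)$ and follow from the explicit form of $H_{\boldsymbol{\sigma}}$ in \eqref{H_sig}; I would verify them by isolating the dependence on a single $\sigma_k$ and checking invariance under the sign flips that are summed over in the reconstruction. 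Once this reorganisation is done, the expression \eqref{act-bound-coeff} follows directly, and the result holds term-by-term in $\mathsf{B}_{\boldsymbol{\epsilon,\epsilon'}}$.
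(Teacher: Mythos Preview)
Your proposal is correct and follows exactly the approach the paper intends: the paper's own proof is a one-liner stating that the result ``is a direct consequence of Theorem~\ref{th-act-local-bulk} and of the boundary-bulk decomposition \eqref{Boundary-bulk-Bethe}'', and you have unpacked precisely the three stages (decompose, act in bulk, reconstruct) that this sentence encodes. Your identification of the main bookkeeping point---that the $\sigma_{\alpha_-}$ dependence must cancel so that only the sum over $\sigma_{\alpha_+}$ survives in $\bar{\mathcal F}_{\mathsf{B}_{\boldsymbol{\epsilon,\epsilon'}}}$, the $\xi^{(1)}$ arguments being pinned to $\sigma=+$ by the vanishing of $a(\xi_n^{(1)})$---is the right one, and is what lies behind the $\prod_{\epsilon=\pm}$ structures in \eqref{act-bound-coeff}.
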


Once again, we can compare the expression of this action to the one obtained in the ungauged diagonal case (see Proposition~5.2 of \cite{KitKMNST07}): up to the change $\eta$ in $-\eta$, and considering our slightly different definition of the inhomogeneity parameters with respect to \cite{KitKMNST07}, \eqref{act-boundary}-\eqref{act-bound-coeff} appear as a direct generalization of (5.14)-(5.15) of \cite{KitKMNST07}.

\section{Correlation functions}
\label{sec-corr-fct}

Let us now derive the exact expressions for the correlation functions associated with
the local operators of the type
\begin{equation}\label{local-op}
\prod_{n=1}^{m}E_{n}^{\epsilon _{n}^{\prime },\epsilon _{n}}(\xi
_{n}|(a_{n},b_{n}),(\bar{a}_{n},\bar{b}_{n})),
\end{equation}
with $a_n,b_n,\bar a_n,\bar b_n$ given by \eqref{Gauge.Basis-1}-\eqref{Gauge.Basis-2} in terms of
\begin{align}
   &\eta\alpha=-\tau_--i\epsilon_{\varphi_-}\frac\pi 2, \qquad
   \eta\beta=\epsilon_{\varphi_-}(\varphi_--\psi_-)+i\frac\pi 2.\label{choice-alpha-beta}
\end{align}
For simplicity, we shall consider here only the local operators
satisfying the following constrain:
\begin{equation}
s+s^{\prime }=m\quad\text{ or equivalently }\quad  \tilde m_{\boldsymbol{\epsilon,\epsilon'}}=\sum_{r=1}^m(\epsilon'_r-\epsilon_r)=0.  \label{n_b=n_c}
\end{equation}
We recall that we use the notations \eqref{def-s-s'-i_p}.

\subsection{General strategy for the computation of matrix elements of local operators in the limit $\zeta_+=-\infty$}
\label{subsec-strategy-corr}

Let us here explain how to compute the mean value of some quasi-local operator $O_{1\to m}\in\End(\otimes_{n=1}^m\mathcal{H}_n)$,
\begin{equation}\label{mean-O}
 \moy{ O_{1\to m} }\equiv\frac{\bra{Q}\, O_{1\to m} \ket{Q} }{\moy{Q\,|\,Q}},
\end{equation}
in some eigenstate $\ket{Q}$ of the transfer matrix under the boundary conditions \eqref{Special-K+} which are out of the range of validity of Sklyanin's SoV framework. Here $\ket{Q}$ and $\bra{Q}$ denote the right and left eigenstates of $\mathcal{T}(\lambda)$ represented as the SoV states \eqref{eigenR} and \eqref{eigenL} (written in the new general SoV basis) for some $Q$ of the form \eqref{Q-form} with roots labelled by $\lambda_1,\ldots,\lambda_M$. The latter are admissible solutions of the Bethe equations issued from the homogeneous $TQ$-equation \eqref{hom-TQ}.

Since the SoV state $\ket{Q}$ and the boundary Bethe state $\underline{\widehat{\mathcal B}}_{-,M}(\{\lambda_i\}_{i=1}^M |\alpha-\beta+1)\,\ket{\eta,\alpha+\beta+N-2M-1}$ (with $\alpha$ and $\beta$ given in terms of the '-' boundary parameters by \eqref{Gauge-cond-A}-\eqref{Gauge-cond-B}) are eigenvectors of the transfer matrix with the same eigenvalue, they are collinear due to the simplicity of the transfer matrix spectrum. Hence we can write:
\begin{align}\label{mean-Q-1}
    \moy{ O_{1\to m} }
    &=\frac{\bra{Q}\, O_{1\to m} \, \underline{\widehat{\mathcal B}}_{-,M}(\{\lambda_i\}_{i=1}^M |\alpha-\beta+1)\,\ket{\eta,\alpha+\beta+N-2M-1} }{\bra{Q}\,\underline{\widehat{\mathcal B}}_{-,M}(\{\lambda_i\}_{i=1}^M |\alpha-\beta+1)\,\ket{\eta,\alpha+\beta+N-2M-1}}.
\end{align}
Expressing the quasi-local operator $O_{1\to m}$ on the basis \eqref{Local-Basis},
\begin{equation}
  O_{1\to m}=\sum_{\boldsymbol{\epsilon,\epsilon'}\in\{1,2\}} f_{O,\boldsymbol{\epsilon,\epsilon'}} \, \prod_{n=1}^m E_n^{\epsilon'_n,\epsilon_n}(\xi_n|(a_n,b_n),(\bar a_n,\bar b_n)),
\end{equation}
and using the action  \eqref{act-boundary} of these basis elements  on the boundary Bethe state $\underline{\widehat{\mathcal B}}_{-,M}(\{\lambda_i\}_{i=1}^M |\alpha-\beta+1)\,\ket{\eta,\alpha+\beta+N-2M-1}$, we can write
\begin{equation}
    \moy{ O_{1\to m} }
    =\sum_{\boldsymbol{\epsilon,\epsilon'}\in\{1,2\}} f_{O,\boldsymbol{\epsilon,\epsilon'}} \, 
    \sum_{\mathsf{B}_{\boldsymbol{\epsilon,\epsilon'}}} 
    \mathcal{\bar F}_{\mathsf{B}_{\boldsymbol{\epsilon,\epsilon'}}}(\{\lambda_j\}_{j=1}^M,\{\xi_j^{(1)}\}_{j=1}^m | \beta)\
    \mathrm{SP}_Q(\{\bar \lambda_i\}_{i\in \mathsf{A}_{\boldsymbol{\epsilon,\epsilon'}}}),
\end{equation}
%
%
%
where the set $\{\bar \lambda_i\}_{i\in \mathsf{A}_{\boldsymbol{\epsilon,\epsilon'}}}$ is a subset of the set $\{\lambda_i\}_{i=1}^M\cup\{\xi_n^{(1)}\}_{n=1}^m$ (see \eqref{def-setA}), 
%
%
and where $\mathrm{SP}_Q(\{\bar \lambda_i\}_{i\in \mathsf{A}_{\boldsymbol{\epsilon,\epsilon'}}})$ is the following ratio of scalar products:
\begin{multline}\label{ratio-SP}
  \mathrm{SP}_Q(\{\bar \lambda_i\}_{i\in \mathsf{A}_{\boldsymbol{\epsilon,\epsilon'}}})
  =\frac{\bra{Q}\, \underline{\widehat{\mathcal B}}_{-,M+\tilde m_{\boldsymbol{\epsilon,\epsilon'}}}(\{\bar \lambda_i\}_{i\in \mathsf{A}_{\boldsymbol{\epsilon,\epsilon'}}} |\alpha-\beta+1-2\tilde m_{\boldsymbol{\epsilon,\epsilon'}})\, 
   \ket{\eta,\alpha+\beta+N-1-2M}}{\bra{Q}\,\underline{\widehat{\mathcal B}}_{-,M}(\{\lambda_i\}_{i=1}^M |\alpha-\beta+1)\,\ket{\eta,\alpha+\beta+N-2M-1}}.
\end{multline}
The whole problem is now to compute the ratio of scalar products \eqref{ratio-SP} in a simple way.

To this aim, let us remark that the boundary conditions \eqref{Special-K+} can be reached as the $\varsigma_+\to -\infty$ limit of some continuous trajectories in the 3-dimensional space of the '+' boundary parameters $\varsigma_+, \kappa_+,\tau_+$ on which Sklyanin's SoV approach can be used. We can choose a particular trajectory on which the condition \eqref{+homo-cond} holds all along the trajectory for fixed $\alpha$ and $\beta$ given in terms of the (fixed) '-' boundary parameters by \eqref{Gauge-cond-A}-\eqref{Gauge-cond-B}. Along such a trajectory and for finite $\varsigma_+$, it follows from Proposition~\ref{Ref-States} that
\begin{equation}\label{prop-ref-states-traj}
   \ket{\eta,\alpha+\beta+N-1-2M}=\mathsf{c}^{(R,\varsigma_+)}_{M,\text{ref}}\,\ket{\Omega_{\alpha,\beta-2M+1}}^{(\varsigma_+)},
\end{equation}
Note that both the Sklyanin's reference state \eqref{ref-state-R} and the scalar factor in \eqref{prop-ref-states-traj} depend in a continuous way on the '+' boundary parameters, and therefore on the position on the trajectory, and we underline this dependance by a superscript $(\varsigma_+)$. In the same way, the boundary operators $\widehat{\mathcal B}_-$ depend continuously on the '+' boundary parameters, and we denote them along the trajectory by $\widehat{\mathcal B}_-^{(\varsigma_+)}$.

On the other hand, all along the trajectory and in the limit $\varsigma_+\to -\infty$, the transfer matrix is diagonalizable with simple spectrum, and its eigenvalues depend continuously on the '+' boundary parameters. Hence we have no crossing of levels along the trajectory and we can define in an unambiguous way the SoV eigenstate $\bra{Q^{(\varsigma_+)}}$ (still written in the new general SoV basis) that converges toward $\bra{Q}$. Once again, the state $\bra{Q^{(\varsigma_+)}}$ depends in a continuous way of the '+' boundary parameter along the trajectory\footnote{Note that, in the limit $\varsigma_+\to-\infty$, the degree of the polynomial $Q^{(\varsigma_+)}$ written as in \eqref{Q-form} may not be conserved. In other words, it may happen that some roots of $Q^{(\varsigma_+)}$ tend to infinity when $\varsigma_+\to -\infty$. Note also that $Q^{(\varsigma_+)}$ may no longer be solution of a homogeneous functional $TQ$-equation, but instead on the inhomogeneous one \eqref{inhom-TQ}.
However, the state \eqref{eigenL} itself is perfectly well defined and continuous all along the trajectory and in the limit $\varsigma_+\to-\infty$ by the condition
\begin{equation}
   \frac{Q^{(\varsigma_+)}(\xi_n^{(1)})}{Q^{(\varsigma_+)}(\xi_n^{(0)})}=\frac{\tau^{(\varsigma_+)}(\xi_n^{(0)})}{\mathbf{A}^{(\varsigma_+)}_{\boldsymbol{\varepsilon}}(\xi_n^{(0)})},
\end{equation}
being both the transfer matrix eigenvalue $\tau^{(\varsigma_+)}$ and the function $\mathbf{A}^{(\varsigma_+)}_{\boldsymbol{\varepsilon}}$ \eqref{DefFullA} continuous with respect to the '+' boundary parameters and well defined in the $\varsigma_+\to-\infty$ limit.
}, 
and we have from \eqref{prop-SoVeigen} that it is collinear to the corresponding eigenstate $_\mathrm{Sk}\bra{Q^{(\varsigma_+)}}$ defined in the Sklyanin's SoV framework as in \eqref{eigenL-Sk}.

Hence we can write
\begin{multline}\label{ratio-SP-2}
   \mathrm{SP}_Q(\{\bar \lambda_i\}_{i\in \mathsf{A}_{\boldsymbol{\epsilon,\epsilon'}}})
   \\
   =    \lim_{\varsigma_+\to -\infty}
    \frac{_\mathrm{Sk}\bra{Q^{(\varsigma_+)}}\, 
    \underline{\widehat{\mathcal B}}^{(\varsigma_+)}_{-,M+\tilde m_{\boldsymbol{\epsilon,\epsilon'}}}(\{\bar \lambda_i\}_{i\in \mathsf{A}_{\boldsymbol{\epsilon,\epsilon'}}} |\alpha-\beta+1-2\tilde m_{\boldsymbol{\epsilon,\epsilon'}})\, 
   \ket{\Omega_{\alpha,\beta-2M+1}}^{(\varsigma_+)}}
   {_\mathrm{Sk}\bra{Q^{(\varsigma_+)}}\,\underline{\widehat{\mathcal B}}^{(\varsigma_+)}_{-,M}(\{\lambda_i\}_{i=1}^M |\alpha-\beta+1)\,\ket{\Omega_{\alpha,\beta-2M+1}}^{(\varsigma_+)}}.
\end{multline}
Note here that we keep the roots $\lambda_i$ in the Bethe vectors entering the ratio of scalar products in the right hand side of \eqref{ratio-SP-2}  fixed along the trajectory, so that these Bethe vectors are {\em a priori} no longer eigenstates at finite $\varsigma_+$. 

The ratio of scalar products that appear in the right hand side of \eqref{ratio-SP-2} can now be computed in the framework of Sklyanin's SoV, at least for $\tilde m_{\boldsymbol{\epsilon,\epsilon'}}=0$ for which we can use the collinearity relations \eqref{separate-ABA-R}%
\footnote{In the case $\tilde m_{\boldsymbol{\epsilon,\epsilon'}}\not=0$, the action of the product  $\underline{\widehat{\mathcal B}}^{(\varsigma_+)}_{-,M+\tilde m_{\boldsymbol{\epsilon,\epsilon'}}}(\{\bar \lambda_i\}_{i\in \mathsf{A}_{\boldsymbol{\epsilon,\epsilon'}}} |\alpha-\beta+1-2\tilde m_{\boldsymbol{\epsilon,\epsilon'}})$ operators on the state $\ket{\Omega_{\alpha,\beta-2M+1}}^{(\varsigma_+)}$ produces a state written on a SoV basis with shifted gauge parameters, that needs to be reexpressed on the initial SoV basis with unshifted gauge parameters (we cannot directly use the orthogonality relation \eqref{orth-SoV-Sk}), and the study of the scalar products is {\em a priori} more complicated. We therefore do not consider this case in the framework of the present paper.}.
From now on, we therefore restrict our study to operators $O_{1\to m}$ in the sector $m_{\boldsymbol{\epsilon,\epsilon'}}=0$, i.e. which can be obtained as a linear combinations of quasi-local operators of the form \eqref{local-op} with condition  \eqref{n_b=n_c}.
Let us denote by $\bar Q_{\mathsf{A}_{\boldsymbol{\epsilon,\epsilon'}}}$ the polynomial of the form \eqref{Q-form} with $M+\tilde m_{\boldsymbol{\epsilon,\epsilon'}}=M$ roots labelled by $\bar \lambda_i$, $i\in  \mathsf{A}_{\boldsymbol{\epsilon,\epsilon'}}$. Recall that we denote by $Q$ the polynomial \eqref{Q-form} with roots labelled by $\lambda_1,\ldots,\lambda_M$. From \eqref{separate-ABA-R}, we can again transform \eqref{ratio-SP-2} as
\begin{equation}\label{ratio-SP-3}
   \mathrm{SP}_Q(\{\bar \lambda_i\}_{i\in \mathsf{A}_{\boldsymbol{\epsilon,\epsilon'}}})
   =    \lim_{\varsigma_+\to -\infty} 
    \frac{\mathsf{c}^{(R)}_{Q,\text{ABA}}}{\mathsf{c}^{(R)}_{\bar Q_{\mathsf{A}_{\boldsymbol{\epsilon,\epsilon'}}},\text{ABA}}}\
    \frac{_\mathrm{Sk}\moy{Q^{(\varsigma_+)}\, |\, \bar Q_{\mathsf{A}_{\boldsymbol{\epsilon,\epsilon'}}}}_\mathrm{Sk} }
   {_\mathrm{Sk}\moy{Q^{(\varsigma_+)}\,|\, Q}_\mathrm{Sk} },
\end{equation}
with
\begin{equation}
\frac{\mathsf{c}^{(R)}_{Q,\text{ABA}}}{\mathsf{c}^{(R)}_{\bar Q_{\mathsf{A}_{\boldsymbol{\epsilon,\epsilon'}}},\text{ABA}}}
=\prod_{j=1}^M\frac{a_{\boldsymbol{0}}(\bar \lambda_j)\, a_{\boldsymbol{0}}(-\bar\lambda_j)\,\sinh(2\bar\lambda_j-\eta)}{a_{\boldsymbol{0}}(\lambda_j)\, a_{\boldsymbol{0}}(-\lambda_j)\,\sinh(2\lambda_j-\eta)}.
\end{equation}
The ratio of scalar products of separate states in the right hand side of \eqref{ratio-SP-3} can, as usual, be expressed as a ratio of determinants:
\begin{equation}
      \frac{_\mathrm{Sk}\moy{Q^{(\varsigma_+)}\, |\, \bar Q_{\mathsf{A}_{\boldsymbol{\epsilon,\epsilon'}}}}_\mathrm{Sk} }
   {_\mathrm{Sk}\moy{Q^{(\varsigma_+)}\,|\, Q}_\mathrm{Sk} }
   =\frac{\det_{1\le i,j\le N}\left[\sum\limits_{h=0}^1 
             \Big(\! -\frac{\mathbf{a}^{(\varsigma_+)}_{\boldsymbol{\varepsilon}}(\xi_i+\frac\eta 2 )}{\mathbf{a}^{(\varsigma_+)}_{-\boldsymbol{\varepsilon}}(\xi_i+\frac\eta 2 )}
             \frac{(Q^{(\varsigma_+)} \bar Q_{\mathsf{A}_{\boldsymbol{\epsilon,\epsilon'}}})(\xi_i^{(1)})}{(Q^{(\varsigma_+)} \bar Q_{\mathsf{A}_{\boldsymbol{\epsilon,\epsilon'}}})(\xi_i^{(0)})}\Big)^{\!h}
             \Big(\frac{\cosh(2\xi_i^{(1-h)})}{2}\Big)^{\! j-1}\right]}
   {\det_{1\le i,j\le N}\left[\sum\limits_{h=0}^1 
             \Big(\! -\frac{\mathbf{a}^{(\varsigma_+)}_{\boldsymbol{\varepsilon}}(\xi_i+\frac\eta 2 )}{\mathbf{a}^{(\varsigma_+)}_{-\boldsymbol{\varepsilon}}(\xi_i+\frac\eta 2 )}
             \frac{(Q^{(\varsigma_+)} Q)(\xi_i^{(1)})}{(Q^{(\varsigma_+)} Q)(\xi_i^{(0)})}\Big)^{\!h}
             \Big(\frac{\cosh(2\xi_i^{(1-h)})}{2}\Big)^{\! j-1}\right]}.
\end{equation}
It is now easy to take the $\varsigma_+\to -\infty$ in the above expression, and we obtain
\begin{multline}\label{ratio-SP-4}
   \mathrm{SP}_Q(\{\bar \lambda_i\}_{i\in \mathsf{A}_{\boldsymbol{\epsilon,\epsilon'}}})
   =\prod_{j=1}^M
   \frac{d(\bar \lambda_j)\, d(-\bar\lambda_j)\,  \sinh(2\bar\lambda_j-\eta)}{d(\lambda_j)\, d(-\lambda_j)\,\sinh(2\lambda_j-\eta)}
   \\
   \times
  \frac{\det_{1\le i,j\le N}\left[\sum\limits_{h=0}^1 
             \Big(\! -\frac{\mathbf{a}_{\boldsymbol{\varepsilon}}(\xi_i+\frac\eta 2 )}{\mathbf{a}_{-\boldsymbol{\varepsilon}}(\xi_i+\frac\eta 2 )}
             \frac{(Q \bar Q_{\mathsf{A}_{\boldsymbol{\epsilon,\epsilon'}}})(\xi_i^{(1)})}{(Q \bar Q_{\mathsf{A}_{\boldsymbol{\epsilon,\epsilon'}}})(\xi_i^{(0)})}\Big)^{\!h}
             \Big(\frac{\cosh(2\xi_i^{(1-h)})}{2}\Big)^{\! j-1}\right]}
   {\det_{1\le i,j\le N}\left[\sum\limits_{h=0}^1 
             \Big(\! -\frac{\mathbf{a}_{\boldsymbol{\varepsilon}}(\xi_i+\frac\eta 2 )}{\mathbf{a}_{-\boldsymbol{\varepsilon}}(\xi_i+\frac\eta 2 )}
             \frac{(Q^2)(\xi_i^{(1)})}{(Q^2)(\xi_i^{(0)})}\Big)^{\!h}
             \Big(\frac{\cosh(2\xi_i^{(1-h)})}{2}\Big)^{\! j-1}\right]}.
\end{multline} 
Such a formula can be transformed similarly as in \cite{KitMNT18}, and we finally obtain
\begin{equation}\label{ratio-SP-5}
   \mathrm{SP}_Q(\{\bar \lambda_i\}_{i\in \mathsf{A}_{\boldsymbol{\epsilon,\epsilon'}}})
   =\prod_{j=1}^M
   \frac{ \sinh(2\lambda_j+\eta) }{\sinh(2\bar \lambda_j+\eta)}\,
   \frac{\widehat{V}(\lambda_1,\ldots,\lambda_M)}{\widehat{V}(\bar\lambda_1,\ldots,\bar\lambda_M)}\,
  \frac{\det_{1\le i,j\le N}\left[\mathcal{S}(\boldsymbol{\bar\lambda},\boldsymbol{\lambda})\right]}
   {\det_{1\le i,j\le N}\left[\mathcal{S}(\boldsymbol{\lambda},\boldsymbol{\lambda})\right]}.
\end{equation} 
in which $\mathcal{S}(\boldsymbol{\mu},\boldsymbol{\lambda})$, for $\boldsymbol{\mu}=(\mu_1,\ldots,\mu_M)$ and $\boldsymbol{\lambda}=(\lambda_1,\ldots,\lambda_M)$,  is the $M\times M$ matrix with elements
\begin{align}\label{mat-Slavnov}
  [\mathcal{S}(\boldsymbol{\mu},\boldsymbol{\lambda})]_{j,k}
  &=Q(\mu_j)\, \frac{\partial\tau_Q(\mu_j)}{\partial\lambda_k} \nonumber\\
  &=\mathbf{A}_{\boldsymbol{\varepsilon}}(\mu_j)\, Q(\mu_j-\eta)\, \big[ t(\mu_j+\lambda_k-\eta/2)-t(\mu_j-\lambda_k-\eta/2)\big]
  \nonumber\\
  &\hspace{2cm}
    -\mathbf{A}_{\boldsymbol{\varepsilon}}(-\mu_j)\, Q(\mu_j+\eta)\, \big[ t(\mu_j+\lambda_k+\eta/2)-t(\mu_j-\lambda_k+\eta/2)\big],
\end{align}
where $\tau_Q$ is the transfer matrix eigenvalue associated with the Bethe roots $\lambda_1,\ldots,\lambda_M$ labelling the roots of the polynomial $Q$, and
%
%
%
\begin{equation}\label{def-t}
   t(\lambda)=\frac{\sinh\eta}{\sinh(\lambda-\eta/2)\,\sinh(\lambda+\eta/2)}=\coth(\lambda-\eta/2)-\coth(\lambda+\eta/2).
\end{equation}
Note that \eqref{mat-Slavnov} is regular for $\mu_j=\lambda_j$ due to the fact that $\lambda_1,\ldots,\lambda_M$ satisfy the Bethe equations, and can be rewritten as
\begin{multline}
     [\mathcal{S}(\boldsymbol{\mu},\boldsymbol{\lambda})]_{j,k} \Big|_{\mu_j=\lambda_j}
     =-\mathbf{A}_{\boldsymbol{\varepsilon}}(-\lambda_j)\, Q(\lambda_j+\eta)\,
     \\
     \times
     \bigg\{2Ni\,\delta_{j,k}\,\Xi'_{\boldsymbol{\varepsilon},Q}(\lambda_j) 
     +2\pi i\big[ K(\lambda_j-\lambda_k)-K(\lambda_j+\lambda_k)\big]\bigg\},
\end{multline}
in which we have defined
\begin{align}
   &\Xi'_{\boldsymbol{\varepsilon},Q}(\mu)
   =\frac{i}{2N}\frac\partial{\partial\mu}\left(\log\frac{\mathbf{A}_{\boldsymbol{\varepsilon}}(-\mu)\, Q(\mu+\eta)}{\mathbf{A}_{\boldsymbol{\varepsilon}}(\mu)\, Q(\mu-\eta)}\right),
   \label{def-A_Q}\\
   &K(\lambda)=\frac{i\sinh(2\eta)}{2\pi\,\sinh(\lambda+\eta)\,\sinh(\lambda-\eta)}
   =\frac{i}{2\pi}\big[ t(\lambda+\eta/2)+t(\lambda-\eta/2)\big]. \label{def-K}
\end{align}
In particular, if $\{\lambda_j\}_{j=1}^M=\{\lambda_a\}_{a\in\alpha_-}\cup\{\lambda_b\}_{b\in\alpha_+}$ and $\{\bar\lambda_j\}_{j\in\mathbf{A}_{\boldsymbol{\epsilon,\epsilon'}}}=\{\lambda_a\}_{a\in\alpha_-}\cup\{\xi_{i_b}^{(1)}\}_{b\in\alpha_+}$, then
\begin{multline}\label{ratio-SP-part}
   \mathrm{SP}_Q(\{\bar \lambda_i\}_{i\in \mathsf{A}_{\boldsymbol{\epsilon,\epsilon'}}})
   =\prod_{b\in\alpha_+}
   \frac{  \sinh(2\lambda_b+\eta)\, A_{\boldsymbol{\varepsilon}}(-\xi_{i_b}^{(1)})\, Q(\xi_{i_b}^{(1)}+\eta)}{\sinh(2\xi_{i_b}^{(1)}+\eta)\,A_{\boldsymbol{\varepsilon}}(-\lambda_b)\, Q(\lambda_b+\eta)}
   \frac{}{}
   \\
   \times 
   \prod_{\substack{ a,b\in \alpha_+ \\ a<b}} \frac{\sinh^2\lambda_a-\sinh^2\lambda_b}{\sinh^2\xi_{i_a}^{(1)}-\sinh^2\xi_{i_b}^{(1)}}
   \prod_{\substack{ a\in \alpha_- \\ b\in\alpha_+}} \frac{\sinh^2\lambda_a-\sinh^2\lambda_b}{\sinh^2\lambda_a-\sinh^2\xi_{i_b}^{(1)}}\,
   \frac{\det_M \mathcal{M}(\boldsymbol{\bar \lambda,\lambda})}{\det_M \mathcal{N}(\boldsymbol{\lambda})},
\end{multline}
in which
\begin{align}
  &[\mathcal{N}(\boldsymbol{\lambda})]_{j,k}=2N\,\delta_{j,k}\,\Xi'_{\boldsymbol{\varepsilon},Q}(\lambda_j) 
     +2\pi \big[ K(\lambda_j-\lambda_k)-K(\lambda_j+\lambda_k)\big],
     \label{mat-N}\\
   &[\mathcal{M}(\boldsymbol{\bar \lambda,\lambda})]_{j,k}=\begin{cases}
        [\mathcal{N}(\boldsymbol{\lambda})]_{j,k} &\text{if } k\in\alpha_-,\\
        i[t(\xi_{i_k}-\lambda_j)-t(\xi_{i_k}+\lambda_j)] &\text{if } k\in\alpha_+.
        \end{cases}
        \label{mat-M}
\end{align}
Note that, in that case, the ratio of the determinants of $\mathcal{M}$ and $\mathcal{N}$ in \eqref{ratio-SP-part} reduces to the determinant of a matrix of size $|\alpha_+|$;
\begin{equation}\label{mat-reduced}
    \frac{\det_M \mathcal{M}(\boldsymbol{\bar \lambda,\lambda})}{\det_M \mathcal{N}(\boldsymbol{\lambda})}
    =\det_{a,b\in\alpha_+}\mathcal{R}_{a,b},\qquad
    \text{with}\quad \mathcal{R}_{a,b}=\sum_{k=1}^M \left[\mathcal{N}^{-1}\right]_{a,k}\mathcal{M}_{k,b}.
\end{equation}

\subsection{Expression of the correlation functions in the finite chain}
\label{subsec-corr-finite}

We can now give the exact expression of the mean value, in the eigenstate $\ket{Q}$ with Bethe roots $\lambda_1,\ldots,\lambda_M$, of the quasi-local operator \eqref{local-op} with condition \eqref{n_b=n_c} and under the special boundary conditions \eqref{Special-K+} corresponding to $\varsigma _{+}=-\infty$ on the site $N$. For $\ket{Q}$ being the ground state of the chain, this gives the zero-temperature correlation functions of all combinations of local operators in the sector \eqref{n_b=n_c}.

From the previous study, we therefore obtain the following multiple sum representation of these quantities: 

\begin{theorem}
Under the boundary conditions \eqref{Special-K+}, the
correlation functions of the quasi-local operators \eqref{local-op} satisfying the condition \eqref{n_b=n_c} can be written as
\begin{multline}\label{corr-finite}
   \moy{ \prod_{n=1}^m E_n^{\epsilon'_n,\epsilon_n}(\xi_n |(a_n,b_n),(\bar{a}_n,\bar{b}_n))} 
   \\
    =\sum_{\text{\textsc{b}}_{1}=1}^{M}\ldots \sum_{\text{\textsc{b}}_{s}=1}^{M}
      \sum_{\text{\textsc{b}}_{s+1}=1}^{M+m}\ldots \sum_{\text{\textsc{b}}_{m}=1}^{M+m}
      \frac{H_{\{\text{\textsc{b}}_{j}\}}(\{\lambda \}|\beta)}
      { \prod\limits_{1\leq l<k\leq m}\!\!\!\!\sinh (\xi_k-\xi_l)\prod\limits_{1\leq p\leq q\leq m}\!\!\!\!\sinh (\xi_p+\xi_q)},
\end{multline}
in which 
\begin{multline}\label{H-finite}
H_{\{\text{\textsc{b}}_j\}}(\{\lambda \}|\beta) 
=\prod_{n=1}^m\frac{e^\eta}{\sinh(\eta b_n)}
\sum\limits_{\sigma _{\text{\textsc{b}}_{j}}}\frac{(-1)^{s}
\prod\limits_{i=1}^{m}\sigma _{\text{\textsc{b}}_{i}}\prod\limits_{i=1}^{m}
\prod\limits_{j=1}^{m}\sinh (\lambda _{\text{\textsc{b}}_{i}}^{\sigma }+\xi _{j}+\eta/2)}
{\prod\limits_{1\leq i<j\leq m}\sinh (\lambda^\sigma_{\text{\textsc{b}}_i}-\lambda_{\text{\textsc{b}}_j}^\sigma-\eta )\sinh (\lambda_{\text{\textsc{b}}_{i}}^\sigma+\lambda_{\text{\textsc{b}}_{j}}^\sigma+\eta )}  
   \\
 \times\prod\limits_{p=1}^{s}\bigg\{
\sinh(\lambda _{\text{\textsc{b}}_p}^\sigma-\xi _{i_p}^{(1)}-\eta(1+b_{i_p}))
\prod\limits_{k=1}^{i_{p}-1}\sinh(\lambda _{\text{\textsc{b}}_{p}}^{\sigma }-\xi _{k}^{( 1)})\prod\limits_{k=i_{p}+1}^{m}\!\!\sinh (\lambda _{\text{\textsc{b}}_{p}}^{\sigma }-\xi _{k}^{\left( 0\right) })\bigg\} 
   \\
 \times \!\!\prod\limits_{p=s+1}^{m}\bigg\{
\sinh (\lambda _{\text{\textsc{b}}_{p}\,}^{\sigma }-\xi _{i_{p}}^{(1) }+\eta(1-\bar b _{i_{p}}))
\prod\limits_{k=1}^{i_{p}-1}\sinh (\lambda _{\text{\textsc{b}}_{p}}^\sigma -\xi _{k}^{(1)})
\prod\limits_{k=i_{p}+1}^{m}\!\!\sinh (\lambda _{\text{\textsc{b}}_{p}}^{\sigma }-\xi _{k}^{(1)}+\eta )\bigg\}
  \\
 \times\prod\limits_{k=1}^{m}
\frac{\sinh (\xi_k-\varsigma_+^{(D)})\sinh (\xi_k-\varsigma_-^{(D)})}{\sinh (\lambda _{\text{\textsc{b}}_k}^\sigma -\varsigma_+^{(D)}+\eta /2)\sinh
(\lambda _{\text{\textsc{b}}_k}^\sigma -\varsigma_-^{(D)}+\eta/2)} \
\det_{m}\Omega .
\end{multline}
where the parameters $\varsigma_\pm^{(D)}$ are given in terms of the boundary parameters at site 1 by \eqref{Id-boundary}.
%
%
Here the sum is performed over all $\sigma _{\text{\textsc{b}}_{j}}\in
\{+,-\} $ for \textsc{b}$_{j}\leq M$, and $\sigma _{\text{\textsc{b}}_{j}}=1$
for \textsc{b}$_{j}>M$, and the $m\times m$ matrix $\Omega $ reads 
\begin{alignat}{2}
& \Omega _{lk}=-\delta _{N+m+1-b_{l},k},\quad & & \text{for \textsc{b}}%
_{l}>M, \\
& \Omega _{lk}=\mathcal{R}_{\text{\textsc{b}}_{l},k},\quad & & \text{for 
\textsc{b}}_{l}\leq M,
\end{alignat}
in terms of the matrix $\mathcal{R}$ \eqref{mat-reduced}.
\end{theorem}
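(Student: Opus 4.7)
The plan is to combine the three main ingredients assembled earlier in the excerpt: the collinearity of the SoV eigenstate $\ket{Q}$ with the ABA-type boundary Bethe state, the action formula \eqref{act-boundary}--\eqref{act-bound-coeff} of the local operator basis on such Bethe states, and the ratio-of-scalar-products formula \eqref{ratio-SP-part}--\eqref{mat-reduced} derived by the $\varsigma_+\to-\infty$ limiting trajectory in Sklyanin's framework. First, I start from \eqref{mean-Q-1}, which rewrites $\moy{O_{1\to m}}$ as the ratio of two matrix elements of the transfer-matrix eigenstate against the boundary Bethe state $\underline{\widehat{\mathcal B}}_{-,M}(\{\lambda_i\}|\alpha-\beta+1)\,\ket{\eta,\alpha+\beta+N-2M-1}$, with $\alpha,\beta$ fixed by the choice \eqref{choice-alpha-beta}. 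Applying Theorem~\ref{th-act-local-bulk}/\eqref{act-boundary} expresses the numerator as a sum over subsets $\mathsf{B}_{\boldsymbol{\epsilon,\epsilon'}}\subset\{1,\ldots,M+m\}$ of size $s+s'=m$ (this uses crucially the restriction $\tilde m_{\boldsymbol{\epsilon,\epsilon'}}=0$), weighted by the explicit coefficient $\bar{\mathcal{F}}_{\mathsf{B}_{\boldsymbol{\epsilon,\epsilon'}}}$ and multiplied by the ratio of scalar products $\mathrm{SP}_Q(\{\bar\lambda_i\}_{i\in\mathsf{A}_{\boldsymbol{\epsilon,\epsilon'}}})$.

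Next, I substitute the evaluation \eqref{ratio-SP-part}--\eqref{mat-reduced} of $\mathrm{SP}_Q$: since the replaced Bethe roots are precisely $\{\xi_{i_p}^{(1)}\}_{p\in\alpha_+}$ (with $\alpha_+=\mathsf{B}_{\boldsymbol{\epsilon,\epsilon'}}\cap\{1,\ldots,M\}$), the ratio reduces to the determinant of the $|\alpha_+|\times|\alpha_+|$ matrix $\mathcal{R}_{a,b}$ times the explicit product of sinh-ratios and factors involving $\mathbf{A}_{\boldsymbol{\varepsilon}}$ and $Q$. Then I assemble the $H_{\sigma_{\alpha_+}}(\{\mu_{\alpha_+}\})/H_{1}(\{\xi_{\gamma_+}^{(1)}\})$ pieces coming from \eqref{act-bound-coeff} with the remaining products from $\mathrm{SP}_Q$: the $\sigma$-dependent products over crossed terms $\sinh(\mu^\sigma_j\pm\mu_i)$ and $\sinh(\xi^{(\bullet)}_j\pm\mu_i)$ combine, using the Bethe equations for $\{\lambda_j\}$, into the symmetric product $\prod_{i,j}\sinh(\lambda^\sigma_{\text{\textsc{b}}_i}+\xi_j+\eta/2)$ and the double-product denominators of \eqref{H-finite}. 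The coefficient $\mathbf{A}_{\boldsymbol{\varepsilon}}(-\xi_{i_b}^{(1)})$ contains the $\sinh(\xi_k-\bar\varsigma^{(D)}_\pm)$ factors once the isospectrality identification \eqref{Id-boundary} is invoked, and similarly $\mathbf{A}_{\boldsymbol{\varepsilon}}(-\lambda_b)\,Q(\lambda_b+\eta)$ absorbs into the denominator factors $\sinh(\lambda^\sigma_{\text{\textsc{b}}_k}-\bar\varsigma^{(D)}_\pm+\eta/2)$.

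Finally, the $m\times m$ matrix $\Omega$ is constructed by concatenating two kinds of rows: for $\text{\textsc{b}}_l\le M$ one gets the row $\mathcal{R}_{\text{\textsc{b}}_l,k}$ coming from \eqref{mat-reduced}, while for $\text{\textsc{b}}_l>M$ the corresponding replaced root $\xi^{(1)}_{M+m+1-\text{\textsc{b}}_l}$ coincides with one of the $\xi^{(1)}_k$ already present, so its contribution can be written via a Kronecker-delta row $-\delta_{N+m+1-\text{\textsc{b}}_l,k}$ after factoring out the residue; expanding $\det_m\Omega$ along the latter rows reproduces the correct reduction to the effective determinant of $\mathcal{R}$ restricted to $\alpha_+$. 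The denominator $\moy{Q|Q}$ of \eqref{mean-Q-1} is similarly replaced, via the same limit argument, by the Slavnov-type determinant $\det\mathcal{N}(\boldsymbol{\lambda})$, which cancels exactly against the corresponding denominator inside $\mathrm{SP}_Q$ in \eqref{ratio-SP-part}--\eqref{mat-reduced}.

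The bookkeeping of signs, gauge-parameter shifts ($\bar b_n$ versus $b_n$), and the prefactors $(-1)^{(N+1)\tilde m}e^{\eta\tilde m(\beta+\tilde m)}$ (all trivialized by $\tilde m=0$) is routine but lengthy. The main obstacle is the combinatorial reorganization of the coefficient $\bar{\mathcal{F}}_{\mathsf{B}_{\boldsymbol{\epsilon,\epsilon'}}}$: rewriting the products indexed by the ordered subsets $\{i_p\}_{p\le s}$ and $\{i_p\}_{p>s}$ of \eqref{i_p-Def0}--\eqref{i_p-Def1} in the uniform product-over-$k$ form appearing inside \eqref{H-finite}, and showing that after folding in the $H_\sigma/H_1$ ratio and the $\mathrm{SP}_Q$ sinh-ratios one recovers exactly the antisymmetric Vandermonde-like denominator $\prod_{i<j}\sinh(\lambda^\sigma_{\text{\textsc{b}}_i}-\lambda^\sigma_{\text{\textsc{b}}_j}-\eta)\sinh(\lambda^\sigma_{\text{\textsc{b}}_i}+\lambda^\sigma_{\text{\textsc{b}}_j}+\eta)$ together with the universal denominator $\prod_{l<k}\sinh(\xi_k-\xi_l)\prod_{p\le q}\sinh(\xi_p+\xi_q)$ of \eqref{corr-finite}. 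This is the step where the particular structure of the basis \eqref{Local-Basis}, and the careful choice \eqref{Gauge.Basis-1}--\eqref{Gauge.Basis-2} of its internal gauge parameters, plays its role, and follows the same pattern as in the diagonal computation of \cite{KitKMNST07}.
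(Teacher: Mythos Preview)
Your proposal is correct and follows precisely the strategy laid out in subsection~\ref{subsec-strategy-corr}: the collinearity \eqref{mean-Q-1}, the boundary action \eqref{act-boundary}--\eqref{act-bound-coeff}, and the scalar-product ratio \eqref{ratio-SP-part}--\eqref{mat-reduced} are exactly the three inputs the paper combines to obtain \eqref{corr-finite}--\eqref{H-finite}. One minor slip in your indexing: the $\xi^{(1)}$ values entering the replaced set $\{\bar\lambda_i\}_{i\in\mathsf{A}_{\boldsymbol{\epsilon,\epsilon'}}}$ are indexed by $\gamma_+$ rather than $\alpha_+$ (though $|\gamma_+|=|\alpha_+|$ since $\tilde m_{\boldsymbol{\epsilon,\epsilon'}}=0$), and your reference to Theorem~\ref{th-act-local-bulk} should be to the subsequent boundary-action theorem, but you do cite the correct equation \eqref{act-boundary}.
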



\subsection{Expression of the correlation functions in the half-infinite chain}

Let us now consider the thermodynamic limit $N\rightarrow \infty $ of these correlation functions, which can be obtained quite similarly as in \cite{KitKMNST07}. We recall here that the spectrum of the spin chain with boundary conditions \eqref{Special-K+} at site $N$ is isospectral to that of a chain with diagonal boundary conditions and boundary parameters $\varsigma_\pm^{(D)}$ \eqref{Id-boundary}, so that we can use the known description of the ground state in the diagonal case to derive the thermodynamic limit of the expression \eqref{corr-finite}.

The configuration for the Bethe roots for the ground state of the open spin chain with diagonal boundary conditions  in the thermodynamic limit $N\to\infty$ has been studied in \cite{SkoS95,KapS96}, see also \cite{GriDT19} for a more accurate description of the ground state in the massive regime.

In the thermodynamic limit $N\to\infty$, nearly all Bethe roots for the ground state of the spin chain condensate on an interval $(0,\Lambda)$ of the real axis (in the regime $|\Delta|<1$, for which $\Lambda=+\infty$) or of the imaginary axis (in the regime $\Delta>1$, for which $\Lambda=-i\pi/2$), with some density function $\rho(\lambda)$ solution of the following integral equation:
\begin{equation}\label{int-rho}
    \rho(\lambda)+\int_{0}^\Lambda \big[ K(\lambda-\mu)+K(\lambda+\mu)\big]\,\rho(\mu)\,d\mu=\frac{p'(\lambda)}\pi,
\end{equation}
which can be extended by parity on the whole interval $(-\Lambda,\Lambda)$ as
\begin{equation}\label{int-rho-ext}
    \rho(\lambda)+\int_{-\Lambda}^\Lambda K(\Lambda-\mu)\,\rho(\mu)\,d\mu=\frac{p'(\lambda)}\pi.
\end{equation}
Here $K$ is given by \eqref{def-K} and $p'$ is given in terms of $t$ \eqref{def-t} as
\begin{align}\label{def-p'}
   &
   p'(\lambda)=\frac{i\sinh\eta}{\sinh(\lambda+\eta/2)\,\sinh(\lambda-\eta/2)}=it(\lambda).
\end{align}
%
Explicitly, we have
\begin{equation}\label{rho}
   \rho(\lambda)
   =\left\lbrace\,
   \begin{array}{@{}l@{\quad}l@{\quad}l@{}}
   \displaystyle
   \frac{1}{\zeta\,\cosh(\pi\lambda/\zeta)} &\text{with} \ \ \zeta=i\eta>0  \  &\text{if}\ \ |\Delta|<1,
   \vspace{1mm}\\
   \displaystyle 
   \frac i\pi\frac{\vartheta'_1(0,q)}{\vartheta_2(0,q)}\frac{\vartheta_3(i\lambda,q)}{\vartheta_4(i\lambda,q)}  
                 &\text{with} \ \ q=e^\eta \ (\eta<0) \  &\text{if}\ \ \Delta>1.
   \end{array}\right.
\end{equation}
%
Furthermore, the set of Bethe roots for the ground state may contain some extra isolated complex roots, that we call boundary roots, and that may tend either to $\eta/2-\varsigma_+^{(D)}$ or to $\eta/2-\varsigma_-^{(D)}$ with exponentially small corrections in the large $N$ limit. We will denote such a boundary root by $\check\lambda_+$ in the former case and by $\check\lambda_-$ in the latter case. The presence of such a boundary root within the set of roots for the ground state depends on both boundary parameters $\varsigma_+^{(D)}$ and $\varsigma_-^{(D)}$ and on the parity of the number of sites $N$ of the chain (see \cite{GriDT19}). 
We refer to \cite{GriDT19} for a detailed study of the set of Bethe roots describing the ground state, and in particular of the cases in which it contains such a boundary root, in the regime $\Delta>1$.


In the expression of the correlation functions \eqref{corr-finite},  the sum over real Bethe roots\footnote{Here and in the following, the terms "real root" are used to designate a Bethe root $\lambda$ which belongs to the interval $(0,\Lambda)$, i.e. which is indeed real in the regime $|\Delta|<1$, but which is instead purely imaginary in the regime $\Delta>1$: in the latter case, we have to make an appropriate change of variable $\alpha=i\lambda$ to recover a real root.}, with indices $\text{\textsc b}_j$ running from 1 to $M$, become integrals over the density functions in the thermodynamic limit $N\rightarrow \infty $ according to the following rule:
\begin{equation}
    \frac{1}{N}\sum_{\substack{\text{\textsc b}_j=1 \\ \lambda_{\text{\textsc b}_j}\in(0,\Lambda)}}^{M}
    \sum_{\sigma _{\text{\textsc b}_{j}=\pm }}
    f(\lambda _{\text{\textsc b}_{j}}^{\sigma })
    \underset{N\rightarrow \infty }{\longrightarrow }
    \int\limits_{0}^{\Lambda }d\lambda _{j}\,
    \rho (\lambda_{j})\sum_{\sigma _{j}={\pm }} 
    f(\lambda _{j}^{\sigma})
    =\int\limits_{-\Lambda }^{\Lambda }d\lambda _{j}\,f(\lambda _{j})\,\rho(\lambda _{j}),
\end{equation}
while the sum over the \textsc{b}$_{j}>M$ can be written as contour
integrals thanks to the identity 
\begin{equation}
2i\pi \,\text{Res}\,\rho (\lambda -\xi )_{\,\vrule height13ptdepth1pt\>{\lambda =\xi +\eta/2}\!}=-2.
\end{equation}
The ratio of determinants can also be computed in the thermodynamic limit similarly as in \cite{KitKMNST07}. If $\lambda_j$ corresponds to a real root, we have
\begin{equation}\label{mat-N-thermo}
  [\mathcal{N}(\boldsymbol{\lambda})]_{j,k}=2\pi N\,\delta_{j,k}\,\Big[ \rho(\lambda_j)+O\Big(\frac{1}{N}\Big)\Big]
     +2\pi \big[ K(\lambda_j-\lambda_k)-K(\lambda_j+\lambda_k)\big],
\end{equation}
and, due to \eqref{int-rho-ext}, 
\begin{equation}
   \sum_{\substack{p=1 \\ \lambda_p\in(0,\Lambda)}}^M\mathcal{N}_{j,p}\,\frac{\rho(\lambda_p-\xi_k)-\rho(\lambda_p+\xi_k)}{2N\rho(\lambda_p)}
   \underset{N\to\infty}{\longrightarrow} i[t(\xi_k-\lambda_j)-t(\xi_k+\lambda_j)] .
\end{equation}
If instead $\lambda_j$ is a boundary root of the form $\lambda_j=\check\lambda_\sigma=\eta/2-\varsigma_\sigma^{(D)}+\check\epsilon_\sigma$ with $\check\epsilon_\sigma$ being an exponentially small correction in $N$, then we have\footnote{The factor 2 for $\varsigma_+^{(D)}=\varsigma_-^{(D)}$ is due to the fact that, in that case, the boundary root approaches a double pole of the Bethe equation, which corresponds to a doubling of the same term in \eqref{mat-N-thermo-BR}, see \cite{GriDT19}.}
\begin{equation}\label{mat-N-thermo-BR}
  [\mathcal{N}(\boldsymbol{\lambda})]_{j,k}=-\frac{i}{\check\epsilon_\sigma}\,
  \left[\delta_{j,k} \big(1+\delta_{\varsigma_+^{(D)},\varsigma_-^{(D)}}\big)+O(\check\epsilon_\sigma)\right].
\end{equation}
Hence, in the thermodynamic limit, the elements of the matrix \eqref{mat-reduced} are given by
\begin{align}\label{mat-R-thermo}
    \mathcal{R}_{a,b} \underset{N\to\infty}{\sim} \begin{cases}
    \displaystyle\frac{ i\pi\,\check\epsilon_\sigma\big[\rho(\lambda_a-\xi_{i_b})-\rho(\lambda_a+\xi_{i_b})\big]}{1+\delta_{\varsigma_+^{(D)},\varsigma_-^{(D)}}} \quad 
          &\text{if }\lambda_a=\check\lambda_\sigma,\vspace{1mm}\\
    \displaystyle\frac{\rho(\lambda_a-\xi_{i_b})-\rho(\lambda_a+\xi_{i_b})}{2N\rho(\lambda_a)} \quad
          &\text{if }\lambda_a\in(0,\Lambda).
    \end{cases}
\end{align}
Note that, if the boundary root $\check\lambda_\sigma$ belongs to the set of roots $\{\lambda_i\}_{i=1}^M\cap\{\lambda_{\text{\textsc b}_j}\}_{j=1}^m$, 
the corresponding row in $\mathcal{R}$ is proportional to $\check\epsilon_\sigma$ which is exponentially small in $N$. However, this exponentially small factor is compensated by the prefactor
\begin{equation}
   \frac{1}{\sinh(-\check\lambda_\sigma-\varsigma_\sigma^{(D)}+\eta/2)}\underset{N\to\infty}{\sim}-\frac 1{\check\epsilon_\sigma},
\end{equation}
so that the final contribution is of order 1. This contribution can be written as a contour integral around the point $\varsigma_\sigma^{(D)}-\eta/2$. 

Therefore, the following result holds:

\begin{theorem}
Under the boundary conditions \eqref{Special-K+}, the
correlation functions of the quasi-local operators \eqref{local-op} satisfying the condition \eqref{n_b=n_c} can be written in the thermodynamic limit as
\begin{multline}\label{result-thermo}
   \moy{ \prod_{n=1}^{m}E_{n}^{\epsilon _{n}^{\prime },\epsilon _{n}}(\xi_{n}|(a_{n},b_{n}),(\bar{a}_{n},\bar{b}_{n})) }
     =\prod_{n=1}^m\frac{e^\eta}{\sinh(\eta b_n)}\,
    \frac{(-1)^{s}}{\prod\limits_{j<i}\sinh (\xi _i-\xi_j)\prod\limits_{i\leq j}\sinh (\xi_i+\xi_j)}   \\
 \times \int_{\mathcal{C}}\prod_{j=1}^{s}d\lambda _{j}\ \int_{\mathcal{C}_{\boldsymbol{\xi}}}
\prod_{j=s+1}^{m}\!\!d\lambda _{j}\ 
H_{m}(\{\lambda _{j}\}_{j=1}^M;\{\xi _{k}\}_{k=1}^m)\ \det_{1\leq j,k\leq m}\big[\Phi(\lambda_j,\xi_k)\big],
\end{multline}
where we have denoted
\begin{equation}\label{mat-Phi}
\Phi(\lambda_j,\xi_k)=\frac{1}{2}\big[\rho (\lambda _{j}-\xi _{k})-\rho (\lambda_{j}+\xi _{k})\big],
\end{equation}
and
\begin{multline}\label{H-thermo}
H_{m}(\{\lambda _{j}\}_{j=1}^M;\{\xi _{k}\}_{k=1}^m)
 =\frac{\prod\limits_{j=1}^{m}\prod\limits_{k=1}^{m}\sinh (\lambda _{j}+\xi_k+\eta/2)}
   {\!\!\!\!\prod\limits_{1\leq i<j\leq m}\!\!\!\!\sinh (\lambda_i-\lambda_j-\eta )\,\sinh (\lambda_i+\lambda_j+\eta )}
    \\
 \times 
 \prod\limits_{p=1}^{s}\bigg\{
 \sinh (\lambda_p-\xi _{i_{p}}^{(1)}-\eta (1+b_{i_p}))
 \prod\limits_{k=1}^{i_{p}-1}\sin(\lambda _{p}-\xi _{k}^{(1)})
 \prod\limits_{k=i_{p}+1}^{m}\!\!\sinh (\lambda_{p}-\xi _{k}^{(1)}-\eta )\bigg\} 
 \\
 \times 
 \!\!\prod\limits_{p=s+1}^{m}\bigg\{
 \sinh (\lambda_p-\xi _{i_{p}}^{(1)}+\eta(1-\bar b_{i_p}))
 \prod\limits_{k=1}^{i_{p}-1}\sinh (\lambda _{p}-\xi _{k}^{(1)})
 \prod\limits_{k=i_{p}+1}^{m}\!\!\sinh (\lambda_{p}-\xi _{k}^{(1)}+\eta )\bigg\} 
   \\
   \times
   \prod\limits_{k=1}^{m}\frac{\sinh (\xi _{k}-\varsigma_{+}^{(D)})\,\sinh (\xi _{k}-\varsigma_{-}^{(D)})}{\sinh(\lambda _{k}-\varsigma_{+}^{(D)}+\eta /2)\, \sinh (\lambda _{k}-\varsigma_{-}^{(D)}+\eta /2)} ,
\end{multline}
in which we have used the identification \eqref{Id-boundary}.
The contours $\mathcal{C}$ is defined as
\begin{equation}\label{C-nude}
  \mathcal{C} =  [-\Lambda ,\Lambda ] 
\end{equation}
if the set of Bethe roots for the ground state does not contain any boundary root, and as
\begin{equation}\label{C-BR}
  \mathcal{C} =  [-\Lambda ,\Lambda ] \cup \Gamma^-(\varsigma_{\sigma}^{(D)}-\eta /2) 
\end{equation}
if the set of Bethe roots for the ground state contains the boundary root $\check\lambda_\sigma$.
The contour $\mathcal{C}_{\boldsymbol{\xi}}$ is defined as
\begin{equation}\label{C-xi}
   \mathcal{C}_{\boldsymbol{\xi}} =\mathcal{C}\cup \Gamma^- (\{\xi_k^{(1)}\}_{k=1}^m).
\end{equation}
Here $\Gamma^\pm (\varsigma_{\sigma}^{(D)}-\eta /2)$ (respectively $\Gamma^\pm (\{\xi _{k}^{(1)}\}_{k=1}^m)$)
surrounds the point $\varsigma_{\sigma}^{(D)}-\eta /2$ (respectively the points $\xi^{(1)} _{1},\ldots ,\xi^{(1)} _{m}$) with index $\pm1$, all other poles being outside.
\end{theorem}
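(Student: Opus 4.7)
The plan is to take the thermodynamic limit $N\to\infty$ of the finite-volume formula \eqref{corr-finite}--\eqref{H-finite} term by term, using the characterization of the ground-state Bethe roots recalled just before the theorem and the asymptotic behaviours \eqref{mat-N-thermo}--\eqref{mat-R-thermo} of the matrix $\mathcal{R}$. The identification of $\ket{Q}$ with the ground state relies on the isospectrality statement of Proposition~\ref{prop-special+BC}, which allows us to import from the diagonal case the density $\rho$ solving \eqref{int-rho-ext}, together with the possible presence of a boundary root $\check\lambda_\sigma$ close to $\eta/2-\varsigma_\sigma^{(D)}$.

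First, I would split each of the $m$ summation indices $\text{\textsc b}_j$ according to whether $\text{\textsc b}_j\le M$ (a Bethe-root index) or $\text{\textsc b}_j>M$ (an inhomogeneity index coming from the $\underline{B}_M$-state extension $\mu_{M+j}=\xi_{m+1-j}^{(1)}$). For indices of the first type corresponding to real roots condensed on $(0,\Lambda)$, the summation-plus-sign rule
\begin{equation*}
  \frac1N\sum_{\lambda_{\text{\textsc b}_j}\in(0,\Lambda)}\sum_{\sigma_{\text{\textsc b}_j}=\pm}\sigma_{\text{\textsc b}_j}\,f(\lambda_{\text{\textsc b}_j}^{\sigma})
  \ \underset{N\to\infty}{\longrightarrow}\ \int_{-\Lambda}^{\Lambda}\! d\lambda\, f(\lambda)\,\rho(\lambda)
\end{equation*}
produces the integral along $[-\Lambda,\Lambda]$. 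For indices of the second type, the identity $2i\pi\operatorname{Res}\rho(\lambda-\xi)|_{\lambda=\xi+\eta/2}=-2$ lets me rewrite the corresponding sum as a contour integral around $\{\xi_k^{(1)}\}_{k=1}^m$; this explains the enlarged contour $\mathcal{C}_{\boldsymbol{\xi}}$ appearing in the rows labelled $s<p\le m$. Combining these two contributions produces the integration contours \eqref{C-nude} and \eqref{C-xi} of the theorem, with the contour $\Gamma(\bar\varsigma_\sigma^{(D)}-\eta/2)$ being generated when a boundary root $\check\lambda_\sigma$ is present in the ground state, as I explain below.

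Next, I would treat the determinant $\det_m\Omega$ entering \eqref{H-finite}. By \eqref{mat-reduced}, this determinant is, up to a sign accounting for the rows with $\text{\textsc b}_l>M$ (which give $-\delta_{N+m+1-\text{\textsc b}_l,k}$), a minor of the matrix $\mathcal{R}_{\text{\textsc b}_l,k}$ defined in \eqref{mat-reduced}. Substituting the large-$N$ equivalent \eqref{mat-R-thermo} and using $2\pi\rho(\lambda_a)\simeq 2\pi N\,\Xi'_{\boldsymbol\varepsilon,Q}(\lambda_a)$ that comes from \eqref{mat-N-thermo}, each entry $\mathcal{R}_{\text{\textsc b}_l,k}$ in a ``real-root'' row reduces to $\tfrac12[\rho(\lambda_{\text{\textsc b}_l}-\xi_{i_k})-\rho(\lambda_{\text{\textsc b}_l}+\xi_{i_k})]\cdot [2N\rho(\lambda_{\text{\textsc b}_l})]^{-1}$; after distributing the $N$-powers against the $N^{-1}$ produced by turning sums into integrals, the surviving factor is exactly $\Phi(\lambda_j,\xi_k)$ of \eqref{mat-Phi}. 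The rows coming from indices $\text{\textsc b}_l>M$ similarly combine with the residue at $\lambda=\xi_k+\eta/2$ to reconstruct $\Phi$ on the enlarged contour. At this stage every summand of \eqref{H-finite} is a product of a common $\det[\Phi(\lambda_j,\xi_k)]$ with the non-determinantal factors that survive the limit, i.e. with $H_m$ of \eqref{H-thermo}; an inspection of the prefactors (the ratios $\sinh(\xi_k-\bar\varsigma_\pm^{(D)})/\sinh(\lambda^\sigma-\bar\varsigma_\pm^{(D)}+\eta/2)$, the double products of $\sinh$, and the cancellation of the $(\xi_p+\xi_q)$ denominator into the $\sinh(\lambda+\xi+\eta/2)$ numerator) shows that they reassemble precisely into the stated $H_m$.

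The main technical point is the treatment of a boundary root in the ground-state set. When $\check\lambda_\sigma$ belongs to $\{\lambda_i\}_{i=1}^M\cap\{\lambda_{\text{\textsc b}_j}\}_{j=1}^m$, the corresponding entries of $\mathcal{R}$ scale as $\check\epsilon_\sigma=O(e^{-cN})$ by \eqref{mat-N-thermo-BR}, but they multiply the prefactor $1/\sinh(-\check\lambda_\sigma-\varsigma_\sigma^{(D)}+\eta/2)\sim -1/\check\epsilon_\sigma$ appearing in \eqref{H-finite}, so that the product is finite. The exponentially small $\check\epsilon_\sigma$ on both sides is exactly what is required to write this finite contribution as a contour integral with index $1$ around $\bar\varsigma_\sigma^{(D)}-\eta/2$, namely as an extra component $\Gamma(\bar\varsigma_\sigma^{(D)}-\eta/2)$ added to $\mathcal{C}$ as in \eqref{C-BR}; this requires the residue identity $\operatorname{Res}_{\lambda=\bar\varsigma_\sigma^{(D)}-\eta/2}\big[\sinh(-\lambda-\varsigma_\sigma^{(D)}+\eta/2)\big]^{-1}$ to be compatible with the factor $i\pi\,\check\epsilon_\sigma/(1+\delta_{\varsigma_+^{(D)},\varsigma_-^{(D)}})$ of \eqref{mat-R-thermo}. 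Conversely, when no boundary root is present, only the bulk contour $[-\Lambda,\Lambda]$ survives. The hard part of the argument is this delicate cancellation, together with checking that the factor $(1+\delta_{\varsigma_+^{(D)},\varsigma_-^{(D)}})^{-1}$ produced by the doubling of the pole in \eqref{mat-N-thermo-BR} matches the multiplicity of the corresponding residue; everything else is a rearrangement of products along the lines already developed in \cite{KitKMNST07} for the diagonal case.
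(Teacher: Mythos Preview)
Your proposal is correct and follows essentially the same route as the paper: splitting the indices $\text{\textsc b}_j$ into Bethe-root and inhomogeneity types, converting the former into integrals over $[-\Lambda,\Lambda]$ via the density rule, the latter into contour integrals via the residue identity for $\rho$, using the asymptotics \eqref{mat-N-thermo}--\eqref{mat-R-thermo} to reduce $\det_m\Omega$ to $\det_m\Phi$, and handling the boundary-root contribution through the cancellation of the exponentially small $\check\epsilon_\sigma$ against the pole $1/\sinh(-\check\lambda_\sigma-\varsigma_\sigma^{(D)}+\eta/2)$. One small imprecision: the relation you invoke as ``$2\pi\rho(\lambda_a)\simeq 2\pi N\,\Xi'_{\boldsymbol\varepsilon,Q}(\lambda_a)$'' is not quite the right way to phrase it; what \eqref{mat-N-thermo} actually gives is $[\mathcal{N}]_{jj}\simeq 2\pi N\rho(\lambda_j)$ directly, and it is this diagonal dominance that produces the $[2N\rho(\lambda_a)]^{-1}$ factor in \eqref{mat-R-thermo}, which then cancels against the $N^{-1}\sum\to\int\rho$ replacement.
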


We recall that a full description of the configuration of Bethe roots for the ground states in terms of the boundary parameters $\varsigma_{\sigma}^{(D)}$ has been done in \cite{GriDT19} in the regime $\Delta>1$.

It is interesting to compare the above result to the corresponding formulas (6.8)-(6.10) of \cite{KitKMNST07} that were obtained in the diagonal case.  To this aim, we can rewrite \eqref{result-thermo}-\eqref{H-thermo} in a closer form as (6.8)-(6.10) of \cite{KitKMNST07} by changing the sign of the integration variables $\lambda_j$, $1\le j\le m$, and by redefining new inhomogeneity parameters as
\begin{equation}\label{redef-xi}
   \tilde\xi_j=-\xi_j+\eta/2, \quad 1\le j\le m.
\end{equation}
It gives
\begin{multline}\label{result-thermo-bis}
   \moy{ \prod_{n=1}^{m}E_{n}^{\epsilon _{n}^{\prime },\epsilon _{n}}(\xi_{n}|(a_{n},b_{n}),(\bar{a}_{n},\bar{b}_{n})) }
     =\prod_{n=1}^m\frac{e^\eta}{\sinh(\eta b_n)}\,
    \frac{(-1)^{m-s}}{\prod\limits_{j<i}\sinh (\tilde\xi _i-\tilde\xi_j)\prod\limits_{i\leq j}\sinh (\tilde\xi_i+\tilde\xi_j-\eta)}   \\
 \times \int_{\mathcal{\tilde C}}\prod_{j=1}^{s}d\lambda _{j}\ \int_{\mathcal{\tilde C}_{\boldsymbol{\xi}}}
\prod_{j=s+1}^{m}\!\!d\lambda _{j}\ 
\tilde H_{m}(\{\lambda _{j}\}_{j=1}^M;\{\tilde \xi _{k}\}_{k=1}^m)\ \det_{1\leq j,k\leq m}\big[\tilde \Phi(\lambda_j,\tilde \xi_k)\big],
\end{multline}
where we have denoted
\begin{equation}\label{mat-Phi-tilde}
\tilde \Phi(\lambda_j,\xi_k)=\frac{1}{2}\big[\tilde \rho (\lambda _{j}-\tilde \xi _{k})-\tilde\rho (\lambda_{j}-\eta+\tilde\xi _{k})\big],
\end{equation}
with
\begin{equation}
    \tilde\rho(\lambda)=\rho(\lambda+\eta/2)
    =\left\lbrace\,
   \begin{array}{@{}l@{\quad}l@{\quad}l@{}}
   \displaystyle
   \frac{i}{\zeta\,\sinh(\pi\lambda/\zeta)} &\text{with} \ \ \zeta=i\eta>0  \  &\text{if}\ \ |\Delta|<1,
   \vspace{1mm}\\
   \displaystyle 
   -\frac 1\pi\frac{\vartheta'_1(0,q)}{\vartheta_2(0,q)}\frac{\vartheta_2(i\lambda,q)}{\vartheta_1(i\lambda,q)}  
                 &\text{with} \ \ q=e^\eta \ (\eta<0) \  &\text{if}\ \ \Delta>1.
   \end{array}\right.
\end{equation}
and
\begin{multline}\label{H-thermo-tilde}
\tilde H_{m}(\{\lambda _{j}\}_{j=1}^M;\{\tilde \xi _{k}\}_{k=1}^m)
 =\frac{\prod\limits_{j=1}^{m}\prod\limits_{k=1}^{m}\sinh (\lambda _{j}+\tilde \xi_k-\eta)}
   {\!\!\!\!\prod\limits_{1\leq i<j\leq m}\!\!\!\!\sinh (\lambda_i-\lambda_j+\eta )\,\sinh (\lambda_i+\lambda_j-\eta )}
    \\
 \times 
 \prod\limits_{p=1}^{s}\bigg\{
 \sinh (\lambda_p-\tilde\xi _{i_{p}}+\eta (1+b_{i_p}))
 \prod\limits_{k=1}^{i_{p}-1}\sin(\lambda _{p}-\tilde\xi _{k})
 \prod\limits_{k=i_{p}+1}^{m}\!\!\sinh (\lambda_{p}-\tilde\xi _{k}+\eta )\bigg\} 
 \\
 \times 
 \!\!\prod\limits_{p=s+1}^{m}\bigg\{
 \sinh (\lambda_p-\tilde\xi _{i_{p}}-\eta(1-\bar b_{i_p}))
 \prod\limits_{k=1}^{i_{p}-1}\sinh (\lambda _{p}-\tilde\xi _{k})
 \prod\limits_{k=i_{p}+1}^{m}\!\!\sinh (\lambda_{p}-\tilde\xi _{k}-\eta )\bigg\} 
   \\
   \times
   \prod\limits_{k=1}^{m}\frac{\sinh (\tilde\xi _{k}+\varsigma_{+}^{(D)}-\eta/2)\,\sinh (\tilde\xi _{k}+\varsigma_{-}^{(D)}-\eta/2)}{\sinh(\lambda _{k}+\varsigma_{+}^{(D)}-\eta /2)\, \sinh (\lambda _{k}+\varsigma_{-}^{(D)}-\eta /2)} .
\end{multline}
Here the contour $\mathcal{\tilde C}$ is defined as \eqref{C-nude}
if the set of Bethe roots for the ground state does not contain any boundary root, and as
\begin{equation}\label{C-BR-tilde}
  \mathcal{\tilde C} =  [-\Lambda ,\Lambda ] \cup \Gamma^+(\eta/2-\varsigma_{\sigma}^{(D)}) 
\end{equation}
if the set of Bethe roots for the ground state contains the boundary root $\check\lambda_\sigma$,
and the contour $\mathcal{\tilde C}_{\boldsymbol{\xi}}$ is defined as
\begin{equation}\label{C-xi-tilde}
   \mathcal{\tilde C}_{\boldsymbol{\xi}} =\mathcal{\tilde C}\cup \Gamma^+ (\{\tilde \xi_k\}_{k=1}^m).
\end{equation}
With this rewriting and this redefinition of the inhomogeneity parameters, we see a clear correspondence with the results (6.9)-(6.10) of \cite{KitKMNST07} for the diagonal case. The only difference comes from the boundary factor (last line of \eqref{H-thermo-tilde}) which here contains two factors and involves both parameters $\varsigma^{(D)}_+$ and $\varsigma^{(D)}_-$, the related fact that the residues at both boundary roots $\check\lambda_+$ and $\check\lambda_-$ may be involved through the definition of the integration contour \eqref{C-BR-tilde}, and an additional dependence in terms of the gauge parameters $b_n$ and $\bar b_n$, which are themselves defined in terms of the boundary parameters through the gauge condition \eqref{Gauge-cond-B}.
This strong similarity of the two results (the one of \cite{KitKMNST07} and the present one) is not surprising since, as we have shown, the chain that we consider here is isospectral to the one considered in \cite{KitKMNST07}, and the computation of the local operator actions and of the resulting scalar products are very similar in the two cases, except for the contribution of the gauge parameters involved in the definition of the eigenstates and in the local operators.
On the other hand, the fact that the two boundary parameters $\varsigma^{(D)}_+$ and $\varsigma^{(D)}_-$ and their corresponding boundary roots $\check\lambda_+$ and $\check\lambda_-$appear explicitly in the result\footnote{And not only indirectly through the definition of the ground state, see \cite{GriDT19}.} is also not surprising since, contrary to what happens in \cite{KitKMNST07}, these two boundary parameters both label the boundary field at site 1 through \eqref{Id-boundary}.

\begin{rem} It is interesting to consider the limit
\begin{equation}
   \psi_-\to +\infty \quad \text{with} \quad \varphi_-\sim \varsigma_-\ \text{finite},
\end{equation}
in which one recovers diagonal boundary conditions on both ends (in that case $\kappa_-\sim e^{-\psi_-}\to 0$). In Appendix~\ref{app-diag}, we explicitly show how the result of \cite{KitKMNST07} for the correlation functions can be inferred from \eqref{result-thermo-bis}-\eqref{H-thermo-tilde}.
\end{rem}

Finally, let us mention that, from such a formula, the homogeneous limit can be taken in both regimes of the chain completely similarly as in \cite{KitKMNST07} (see formulas (6.15) and (6.18) of that paper). 

\section{Conclusion}

Here we have shown how to compute correlation functions of a class of local operators for an open chain with unparallel boundary fields and with particular diagonal boundary conditions at the last site. The boundary field at the first site of the chain is here kept completely general.

Let us briefly summarize our method and results. The boundary matrix at site 1 can be made diagonal by means of the use of a vertex-IRF gauge transformation, and the spectrum and eigenstates of the corresponding transfer matrix can be described within the new SoV approach introduced in \cite{MaiN19}. These eigenstates can also be rewritten in the form of generalized Bethe states constructed by means of the gauge transformed boundary Yang-Baxter algebra. On the one hand we can use, as in \cite{KitKMNST07}, a boundary-bulk decomposition of the generalized Bethe states, as well as the solution of the bulk inverse problem, to compute the action of a basis of local operators on these generalized Bethe states, the main difference being that all this should be done in terms of the gauged transformed boundary and bulk Yang-Baxter operators. On the other hand, since our special boundary conditions at site $N$ can be obtained as some limit of some more general non-diagonal ones for which the generalized Sklyanin's SoV approach can be used, we can compute the resulting scalar products as some limit of the determinant representations obtained in \cite{KitMNT18} for the scalar products of separate states. As in  \cite{KitKMNST07}, we hence obtain correlation functions as multiple sums of determinants \eqref{corr-finite} and, in the half-infinite chain, as multiple integrals \eqref{result-thermo}. Let us also mention that, due to the fact the open chain that we consider is isospectral to a chain with diagonal boundary condition, the description of the ground state and computation of the scalar products have the same form than in  \cite{KitKMNST07}.
Hence, our final results are very similar to the ones of \cite{KitKMNST07}, except that our explicit final formula involves the description of the boundary field at site 1 through two boundary parameters (instead of only one in \cite{KitKMNST07}), and possibly through the contribution of two different isolated complex roots (boundary roots) converging towards these two boundary parameters. Finally, one should mention that the dependence into the boundary parameters at site 1 also appears in a slightly more intricate way via the gauge parameters \eqref{Gauge-cond-B}.

Let us underline once again that we have here only computed the correlation of a particular class of local operators, the ones which satisfy the condition \eqref{n_b=n_c}, i.e. which do not change the number of gauged $B$-operators when acting on bulk (or boundary) generalized gauged Bethe vectors, see \eqref{act-local-op-bulk}, \eqref{act-boundary}. The computation of completely general correlation functions indeed leads to more complicated types of scalar products, for which at the moment there exists no known simple determinant representation. We intend to consider this interesting but more delicate problem in a further study.

Finally, we expect that our study of correlation function can be adapted to the consideration of some other type of boundary conditions at site $N$, for instance non-diagonal ones with the constraint \eqref{homog-cond}. We also intend to consider such problems in a further publication.

\appendix

\section{On the bulk gauge Yang-Baxter algebra}
\label{app-gaugeYBbulk}

In this appendix, we list some useful properties satisfied by \eqref{gauge-M}.

\subsection{A useful identity}

From the definitions \eqref{gauge-M} and \eqref{gauge-Mhat}, one can show the following useful identity:
\begin{align}\label{Mhat-M}
\hat{M}(\lambda |(\alpha ,\beta ),(\gamma ,\delta ))
& =\frac{\sigma _{0}^{y} \left[ S^{t_{0}}(\lambda +\eta /2|\gamma ,\delta )\, M^{t_{0}}(-\lambda)\, \sigma _{0}^{y}\, S(\lambda +\eta /2|\alpha ,\beta )\, \sigma _{0}^{y}\right] \sigma _{0}^{y}}
{( -1) ^{N}\det S(\lambda +\eta /2|\gamma,\delta )} 
\nonumber\\
& =\frac{\sigma _{0}^{y}\left[ \sigma _{0}^{y}\, S^{t_{0}}(\lambda +\eta/2|\alpha ,\beta )\, \sigma _{0}^{y}\, M(-\lambda )\, S\,(\lambda +\eta /2|\gamma,\delta )\right] ^{t_{0}}\sigma _{0}^{y}}
{( -1) ^{N} \det S(\lambda +\eta /2|\gamma ,\delta )} 
\nonumber\\
& =\frac{\sigma _{0}^{y}\left[ S^{-1}(\lambda -\eta /2|\alpha -1,\beta)\, M(-\lambda )\, S(\lambda -\eta /2|\gamma -1,\delta )\right] ^{t_{0}}\sigma_{0}^{y}}
{( -1) ^{N}(\det S(\lambda +\eta /2|\alpha ,\beta))^{-1} \det S(\lambda +\eta /2|\gamma ,\delta )} 
\nonumber\\
& =( -1) ^{N}\, \frac{\det S(\lambda +\eta /2|\alpha ,\beta )}{ \det S(\lambda +\eta /2|\gamma ,\delta )}\,\sigma_{0}^{y}\, M^{t_{0}}(-\lambda |(\alpha -1,\beta ),(\gamma -1,\delta ))\, \sigma_{0}^{y},
\end{align}
or in components:
\begin{multline}\label{MhatM-comp}
 \hat{M}(\lambda |(\alpha ,\beta ),(\gamma ,\delta ))
 =( -1) ^{N} \, e^{-\eta(\alpha-\gamma)}\,\frac{\sinh(\eta\beta)}{\sinh(\eta\delta)}
 \\
 \times \left( 
\begin{array}{ll}
D(-\lambda |(\alpha -1,\beta ),(\gamma -1,\delta )) & -B(-\lambda |(\alpha
-1,\beta ),(\gamma -1,\delta )) \\ 
-C(-\lambda |(\alpha -1,\beta ),(\gamma -1,\delta )) & A(-\lambda |(\alpha
-1,\beta ),(\gamma -1,\delta ))%
\end{array}%
\right) .
\end{multline}

\subsection{Commutation relations}

The bulk monodromy matrix \eqref{bulk-mon} satisfies the following Yang-Baxter relation with the R-matrix \eqref{barR}:
\begin{equation}\label{YBbar}
  \bar R_{12}(\lambda-\mu)\, M_1(\lambda)\, M_2(\mu)
  =M_2(\mu)\, M_1(\lambda)\, \bar R_{12}(\lambda-\mu),
\end{equation}
Let us define
\begin{equation}\label{barS}
   \bar S(\lambda|\alpha,\beta)
   =\begin{pmatrix} 
     e^{\lambda+\eta(\beta+\alpha)} & e^{\lambda-\eta(\beta-\alpha)} \\
     1 & 1
     \end{pmatrix}
    =S(\lambda|-\alpha,-\beta)
\end{equation}
and
\begin{equation}
 \bar R^{\mathrm{SOS}}(\lambda |\beta )=%
\begin{pmatrix}
\sinh (\lambda -\eta ) & 0 & 0 & 0 \\ 
0 & \frac{\sinh (\eta (\beta +1))}{\sinh (\eta \beta )}\,\sinh \lambda & 
\frac{\sinh (\lambda -\eta \beta )}{\sinh (\eta \beta )}\,\sinh \eta & 0 \\ 
0 & -\frac{\sinh (\eta \beta +\lambda )}{\sinh (\eta \beta )}\,\sinh \eta & 
\frac{\sinh (\eta (\beta -1))}{\sinh (\eta \beta )}\,\sinh \lambda & 0 \\ 
0 & 0 & 0 & \sinh (\lambda -\eta )%
\end{pmatrix}%
.  \label{barR-SOS}
\end{equation}
the analogue of \eqref{mat-S} and \eqref{R-SOS} with $\eta$ replaced by $\bar\eta=-\eta$.
We have
\begin{equation}
 \bar R_{12}(\lambda -\mu )\,\bar S_{2}(-\mu |\alpha ,\beta )\, \bar S_{1}(-\lambda |\alpha
,\beta +\sigma _{2}^{z})
=\bar S_{1}(-\lambda |\alpha ,\beta )\,\bar S_{2}(-\mu |\alpha,\beta +\sigma _{1}^{z})\,
\bar R_{21}^{\mathrm{SOS}}(\lambda -\mu |\beta ),
\label{Vertex-IRF2-bar}
\end{equation}
which is the analogue of \eqref{Vertex-IRF2} with $\eta$ replaced by $\bar\eta=-\eta$. The latter relation implies that
\begin{multline}
 \bar R_{21}^{\mathrm{SOS}}(\lambda -\mu |-\delta )
 =S_{2}(-\mu-\eta/2 |\gamma,\delta -\sigma _{1}^{z})^{-1}\,S_{1}(-\lambda-\eta/2 |\gamma ,\delta )^{-1}\,
 \\
 \times\bar R_{12}(\lambda -\mu )\, S_{2}(-\mu-\eta/2 |\gamma ,\delta )\,  S_{1}(-\lambda-\eta/2 |\gamma,\delta -\sigma _{2}^{z}),
\label{Vertex-IRF2-bar2}
\end{multline}
so that, multiplying \eqref{YBbar} on the left by $S_{2}(-\mu-\eta/2 |\alpha,\beta -\sigma _{1}^{z})^{-1}\,S_{1}(-\lambda-\eta/2 |\gamma ,\delta )^{-1}$ and on the right by $S_{2}(-\mu-\eta/2 |\gamma ,\delta )\,  S_{1}(-\lambda-\eta/2 |\gamma,\delta -\sigma _{2}^{z})$, we obtain:
\begin{multline}\label{YBbar-gauge}
 \bar R_{21}^{\mathrm{SOS}}(\lambda -\mu |-\beta )\, 
 S_{1}(-\lambda-\eta/2 |\alpha,\beta -\sigma _{2}^{z})^{-1}\, S_{2}(-\mu-\eta/2 |\alpha ,\beta )^{-1}\, 
 M_1(\lambda)\, M_2(\mu)\, \\
 \times
 S_{2}(-\mu-\eta/2 |\gamma ,\delta )\,  S_{1}(-\lambda-\eta/2 |\gamma,\delta -\sigma _{2}^{z})
 = S_{2}(-\mu-\eta/2 |\alpha,\beta -\sigma _{1}^{z})^{-1}\,S_{1}(-\lambda-\eta/2 |\gamma ,\delta )^{-1}
 \\
 \times
 M_2(\mu)\, M_1(\lambda)\,
 S_{1}(-\lambda-\eta/2 |\gamma ,\delta )\, S_{2}(-\mu-\eta/2 |\gamma,\delta -\sigma _{1}^{z})\,
 \bar R_{21}^{\mathrm{SOS}}(\lambda -\mu |-\delta ).
\end{multline}
Notice that
\begin{equation}
 \bar R_{21}^{\mathrm{SOS}}(\lambda |-\beta )
=\bar R_{12}^{\mathrm{SOS}}(\lambda |\beta )
.  
\end{equation}
The commutation relations of the matrix elements of \eqref{gauge-M} can be deduced from \eqref{YBbar-gauge}. In particular, we have
\begin{align}
  &A(\lambda |(\alpha,\beta -1),(\gamma,\delta -1))\,
  A(\mu |(\alpha,\beta) ,(\gamma,\delta) )
  =A(\mu | (\alpha,\beta -1),(\gamma,\delta -1))\,
  A(\lambda |(\alpha,\beta) ,(\gamma,\delta) ) ,
  \\
  &B(\lambda |(\alpha,\beta -1),(\gamma,\delta +1))\,
  B(\mu |(\alpha,\beta) ,(\gamma,\delta ))
  =B(\mu |(\alpha ,\beta -1),(\gamma,\delta +1))\,
  B(\lambda |(\alpha,\beta) ,(\gamma,\delta) ) ,
  \\
  &C(\lambda |(\alpha,\beta +1),(\gamma,\delta -1))\,
  C(\mu |(\alpha ,\beta) ,(\gamma,\delta ))
  =C(\mu |(\alpha ,\beta +1),(\gamma ,\delta -1))\,
  C(\lambda |(\alpha,\beta) ,(\gamma ,\delta )) ,
  \\
  &D(\lambda |(\alpha,\beta +1),(\gamma,\delta +1))\,
  D(\mu |(\alpha ,\beta) ,(\gamma,\delta )) 
  =D(\mu |(\alpha ,\beta +1),(\gamma ,\delta +1))\,
  D(\lambda |(\alpha,\beta) ,(\gamma ,\delta) ).
\end{align}
We also have 
%
\begin{multline}\label{comm-AB-gauge}
   A(\mu|(\alpha,\beta-1),(\gamma,\delta+1))\, B(\lambda|(\alpha,\beta),(\gamma,\delta))
   \\
   =\frac{\sinh(\lambda-\mu-\eta)\,\sinh(\eta\delta)}{\sinh(\lambda-\mu)\,\sinh(\eta(\delta-1))}\,
   B(\lambda|(\alpha,\beta-1),(\gamma,\delta-1))\, A(\mu|(\alpha,\beta),(\gamma,\delta))
   \\
   -\frac{\sinh\eta\,\sinh(\lambda-\mu-\eta\delta)}{\sinh(\eta(\delta-1))\,\sinh(\lambda-\mu)}\,
   B(\mu|(\alpha,\beta-1),(\gamma,\delta-1))\, A(\lambda|(\alpha,\beta),(\gamma,\delta)),
\end{multline}
and
\begin{multline}\label{comm-DB-gauge}
   D(\lambda|(\alpha,\beta-1),(\gamma,\delta+1))\, B(\mu|(\alpha,\beta),(\gamma,\delta))
   \\
   =\frac{\sinh(\lambda-\mu-\eta)\,\sinh(\eta\beta)}{\sinh(\lambda-\mu)\,\sinh(\eta(\beta-1))}\,
   B(\mu|(\alpha,\beta+1),(\gamma,\delta+1))\, D(\lambda|(\alpha,\beta),(\gamma,\delta))
   \\
   +\frac{\sinh\eta\,\sinh(\lambda-\mu+\eta\beta)}{\sinh(\eta(\beta-1))\,\sinh(\lambda-\mu)}\,
   B(\lambda|(\alpha,\beta+1),(\gamma,\delta+1))\, D(\mu|(\alpha,\beta),(\gamma,\delta)).
\end{multline}
Using the notation \eqref{redef-gauge-op}, these commutation relations can be rewritten as
\begin{align}
  &A(\lambda |\alpha-\beta +1,\gamma+\delta -1)\,
  A(\mu |\alpha-\beta ,\gamma+\delta )
  =A(\mu | \alpha-\beta +1,\gamma+\delta -1)\,
  A(\lambda |\alpha-\beta ,\gamma+\delta ) ,
  \label{commAA}\\
  &B(\lambda |\alpha-\beta +1,\gamma-\delta -1)\,
  B(\mu |\alpha-\beta ,\gamma-\delta )
  =B(\mu |\alpha -\beta +1,\gamma-\delta -1)\,
  B(\lambda |\alpha-\beta ,\gamma-\delta ) ,
  \label{commBB}\\
  &C(\lambda |\alpha+\beta +1,\gamma+\delta -1)\,
  C(\mu |\alpha +\beta ,\gamma+\delta )
  =C(\mu |\alpha +\beta +1,\gamma +\delta -1)\,
  C(\lambda |\alpha+\beta ,\gamma +\delta ) ,
  \label{commCC}\\
  &D(\lambda |\alpha+\beta +1,\gamma-\delta -1)\,
  D(\mu |\alpha +\beta ,\gamma-\delta ) 
  =D(\mu |\alpha +\beta +1,\gamma -\delta -1)\,
  D(\lambda |\alpha+\beta ,\gamma -\delta ),
  \label{commDD}
\end{align}
i.e.
\begin{align}
  &A(\lambda |x +1,y -1)\, A(\mu |x ,y )  =A(\mu | x +1,y -1)\, A(\lambda |x ,y ) ,  
  \label{commAAbis}\\
  &B(\lambda |x +1,y -1)\, B(\mu |x ,y )  =B(\mu |x +1,y -1)\,  B(\lambda |x ,y ) ,
  \label{commBBbid}\\
  &C(\lambda |x +1,y -1)\,  C(\mu |x , y )  =C(\mu |x +1,y -1)\,  C(\lambda |x ,y ) ,
  \label{commCCbis}\\
  &D(\lambda |x +1,y -1)\,  D(\mu |x,y )   =D(\mu |x +1,y -1)\,  D(\lambda |x ,y ).
  \label{commDDbis}
\end{align}
and
\begin{multline}\label{comm-AB-bis}
   A(\mu|\alpha-\beta+1,\gamma+\delta+1)\, B(\lambda|\alpha-\beta,\gamma-\delta)
   \\
   =\frac{\sinh(\lambda-\mu-\eta)\,\sinh(\eta\delta)}{\sinh(\lambda-\mu)\,\sinh(\eta(\delta-1))}\,
   B(\lambda|\alpha-\beta+1,\gamma-\delta+1)\, A(\mu|\alpha-\beta,\gamma+\delta)
   \\
   -\frac{\sinh\eta\,\sinh(\lambda-\mu-\eta\delta)}{\sinh(\eta(\delta-1))\,\sinh(\lambda-\mu)}\,
   B(\mu|\alpha-\beta+1,\gamma-\delta+1)\, A(\lambda|\alpha-\beta,\gamma+\delta),
\end{multline}
\begin{multline}\label{comm-DB-bis}
   D(\lambda|\alpha+\beta-1,\gamma-\delta-1)\, B(\mu|\alpha-\beta,\gamma-\delta)
   \\
   =\frac{\sinh(\lambda-\mu-\eta)\,\sinh(\eta\beta)}{\sinh(\lambda-\mu)\,\sinh(\eta(\beta+1))}\,
   B(\mu|\alpha-\beta-1,\gamma-\delta-1)\, D(\lambda|\alpha+\beta,\gamma-\delta)
   \\
   +\frac{\sinh\eta\,\sinh(\lambda-\mu+\eta\beta)}{\sinh(\eta(\beta+1))\,\sinh(\lambda-\mu)}\,
   B(\lambda|\alpha-\beta-1,\gamma-\delta-1)\, D(\mu|\alpha+\beta,\gamma-\delta),
\end{multline}
i.e.
\begin{multline}\label{comm-AB-ter}
   A(\mu |x+1,y+1)\, B(\lambda |x,z)
   =\frac{\sinh(\mu-\lambda+\eta)\,\sinh(\eta\frac{y-z}2)}{\sinh(\mu-\lambda)\,\sinh(\eta(\frac{y-z}2-1))}\,
   B(\lambda|x+1,z+1)\, A(\mu|x,y)
   \\
   -\frac{\sinh\eta\,\sinh(\mu-\lambda+\eta\frac{y-z}2)}{\sinh(\eta(\frac{y-z}2-1))\,\sinh(\mu-\lambda)}\,
   B(\mu|x+1,z+1)\, A(\lambda|x,y),
\end{multline}
\begin{multline}\label{comm-DB-ter}
   D(\lambda|x-1,y-1)\, B(\mu|z,y)
   =\frac{\sinh(\lambda-\mu-\eta)\,\sinh(\eta\frac{x-z}2)}{\sinh(\lambda-\mu)\,\sinh(\eta(\frac{x-z}2+1))}\,
   B(\mu|z-1,y-1)\, D(\lambda|x,y)
   \\
   +\frac{\sinh\eta\,\sinh(\lambda-\mu+\eta\frac{x-z}2)}{\sinh(\eta(\frac{x-z}2+1))\,\sinh(\lambda-\mu)}\,
   B(\lambda|z-1,y-1)\, D(\mu|x,y).
\end{multline}

\subsection{Action on $\bra{0}$ and $\ket{0}$}

The vertex-IRF \eqref{Vertex-IRF2-bar2} can be easily extended to a transformation between bulk monodromy matrices:
\begin{multline}\label{rel-M-barMSOS}
  \bar M^\mathrm{SOS}(\lambda|-\beta)=\widetilde{S}_{1\ldots N}^{-1}(\{\xi\}|\alpha,\beta-\sigma_0^z)\,   S_0^{-1}(-\lambda-\eta/2|\alpha,\beta)\, M(\lambda)\,
  \\
  \times
 \widetilde{S}_{1\ldots N}(\{\xi\}|\alpha,\beta)\, S_0(-\lambda-\eta/2|\alpha,\beta-\mathrm{S}^z),
\end{multline}
in which
\begin{align}
  &\widetilde{S}_{1\ldots N}(\{\xi\}|\alpha,\beta)
  =\prod_{n=N\to 1} S_n\bigg(-\xi_n \, \Big|\, \alpha,\beta-\sum_{j=n+1}^N\sigma_j^z\bigg),
  \label{def-tilde-SN}
\end{align}
and
\begin{align}\label{barMSOS}
  \bar M^\mathrm{SOS}(\lambda|\beta)
  &=\prod_{n=N\to 1} \bar R^\mathrm{SOS}_{n0}\bigg(\lambda-\xi_n+\eta/2\,\Big|\, \beta+\sum_{j=n+1}^N \sigma_j^z\bigg),\nonumber\\
  &=\begin{pmatrix}
      \bar A^\mathrm{SOS}(\lambda|\beta) & \bar B^\mathrm{SOS}(\lambda|\beta) \\
      \bar C^\mathrm{SOS}(\lambda|\beta) & \bar D^\mathrm{SOS}(\lambda|\beta)
      \end{pmatrix}.
\end{align}
The relation \eqref{rel-M-barMSOS} can also be written in terms of the matrix \eqref{gauge-M} as
\begin{multline}\label{MSOS-Mgauge}
   \bar M^\mathrm{SOS}(\lambda|-\beta)
   =\widetilde{S}_{1\ldots N}^{-1}(\{\xi\}|\alpha,\beta-\sigma_0^z)\,
   \mathsf{S}_0(\alpha,\beta|\alpha',\beta')
   \\
   \times
   M(\lambda|(\alpha',\beta'),(\gamma,\delta))\,
   \widetilde{S}_{1\ldots N}(\{\xi\}|\alpha,\beta)\, \mathsf{S}_0(\gamma,\delta|\alpha,\beta-\mathrm{S}^z),
\end{multline}
in which
\begin{align}\label{S-1S}
   \mathsf{S}(\alpha,\beta|\alpha',\beta')
   &=S^{-1}(\lambda|\alpha,\beta)\, S(\lambda|\alpha',\beta')
   \nonumber\\
   &= \frac{e^{\eta\alpha}}{2\sinh(\eta\beta)}
   \begin{pmatrix}
     e^{\eta(\beta-\alpha)}-e^{-\eta(\alpha'+\beta')} & e^{\eta(\beta-\alpha)}- e^{\eta(\beta'-\alpha')} \\
     -e^{-\eta(\alpha+\beta)}+e^{-\eta(\alpha'+\beta')} & -e^{-\eta(\alpha+\beta)}+e^{\eta(\beta'-\alpha')}
   \end{pmatrix}.
\end{align}
Note that the matrix \eqref{S-1S} does not depend on the spectral parameter $\lambda$.

From \eqref{MSOS-Mgauge} and from the fact that
\begin{align}
  &\bar{A}^\mathrm{SOS}(\lambda|-\beta)\,\ket{0}=d(\lambda)\,\ket{0}, \\
  &\bar{D}^\mathrm{SOS}(\lambda|-\beta)\,\ket{0}=\frac{\sinh(\eta(\beta-N))}{\sinh(\eta\beta)}\, a(\lambda)\, \ket{0},\\
  &\bar{C}^\mathrm{SOS}(\lambda|-\beta)\,\ket{0}=0,
\end{align}
which follows from \eqref{barMSOS}, one obtains that
\begin{align}   
   &C(\lambda|(\alpha,\beta),(\alpha,\beta-N))\, \ket{\eta,\alpha+\beta}=0, \label{actC-0}\\
   &A(\lambda|(\alpha,\beta),(\alpha,\beta-N))\, \ket{\eta,\alpha+\beta}
   = d(\lambda)\,\ket{\eta,\alpha+\beta-1},\label{actA-0}\\
   &D(\lambda|(\alpha,\beta),(\alpha,\beta-N))\, \ket{\eta,\alpha+\beta}
   =\frac{\sinh(\eta(\beta-N))}{\sinh(\eta\beta)}\, a(\lambda)\,\ket{\eta,\alpha+\beta+1},\label{actD-0}
\end{align}
and more generally, by using the expression \eqref{S-1S}, that
\begin{align}
   &A(\lambda|(\alpha',\beta'),(\alpha,\beta-N))\, \ket{\eta,\alpha+\beta}
   =\frac{e^{\eta\alpha'}(e^{\eta(\beta'-\alpha')}-e^{-\eta(\alpha+\beta)})}{2\sinh(\eta\beta')}\, d(\lambda)\,\ket{\eta,\alpha+\beta-1},\label{actA-0gen}\\
   &D(\lambda|(\alpha,\beta),(\gamma,\delta))\, \ket{\eta,\alpha+\beta}
   =\frac{e^{\eta\alpha}(e^{\eta(\delta-\gamma)}-e^{-\eta(\alpha+\beta-N)})}{2\sinh(\eta\beta)}\, a(\lambda)\,\ket{\eta,\alpha+\beta+1},\label{actD-0gen}
\end{align}
%
in which
\begin{equation}
   \ket{\eta,\alpha+\beta}=\widetilde{S}_{1\ldots N}(\{\xi\}|\alpha,\beta)\ket{0}
\end{equation}
coincides with the definition \eqref{Right-C-ref}.
Similarly, from \eqref{MSOS-Mgauge} and from the fact that
\begin{align}
  &\bra{0}\,\bar{A}^\mathrm{SOS}(\lambda|-\beta)=d(\lambda)\,\bra{0} ,\\
  &\bra{0}\, \bar{D}^\mathrm{SOS}(\lambda|-\beta)=\frac{\sinh(\eta(\beta-N))}{\sinh(\eta\beta)}\, a(\lambda)\, \bra{0},\\
  &\bra{0}\,\bar{B}^\mathrm{SOS}(\lambda|-\beta)=0,
\end{align}
one obtains that
\begin{align}
  &\bra{\alpha-\beta+1,\eta}\, B(\lambda|(\alpha,\beta),(\alpha,\beta-N))=0,\label{0-actB}\\
  &\bra{\alpha-\beta+1,\eta}\, A(\lambda|(\alpha,\beta),(\gamma,\delta))
  =\frac{e^{\eta\alpha}(e^{\eta(\beta-N-\alpha)}-e^{-\eta(\gamma+\delta)})}{2\sinh(\eta\beta)}\, d(\lambda)\, \bra{\alpha-\beta,\eta},\label{0-actA}\\
  &\bra{\alpha-\beta+1,\eta}\, D(\lambda|(\alpha',\beta'),(\alpha,\beta-N))
  =\frac{e^{\eta\alpha'}(e^{\eta(\beta-\alpha)}-e^{-\eta(\alpha'+\beta')})}{2\sinh(\eta\beta')}\, a(\lambda)\, \bra{\alpha-\beta+2,\eta},\label{0-actD}
\end{align}
in which
\begin{equation}
   \bra{\alpha-\beta,\eta}=e^{-\sum_n\xi_n}\, N_{\alpha ,\beta ,\eta }\, \bra{0}\, \widetilde{S}_{1\ldots N}^{-1}(\{\xi\}|\alpha,\beta)
\end{equation}
coincides with the definition \eqref{Left-B-ref}.
With the normalization \eqref{redef-gauge-op}, \eqref{actC-0}-\eqref{actD-0gen} can be rewritten as
\begin{align}   
   &C(\lambda|\alpha+\beta,\alpha+\beta-N)\, \ket{\eta,\alpha+\beta}=0, \label{actC-0-op}\\
   &A(\lambda|\alpha'-\beta',\alpha+\beta-N)\, \ket{\eta,\alpha+\beta}
   =\frac{e^{\eta(\beta'-\alpha')}-e^{-\eta(\alpha+\beta)}}{e^{\eta/2}}\, d(\lambda)\,\ket{\eta,\alpha+\beta-1},\label{actA-0-op}\\
   &D(\lambda|\alpha+\beta,\gamma-\delta)\, \ket{\eta,\alpha+\beta}
   =\frac{e^{\eta(\delta-\gamma)}-e^{-\eta(\alpha+\beta-N)}}{e^{\eta/2}}\, a(\lambda)\,\ket{\eta,\alpha+\beta+1},\label{actD-0-op}
\end{align}
and \eqref{0-actB}-\eqref{0-actD} can be rewritten as
\begin{align}
  &\bra{\alpha-\beta+1,\eta}\, B(\lambda|\alpha-\beta,\alpha-\beta+N)=0,\label{0-actB-op}\\
  &\bra{\alpha-\beta+1,\eta}\, A(\lambda|\alpha-\beta,\gamma+\delta)
  =\frac{e^{\eta(\beta-N-\alpha)}-e^{-\eta(\gamma+\delta)}}{e^{\eta/2}}\, d(\lambda)\, \bra{\alpha-\beta,\eta},\label{0-actA-op}\\
  &\bra{\alpha-\beta+1,\eta}\, D(\lambda|\alpha'+\beta',\alpha-\beta+N)
  =\frac{e^{\eta(\beta-\alpha)}-e^{-\eta(\alpha'+\beta')}}{e^{\eta/2}}\, a(\lambda)\, \bra{\alpha-\beta+2,\eta}.\label{0-actD-op}
\end{align}

\subsection{Expression of the gauged operators in terms of a single operator}

From the definition \eqref{gauge-M}, all the elements of the gauged bulk monodromy matrix $M(\lambda |(\alpha ,\beta ),(\gamma ,\delta )) $ are linear combinations of the elements of $M(\lambda)$. More precisely, we have
\begin{equation}
\begin{pmatrix}
    A(\lambda |(\alpha ,\beta ),(\gamma ,\delta )) & B(\lambda |(\alpha ,\beta),(\gamma ,\delta )) \\ 
   C(\lambda |(\alpha ,\beta ),(\gamma ,\delta )) & D(\lambda |(\alpha ,\beta),(\gamma ,\delta ))
\end{pmatrix}
=\frac{\begin{pmatrix}
  -\Theta (\lambda |\alpha -\beta ,\gamma +\delta ) & -\Theta (\lambda |\alpha-\beta ,\gamma -\delta ) \\ 
   \Theta (\lambda |\alpha +\beta ,\gamma +\delta ) & \Theta (\lambda |\alpha+\beta ,\gamma -\delta )
\end{pmatrix} }{2 e^{-\lambda -\eta (\alpha +1/2)}\sinh \eta \beta },
\end{equation}
in which the operator $\Theta (\lambda |x,y)$ stands for the following $(x,y)$-dependant linear combination of the operators entries $A(\lambda)$, $B(\lambda)$, $C(\lambda)$, $D(\lambda)$ of $M(\lambda)$: 
%
\begin{equation}\label{def-Theta-op}
    \Theta (\lambda |x,y)
    =B(\lambda )+A(\lambda )\,e^{-\lambda-\eta/2-\eta y}-C(\lambda)\,e^{-2\lambda-\eta-\eta(y+x)}-D(\lambda )\, e^{-\lambda-\eta/2-\eta x}.
\end{equation}
It is indeed an interesting point that all these gauged generators can be written in terms of a single operator, function of three parameters $\lambda ,x,y$, and computed in different values
of its parameters. It follows in particular that:
%
\begin{align}
B(\lambda |(\alpha ,\beta ),(\gamma ,\delta ))& =A(\lambda |(\alpha ,\beta),(\gamma ,-\delta )), \\
C(\lambda |(\alpha ,\beta ),(\gamma ,\delta ))& =A(\lambda |(\alpha ,-\beta),(\gamma ,\delta )), \\
D(\lambda |(\alpha ,\beta ),(\gamma ,\delta ))& =A(\lambda |(\alpha ,-\beta),(\gamma ,-\delta )).
\end{align}
So that we can claim the following

\begin{lemma}
There is just one (three parameters) generator $\Theta (\lambda |x,y)$ of
the gauged Yang-Baxter algebra. The latter satisfies in particular the following two-shifts
commutation relations:
\begin{equation}\label{comm-Theta1}
\Theta (\lambda |x+1,y-1)\, \Theta (\mu |x,y)
=\Theta (\mu |x+1,y-1)\, \Theta(\lambda |x,y),
\end{equation}
and 
%
\begin{align}\label{comm-Theta2}
\Theta (\lambda |x+1,y+1)\, \Theta (\mu |x,z)
& =\frac{\sinh (\eta \,\frac{y-z}{2})\, \sinh (\lambda -\mu +\eta )}{\sinh (\lambda -\mu )\, \sinh (\eta (\frac{y-z}{2}-1))}\,
\Theta (\mu |x+1,z+1)\, \Theta (\lambda |x,y)  
\notag \\
& -\frac{\sinh \eta \, \sinh (\lambda -\mu +\frac{y-z}{2}\eta )}{\sinh (\lambda-\mu )\, \sinh (\eta (\frac{y-z}{2}-1))}\,
\Theta (\lambda |x+1,z+1)\, \Theta (\mu|x,y),
\end{align}
and
\begin{align}\label{comm-Theta3}
\Theta (\lambda |x-1,y-1)\, \Theta (\mu |w,y)
& =\frac{\sinh (\eta \frac{x-w}{2})\, \sinh (\lambda -\mu -\eta )}{\sinh (\lambda-\mu )\,\sinh (\eta (\frac{x-w}{2}+1)}\,
\Theta (\mu |w-1,y-1)\, \Theta (\lambda |x,y)  \notag \\
& +\frac{\sinh \eta \, \sinh (\lambda -\mu+\frac{x-w}{2}\eta )}{\sinh (\lambda-\mu )\, \sinh (\eta (\frac{x-w}{2}+1))}\, 
\Theta (\lambda |w-1,y-1)\,\Theta (\mu|x,y).
\end{align}
\end{lemma}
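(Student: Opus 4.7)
The first assertion—that a single three-parameter family $\Theta(\lambda|x,y)$ generates all entries of the gauged bulk monodromy matrix—follows by direct inspection. Starting from the matrix identity
just stated before the Lemma (where the four entries are written in terms of $\Theta$ evaluated at $(\alpha\pm\beta,\gamma\pm\delta)$) one reads off the four specializations $B(\lambda|(\alpha,\beta),(\gamma,\delta))=A(\lambda|(\alpha,\beta),(\gamma,-\delta))$, etc. In particular, given the explicit expression \eqref{def-Theta-op} of $\Theta$ as a linear combination of the ungauged generators $A(\lambda),B(\lambda),C(\lambda),D(\lambda)$ with coefficients analytic in $x$ and $y$, the four operators $A,B,C,D$ of the gauged algebra are obtained by independent sign flips of the second parameter in each slot, confirming that $\Theta(\lambda|x,y)$ is a genuine three-parameter master generator.

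For the three commutation relations, the plan is to derive them by specialization from the commutation relations of the gauged operators listed in the appendix. Relation \eqref{comm-Theta1} is the straightforward one: taking $\Theta(\lambda|x,y)$ with $x\in\{\alpha\pm\beta\}$ and $y\in\{\gamma\pm\delta\}$ reduces it to any of the four ``diagonal'' identities \eqref{commAAbis}--\eqref{commDDbis}, and the prescribed shift pattern $(x,y)\to(x+1,y-1)$ matches exactly the shifts appearing there. Relation \eqref{comm-Theta2} is recovered from \eqref{comm-AB-ter} upon identifying $x=\alpha-\beta$, $y=\gamma+\delta$, $z=\gamma-\delta$ and using $\tfrac{y-z}{2}=\delta$; similarly, relation \eqref{comm-Theta3} is the content of \eqref{comm-DB-ter} after identifying $x=\alpha+\beta$, $y=\gamma-\delta$, $w=\alpha-\beta$, so that $\tfrac{x-w}{2}=\beta$.

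The remaining point is to argue that these specializations actually suffice. The cleanest way to see this is to substitute the explicit form \eqref{def-Theta-op} of $\Theta$ into each side of the claimed identity, expand both sides into the $16$ bilinears in $\{A(\lambda),B(\lambda),C(\lambda),D(\lambda)\}$ times the corresponding bilinears at $\mu$, and show that the coefficient of each monomial $e^{-\eta(p x+q y+p'x'+q'y')}$ matches, as a consequence of the Yang-Baxter relation \eqref{YBbar} for the ungauged bulk monodromy matrix $M(\lambda)$. Because the exponentials $e^{-\eta x}$, $e^{-\eta y}$ (and the corresponding ones at $\mu$) are algebraically independent functions of the continuous gauge parameters, matching coefficients of each such exponential monomial is equivalent to a finite collection of algebraic identities among the ungauged $A,B,C,D$ commutation relations. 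Each such identity is either trivially zero or reduces to one of the sixteen standard bulk $R T T$ relations.

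The main obstacle is bookkeeping: verifying \eqref{comm-Theta2} and \eqref{comm-Theta3} by direct expansion leads to a large number of terms, and one must be careful with the $(x,y)$-dependent coefficients in the rational functions of $\sinh$'s to see how the shift structure $(x,y)\to(x+1,z+1)$, resp.\ $(w,y)\to(w-1,y-1)$, on the right-hand side emerges from the $\delta$, resp.\ $\beta$, dependence of the coefficients in \eqref{comm-AB-ter}, \eqref{comm-DB-ter}. In practice, one can bypass most of this bookkeeping by noting that a commutation relation of the form
\[
\Theta(\lambda|x_1,y_1)\,\Theta(\mu|x_2,y_2)=\sum_i f_i(\lambda,\mu,x_\bullet,y_\bullet)\,\Theta(\mu|x_2',y_2')\,\Theta(\lambda|x_1',y_1')
\]
is uniquely determined by its value at four independent sign choices of the $y$'s and $x$'s, so it is enough to check \eqref{comm-Theta1}--\eqref{comm-Theta3} on the four sectors $A$-$A$, $B$-$B$, $A$-$B$, $D$-$B$ already proven.
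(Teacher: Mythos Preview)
Your core approach is correct and matches the paper, which in fact gives no separate proof for this lemma: it is presented as an immediate rewriting of the previously established relations \eqref{commAAbis}--\eqref{commDDbis}, \eqref{comm-AB-ter} and \eqref{comm-DB-ter} in terms of $\Theta$. Your second paragraph captures exactly this.

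However, your third and fourth paragraphs rest on a misconception and should be dropped. You write as if the two-argument operators $A(\lambda|x,y),\dots,D(\lambda|x,y)$ are \emph{specializations} of $\Theta(\lambda|x,y)$ at particular values of the gauge parameters, so that one would then need to argue that the identity extends from those special points to all $(x,y)$. That is not the situation. Comparing the matrix identity just before the lemma with the normalization \eqref{redef-gauge-op} shows that, identically in $x,y$,
\[
A(\lambda|x,y)=B(\lambda|x,y)=-e^{\lambda}\,\Theta(\lambda|x,y),
\qquad
C(\lambda|x,y)=D(\lambda|x,y)=e^{\lambda}\,\Theta(\lambda|x,y).
\]
Hence each of \eqref{commAAbis}--\eqref{commDDbis} is literally the same identity as \eqref{comm-Theta1} (the $\pm e^{\lambda}$ factors cancel on both sides), and \eqref{comm-AB-ter}, \eqref{comm-DB-ter} are literally \eqref{comm-Theta2}, \eqref{comm-Theta3} up to the harmless relabeling $\lambda\leftrightarrow\mu$. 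There is nothing left to extend: the gauge parameters $\alpha,\beta,\gamma,\delta$ in those relations are already generic, so $x=\alpha-\beta$, $y=\gamma+\delta$, etc.\ range over all values. In particular, your final ``four sign choices determine the identity'' argument is both unnecessary and, as stated, insufficient (a polynomial of degree two in each of two exponential variables is not fixed by four evaluation points).
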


We have also the following lemma on the action of the gauged Yang-Baxter
generator on the reference states:

\begin{lemma}
The following identities hold:
\begin{align}
& \bra{ x+1,\eta }\, \Theta (\lambda |x,x+N)=0, 
\\
& \bra{x+1,\eta }\, \Theta (\lambda |x,y)=\frac{e^{-y\eta}-e^{-(x+N)\eta }}{e^{\lambda +\eta /2}}\,
    d(\lambda)\,\bra{x,\eta }, 
\\
& \bra{x+1,\eta}\, \Theta (\lambda |y,x+N)
       =\frac{e^{-x\eta }-e^{-y\eta }}{e^{\lambda +\eta /2}}\,
       a(\lambda)\,\bra{x+2,\eta},
\end{align}
and similarly,
\begin{align}
   &\Theta (\lambda |x,x-N)\, \ket{\eta ,x} =0,
   \label{act-Theta-r1}\\
   &\Theta (\lambda |y,x-N)\, \ket{\eta ,x}
    = \frac{e^{-x\eta }-e^{-y\eta }}{e^{\lambda +\eta /2}}\, d(\lambda)\,
    \ket{\eta ,x-1}, 
    \label{act-Theta-r2}\\
    &\Theta (\lambda |x,y)\, \ket{\eta ,x}
     = \frac{e^{-y\eta}-e^{-(x-N)\eta }}{e^{\lambda +\eta /2}}\, a(\lambda)\,
     \ket{\eta ,x+1}.
     \label{act-Theta-r3}
\end{align}
\end{lemma}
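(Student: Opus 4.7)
\medskip\noindent\textbf{Proof plan.}
The plan is to reduce each of the six assertions to one of the already-established actions \eqref{actC-0-op}--\eqref{actD-0-op} and \eqref{0-actB-op}--\eqref{0-actD-op} of the individual gauged generators $A,B,C,D$ on the reference states $\ket{\eta,\alpha+\beta}$ and $\bra{\alpha-\beta+1,\eta}$. The bridge is the identification \eqref{def-Theta-op} of $\Theta(\lambda|x,y)$ with a particular entry of the gauged monodromy matrix \eqref{redef-gauge-op}: for suitable $(\alpha,\beta)$ and $(\gamma,\delta)$ depending on $(x,y)$, one has for instance
\begin{equation*}
  \Theta(\lambda|\alpha+\beta,\gamma-\delta)=2\,e^{-\lambda-\eta(\alpha+1/2)}\,\sinh(\eta\beta)\,D(\lambda|(\alpha,\beta),(\gamma,\delta)),
\end{equation*}
and analogously $\Theta(\lambda|\alpha+\beta,\gamma+\delta)$, $\Theta(\lambda|\alpha-\beta,\gamma-\delta)$, $\Theta(\lambda|\alpha-\beta,\gamma+\delta)$ correspond respectively to $C$, $-B$, $-A$ (up to the same prefactor). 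This is the content of the paragraph introducing $\Theta$ just before the lemma and will be my starting point.

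Concretely, to obtain \eqref{act-Theta-r3} I would set $x=\alpha+\beta$ and $y=\gamma-\delta$, so that $\Theta(\lambda|x,y)$ is proportional to $D(\lambda|(\alpha,\beta),(\gamma,\delta))$; inserting the action \eqref{actD-0-op} on $\ket{\eta,\alpha+\beta}=\ket{\eta,x}$ and simplifying the prefactor yields exactly
$e^{-\lambda-\eta/2}\,(e^{-\eta y}-e^{-\eta(x-N)})\,a(\lambda)\,\ket{\eta,x+1}$.
For \eqref{act-Theta-r2} I would instead pick $\Theta(\lambda|\alpha'-\beta',\alpha+\beta-N)$, i.e.\ $\alpha'-\beta'=y$, $\gamma=\alpha$, $\delta=\beta-N$, so that $\Theta$ is proportional to $A(\lambda|(\alpha',\beta'),(\alpha,\beta-N))$, and apply \eqref{actA-0-op}. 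The vanishing \eqref{act-Theta-r1} follows directly as the specialization $y=x$ of \eqref{act-Theta-r2} (or, dually, from \eqref{actC-0-op} in the $C$-parametrization, which requires $\gamma+\delta=\alpha+\beta-N$).

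The three left actions on $\bra{x+1,\eta}$ are treated symmetrically with $x=\alpha-\beta$, so that $\bra{\alpha-\beta+1,\eta}=\bra{x+1,\eta}$: the vanishing comes from \eqref{0-actB-op} in the $B$-parametrization $\Theta(\lambda|\alpha-\beta,\gamma-\delta)$ with $\gamma=\alpha$, $\delta=\beta-N$ (giving $\gamma-\delta=x+N$); the $d(\lambda)$ identity comes from the $A$-parametrization $\Theta(\lambda|\alpha-\beta,\gamma+\delta)$ combined with \eqref{0-actA-op}; and the $a(\lambda)$ identity comes from the $D$-parametrization $\Theta(\lambda|\alpha'+\beta',\alpha-\beta+N)$ combined with \eqref{0-actD-op}. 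In each case the $\sinh(\eta\beta)$ (or $\sinh(\eta\beta')$) factor from \eqref{redef-gauge-op} cancels the denominator in the coefficient of \eqref{actA-0-op}--\eqref{actD-0-op}, and the $e^{\eta\alpha}$ factor combines with the remaining $e^{\pm\eta(\cdot)}$ into the stated symmetric combinations $e^{-\eta x}-e^{-\eta y}$ or $e^{-\eta y}-e^{-\eta(x\pm N)}$.

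There is no real obstacle here: the whole lemma is a bookkeeping exercise that repackages \eqref{actC-0-op}--\eqref{actD-0-op} and \eqref{0-actB-op}--\eqref{0-actD-op} in terms of the single three-parameter family $\Theta$. The only point requiring a bit of care is the matching of the parametrizations: one must check that for each assertion the two shifts appearing in the second argument of $\Theta$ (either $x\pm N$ or an independent $y$) are compatible with a choice of $(\gamma,\delta)$ (resp.\ $(\alpha',\beta')$) consistent with the reference states used in \eqref{actA-0-op}--\eqref{actD-0-op} and \eqref{0-actA-op}--\eqref{0-actD-op}, namely $\gamma+\delta=\alpha+\beta-N$ on the right and $\gamma+\delta=\alpha-\beta+N$ on the left; but the freedom in choosing $(\gamma,\delta)$ or $(\alpha',\beta')$ while keeping only their sum or difference fixed provides exactly enough parameters.
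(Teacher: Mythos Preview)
Your proposal is correct and follows exactly the route implicit in the paper: the lemma is stated without proof there, immediately after the identification of $\Theta(\lambda|x,y)$ with the two-argument generators $A,B,C,D$ (up to the factor $-e^{-\lambda}$ or $e^{-\lambda}$), and is simply the translation of the already-established actions \eqref{actC-0-op}--\eqref{actD-0-op} and \eqref{0-actB-op}--\eqref{0-actD-op} into the $\Theta$-language. Your case-by-case matching of each of the six identities with the corresponding action of $C$, $A$, $D$ on $\ket{\eta,\alpha+\beta}$ and of $B$, $A$, $D$ on $\bra{\alpha-\beta+1,\eta}$ is exactly what is intended, and your remark that the freedom in $(\gamma,\delta)$ (resp.\ $(\alpha',\beta')$) is precisely what is needed to match the constraints on the reference states is the only point that requires a moment's thought.
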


By using these identities, we can write the action of the gauge transformed
Yang-Baxter generator $\Theta (\lambda |x,y)$ on a Bethe-like vector of the
following form:
\begin{equation}\label{Bethe-st-Theta}
   \underline\Theta_M (\{\lambda _{a}\}_{a=1}^M |\alpha -\beta -1,\gamma-\delta +1-2M)\, 
   \ket{\eta ,\gamma +\delta +N},
\end{equation}
in which we have defined
%
\begin{align}\label{prod-ThetaM}
  \underline\Theta_M(\{\lambda_j\}_{j=1}^M|x-1,y+1)
  &=\prod_{j=1\to M}\Theta (\lambda _{j}|x -j, y +j)
  \nonumber\\
  &=  \Theta (\lambda _{1}|x-1,y +1)\, \Theta (\lambda_{2}|x -2,y +2)\cdots 
  \nonumber\\
  &\hspace{0.5cm}
  \cdots
  \Theta (\lambda_{M-1}|x -M+1, y+M-1)\,
  \Theta (\lambda_{M}|x -M,y+M).
\end{align}
Note that, due to the pseudo-commutativity \eqref{comm-Theta1}  of the gauged $\Theta $-operator,
the above operator product is
independent w.r.t. the order of the $\{\lambda _{i}\}_{i=1}^{M}$, i.e.
\begin{equation}
\prod_{j=1\to M}\Theta (\lambda _j |x -j,y +j)
=\prod_{j=1\to M}\Theta (\lambda _{\pi_{j}}|x -j,y +j),
\end{equation}
where $\pi $ is any permutation of the set $\{1,\ldots,M\}$, so that the notation \eqref{prod-ThetaM} is indeed coherent.
Then, the following lemma holds:

\begin{lemma}
\label{Action-Theta}
The actions of the gauged Yang-Baxter generator on a Bethe state of the form \eqref{Bethe-st-Theta} reads:
%
\begin{multline}\label{act-theta-Bethe1}
   \Theta (\lambda_{M+1} |\alpha -\beta ,\gamma +\delta+M)\,
   \underline\Theta_M(\{\lambda_j\}_{j=1}^M|\alpha-\beta-1,\gamma-\delta +1-2M)\,
   \ket{\eta ,\gamma +\delta +N}
   \\
   = \frac{e^{-(\gamma +\delta +N)\eta }-e^{-(\alpha -\beta -M)\eta }}{e^{\eta /2}}
   \sum_{a=1}^{M+1} d(\lambda _{a})\, e^{-\lambda_a}
   \frac{\sinh(\lambda_{M+1}-\lambda_a+\eta(\delta-1+\frac{3M}{2}))}{\sinh(\eta(\delta-1+\frac{M}{2}))}
   \\
   \times 
   \frac{\prod_{j=1 }^M\sinh(\lambda_a-\lambda_j+\eta)}{\prod_{\substack{j=1 \\ j\neq a}}^{M+1}\sinh(\lambda_a-\lambda_j)}\
      \underline\Theta_M(\{\lambda_j\}_{j=1}^{M+1}\setminus\{\lambda_a\}|\alpha -\beta,\gamma -\delta +2-2M)\,
   \ket{\eta ,\gamma+\delta +N-1} ,
\end{multline}
and
%
\begin{multline}\label{act-theta-Bethe2}
 \Theta (\lambda_{M+1} |\gamma +\delta +N-M,\gamma -\delta-2M)\,
   \underline\Theta_M(\{\lambda_j\}_{j=1}^M|\alpha-\beta-1,\gamma-\delta +1-2M)\,
   \ket{\eta ,\gamma +\delta +N}
   \\
 = 
  \frac{e^{-(\gamma -\delta -M)\eta }-e^{-(\gamma+\delta)\eta }}{e^{\eta /2}}
  \sum_{a=1}^{M+1} 
  a(\lambda_{a})\, e^{-\lambda_a}\,
  \frac{\sinh(\lambda_{M+1}-\lambda_a+\eta\frac{\gamma+\delta-\alpha+\beta+N-M+2}{2})}{\sinh(\eta\frac{\gamma+\delta-\alpha+\beta+N+M+2}{2})}
  \\
  \times
   \frac{\prod_{j=1}^M\sinh(\lambda_a-\lambda_j-\eta)}{\prod_{\substack{j=1 \\ j\neq a}}^{M+1}\sinh(\lambda_a-\lambda_j)}\
   \underline\Theta_M(\{\lambda_j\}_{j=1}^{M+1}\setminus\{\lambda_a\}|\alpha -\beta-2,\gamma -\delta -2M)\,
  \ket{\eta ,\gamma+\delta +N+1} .
\end{multline}
\end{lemma}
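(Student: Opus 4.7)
The plan is to prove the two identities by the gauge-transformed analogue of the standard ABA argument for the action of $A$ and $D$ on a Bethe state: the first identity \eqref{act-theta-Bethe1} follows from commutation \eqref{comm-Theta2} together with the reference-state action \eqref{act-Theta-r2}, and the second identity \eqref{act-theta-Bethe2} follows by the strictly parallel argument based on \eqref{comm-Theta3} and \eqref{act-Theta-r3}. I will describe the strategy for \eqref{act-theta-Bethe1} only; the modifications for \eqref{act-theta-Bethe2} are mechanical (one replaces $d$ by $a$ and exchanges the roles of the two gauge directions).

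For the direct contribution $a=M+1$, I would push $\Theta(\lambda_{M+1}\vert\alpha-\beta,\gamma+\delta+M)$ through the product $\underline{\Theta}_{M}$ using at each step the \emph{first} (direct) term of \eqref{comm-Theta2}. At the $k$-th step one commutes $\Theta(\lambda_{M+1}\vert\alpha-\beta-k+1,\gamma+\delta+M-k+1)$ past $\Theta(\lambda_{k}\vert\alpha-\beta-k,\gamma-\delta-2M+k)$; comparison with \eqref{comm-Theta2} fixes $x=\alpha-\beta-k$, $y=\gamma+\delta+M-k$, $z=\gamma-\delta-2M+k$, and in particular $\tfrac{y-z}{2}=\delta-1+\tfrac{3M}{2}-(k-1)$. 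After $M$ such direct commutations, $\Theta(\lambda_{M+1})$ arrives on $\ket{\eta,\gamma+\delta+N}$ with parameters $(\alpha-\beta-M,\gamma+\delta)$, i.e.\ precisely in the shape $\Theta(\lambda\vert y,x-N)\ket{\eta,x}$ to which \eqref{act-Theta-r2} applies. The accumulated $\sinh$-prefactors telescope, leaving only the endpoint ratio $\sinh(\eta(\delta-1+\tfrac{3M}{2}))/\sinh(\eta(\delta-1+\tfrac{M}{2}))$ times $\prod_{k=1}^{M}\sinh(\lambda_{M+1}-\lambda_{k}+\eta)/\sinh(\lambda_{M+1}-\lambda_{k})$, which matches the $a=M+1$ term of \eqref{act-theta-Bethe1} (using $\sinh(\lambda_{M+1}-\lambda_{M+1}+\eta x)=\sinh(\eta x)$ to recognize the written form).

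For the exchange contributions $a\in\{1,\ldots,M\}$, I would first use the pseudocommutativity \eqref{comm-Theta1} to permute $\lambda_{a}$ into the first position of $\underline{\Theta}_{M}$, then apply the \emph{indirect} (second) term of \eqref{comm-Theta2} at the first commutation only. This reassigns $\lambda_{M+1}$ to the first slot with arguments $(\alpha-\beta,\gamma-\delta-2M+2)$ and sends $\Theta(\lambda_{a})$ rightward with new arguments $(\alpha-\beta-1,\gamma+\delta+M-1)$, i.e.\ the arguments originally carried by $\Theta(\lambda_{M+1})$ shifted by one unit. The remaining $M-1$ commutations of $\Theta(\lambda_{a})$ are performed using only direct terms, and a direct parameter check shows that the $M$ operators left behind reconstruct $\underline{\Theta}_{M}(\{\lambda_{j}\}_{j=1}^{M+1}\setminus\{\lambda_{a}\}\vert\alpha-\beta,\gamma-\delta+2-2M)$ as required. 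Finally $\Theta(\lambda_{a})$ hits the reference with arguments $(\alpha-\beta-M,\gamma+\delta)$ and \eqref{act-Theta-r2} yields the factor $d(\lambda_{a})\,e^{-\lambda_{a}}(e^{-(\gamma+\delta+N)\eta}-e^{-(\alpha-\beta-M)\eta})/e^{\eta/2}$. Telescoping the direct coefficients of steps $k=2,\ldots,M$ against the indirect coefficient of step $1$, and inserting the identity $\sinh\eta\prod_{j\neq a}\sinh(\lambda_{a}-\lambda_{j}+\eta)=\prod_{j=1}^{M}\sinh(\lambda_{a}-\lambda_{j}+\eta)$, reproduces the coefficient attached to $\lambda_{a}$ in \eqref{act-theta-Bethe1}.

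The main obstacle is purely bookkeeping: one has to verify that at each of the $M$ commutation steps the arguments of the two factors fit the template $\Theta(\cdot\vert x+1,y+1)\Theta(\cdot\vert x,z)$ of \eqref{comm-Theta2}, and that the quantity $\tfrac{y-z}{2}$ decreases by exactly one unit per step so that the denominators $\sinh(\eta(\tfrac{y-z}{2}-1))$ cleanly cancel the numerators $\sinh(\eta\tfrac{y-z}{2})$ of the next step. Once this telescoping is checked, and once the pseudocommutativity \eqref{comm-Theta1} is used to show that permuting the $\lambda_{j}$'s in $\underline{\Theta}_{M}$ does not affect the final identification, summing the direct ($a=M+1$) and exchange ($a\le M$) contributions yields exactly \eqref{act-theta-Bethe1}. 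Identity \eqref{act-theta-Bethe2} is then obtained by repeating the same argument verbatim, with \eqref{comm-Theta3} in place of \eqref{comm-Theta2} (whence the roles of the two gauge shifts are swapped) and with \eqref{act-Theta-r3} providing the $a(\lambda)$-factor in the reference-state action.
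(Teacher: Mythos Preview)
Your proposal is correct and follows essentially the same approach as the paper: push $\Theta(\lambda_{M+1})$ through $\underline\Theta_M$ using \eqref{comm-Theta2}, separate direct and indirect contributions, and then apply \eqref{act-Theta-r2} on the reference state, with \eqref{act-theta-Bethe2} obtained analogously via \eqref{comm-Theta3} and \eqref{act-Theta-r3}. The paper presents the commuted expression in one stroke rather than tracking each step, but the argument is the same standard ABA computation you describe.
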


\begin{proof}
Due to \eqref{comm-Theta2}, one has
\begin{multline}
   \Theta (\lambda |\alpha -\beta ,\gamma +\delta+M)\,
   \underline\Theta_M(\{\lambda_j\}_{j=1}^M|\alpha-\beta-1,\gamma-\delta +1-2M)
   \\
   =\frac{\sinh(\eta(\delta-1+\frac{3M}{2}))}{\sinh(\eta(\delta-1+\frac{M}{2}))}\, 
   \prod_{j=1}^M\frac{\sinh(\lambda-\lambda_j+\eta)}{\sinh(\lambda-\lambda_j)}\,
   \underline\Theta_M(\{\lambda_j\}_{j=1}^M|\alpha-\beta,\gamma-\delta +2-2M)\,
   \Theta (\lambda |\alpha -\beta-M ,\gamma +\delta)
   \\
   -\sum_{n=1}^M 
   \frac{\sinh\eta\, \sinh(\lambda-\lambda_n+\eta(\delta-1+\frac{3M}{2})\eta)}{\sinh(\lambda-\lambda_n)\,\sinh(\eta(\delta-1+\frac{M}{2}))}
   \prod_{j\neq n}\frac{\sinh(\lambda_n-\lambda_j+\eta)}{\sinh(\lambda_n-\lambda_j)}\,
   \\
   \times
    \underline\Theta_M(\{\lambda_j\}_{j\neq n}\cup\{\lambda\}|\alpha-\beta,\gamma-\delta +2-2M)\,
   \Theta (\lambda_n |\alpha -\beta-M ,\gamma +\delta).
\end{multline}
the action on $\ket{\eta,\gamma+\delta+N}$ can then be computed by means of \eqref{act-Theta-r2}:
\begin{multline}
   \Theta (\lambda |\alpha -\beta ,\gamma +\delta+M)\,
   \underline\Theta_M(\{\lambda_j\}_{j=1}^M|\alpha-\beta-1,\gamma-\delta +1-2M)\,
   \ket{\eta,\gamma+\delta+N}
   \\
   =   \frac{e^{-(\gamma+\delta+N)\eta }-e^{-(\alpha -\beta-M)\eta }}{e^{\eta /2}\,\sinh(\eta(\delta-1+\frac{M}{2}))}
   \Bigg\{
   d(\lambda)\, e^{-\lambda}
   \sinh(\eta(\delta-1+\frac{3M}{2}))\, 
   \prod_{j=1}^M\frac{\sinh(\lambda-\lambda_j+\eta)}{\sinh(\lambda-\lambda_j)}\,
   \\
   \times
   \underline\Theta_M(\{\lambda_j\}_{j=1}^M|\alpha-\beta,\gamma-\delta +2-2M)\, \ket{\eta,\gamma+\delta+N-1}
   \\
   -\sum_{n=1}^M 
d(\lambda_n)\, e^{-\lambda_n}
   \frac{\sinh\eta\, \sinh(\lambda-\lambda_n+\eta(\delta-1+\frac{3M}{2})\eta)}{\sinh(\lambda-\lambda_n)}
   \prod_{j\neq n}\frac{\sinh(\lambda_n-\lambda_j+\eta)}{\sinh(\lambda_n-\lambda_j)}\,
   \\
   \times
    \underline\Theta_M(\{\lambda_j\}_{j\neq n}\cup\{\lambda\}|\alpha-\beta,\gamma-\delta +2-2M)\,
   \ket{\eta,\gamma+\delta+N-1}
   \Bigg\},
\end{multline}
which gives \eqref{act-theta-Bethe1}.
\eqref{act-theta-Bethe2} can be shown similarly by means of \eqref{comm-Theta3} and \eqref{act-Theta-r3}.
\end{proof}

Lemma~\ref{Action-Theta} can be rewritten in terms of the more usual notations $A$, $B$, $C$, $D$ \eqref{redef-gauge-op}, and provides the action of the gauged operators $A$ and $D$ on a bulk gauge Bethe state: 

\begin{lemma}
The actions of the diagonal gauge transformed Yang-Baxter generators on a gauged bulk Bethe state  read:

\begin{multline}
  A(\lambda_{M+1} |\alpha -\beta ,\gamma +\delta +M)\
  \underline{B}_M(\{\lambda_j\}_{j=1}^M|\alpha -\beta -1,\gamma -\delta +1-2M)\,
  \ket{\eta ,\gamma +\delta+N}
  \\
  = -\frac{e^{-(\gamma +\delta +N)\eta }-e^{-(\alpha -\beta-M)\eta }}{e^{\eta /2}}\,
  \sum_{a=1}^{M+1} 
  \frac{\sinh(\lambda_{M+1}-\lambda_a+\eta(\delta-1+\frac{3M}{2}))}{\sinh (\eta(\delta-1+\frac{M}{2}) }\,
  \frac{\prod_{b=1}^M \sinh (\lambda _{a}-\lambda _{b}+\eta)}{\prod_{\substack{b=1 \\ b\neq a}}^{M+1} \sinh (\lambda _{a}-\lambda _{b})} 
  \\
  \times
   d(\lambda_a) \
  \underline{B}_M ( \{\lambda_j\}_{j=1}^{M+1}\setminus\{\lambda_a\}|\alpha -\beta,\gamma -\delta +2-2M)\,
  \ket{\eta ,\gamma+\delta +N-1} ,  
\end{multline}
and
\begin{multline}
  D(\lambda_{M+1} |\gamma +\delta +N-M,\gamma -\delta-2M)\
  \underline{B}_M(\{\lambda_j\}_{j=1}^M |\alpha -\beta -1,\gamma-\delta +1-2M)\,
  \ket{\eta ,\gamma +\delta +N}
  \\
  =\frac{e^{-(\gamma -\delta -M)\eta }-e^{-(\gamma +\delta)\eta }}{e^{\eta /2}}\,
  \sum_{a=1}^{M+1}
  \frac{\sinh(\lambda_{M+1}-\lambda_a+\eta\frac{\gamma+\delta-\alpha+\beta+N-M+2}{2})}{\sinh(\eta\frac{\gamma+\delta-\alpha+\beta+N+M+2}{2})}\,
   \frac{\prod_{j=1}^M\sinh(\lambda_a-\lambda_j-\eta)}{\prod_{\substack{j=1 \\ j\neq a}}^{M+1}\sinh(\lambda_a-\lambda_j)} 
  \\
  \times a(\lambda_a)\
  \underline{B}_M(\{\lambda_j\}_{j=1}^{M+1}\setminus\{\lambda_a\}|\alpha -\beta -2,\gamma -\delta -2M)\,
  \ket{\eta ,\gamma +\delta+N+1} .  
\end{multline}
These actions can be rewritten as
\begin{align}
  &A(\lambda_{M+1} |x ,y +M)\ \underline{B}_M(\{\lambda_j\}_{j=1}^M|x -1,z +1)\,  \ket{\eta ,y+N}
  \nonumber\\  
  &\quad
  = \frac{e^{-(x-M)\eta }-e^{-(y+N)\eta }}{e^{\eta /2}}\,
  \sum_{a=1}^{M+1} d(\lambda_a)\,
  \frac{\sinh(\lambda_{M+1}-\lambda_a+\eta\frac{y-z-2+M}{2})}{\sinh (\eta\frac{y-z-2-M}{2}) }\,
  \frac{\prod_{b=1}^M \sinh (\lambda _{a}-\lambda _{b}+\eta)}{\prod_{\substack{b=1 \\ b\neq a}}^{M+1} \sinh (\lambda _{a}-\lambda _{b})} 
  \nonumber\\
  &\hspace{6cm}
  \times
  \underline{B}_M ( \{\lambda_j\}_{j=1}^{M+1}\setminus\{\lambda_a\}| x, z +2)\,
  \ket{\eta ,y +N-1} , 
  \nonumber\\
  &\quad
  = \frac{ e^{-(x-M)\eta }-e^{-(y+N)\eta }}{e^{\eta /2} }\,
  \sum_{a=1}^{M+1} d(\lambda_a)\, \frac{\sinh\eta}{\sinh (\eta\frac{z-y+2+M}{2})}
  \bigg[\frac{\sinh(\eta\frac{z-y+2-M}{2})}{\sinh\eta}\bigg]^{\delta_{a,M+1}} 
  \nonumber\\
  &\quad
  \times
  \prod_{\substack{b=1 \\ b\neq a}}^{M+1}\frac{\sinh(\lambda_a-\lambda_b+\eta_{M,a}(\frac{z-y+2-M}{2}))}{\sinh(\lambda_a-\lambda_b)}\,
  \underline{B}_M ( \{\lambda_j\}_{j=1}^{M+1}\setminus\{\lambda_a\}| x, z +2)\,
  \ket{\eta ,y +N-1} , 
  \label{Action-A-0}
\end{align}
and
\begin{align}
  &D(\lambda_{M+1} | y-M,z)\
  \underline{B}_M(\{\lambda_j\}_{j=1}^M |x -1,z +1)\,
  \ket{\eta ,y}
  \nonumber\\
  &\quad
  =\frac{e^{-(z+M)\eta }-e^{-(y-N)\eta }}{e^{\eta /2}}\,
  \sum_{a=1}^{M+1} a(\lambda_a)\,
  \frac{\sinh(\lambda_{M+1}-\lambda_a+\eta\frac{y-x-M+2}{2})}{\sinh(\eta\frac{y-x+M+2}{2})}\,
   \frac{\prod_{j=1}^M\sinh(\lambda_a-\lambda_j-\eta)}{\prod_{\substack{j=1 \\ j\neq a}}^{M+1}\sinh(\lambda_a-\lambda_j)} 
  \nonumber\\
  &\hspace{6cm}
  \times \
  \underline{B}_M(\{\lambda_j\}_{j=1}^{M+1}\setminus\{\lambda_a\}| x -2,z)\,
  \ket{\eta ,y+1} , 
  \nonumber\\
  &\quad
  =\frac{e^{-(z+M)\eta }-e^{-(y-N)\eta } }{e^{\eta /2}}\,
    \sum_{a=1}^{M+1} a(\lambda_a)\,
    \frac{\sinh\eta}{ \sinh(\eta\frac{x-y-M-2}{2})}
  \bigg[\frac{\sinh(\eta\frac{x-y+M-2}{2})}{\sinh\eta}\bigg]^{\delta_{a,M+1}}\,
  \nonumber\\
  &\quad
  \times
   \prod_{\substack{j=1 \\ j\neq a}}^{M+1}
   \frac{\sinh(\lambda_a-\lambda_j-\eta_{M,a}(\frac{y-x-M+2}{2}))}{\sinh(\lambda_a-\lambda_j)} \,
   \underline{B}_M(\{\lambda_j\}_{j=1}^{M+1}\setminus\{\lambda_a\}| x -2,z)\,
  \ket{\eta ,y+1} ,
   \label{Action-D-0}
\end{align}
with the notation
\begin{equation}
  \eta_{M,a}(x)=\big[ (1-\delta_{M+1,a})+x\, \delta_{M+1,a}   \big] \eta.
\end{equation}
\end{lemma}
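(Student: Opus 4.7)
The plan is to deduce this lemma directly from Lemma~\ref{Action-Theta} by using the fact, established just above, that the gauged Yang--Baxter generators $A$, $B$, $C$, $D$ are all specializations of the single three--parameter operator $\Theta(\lambda|x,y)$ defined in \eqref{def-Theta-op}. More precisely, inspecting the normalization \eqref{redef-gauge-op} together with the definition of $\Theta$, the operator entries of $M(\lambda|(\alpha,\beta),(\gamma,\delta))$ can be read off as $A(\lambda|\alpha-\beta,\gamma+\delta)=-e^{\lambda+\eta(\alpha+1/2)}/(2\sinh\eta\beta)\cdot\Theta(\lambda|\alpha-\beta,\gamma+\delta)$, and analogously for $B$, $C$, $D$ with the appropriate sign changes on $\beta$ and $\delta$. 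Thus a product $\underline B_M(\{\lambda_j\}|x-1,z+1)$ is, up to an explicit nonzero scalar prefactor, a product of $\Theta$'s of the form $\underline\Theta_M(\{\lambda_j\}|x-1,z+1)$, and similarly the operator $A(\lambda_{M+1}|x,y+M)$ (resp. $D(\lambda_{M+1}|y-M,z)$) acting on it corresponds to $\Theta(\lambda_{M+1}|x,y+M)$ (resp. $\Theta(\lambda_{M+1}|y-M,z)$) acting on the corresponding $\Theta$-product.

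First I would carefully match gauge parameters: identify $\alpha-\beta=x$, $\gamma-\delta+1-2M=z+1$ (so $\gamma+\delta=y+N$ after the shift implicit in $\ket{\eta,y+N}$), and check that the arguments of the $\Theta$'s appearing in \eqref{act-theta-Bethe1} reduce, after renaming, exactly to the arguments of the $B$'s and $A$'s in \eqref{Action-A-0}. The same matching done with \eqref{act-theta-Bethe2} yields \eqref{Action-D-0}. All the scalar prefactors $e^{\eta(\cdots)}/(2\sinh\eta\beta)$ from \eqref{redef-gauge-op} must be tracked on both sides; since on each side of the identity the sum of $B$-type operators has the same total number $M$ or $M+1$, the accumulated normalization factors cancel out cleanly up to the explicit prefactor $(e^{-(x-M)\eta}-e^{-(y+N)\eta})/e^{\eta/2}$ (resp. $(e^{-(z+M)\eta}-e^{-(y-N)\eta})/e^{\eta/2}$) that appears in the statement --- this exponential prefactor comes directly from the combination in \eqref{act-theta-Bethe1}--\eqref{act-theta-Bethe2}.

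The second, equivalent form of each action, the one written with the shortcut $\eta_{M,a}(x)$, is obtained by a purely algebraic rewriting. I would separate the term $a=M+1$ from the terms $a\le M$ in the first form, and then merge them by noticing that for $a\ne M+1$ the factor $\sinh(\lambda_a-\lambda_{M+1}+\eta)$ (respectively $\sinh(\lambda_a-\lambda_{M+1}-\eta)$) combines with the rational prefactor $\sinh(\lambda_{M+1}-\lambda_a+\eta\,\frac{y-z-2+M}{2})/\sinh(\eta\,\frac{y-z-2-M}{2})$ in a way that is exactly reproduced by the $\eta_{M,a}$ formalism, with the extra factor $\sinh(\eta\frac{z-y+2-M}{2})/\sinh\eta$ for $a=M+1$ coming from reinstating the $M+1$-term inside a uniform product over all $a$. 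The check is a one-line use of the addition formula for $\sinh$, done once for the $A$-action and once (with opposite sign of $\eta$ shifts) for the $D$-action.

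The main obstacle, therefore, is not conceptual but bookkeeping: correctly matching the three sets of gauge parameters $(\alpha,\beta,\gamma,\delta)$ used in Lemma~\ref{Action-Theta} with the pair of gauge labels $(x,y,z)$ used in \eqref{Action-A-0}--\eqref{Action-D-0}, and verifying that the prefactors $e^{\lambda+\eta(\alpha+1/2)}/(2\sinh\eta\beta)$ coming from \eqref{redef-gauge-op} on each product of operators telescope consistently. Once this dictionary is fixed, the statement follows as a direct corollary of Lemma~\ref{Action-Theta}, with no additional commutation relation needed beyond what has already been proved in Appendix~\ref{app-gaugeYBbulk}.
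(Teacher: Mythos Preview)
Your proposal is correct and matches the paper's approach exactly: the paper presents this lemma simply as a rewriting of Lemma~\ref{Action-Theta} in terms of the $A$, $B$, $D$ notation \eqref{redef-gauge-op}, with the second form in each case obtained by the same algebraic splitting of the $a=M{+}1$ term that you describe.
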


\section{On the boundary-bulk decomposition}
\label{app-bound-bulk}

We recall that the elements of the gauge boundary monodromy matrix $\mathcal{U}_{-}(\lambda |\alpha ,\beta )$ can be expressed in terms of the elements of
the gauge bulk monodromy matrix by formula \eqref{bound-bulk-gauge}, which can also be written by components as
\begin{multline}
   \mathcal{U}_{-}(\lambda |\alpha ,\beta )
   =
     \frac{\left( -1\right) ^{N}e^{\eta (\gamma ^{\prime }+\alpha )}}{4\sinh \eta \delta ^{\prime }\sinh \eta \beta }     \,
     \begin{pmatrix}
     D(\lambda |\gamma +\delta -1,\alpha -\beta -1) & -B(\lambda |\gamma -\delta-1,\alpha -\beta -1) \\ 
    -C(\lambda |\gamma +\delta -1,\alpha +\beta -1) & A(\lambda |\gamma -\delta-1,\alpha +\beta -1)
     \end{pmatrix}
      \\
      \times 
      K_{-}(\lambda |(\gamma ,\delta ),(\gamma ^{\prime },\delta ^{\prime}))\,
      \begin{pmatrix}
      A(-\lambda |\gamma ^{\prime }-\delta ^{\prime },\alpha +\beta ) 
      & B(-\lambda|\gamma ^{\prime }-\delta ^{\prime },\alpha -\beta ) \\ 
      C(-\lambda |\gamma ^{\prime }+\delta ^{\prime },\alpha +\beta ) 
      & D(-\lambda|\gamma ^{\prime }+\delta ^{\prime },\alpha -\beta )
      \end{pmatrix} ,  
      \label{BB-Gauged-Dec}
\end{multline}
in terms of the gauged boundary matrix \eqref{gauge-K}. 
In this appendix, we prove several consequences of this boundary bulk decomposition.


Let us remark that the matrix $K_{-}(\lambda |(\gamma ,\delta ),(\gamma^{\prime },\delta ^{\prime }))$ can be made upper triangular identically w.r.t. $\lambda $ if
\begin{equation}
\gamma +\delta =\gamma ^{\prime }+\delta ^{\prime }.
\end{equation}
In that case,
\begin{equation}
   \left[ K_{-}(\lambda |(\gamma ,\delta ),(\gamma ^{\prime },\delta ^{\prime}))\,\right] _{21}=0
\end{equation}
is equivalent to
\begin{equation}
     \frac{\kappa _{+}}{\sinh \zeta _{+}}\left( \sinh (\eta (\gamma +\delta)+\tau _{+})+\sinh (\varphi _{+}+\psi _{+})\right) =0.
\end{equation}
Similarly, the matrix $K_{-}(\lambda |(\gamma ,\delta ),(\gamma ^{\prime},\delta ^{\prime }))$ can be made lower triangular identically w.r.t. $\lambda $ if
\begin{equation}
\gamma -\delta =\gamma ^{\prime }-\delta ^{\prime }.
\end{equation}
In that case,
\begin{equation}\label{K-lower-tr}
\left[ K_{-}(\lambda |(\gamma ,\delta ),(\gamma ^{\prime },\delta ^{\prime
}))\,\right] _{12}=0
\end{equation}
is equivalent to
\begin{equation}
\frac{\kappa _{+}}{\sinh \zeta _{+}}\left( \sinh (\eta (\gamma -\delta)+\tau _{+})+\sinh (\varphi _{+}+\psi _{+})\right) =0.
\end{equation}

\subsection{\label{Closed-Form-Ref}Construction of\ an $\mathcal{A}_{-}(%
\protect\lambda |\protect\alpha ,\protect\beta -1)$ and $\mathcal{D}_{-}(%
\protect\lambda |\protect\alpha ,\protect\beta +1)$ eigenvector}

Here, we construct an $\mathcal{A}_{-}(\lambda |\alpha ,\beta -1)$ right
eigenvector and a $\mathcal{D}_{-}(\lambda |\alpha ,\beta +1)$ left
eigenvector, for a special choice of the gauge parameters $(\alpha ,\beta )$.

\begin{lemma}\label{lemme-eigen-A}
Under the condition
\begin{equation}
\frac{\kappa _{+}}{\sinh \zeta _{+}}\big[ \sinh (\eta (\alpha +\beta
+N-1)+\tau _{+})+\sinh (\varphi _{+}+\psi _{+})\big] =0,  \label{Gauge-B+}
\end{equation}
the vector $\ket{\eta ,\alpha +\beta +N-1} $ is an $\mathcal{A}_{-}(\lambda |\alpha ,\beta -1)$ eigenvector:
\begin{multline}\label{eigen-A}
  \mathcal{A}_{-}(\lambda |\alpha ,\beta -1)\, \ket{\eta ,\alpha +\beta +N-1}
  \\
  =  (-1)^N a(\lambda )\,d(-\lambda )\, [ K_{-}(\lambda |(\alpha ,\beta+N-1),(\alpha ,\beta+N-1 ))] _{11} \, \ket{\eta ,\alpha +\beta +N-1} ,
\end{multline}
and is annihilated by $\mathcal{C}_{-}(\lambda |\alpha ,\beta -1)$:
\begin{equation}
\mathcal{C}_{-}(\lambda |\alpha ,\beta -1)\,\ket{\eta ,\alpha +\beta +N-1}=0.  \label{Annihilation-C}
\end{equation}
Under the condition
\begin{equation}\label{cond-eigen-D+}
\frac{\kappa _{+}}{\sinh \zeta _{+}}\big[ \sinh (\eta (\alpha +\beta
-N+1)+\tau _{+})+\sinh (\varphi _{+}+\psi _{+})\big] =0,
\end{equation}
the covector $\bra{\alpha +\beta -N+1,\eta }$ is a $\mathcal{D}_{-}(\lambda |\alpha ,\beta +1)$ eigencovector:
\begin{multline}\label{eigen-D}
  \bra{\alpha +\beta -N+1,\eta }\, \mathcal{D}_{-}(\lambda |\alpha ,\beta +1)
  \\
  =(-1)^N\, a(\lambda )\,d(-\lambda ) \,
  [ K_{-}(\lambda |(\alpha ,\beta-N+1 ),(\alpha ,\beta-N+1 ))]_{22}\,
 \bra{\alpha +\beta -N+1,\eta }.
\end{multline}
\end{lemma}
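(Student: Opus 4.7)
The plan is to use the boundary-bulk decomposition \eqref{BB-Gauged-Dec} with a judicious specialization of the internal gauge parameters $(\gamma,\delta,\gamma',\delta')$ that simultaneously triangularizes the gauged boundary matrix $K_-$ and aligns the bulk monodromy factors with the reference state $\ket{\eta,\alpha+\beta+N-1}$. For the first statement, I would take $\gamma=\gamma'=\alpha$ and $\delta=\delta'=\beta+N-1$, so that $\gamma+\delta=\gamma'+\delta'=\alpha+\beta+N-1$. With this choice, the hypothesis \eqref{Gauge-B+} is precisely the upper-triangularity condition $\sinh(\eta(\gamma+\delta)+\tau_+)+\sinh(\varphi_++\psi_+)=0$ recalled at the opening of this appendix, which forces $[K_-(\lambda|(\gamma,\delta),(\gamma',\delta'))]_{21}=0$ identically in~$\lambda$.

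Writing \eqref{BB-Gauged-Dec} with $\beta$ replaced by $\beta-1$ and expanding the $(1,1)$-entry of $M_1\,K_-\,M_2$ yields four monomials in the gauged bulk operators. Two of these have $C(-\lambda|\alpha+\beta+N-1,\alpha+\beta-1)$ as their rightmost factor, and so vanish on $\ket{\eta,\alpha+\beta+N-1}$ by \eqref{actC-0-op}; a third is killed by the vanishing of $K_{21}$. The only surviving term is $D(\lambda|\alpha+\beta+N-2,\alpha-\beta)\,[K_-]_{11}\,A(-\lambda|\alpha-\beta-N+1,\alpha+\beta-1)$. Successively applying \eqref{actA-0-op}, which yields a factor $2e^{-\eta\alpha}\sinh(\eta(\beta+N-1))/e^{\eta/2}$ times $d(-\lambda)\,\ket{\eta,\alpha+\beta+N-2}$, and then \eqref{actD-0-op}, which yields $2e^{-\eta\alpha+\eta/2}\sinh(\eta(\beta-1))$ times $a(\lambda)\,\ket{\eta,\alpha+\beta+N-1}$, and multiplying by the global prefactor $(-1)^N e^{2\eta\alpha}/(4\sinh(\eta(\beta+N-1))\sinh(\eta(\beta-1)))$ in \eqref{BB-Gauged-Dec}, all gauge-dependent factors telescope to leave exactly $(-1)^N a(\lambda)\,d(-\lambda)\,[K_-]_{11}$, which is \eqref{eigen-A}. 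The annihilation relation \eqref{Annihilation-C} follows from the same expansion applied to the $(2,1)$-entry: after the initial $A$-action lowers the state to $\ket{\eta,\alpha+\beta+N-2}$, the remaining factor $C(\lambda|\alpha+\beta+N-2,\alpha+\beta-2)$ from $M_1$ annihilates it by a second application of \eqref{actC-0-op}.

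The covector statement for $\mathcal{D}_-$ is handled by a parallel construction with the mirror gauge choice $\gamma=\gamma'=\alpha$, $\delta=\delta'=\beta-N+1$, so that $\gamma+\delta=\alpha+\beta-N+1$ and the hypothesis \eqref{cond-eigen-D+} is again the upper-triangularity condition $[K_-]_{21}=0$; one then acts with $\bra{\alpha+\beta-N+1,\eta}$ on the $(2,2)$-entry of $M_1\,K_-\,M_2$ (with $\beta\to\beta+1$) using the left actions \eqref{0-actA-op}--\eqref{0-actD-op} of the gauged bulk operators. The main obstacle in both halves of the proof is the algebraic bookkeeping: the eigenvalue identity must hold exactly in~$\lambda$, and one must verify that the exponential and hyperbolic gauge factors coming from the prefactor of \eqref{BB-Gauged-Dec}, from the triangularization, and from the successive bulk actions combine to give the clean form $(-1)^N a(\lambda)\,d(-\lambda)\,[K_-]_{ii}$ without spurious residual factors; the symmetric choice $\gamma=\gamma'$, $\delta=\delta'$ is precisely what engineers this cancellation.
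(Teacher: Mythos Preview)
Your argument for the eigenvector statement \eqref{eigen-A} and the annihilation \eqref{Annihilation-C} is correct and coincides essentially verbatim with the paper's proof: the same gauge specialization $\gamma=\gamma'=\alpha$, $\delta=\delta'=\beta+N-1$, the same upper-triangularization of $K_-$, and the same sequence of bulk actions on the reference state.

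For the eigencovector statement \eqref{eigen-D}, however, the paper proceeds differently. Rather than your direct ``mirror'' computation of the $(2,2)$-entry with the shifted gauge $\delta=\delta'=\beta-N+1$, the paper first proves an auxiliary $\mathcal{A}_-$\,--\,eigencovector identity for $\bra{\alpha-\beta-N+1,\eta}$ using the \emph{lower}-triangular gauge (i.e.\ $[K_-]_{12}=0$, condition \eqref{K-lower-tr}), and then deduces \eqref{eigen-D} from the symmetry $\mathcal{A}_-(\lambda|\alpha,\beta')=\mathcal{D}_-(\lambda|\alpha,-\beta')$ together with $[K_-(\lambda|(\alpha,\beta'),(\alpha,\beta'))]_{11}=[K_-(\lambda|(\alpha,-\beta'),(\alpha,-\beta'))]_{22}$. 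This symmetry route avoids having to deal with the left action of the $C$-operator that arises in $[M_1]_{21}$ of your expansion. Your direct approach is still valid, but the formulas \eqref{0-actA-op}--\eqref{0-actD-op} as labeled do not quite cover it: in your $(2,2)$-expansion the leftmost operator $A(\lambda|\alpha-\beta+N-2,\alpha+\beta)$ has its \emph{second} argument equal to $x+N$ rather than its first equal to $x$, so one must invoke the $\Theta$-operator left actions (Lemma following \eqref{act-Theta-r3}) or the identification $A(\lambda|u,v)=-D(\lambda|u,v)$ in the normalized notation to push the computation through. Either route works; the paper's is a bit cleaner in that it recycles the vector computation and a one-line symmetry.
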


\begin{proof}
To prove \eqref{eigen-A} and \eqref{Annihilation-C}, let us use the boundary bulk decomposition of the gauged monodromy given in 
\eqref{BB-Gauged-Dec}, where we have fixed the internal parameters by
\begin{equation}
     \gamma =\gamma ^{\prime }=\alpha ,\qquad
     \delta ^{\prime }=\delta =\beta+N-1.
\end{equation}
Then under the conditions \eqref{Gauge-B+}, the matrix $K_{-}(\lambda |(\alpha ,\beta+N-1 ),(\alpha,\beta+N-1 ))$ is upper triangular.
Noticing moreover that
\begin{equation}
C(-\lambda |\alpha +\beta +N-1,\alpha +\beta -1)\, \ket{\eta ,\alpha +\beta+N-1}  = 0,
\end{equation}
we obtain that
\begin{multline}
   \mathcal{A}_{-}(\lambda |\alpha ,\beta -1)\, \ket{\eta ,\alpha +\beta +N-1}
    = \frac{(-1)^N\, e^{2\alpha\eta}\,  [ K_{-}(\lambda |(\alpha ,\beta +N-1 ),(\alpha,\beta +N-1 ))] _{11}}{4\sinh(\eta(\beta-1))\, \sinh(\eta(\beta+N-1))} \, 
   \\
    \times
    D(\lambda |\alpha +\beta+N-2,\alpha -\beta ) \, A(-\lambda |\alpha -\beta -N+1,\alpha+\beta -1)\,
    \ket{\eta ,\alpha +\beta +N-1} .  \label{Ex-A}
\end{multline}
By using the following identities:
\begin{multline}
A(-\lambda |\alpha -(\beta +N-1),\alpha +\beta -1) \,\ket{\eta ,\alpha +\beta+N-1}
\\
  = \frac{2\sinh \eta (\beta +N-1)}{e^{\eta (\alpha +1/2)}}\, d(-\lambda )\,\ket{\eta ,\alpha +\beta +N-2}  ,
\end{multline}
and
\begin{equation}
    D(\lambda |\alpha +\beta +N-2,\alpha -\beta )\, \ket{\eta ,\alpha +\beta+N-2}
     =\frac{2\sinh \eta (\beta -1)}{e^{\eta (\alpha -1/2)}}\, a(\lambda)\, \ket{\eta ,\alpha +\beta +N-1} ,
\end{equation}
we can compute the r.h.s. of \eqref{Ex-A}.
\eqref{Annihilation-C} is a consequence of the identity:
\begin{multline}
    C(\lambda |\alpha +\beta +N-2,\alpha +\beta -2)\, A(-\lambda |\alpha -\beta-N+1,\alpha +\beta -1)\,
    \ket{\eta ,\alpha +\beta +N-1}
    \\
     \propto  C(\lambda |\alpha +\beta +N-2,\alpha +\beta -2)\, 
     \ket{\eta ,\alpha+\beta +N-2} = 0.
\end{multline}

One can prove similarly, using \eqref{BB-Gauged-Dec} in the case \eqref{K-lower-tr}, that, under the condition
\begin{equation}\label{cond-eigen-covecA}
\frac{\kappa _{+}}{\sinh \zeta _{+}}\big[ \sinh (\eta (\alpha -\beta
-N+1)+\tau _{+})+\sinh (\varphi _{+}+\psi _{+})\big] =0,
\end{equation}
the covector $\bra{\alpha -\beta -N+1,\eta }$ is an $\mathcal{A}_{-}(\lambda |\alpha ,\beta -1)$ eigencovector:
\begin{multline}\label{eigen-covec-A}
  \bra{\alpha -\beta -N+1,\eta }\, \mathcal{A}_{-}(\lambda |\alpha ,\beta -1)
  \\
  =(-1)^N\, a(\lambda )\,d(-\lambda ) \,
  [ K_{-}(\lambda |(\alpha ,\beta+N-1 ),(\alpha ,\beta+N-1 ))]_{11}\,
 \bra{\alpha -\beta -N+1,\eta }.
\end{multline}
Since, for any gauge parameter $\beta'$, one has the identification
\begin{align}
   &\mathcal{A}_{-}(\lambda |\alpha ,\beta')=\mathcal{D}_{-}(\lambda |\alpha ,-\beta'),\\
   &  [ K_{-}(\lambda |(\alpha ,\beta' ),(\alpha ,\beta' ))]_{11}
   =  [ K_{-}(\lambda |(\alpha ,-\beta' ),(\alpha ,-\beta' ))]_{22},
\end{align}
this proves the second part of the lemma.
\end{proof}

Note that these vectors have been first constructed as {\em generalized reference states} for the gauge transformed boundary monodromy
matrix in the framework of a generalization of ABA for the open chain \cite{CaoLSW03}.
Here, they enter in our SoV description of the transfer matrix spectrum,
where no Ansatz on the eigenstate construction is done and where the complete spectrum
description is a built-in feature (see Proposition~\ref{Ref-States}).

\subsection{\label{app-boundary-bulk}Boundary-Bulk decomposition of boundary gauge Bethe states}

Here, we present the boundary-bulk decomposition of ABA states of the form
\begin{equation}\label{gauge-Bethe-bound}
   \underline{\widehat{\mathcal B}}_{-,M}(\{\lambda_i\}_{i=1}^M | z+1)\, \ket{\eta,y}
   \equiv \prod_{j=1\to M}\widehat{\mathcal B}_-(\lambda_j|z+2j-1)\,\ket{\eta,y}
\end{equation}
for the very specific boundary conditions that we consider in the framework of this paper, i.e. the one associated to the choice of the boundary matrix \eqref{Special-K+}.
Then under the gauge transformation for any choice $(\gamma ,\delta )$, we have
\begin{equation}
K_{-}(\lambda ,\varsigma _{+}=-\infty ,\kappa _{+},\tau _{+}|(\gamma ,\delta),(\gamma ,\delta ))
=e^{(\lambda -\eta /2)}%
\begin{pmatrix}
1 & 0 \\ 
0 & 1%
\end{pmatrix}%
_{0}.  \label{Special-K+gau}
\end{equation}

\begin{lemma}
The boundary operator $\widehat{\mathcal{B}}_{-}$ \eqref{Bhat} can be expressed in terms of the gauge bulk operators elements of \eqref{redef-gauge-op} as 
\begin{multline}
  \widehat{\mathcal{B}}_{-}(\lambda |\alpha -\beta )
   =\frac{(-1)^{N} e^{\eta (\gamma +\alpha -\beta )}}{4\sinh \eta (\delta +1)}
     \frac{\sinh (2\lambda -\eta )}{\sinh (2\lambda) }  
     \big[ B(-\lambda |\gamma -\delta -1,\alpha -\beta -1)\,
             D(\lambda |\gamma+\delta ,\alpha -\beta )  
             \\
     -B(\lambda |\gamma -\delta -1,\alpha -\beta -1)\,
      D(-\lambda |\gamma +\delta,\alpha -\beta )\big],
\end{multline}
for any gauge parameters $\alpha,\beta,\gamma,\delta$, which can be equivalently rewritten as
\begin{equation}\label{decomp-Bhat}
  \widehat{\mathcal{B}}_{-}(\lambda |z )
  =\frac{(-1)^{N+1} e^{\eta (\frac{x+y}{2}+z )}}{4\sinh (\eta \frac{y-x+2}2 )}\,
  \frac{\sinh (2\lambda -\eta )}{\sinh (2\lambda) } 
  \sum_{\sigma=\pm} \sigma B(\sigma \lambda | x-1,z-1)\, D(-\sigma\lambda|y,z),
\end{equation}
for any gauge parameter $z,x,y$.
\end{lemma}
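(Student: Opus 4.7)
\medskip\noindent\emph{Proof proposal.} The plan is to specialize the boundary-bulk decomposition \eqref{BB-Gauged-Dec} to the present case where $K_-(\lambda) = e^{(\eta/2-\lambda)\sigma_0^z}$ (see \eqref{Special-K+}), for which the gauged boundary matrix reduces to the scalar $e^{\lambda-\eta/2}\,\mathbb{I}$ whenever one chooses $\gamma'=\gamma$ and $\delta'=\delta$ in \eqref{gauge-K}, as recorded in \eqref{Special-K+gau}. First I would extract from \eqref{BB-Gauged-Dec} the $(1,2)$ component, which is $\mathcal{B}_-(\lambda|\alpha,\beta)$: with $K_-$ proportional to the identity, this component is simply the product of the first row of the left matrix by the second column of the right one, giving
\begin{multline*}
\mathcal{B}_-(\lambda|\alpha,\beta)
=\frac{(-1)^N e^{\eta(\gamma+\alpha)}e^{\lambda-\eta/2}}{4\sinh(\eta\delta)\sinh(\eta\beta)}
\Big[ D(\lambda|\gamma+\delta-1,\alpha-\beta-1)\,B(-\lambda|\gamma-\delta,\alpha-\beta) \\
- B(\lambda|\gamma-\delta-1,\alpha-\beta-1)\,D(-\lambda|\gamma+\delta,\alpha-\beta)\Big].
\end{multline*}

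Next I would apply the definition \eqref{Bhat} of $\widehat{\mathcal{B}}_-$, multiplying by $\sinh(\eta\beta)\,e^{-\eta\beta}\,e^{-(\lambda-\eta/2)}$; this cancels both the $\sinh(\eta\beta)$ and the spectral-parameter-dependent exponential, yielding
\begin{equation*}
\widehat{\mathcal{B}}_-(\lambda|\alpha-\beta)
=\frac{(-1)^N e^{\eta(\gamma+\alpha-\beta)}}{4\sinh(\eta\delta)}
\Big[ D(\lambda|\gamma+\delta-1,\alpha-\beta-1)\,B(-\lambda|\gamma-\delta,\alpha-\beta)
- B(\lambda)\,D(-\lambda)\Big],
\end{equation*}
with the second term keeping the shape appearing in the statement. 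To reach the target form I would then apply the $DB$-commutation relation \eqref{comm-DB-ter} with $\mu=-\lambda$, $x-1=\gamma+\delta-1$, $y-1=\alpha-\beta-1$, $z=\gamma-\delta$, so that $\tfrac{x-z}{2}=\delta$ and $\tfrac{x-z}{2}+1=\delta+1$, which rewrites $D(\lambda)B(-\lambda)$ as a linear combination of $B(-\lambda)D(\lambda)$ and $B(\lambda)D(-\lambda)$ with shifted gauge parameters matching exactly those in the statement.

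The main technical step, and the one requiring care, is the subsequent regrouping of the two contributions proportional to $B(\lambda)D(-\lambda)$: one obtains a coefficient
\begin{equation*}
-1+\frac{\sinh\eta\,\sinh(2\lambda+\eta\delta)}{\sinh(\eta(\delta+1))\,\sinh(2\lambda)}
=\frac{-\sinh(\eta(\delta+1))\sinh(2\lambda)+\sinh\eta\,\sinh(2\lambda+\eta\delta)}{\sinh(\eta(\delta+1))\sinh(2\lambda)},
\end{equation*}
which by expanding $\sinh(\eta(\delta+1))=\sinh(\eta\delta)\cosh\eta+\cosh(\eta\delta)\sinh\eta$ and $\sinh(2\lambda+\eta\delta)=\sinh(2\lambda)\cosh(\eta\delta)+\cosh(2\lambda)\sinh(\eta\delta)$ collapses to $-\sinh(\eta\delta)\sinh(2\lambda-\eta)/[\sinh(\eta(\delta+1))\sinh(2\lambda)]$. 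The coefficient of the $B(-\lambda)D(\lambda)$ term coming from \eqref{comm-DB-ter} is already $\sinh(\eta\delta)\sinh(2\lambda-\eta)/[\sinh(\eta(\delta+1))\sinh(2\lambda)]$, so the two contributions share a common overall factor, and the $\sinh(\eta\delta)$ cancels the one in the denominator $4\sinh(\eta\delta)$, producing the announced prefactor $\tfrac{(-1)^N e^{\eta(\gamma+\alpha-\beta)}}{4\sinh(\eta(\delta+1))}\,\tfrac{\sinh(2\lambda-\eta)}{\sinh(2\lambda)}$.

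Finally, the equivalence of the two formulas \eqref{decomp-Bhat} and its $(\gamma,\delta)$-form is obtained purely by the change of variables $x=\gamma-\delta$, $y=\gamma+\delta$, $z=\alpha-\beta$, giving $\tfrac{x+y}{2}=\gamma$ and $\tfrac{y-x+2}{2}=\delta+1$, and by identifying the $\sigma=\pm$ terms with the two summands already obtained. The hardest part is the hyperbolic identity producing the factor $\sinh(2\lambda-\eta)/\sinh(2\lambda)$ from the reordering, but it is elementary; there is no structural obstacle beyond carefully tracking the gauge parameter shifts imposed by the conventions of \eqref{BB-Gauged-Dec} and \eqref{comm-DB-ter}.
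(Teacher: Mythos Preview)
Your proof is correct and follows essentially the same route as the paper: extract $\mathcal{B}_-(\lambda|\alpha,\beta)$ from the boundary--bulk decomposition \eqref{BB-Gauged-Dec} with the gauged $K_-$ reduced to a scalar via \eqref{Special-K+gau}, pass to $\widehat{\mathcal B}_-$ by \eqref{Bhat}, apply the $DB$-commutation \eqref{comm-DB-ter} at $\mu=-\lambda$, and use the same hyperbolic identity $\frac{\sinh\eta\,\sinh(2\lambda+\eta\delta)}{\sinh(2\lambda)\sinh(\eta(\delta+1))}-1=-\frac{\sinh(\eta\delta)\sinh(2\lambda-\eta)}{\sinh(\eta(\delta+1))\sinh(2\lambda)}$ to extract the common factor. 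The only difference is presentational: the paper states the starting $DB-BD$ expression for $\widehat{\mathcal B}_-$ directly and quotes the commutation and identity in ready-made form, whereas you spell out where each piece comes from and how the gauge parameters are matched.
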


\begin{proof}
By definition, it holds
\begin{multline}
   \widehat{\mathcal{B}}_{-}(\lambda |\alpha -\beta )
    =\frac{(-1) ^{N} e^{\eta (\gamma +\alpha -\beta )}}{4\sinh \eta \delta }
    \big[ D(\lambda |\gamma +\delta -1,\alpha -\beta -1)\,
            B(-\lambda |\gamma -\delta,\alpha -\beta ) 
             \\
   -B(\lambda |\gamma -\delta -1,\alpha -\beta -1)\,
    D(-\lambda |\gamma +\delta,\alpha -\beta ).
\end{multline}
We then use the commutation relation
\begin{multline}
   D(\lambda |\gamma +\delta -1,\alpha -\beta -1)\,
   B(-\lambda |\gamma -\delta,\alpha -\beta )
    \\
   = \frac{\sinh (\eta \delta)\,  \sinh (2\lambda -\eta )}{\sinh (2\lambda)\, \sinh \eta(\delta +1)}\,
     B(-\lambda |\gamma -\delta -1,\alpha -\beta -1)\,
     D(\lambda |\gamma+\delta ,\alpha -\beta ) 
      \\
     +\frac{\sinh \eta\, \sinh (2\lambda +\delta \eta )}{\sinh (2\lambda)\, \sinh\eta (\delta +1)}\,
     B(\lambda |\gamma -\delta -1,\alpha -\beta -1)\,
     D(-\lambda|\gamma +\delta ,\alpha -\beta ),
\end{multline}
and the following identity:
\begin{equation}
\frac{\sinh \eta \, \sinh (2\lambda +\delta \eta )}{\sinh (2\lambda)\, \sinh \eta(\delta +1)}-1
=-\frac{\sinh (\eta \delta)\, \sinh (2\lambda -\eta )}{\sinh \eta(\delta +1)\, \sinh (2\lambda) }.
\end{equation}
\end{proof}

It follows from this result that boundary gauge Bethe states of the form \eqref{gauge-Bethe-bound},
for any gauge parameters $z$ and $y$, can be expressed as linear combinations of bulk gauge Bethe states of the form
\begin{equation}\label{gauge-Bethe-bulk}
   \underline{B}_M(\{\sigma_i\lambda_i\}_{i=1}^M|x-1,z)\, \ket{\eta,y+M}
   \equiv \prod_{j=1\to M} B(\sigma_j\lambda_j|x-j,z+j-1)\, \ket{\eta,y+M}
\end{equation}
for any choice of the gauge parameter $x$, where $\sigma_i\in\{+,-\}$, $i=1,\ldots,M$.
More precisely, we can formulate the following result:

\begin{proposition}\label{prop-boundary-bulk}
Boundary gauge Bethe states of the form \eqref{gauge-Bethe-bound}, for any arbitrary set of spectral parameters $\{\lambda_1,\ldots,\lambda_M\}$ and any choice of gauge parameters $z$ and $y$, can be expressed as the following linear combination of bulk gauge Bethe states:
\begin{multline}\label{Boundary-bulk-Bethe}
   \underline{\widehat{\mathcal B}}_{-,M}(\{\lambda_i\}_{i=1}^M | z+1)\, \ket{\eta,y}
   = h_M(x,z,y)\, \\
   \times
   \sum_{\sigma_1=\pm,\ldots,\sigma_M=\pm} 
   H_{\sigma_1,\ldots,\sigma_M}(\{\lambda_i\}_{i=1}^M)\, 
   \underline{B}_M(\{\sigma_i\lambda_i\}_{i=1}^M|x-1,z)\, \ket{\eta,y+M},
\end{multline}
with
\begin{equation}\label{H_sigma}
   H_{\sigma_1,\ldots,\sigma_M}(\{\lambda_i\}_{i=1}^M)
   =\prod_{j=1}^M\left[ \sigma_j\, a(-\lambda_j^{(\sigma)})\,
                                    \frac{\sinh(2\lambda_j-\eta)}{\sinh(2\lambda_j)} \right]
     \prod_{1\le i<j\le M}\frac{\sinh(\lambda_i^{(\sigma)}+\lambda_j^{(\sigma)}+\eta)}{\sinh(\lambda_i^{(\sigma)}+\lambda_j^{(\sigma)})},
\end{equation}
in which we have used the shortcut notation $\lambda_j^{(\sigma)}\equiv \sigma_j\lambda_j$, $1\le j\le M$,
and where $h_M(x,z,y)$ is an overall non-zero constant depending exclusively on the gauge parameters $x,z,y$ and on the number $M$ of $\widehat{\mathcal B}_-$ operators. 
More precisely we have
\begin{align}\label{hM}
  h_M(x,z,y)
  &=(-1)^{MN}\, e^{M\eta\frac{x+z+N}2}\, \prod_{j=1}^M\frac{\sinh(\eta \frac{z-y+2j+N-1}2)}{2\sinh(\eta\frac{y-x+2j}2)}.
\end{align}
\end{proposition}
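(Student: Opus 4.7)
The strategy is to apply the decomposition \eqref{decomp-Bhat} of the preceding lemma to each factor $\widehat{\mathcal B}_-(\lambda_j|z+2j-1)$ in the product and then transport the resulting gauged $D$ operators past the gauged $B$ operators, so that the $D$'s end up acting directly on the reference state $\ket{\eta,y}$ via \eqref{actD-0-op}. I proceed by induction on $M$. The case $M=0$ is trivial, and the case $M=1$ is a direct check: applying \eqref{decomp-Bhat} with arbitrary internal gauges $(x,y)$ to $\widehat{\mathcal B}_-(\lambda_1|z+1)$ and evaluating $D(-\sigma\lambda_1|y,z+1)\ket{\eta,y}$ through \eqref{actD-0-op}, one recognizes the combination $e^{\eta((x+y)/2+z+1/2)}\bigl(e^{-\eta(z+1)}-e^{-\eta(y-N)}\bigr)=-2\,e^{\eta(x+z+N)/2}\sinh\!\bigl(\eta\tfrac{z-y+N+1}{2}\bigr)$, which reproduces exactly the product $h_1(x,z,y)\,H_{\sigma}(\lambda_1)$ appearing on the right hand side of \eqref{Boundary-bulk-Bethe}.

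For the induction step I apply the $M=1$ identity to the rightmost factor $\widehat{\mathcal B}_-(\lambda_M|z+2M-1)\ket{\eta,y}$, producing a sum over $\sigma_M=\pm$ of bulk states of the form $B(\sigma_M\lambda_M|x-1,z+2M-2)\ket{\eta,y+1}$ with explicit scalar coefficients. To push the remaining $M-1$ boundary operators through this single bulk $B$, I apply \eqref{decomp-Bhat} to each $\widehat{\mathcal B}_-(\lambda_j|z+2j-1)$ for $j\le M-1$ with internal gauges chosen to match the new reference state $\ket{\eta,y+1}$, and then normal-order the resulting expression by pushing all the fresh $D$ operators to the right of all the bulk $B$ operators using the commutation relation \eqref{comm-DB-ter}. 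Each $D$ then acts on the evolving ket according to \eqref{actD-0-op}, while the pseudo-commutativity \eqref{commBBbid} of the gauged $B$ operators allows the final product of bulk $B$'s to be placed in the ordered form entering \eqref{gauge-Bethe-bulk}.

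The principal obstacle is combinatorial: each application of \eqref{comm-DB-ter} produces both a direct contribution (preserving the assignment of spectral parameters between $D$'s and $B$'s) and an indirect one (swapping the two spectral parameters). A given final configuration of signs $(\sigma_1,\ldots,\sigma_M)\in\{\pm\}^M$ in \eqref{Boundary-bulk-Bethe} can therefore be reached by many different direct/indirect paths, each weighted by a product of $\sinh$ ratios. The key claim to verify is that the total accumulated coefficient for any given $(\sigma_1,\ldots,\sigma_M)$ factorizes into the product $\prod_{i<j}\sinh(\lambda_i^{(\sigma)}+\lambda_j^{(\sigma)}+\eta)/\sinh(\lambda_i^{(\sigma)}+\lambda_j^{(\sigma)})$ of \eqref{H_sigma}, a standard combinatorial identity for off-shell Bethe-type state reorderings. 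The scalar prefactor $h_M(x,z,y)$ then arises as a telescoping of the $M$ direct-action factors from \eqref{actD-0-op} together with the $M$ decomposition prefactors from \eqref{decomp-Bhat}; an elementary check matches this product with the compact form \eqref{hM}, and the independence of the final formula from the auxiliary internal gauge parameters $(x,y)$ chosen at each decomposition step provides a robust consistency check throughout the calculation.
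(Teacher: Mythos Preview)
Your ``induction step'' does not actually use the induction hypothesis. After peeling off the rightmost factor $\widehat{\mathcal B}_-(\lambda_M|z+2M-1)\ket{\eta,y}$ via the $M=1$ identity, what remains is the product $\prod_{j=1}^{M-1}\widehat{\mathcal B}_-(\lambda_j|z+2j-1)$ acting on $B(\sigma_M\lambda_M|\ldots)\ket{\eta,y+1}$, which is \emph{not} a state of the form \eqref{gauge-Bethe-bound} to which the hypothesis would apply. You then abandon induction entirely and propose to decompose all $M-1$ remaining boundary operators via \eqref{decomp-Bhat} and normal-order the resulting $B$'s and $D$'s by brute force. That is a direct computation, not an induction, and the ``standard combinatorial identity'' you invoke to collapse the resulting mess into \eqref{H_sigma} is precisely the non-trivial content of the proposition, left here as an assertion.

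The paper proceeds by peeling from the \emph{left}: writing $\underline{\widehat{\mathcal B}}_{-,M+1}(\{\lambda_i\}|z+1)\ket{\eta,y}=\widehat{\mathcal B}_-(\lambda_{M+1}|z+1)\,\underline{\widehat{\mathcal B}}_{-,M}(\{\lambda_i\}_{i=1}^M|z+3)\ket{\eta,y}$ and applying the induction hypothesis to the rightmost block, which \emph{is} a state of the required form (with $z\to z+2$). One is then left with a single $\widehat{\mathcal B}_-$ acting, via \eqref{decomp-Bhat}, as $B\cdot D$ on a bulk Bethe state; the action of $D$ is given by the full formula \eqref{Action-D-0}. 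The direct term reproduces \eqref{Boundary-bulk-Bethe}--\eqref{hM} at level $M+1$, and the entire argument reduces to showing that the indirect terms (those in which $D$ swaps its spectral parameter with some $\sigma_a\lambda_a$) cancel. They do, and the mechanism is that the indirect coefficient for a fixed $a$ contains a factor of the form
\[
\sum_{\sigma_a,\sigma_{M+1}=\pm}\frac{\sigma_a\sigma_{M+1}\,\sinh(\sigma_a\lambda_a+\sigma_{M+1}\lambda_{M+1}+\kappa)}{\sinh(\sigma_a\lambda_a+\sigma_{M+1}\lambda_{M+1})},
\]
which vanishes identically since $\frac{\sinh(u+\kappa)}{\sinh u}+\frac{\sinh(u-\kappa)}{\sinh u}=2\cosh\kappa$ is independent of $u$. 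This pairwise cancellation in the sum over signs is the genuine idea behind the factorized form of $H_{\sigma_1,\ldots,\sigma_M}$, and it is exactly what your proposal elides.
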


\begin{proof}
The proof can be done by induction on $M$ following the same lines as in the ungauged case \cite{KitKMNST07}.

For $M=1$, \eqref{Boundary-bulk-Bethe} is a direct consequence of  \eqref{decomp-Bhat} and of the action \eqref{actD-0-op} of the gauge operator $D$ on the gauge reference state.
Let us now assume \eqref{Boundary-bulk-Bethe} for a given $M$ and let us prove it for $M+1$.
Then, from \eqref{decomp-Bhat} and the recursion hypothesis, we can write
\begin{align}
   &\underline{\widehat{\mathcal B}}_{-,M+1}(\{\lambda_i\}_{i=1}^{M+1} | z+1)\, \ket{\eta,y}
   =\widehat{\mathcal{B}}_{-}(\lambda_{M+1} |z+1 )\,
        \underline{\widehat{\mathcal B}}_{-,M}(\{\lambda_i\}_{i=1}^M | z+3)\, \ket{\eta,y}
   \nonumber\\
      &\qquad
    = h_M(x,z+2,y)\, 
   \sum_{\sigma_1=\pm,\ldots,\sigma_M=\pm} 
   H_{\sigma_1,\ldots,\sigma_M}(\{\lambda_i\}_{i=1}^M)\, 
   \widehat{\mathcal{B}}_{-}(\lambda_{M+1} |z+1 )\,
   \nonumber\\
   & \qquad\quad 
   \times
   \underline{B}_M(\{\sigma_i\lambda_i\}_{i=1}^M|x-1,z+2)\, \ket{\eta,y+M},
   \nonumber\\   
   &\qquad
    = h_M(x,z+2,y)\,    \frac{(-1)^{N+1} e^{\eta (\frac{x+y}{2}+z+1 )}}{4\sinh \eta (\frac{y-x}2 +1)}
   \sum_{\sigma_1=\pm,\ldots,\sigma_M=\pm} 
   H_{\sigma_1,\ldots,\sigma_M}(\{\lambda_i\}_{i=1}^M)\, 
   \frac{\sinh (2\lambda_{M+1} -\eta )}{\sinh (2\lambda_{M+1}) } 
   \nonumber\\
   &\qquad\quad 
   \times 
   \sum_{\sigma=\pm} \sigma
   B(\sigma \lambda_{M+1} | x-1,z)\,
   D(-\sigma\lambda_{M+1}|y,z+1)\,
   \underline{B}_M(\{\sigma_i\lambda_i\}_{i=1}^M|x-1,z+2)\, \ket{\eta,y+M}.
\end{align}
From \eqref{Action-D-0}, it is easy to verify that the direct action of $D(\sigma \lambda_{M+1}| y, z+1)$
leads to the contribution
\begin{multline}
   h_{M+1}(x,z,y)\, \sum_{\sigma_1=\pm,\ldots,\sigma_{M+1}=\pm} 
   H_{\sigma_1,\ldots,\sigma_{M+1}}(\{\lambda_i\}_{i=1}^{M+1})\,
   \\
   \times
   B(\sigma_{M+1} \lambda_{M+1} | x-1,z)\,
    \underline{B}_M(\{\sigma_i\lambda_i\}_{i=1}^M|x-2,z+1)\, \ket{\eta,y+M+1}
\end{multline}
with
\begin{align}
   &h_{M+1}(x,z,y)=h_M(x,z+2,y)\,    
    \frac{(-1)^{N+1} e^{\eta (\frac{x+y}{2}+z+1 )}}{4\sinh(\eta\frac{y-x+2M+2}2)}\,
    \frac{e^{-(z+1+M)\eta}-e^{-(y+M-N)\eta}}{e^{\eta/2}}\nonumber\\
    &\hphantom{h_{M+1}(x,z,y)}=(-1)^N\, e^{\eta(\frac{x+z+N}2-M)}\,\frac{\sinh(\eta\frac{z-y+N+1}2)}{2\sinh(\eta\frac{y-x+2M=2}2)}\, h_M(x,z+2,y),
    \\
   &H_{\sigma_1,\ldots,\sigma_{M+1}}(\{\lambda_i\}_{i=1}^{M+1})
   =\sigma_{M+1}\, a(-\lambda_{M+1}^{(\sigma)})\,\frac{\sinh (2\lambda_{M+1} -\eta )}{\sinh (2\lambda_{M+1}) } \prod_{j=1}^M\frac{\sinh(-\lambda_{M+1}^{(\sigma)}-\lambda_j^{(\sigma)}-\eta)}{\sinh(-\lambda_{M+1}^{(\sigma)}-\lambda_j^{(\sigma)})}
   \nonumber\\
   &\hspace{7cm}\times
   H_{\sigma_1,\ldots,\sigma_M}(\{\lambda_i\}_{i=1}^M),
\end{align}
which gives the left hand side of \eqref{Boundary-bulk-Bethe} for $M+1$.
Hence, it remains to prove that the contributions due to the indirect terms in the action  of $D(\sigma \lambda_{M+1}| y, z+1)$ vanish. These terms lead to bulk gauge Bethe vectors of the form
\begin{equation}
   \underline{B}_{M+1}(\{\lambda_i^{(\sigma)}\}_{i=1}^{M}\cup\{\lambda_{M+1},-\lambda_{M+1}\}\setminus\{\lambda_a^{(\sigma)}\}|x-1,z)\,\ket{\eta,y+M+1}
\end{equation}
for some $a\in\{1,\ldots,M\}$, and it is easy to see that the coefficient of such a vector contains a factor of the form
\begin{align}
\sum_{\sigma _{a}=\pm 1,\sigma _{M+1}=\pm 1}
\frac{\sigma _{a}\sigma_{M+1}\sinh (\lambda _{a}^{( \sigma ) }+\lambda _{M+1}^{(\sigma ) }+\kappa)}{\sinh (\lambda _{a}^{( \sigma ) }+\lambda_{M+1}^{( \sigma ) })}
& =\frac{\sinh (\bar{\lambda}_{a,M+1}+\kappa)}{\sinh \bar{\lambda}_{a,M+1}}
   +\frac{\sinh (\bar{\lambda}_{a,M+1}-\kappa)}{\sinh \bar{\lambda}_{a,M+1}}  \notag \\
& -\frac{\sinh (\lambda _{a,M+1}+\kappa)}{\sinh \lambda _{a,M+1}}
   -\frac{\sinh (\lambda_{a,M+1}-\kappa)}{\sinh \lambda _{a,M+1}}  \label{Null-sum}
\end{align}
where
\begin{equation}
\kappa=\eta\frac{x-y-2}2. 
\end{equation}
This expression vanishes due to the fact that
\begin{equation}
\frac{\sinh (\lambda+\kappa)}{\sinh \lambda}+\frac{\sinh (\lambda-\kappa)}{\sinh \lambda}=2\cosh \kappa,
\end{equation}
which completes the proof.
\end{proof}

\section{The diagonal limit}
\label{app-diag}

In this appendix, we explicitly show how the expressions (6.8)-(6.10) of \cite{KitKMNST07} for the correlation function in the diagonal case can be inferred by taking the limit
\begin{equation}\label{limit-diag}
   \psi_-\to +\infty \quad \text{with} \quad \varphi_-\sim \varsigma_-\ \text{finite},
\end{equation}
in our result \eqref{result-thermo-bis}-\eqref{H-thermo-tilde}.
In this limit,  $\kappa_-\sim e^{-\psi_-}\to 0$, so that we indeed have diagonal boundary conditions at both ends of the chain. 

Let us fix for instance $\epsilon_{\varphi_-}=\epsilon_{\psi_-}=1$. From \eqref{choice-alpha-beta}, it follows that, in the limit \eqref{limit-diag},
\begin{equation}
    \eta\beta \sim \eta b_n \sim \eta\bar b_n \sim -\psi_- \to -\infty.
\end{equation}
Hence we have
\begin{align}
  &-2\sinh(\eta b_n)\,e^{-\eta}\, E_n^{1,1}(\xi_n|(a_n,b_n),(\bar{a}_n,\bar{b}_n)) 
     \sim e^{-\eta (\bar b_n-1)}\ E^{1,1}_n,\\
  &-2\sinh(\eta b_n)\,e^{-\eta}\, E_n^{1,2}(\xi_n|(a_n,b_n),(\bar{a}_n,\bar{b}_n)) 
     \sim e^{-\eta(b_n+\bar b_n+\alpha)-\xi_n}\ E^{1,2}_n,\\
  &-2\sinh(\eta b_n)\,e^{-\eta}\, E_n^{2,1}(\xi_n|(a_n,b_n),(\bar{a}_n,\bar{b}_n)) 
     \sim e^{\xi_n+\eta \alpha}\ E^{2,1}_n,\\
  &-2\sinh(\eta b_n)\,e^{-\eta}\, E_n^{2,2}(\xi_n|(a_n,b_n),(\bar{a}_n,\bar{b}_n)) 
     \sim e^{-\eta (b_n+1)}\ E_n^{2,2},
\end{align}
so that
\begin{multline}\label{limit-prod-op}
   \prod_{n=1}^m E_n^{\epsilon'_n,\epsilon_n} 
   \sim (-2)^m \prod_{n=1}^m\left[e^{-\eta}\sinh(\eta b_n)\right] \prod_{p=1}^s e^{\xi_{i_p}-\xi_p+\eta(b_{i_p}+1)} \prod_{p=s+1}^m e^{\xi_{i_p}-\xi_p+\eta(\bar b_{i_p}-1)}\ 
   \\
   \times
   \prod_{n=1}^{m}E_{n}^{\epsilon _{n}^{\prime },\epsilon _{n}}(\xi_{n}|(a_{n},b_{n}),(\bar{a}_{n},\bar{b}_{n}))
\end{multline}
in which we have used that $s+s'=m$.
On the other hand, from \eqref{H-thermo-tilde}, we get in the limit
\begin{multline}
\tilde H_{m}(\{\lambda _{j}\}_{j=1}^M;\{\tilde \xi _{k}\}_{k=1}^m)
 \sim\frac{\prod\limits_{j=1}^{m}\prod\limits_{k=1}^{m}\sinh (\lambda _{j}+\tilde \xi_k-\eta)}
   {\!\!\!\!\prod\limits_{1\leq i<j\leq m}\!\!\!\!\sinh (\lambda_i-\lambda_j+\eta )\,\sinh (\lambda_i+\lambda_j-\eta )}
    \\
 \times 
 \prod\limits_{p=1}^{s}\bigg\{-\frac{e^{-\lambda_p+\tilde\xi_{i_p}-\eta(1+b_{i_p})}}2
 \prod\limits_{k=1}^{i_{p}-1}\sin(\lambda _{p}-\tilde\xi _{k})
 \prod\limits_{k=i_{p}+1}^{m}\!\!\sinh (\lambda_{p}-\tilde\xi _{k}+\eta )\bigg\} 
 \\
 \times 
 \!\!\prod\limits_{p=s+1}^{m}\bigg\{ -\frac{e^{-\lambda_p+\tilde\xi_{i_p}-\eta(\bar b_{i_p}-1)}}2
 \prod\limits_{k=1}^{i_{p}-1}\sinh (\lambda _{p}-\tilde\xi _{k})
 \prod\limits_{k=i_{p}+1}^{m}\!\!\sinh (\lambda_{p}-\tilde\xi _{k}-\eta )\bigg\} 
   \\
   \times
   \prod_{k=1}^m e^{-\tilde\xi_k+\lambda_k}
   \prod\limits_{k=1}^{m}\frac{\sinh (\tilde\xi _{k}+\zeta_{-}-\eta/2)}{\sinh(\lambda _{k}+\zeta_{-}-\eta /2)},
\end{multline}
i.e.
\begin{multline}\label{limit-H}
  \tilde H_{m}(\{\lambda _{j}\}_{j=1}^M;\{\tilde \xi _{k}\}_{k=1}^m)
 \sim (-2)^{-m}\, \prod_{p=1}^s e^{-\tilde\xi_p+\tilde\xi_{i_p}-\eta(1+b_{i_p})}  \!\!\prod\limits_{p=s+1}^{m}e^{-\tilde\xi_p+\tilde\xi_{i_p}-\eta(\bar b_{i_p}-1)}\
 \\
 \times
 \tilde H_{m}^\text{diag}(\{\lambda _{j}\}_{j=1}^M;\{\tilde \xi _{k}\}_{k=1}^m),
\end{multline}
in which
\begin{multline}
\tilde H_{m}^\text{diag}(\{\lambda _{j}\}_{j=1}^M;\{\tilde \xi _{k}\}_{k=1}^m)
 \sim\frac{\prod\limits_{j=1}^{m}\prod\limits_{k=1}^{m}\sinh (\lambda _{j}+\tilde \xi_k-\eta)}
   {\!\!\!\!\prod\limits_{1\leq i<j\leq m}\!\!\!\!\sinh (\lambda_i-\lambda_j+\eta )\,\sinh (\lambda_i+\lambda_j-\eta )}
      \prod\limits_{k=1}^{m}\frac{\sinh (\tilde\xi _{k}+\zeta_{-}-\eta/2)}{\sinh(\lambda _{k}+\zeta_{-}-\eta /2)}
    \\
 \times 
 \prod\limits_{p=1}^{s}\bigg\{
 \prod\limits_{k=1}^{i_{p}-1}\sin(\lambda _{p}-\tilde\xi _{k})
 \prod\limits_{k=i_{p}+1}^{m}\!\!\sinh (\lambda_{p}-\tilde\xi _{k}+\eta )\bigg\} 
 \\
 \times 
 \!\!\prod\limits_{p=s+1}^{m}\bigg\{ 
 \prod\limits_{k=1}^{i_{p}-1}\sinh (\lambda _{p}-\tilde\xi _{k})
 \prod\limits_{k=i_{p}+1}^{m}\!\!\sinh (\lambda_{p}-\tilde\xi _{k}-\eta )\bigg\} .
\end{multline}
From \eqref{limit-prod-op} and \eqref{limit-H} and using \eqref{redef-xi}, we therefore obtain in the limit
\begin{multline}\label{limit-result}
   \moy{ \prod_{n=1}^{m}E_{n}^{\epsilon _{n}^{\prime },\epsilon _{n}} }
     =  \frac{(-1)^{m-s}}{\prod\limits_{j<i}\sinh (\tilde\xi _i-\tilde\xi_j)\prod\limits_{i\leq j}\sinh (\tilde\xi_i+\tilde\xi_j-\eta)}   \\
 \times \int_{\mathcal{\tilde C}}\prod_{j=1}^{s}d\lambda _{j}\ \int_{\mathcal{\tilde C}_{\boldsymbol{\xi}}}
\prod_{j=s+1}^{m}\!\!d\lambda _{j}\ 
\tilde H_{m}^\text{diag}(\{\lambda _{j}\}_{j=1}^M;\{\tilde \xi _{k}\}_{k=1}^m)\ \det_{1\leq j,k\leq m}\big[\tilde \Phi(\lambda_j,\tilde \xi_k)\big],
\end{multline}
in which $\tilde{\mathcal C}$ is defined by \eqref{C-nude} if the set of roots for the ground state does not contain the boundary root $\check\lambda_-$ converging towards $\eta/2-\zeta_-$, and by
\begin{equation}\label{C-BR-diag}
  \tilde{\mathcal C} =  [-\Lambda ,\Lambda ] \cup \Gamma^+(\eta/2-\varsigma_-) 
\end{equation}
if the set of Bethe roots for the ground state contains the boundary root $\check\lambda_-$, whereas $\mathcal{\tilde C}_{\boldsymbol{\xi}}$ is defined as in \eqref{C-xi-tilde}. This result coincides with (6.8)-(6.10) of \cite{KitKMNST07}.

\section*{Acknowledgments}

G. N. is supported by CNRS and Laboratoire de Physique, ENS-Lyon. 
V. T. is supported by CNRS.

\bibliographystyle{SciPost_bibstyle}

\end{document}